\documentclass[11pt]{article}
\usepackage[T1]{fontenc}
\usepackage[utf8]{inputenc}
\usepackage[margin=1in]{geometry}
\usepackage{amsmath,amssymb,amsthm,mathtools,microtype,booktabs}
\usepackage{authblk}
\usepackage[dvipsnames]{xcolor}
\usepackage{tikz}
\usepackage{cancel}
\usetikzlibrary{arrows.meta,positioning,decorations.pathreplacing,calc}
\usepackage[colorlinks=true,linkcolor=MidnightBlue,citecolor=MidnightBlue,
 urlcolor=MidnightBlue]{hyperref}
\numberwithin{equation}{section}
\newtheorem{theorem}{Theorem}[section]
\newtheorem{proposition}[theorem]{Proposition}
\newtheorem{lemma}[theorem]{Lemma}
\newtheorem{corollary}[theorem]{Corollary}
\theoremstyle{definition}
\newtheorem{definition}[theorem]{Definition}
\theoremstyle{remark}
\newtheorem{remark}[theorem]{Remark}
\newenvironment{acknowledgments}{\section*{Acknowledgments}}{\par}
\RequirePackage{tcolorbox}
\tcbuselibrary{skins,breakable}

\definecolor{MLSIDefinitionAccent}{HTML}{526F96}
\definecolor{MLSITheoremAccent}{HTML}{8A667E}
\definecolor{MLSIOtherAccent}{HTML}{5B8271}
\colorlet{MLSIDefinitionFill}{MLSIDefinitionAccent!10!white}
\colorlet{MLSIDefinitionBorder}{MLSIDefinitionAccent!55!white}
\colorlet{MLSITheoremFill}{MLSITheoremAccent!10!white}
\colorlet{MLSITheoremBorder}{MLSITheoremAccent!55!white}
\colorlet{MLSIOtherFill}{MLSIOtherAccent!10!white}
\colorlet{MLSIOtherBorder}{MLSIOtherAccent!55!white}

\tcbset{
  mlsi statement/.style={
    enhanced jigsaw,
    breakable,
    lines before break=6,
    boxrule=0.4pt,
    boxsep=0pt,
    left=7pt,
    right=7pt,
    top=5pt,
    bottom=5pt,
    arc=1.4mm,
    grow sidewards by=7.4pt,
    before skip=11pt plus 2pt minus 1pt,
    after skip=11pt plus 2pt minus 1pt,
  },
  mlsi definition/.style={
    mlsi statement,
    colback=MLSIDefinitionFill,
    colframe=MLSIDefinitionBorder,
    borderline west={1.8pt}{0pt}{MLSIDefinitionAccent},
  },
  mlsi theorem/.style={
    mlsi statement,
    colback=MLSITheoremFill,
    colframe=MLSITheoremBorder,
    borderline west={1.8pt}{0pt}{MLSITheoremAccent},
  },
  mlsi other/.style={
    mlsi statement,
    colback=MLSIOtherFill,
    colframe=MLSIOtherBorder,
    borderline west={1.8pt}{0pt}{MLSIOtherAccent},
  },
}

\tcolorboxenvironment{definition}{mlsi definition}
\tcolorboxenvironment{theorem}{mlsi theorem}
\tcolorboxenvironment{proposition}{mlsi other}
\tcolorboxenvironment{lemma}{mlsi other}
\tcolorboxenvironment{corollary}{mlsi other}
 \DeclareMathOperator{\Tr}{Tr}
\DeclareMathOperator{\EP}{EP}
\DeclareMathOperator{\id}{id}
\DeclareMathOperator{\supp}{supp}
\DeclareMathOperator{\Ent}{Ent}
\DeclareMathOperator{\Ran}{Ran}
\newcommand{\one}{\mathbf1}
\newcommand{\cL}{\mathcal L}
\newcommand{\cP}{\mathcal P}
\newcommand{\cG}{\mathcal G}
\newcommand{\norm}[1]{\left\|#1\right\|_\infty}
\newcommand{\hs}[1]{\left\|#1\right\|_2}
\newcommand{\ip}[2]{\langle #1,#2\rangle_{\rm HS}}
\title{Modified logarithmic Sobolev inequality\\ 
for 1D non-commuting Hamiltonians}
\renewcommand{\Affilfont}{\footnotesize\normalfont}
\author[1,2]{Ángela Capel}
\author[3,4]{David Pérez-García}
\author[5]{Matteo Scandi}
\affil[1]{Department of Applied Mathematics and Theoretical Physics,
University of Cambridge, Cambridge CB3~0WA, United Kingdom}
\affil[2]{Fachbereich Mathematik, Universität Tübingen, 72076 Tübingen, Germany}
\affil[3]{Departamento de Análisis Matemático y Matemática Aplicada,
Universidad Complutense de Madrid, 28040 Madrid, Spain}
\affil[4]{Instituto de Ciencias Matemáticas (ICMAT), 28049 Madrid, Spain}
\affil[5]{Instituto de Física Teórica UAM/CSIC,
C. Nicolás Cabrera 13--15, Cantoblanco, 28049 Madrid, Spain}
\makeatletter
\renewcommand{\@author}{%
  \AB@authlist\\[\affilsep]
  \makebox[\textwidth][c]{%
    \parbox{\dimexpr\textwidth+0.8in\relax}{%
      \centering\Affilfont\AB@affillist\par}}}
\makeatother
\date{24 September 2026}

\definecolor{ms}{rgb}{0,.4,1}

\begin{document}
\maketitle
\begin{abstract}
Given a finite-range, non-commuting Hamiltonian on a 1D chain of finite spins, we prove a modified logarithmic Sobolev inequality with constant
 $\Omega(1/\log n)$ for the heat-bath dynamics, a quasi-local Gibbs sampler, and for the regularised heat-bath dynamics.
This implies rapid mixing of both samplers towards the Gibbs state in relative entropy and trace norm, with mixing times of order $\mathcal{O}(\log n\cdot \log(n/\epsilon))$. The interactions may be spatially inhomogeneous. The proof combines weak quasi-factorisation of relative
entropy, decay-of-correlation estimates, a uniform conditional local gap, and a buffered comparison
between conditional entropy and quadratic energy. The resulting
bounded entropy defect is removed using the global gap. To the best of our knowledge, this is the first proof of MLSI in the non-commuting regime.
\end{abstract}

\tableofcontents

\section{Introduction and main result}\label{sec:introduction}
How quickly does a quantum spin system approach thermal equilibrium?
Understanding how this time scales with system size is central to
dissipative many-body dynamics and to Gibbs state preparation.
In the first setting, a quantum Markov semigroup describes a system
coupled to a thermal environment \cite{davies1974}; in the second,
the dynamics is designed to prepare a prescribed Gibbs state
\cite{metropolis2011,brandao-kastoryano2019,ckg2023,ding-li-lin2025}.

We consider a chain of $n$ spins with a finite-range Hamiltonian $H$
whose interaction terms, not necessarily commuting, are uniformly bounded and may vary from
site to site. Throughout, we assume that the local dimension is
$d=2^q$ for a fixed integer $q\ge1$ and work at a fixed inverse
temperature $\beta>0$.
We write $\sigma=e^{-\beta H}/\Tr[e^{-\beta H}]$ for the Gibbs state.
All logarithms are natural, and
$D(\rho\Vert\sigma)=\Tr[\rho(\log\rho-\log\sigma)]$
denotes the relative entropy \cite{umegaki1962}.
For a primitive quantum Markov evolution $e^{t\cL_*}$ preserving $\sigma$,
define the trace-norm and relative-entropy mixing times by
\begin{equation}\label{eq:mixing-times}
 \begin{aligned}
 t_{\rm mix}(\epsilon)
   &:=\inf\{t\ge0:\sup_\rho\|e^{t\cL_*}\rho-\sigma\|_1\le\epsilon\},\\
 t_{{\rm mix},D}(\epsilon)
   &:=\inf\{t\ge0:\sup_\rho D(e^{t\cL_*}\rho\Vert\sigma)\le\epsilon\},
 \end{aligned}
\end{equation}
where the suprema run over all initial states. The trace-norm mixing
time is standard; here we prove a bound for the stronger relative-entropy
notion. Pinsker's inequality gives the comparison
\begin{equation}\label{eq:mixing-comparison}
 \|\rho-\sigma\|_1^2\le2D(\rho\Vert\sigma)
 \quad\Longrightarrow\quad
 t_{\rm mix}(\epsilon)\le t_{{\rm mix},D}(\epsilon^2/2).
\end{equation}

For reversible primitive dynamics, a spectral gap $\lambda>0$
controls the exponential decay of a weighted quadratic
distance to equilibrium. The resulting convergence bound also
depends on the initial value of this distance, which can grow
exponentially with $n$. Consequently, the direct spectral-gap
estimate for the mixing time of the evolution is
$t_{\rm mix}(\epsilon)=\mathcal{O}(\lambda^{-1}[n+\log(1/\epsilon)])$.
To obtain a polylogarithmic bound, we instead study the decay of
relative entropy
\cite{olkiewicz-zegarlinski1999,kastoryano-temme2013}.

 A modified logarithmic Sobolev inequality
(MLSI) with constant $\alpha$ asserts that the relative entropy between the evolved state and the equilibrium decays as
\[
 D(e^{t\cL_*}\rho\Vert\sigma)
 \le e^{-\alpha t}D(\rho\Vert\sigma).
\]
Since the largest initial relative entropy is $\mathcal{O}(n)$ for
the Gibbs state of a local Hamiltonian, this directly gives
\[
 t_{{\rm mix},D}(\epsilon)
 =\mathcal{O}\!\left(\alpha^{-1}
          [\log n+\log(1/\epsilon)]\right).
\]
The same order of bound holds for $t_{\rm mix}(\epsilon)$ by
\eqref{eq:mixing-comparison}.

We prove such an inequality for two Gibbs samplers of $H$. The first,
the \emph{heat-bath sampler}, with generator $\cL_*^{\mathrm{HB}}$,
discards a single site and applies the Petz recovery channel
associated with $\sigma$ \cite{petz-sufficiency}.
This construction is of interest because it combines exact
preservation of the Gibbs state with locality: for bounded
finite-range, non-commuting interactions in one dimension,
$\cL_*^{\mathrm{HB}}$ is a quasi-local Lindbladian at every positive
temperature (cf. Proposition~\ref{loc:heatbath-quasilocal}).
The second, the \emph{regularised heat-bath sampler}, with generator
$\cL_*^{\mathrm{RHB}}$, replaces each heat-bath update $\cP_{i*}$
by the adjoint $\mathbb E_{i*}$ of the conditional expectation
obtained by iterating $\cP_i$ to its limit,
following the construction of \cite{BardetCapelRouze2021NoncommutingAT}.

With each site's update occurring once per unit time on average, our main result,
Theorem~\ref{thm:main}, is
\[
 \alpha_{\rm MLSI}(\cL_*^{\mathrm{HB}}),\,
 \alpha_{\rm MLSI}(\cL_*^{\mathrm{RHB}})
 \ge \frac{c_\beta}{\log n},\qquad n\ge2.
\]
The constant depends on $\beta$, the local dimension, and the
interaction bounds, but is independent of $n$. To the best of our knowledge, this is the first proof of MLSI in the non-commuting regime. In particular,
Corollary~\ref{cor:mixing} gives
\[
 \max\{t_{{\rm mix},D}^a(\epsilon),t_{\rm mix}^a(\epsilon)\}
 \le C_\beta\log n \cdot \log(n/\epsilon),
 \qquad a\in\{\mathrm{HB},\mathrm{RHB}\},
\]
for both dynamics at any positive temperature. For fixed accuracy this is a mixing-time bound
of order $(\log n)^2$, without a high-temperature assumption.

The connection between equilibrium correlations and entropy decay
has a long history in classical spin systems
\cite{martinelli1999,caputo-menz-tetali2015}. Approximate tensorisation
and quasi-factorisation relate global relative entropy to conditional
entropies, which local entropy inequalities connect to dissipation.
Quantum versions of this approach use conditional relative entropy
\cite{CapelLuciaPerezGarcia_ConditionalRelativeEntropy_2018,
BardetCapelRouze2021NoncommutingAT}, within a broader theory of
noncommutative entropy dissipation \cite{carlen-maas2017}.
For commuting Hamiltonians, clustering and functional inequalities
are closely related \cite{kastoryano-brandao2016}, and MLSI for the
heat-bath sampler in one dimension was obtained under
mixing and strong quasi-factorisation assumptions
\cite{BardetCapelLuciaPerezGarciaRouze_HeatBath1D_2021}.
Under assumptions on spatial mixing and entropy decay on finite
blocks, bounds independent of system size were established for
the so-called Schmidt dynamics associated with classical spin systems and quantum systems with commuting
nearest-neighbour interactions \cite{CapelRouzeStilckFranca2020MLSI}.

For Davies dynamics of commuting chains, MLSI bounds of order
$1/\log n$ were proved at every positive temperature for
translation-invariant finite-range interactions
\cite{BardetEtAlEntropyDecayDavies,BardetEtAlRapidThermalization2023}.
These were subsequently strengthened to bounds independent of the
chain length, also allowing spatially inhomogeneous interactions
\cite{KochanowskiAlhambraCapelRouze2024}, and hence logarithmic
mixing times at fixed accuracy. Beyond chains, such bounds hold for
Davies dynamics of translation-invariant CSS codes under
Dobrushin--Shlosman conditions \cite{StengeleEtAl2025CSS}, and at
every positive temperature for two-dimensional Abelian quantum
double models \cite{StengeleEtAl2026AbelianDouble} and Haah's
three-dimensional cubic code \cite{StengeleCahaCapelWarzel2026Haah},
both with periodic boundary conditions.

One dimension supplies strong equilibrium locality even when the
interaction terms do not commute. Imaginary-time estimates
\cite{araki1969}, and their complex-time extensions
\cite{perez-locality2023}, control the influence of a boundary on
distant observables. Related information-theoretic results describe
approximate quantum Markov structure \cite{kato-brandao2019},
decay of mutual information \cite{bcp}, and locality of marginal
logarithms and conditional mutual information
\cite{kuwahara2025clustering}. Recent work also proves a uniform
gap for the canonical purified Hamiltonian of a one-dimensional
Gibbs state \cite{LuciaPerezGarciaPerezHernandez2025Gap}.
For a different Gibbs sampler, the analysis in
\cite{BergamaschiChen2025FastMixing} establishes a uniform gap at
all positive temperatures and develops multiplicative mixing
estimates that we also use here.

The passage to entropy decay requires handling two difficulties
in the noncommuting setting. First, a Gibbs marginal $\sigma_S$ need not be the
Gibbs state of the restricted Hamiltonian $H_S$. Entropy
factorisation involves logarithms of these marginals, while Petz
recovery involves their inverse square roots. Control of ordinary
correlations alone does not supply the operator estimates required
for these expressions. Second, a heat-bath update need not be
idempotent, and its fixed algebra can be smaller than the full
exterior algebra. Consequently the conditional entropy associated
with discarding a region cannot simply be identified with entropy
relative to the fixed algebra of the corresponding dynamics.

Our proof first covers the chain by overlapping intervals of length
$\mathcal{O}(\log n)$. Operator-norm
control of logarithmic marginal combinations, expressed through
the matrix-valued conditional mutual information of
\cite{CapelGondolfKochanowskiRouze2024MCMI}, together with
multiplicative mixing across a traced separator, bounds the global
relative entropy by the sum of the block conditional entropies
and a bounded error.

We next compare each block entropy with the quadratic energy of
updates in a slightly larger, buffered interval. The canonical
projection gap of \cite{LuciaPerezGarciaPerezHernandez2025Gap} transfers to a
conditional gap for the heat-bath sampler by the Kadison--Schwarz
inequality. Araki expansionals and the marginal-logarithm locality
estimate of \cite{kuwahara2025clustering} then show that changes
beyond the buffer are approximately recovered by the Petz map.
This use of recovery complements its general relation to
conditional mutual information \cite{fawzi-renner2015}, while
providing the uniform exterior-change estimate needed here.

Summing the buffered estimates gives an entropy--energy inequality
with a logarithmic coefficient and an additive defect independent
of $n$. The global gap removes this defect through the
norm-to-entropy method of \cite{norms}, yielding MLSI. Comparison
of the quadratic forms makes the argument apply to both samplers.
The proof also gives complete entropy inequalities for individual
updates, valid in the presence of an arbitrary untouched quantum
memory \cite{gao-rouze2022,cp-order,bkm-coercivity}. Global complete
bounds are stated separately: the $1/\log n$ rate in the main
theorem is an ordinary MLSI rate.

\subsection{Model and samplers}\label{sec:model}
Let $\Lambda=\{1,\ldots,n\}$, with $n\ge2$, be a finite chain
with open boundary conditions and local Hilbert space $\mathbb C^d$ at each site. The
Hilbert space of the chain is
$
 \mathcal H_\Lambda=\bigotimes_{i\in\Lambda}\mathbb C^d.
$
An interaction $\Phi$ assigns to each nonempty set $X\subseteq\Lambda$
a self-adjoint operator $\Phi(X)$ supported on $X$, and we set
$\Phi(\varnothing)=0$. The associated
Hamiltonian is
\begin{equation}\label{eq:model}
 H_\Lambda=\sum_{X\subseteq\Lambda}\Phi(X).
\end{equation}
We assume finite interaction range: $\Phi(X)=0$ whenever
$\operatorname{diam}(X)>r_0$. We also assume the uniform bound
\begin{equation}\label{eq:interaction-bound}
 \sup_{i\in\Lambda}\sum_{X\ni i}\norm{\Phi(X)}\le J,
\end{equation}
where $r_0$ and $J$ are independent of $n$. The interaction terms
may vary from site to site and need not commute; scalar terms may
be discarded. Throughout, $d$, $r_0$, $J$, and $\beta$ are fixed.
Constants carrying a subscript $\beta$ may depend on these parameters,
but not on $n$.

For $V\subseteq\Lambda$, write
$H_V=\sum_{X\subseteq V}\Phi(X)$ for the restricted Hamiltonian.
At inverse temperature $\beta>0$, its Gibbs state is
\begin{equation}\label{eq:gibbs-state}
 \sigma^V=\frac{e^{-\beta H_V}}{Z_V(\beta)},
 \qquad \sigma=\sigma^\Lambda,
\end{equation}
where $Z_V(\beta)=\Tr[e^{-\beta H_V}]$ is the partition function.
We reserve subscripts for marginals of the global state $\sigma$:
\[
 \sigma_V=\Tr_{V^c}[\sigma],\qquad V^c=\Lambda\setminus V.
\]
In general, $\sigma_V$ need not coincide with $\sigma^V$. We abbreviate
$H_\Lambda$ to $H$ when the ambient chain is fixed.

For a positive definite operator $\tau$ and $z\in\mathbb C$, set
\begin{equation}\label{eq:gamma-notation}
 \begin{aligned}
  \Gamma_\tau^z(X)=\tau^{z/2}X\tau^{z/2}, \quad
  \Delta_\tau^z(X)=\tau^zX\tau^{-z}.
 \end{aligned}
\end{equation}
These are the standard sandwiching and modular superoperators; our
conventions agree with \cite[Section 2.3]{KochanowskiAlhambraCapelRouze2024}.

For $S\subseteq\Lambda$, let $S^c=\Lambda\setminus S$ and
$\rho_{S^c}=\Tr_S[\rho]$. The heat-bath update in the Schr\"odinger picture is
\begin{equation}\label{eq:petz}
 \cP_{S*}(\rho)=\Gamma_\sigma\!\left(
 \one_S\otimes\Gamma_{\sigma_{S^c}}^{-1}(\rho_{S^c})\right).
\end{equation}
Thus the update discards $S$ and applies the Petz recovery channel to
the remaining marginal \cite{petz-sufficiency}. Let $\cP_S$ denote
its trace adjoint, acting on observables. The corresponding regularised heat-bath
update is
\begin{equation}\label{eq:regularisation}
 \mathbb{E}_S=\lim_{m\to\infty}\cP_S^m.
\end{equation}
The limit exists and is the $\sigma$-preserving conditional expectation
onto the fixed algebra of $\cP_S$; see
\cite[Section 2.4.1, equation (2.6)]{BardetCapelRouze2021NoncommutingAT}
and Proposition~\ref{pre:properties}. Its trace adjoint is denoted by
$\mathbb{E}_{S*}$. For a singleton $S=\{i\}$, we omit the braces in the subscripts.

\begin{definition}[Heat-bath and regularised heat-bath samplers]
\label{def:samplers}
The \emph{heat-bath sampler} and the \emph{regularised heat-bath sampler} associated
with $\sigma$ are the quantum Markov semigroups
with Schr\"odinger-picture generators
\begin{subequations}\label{eq:generators}
 \begin{align}
  \cL_*^{\mathrm{HB}}
    &=\sum_{i\in\Lambda}(\cP_{i*}-\id),
       \label{eq:generator-heatbath}\\
  \cL_*^{\mathrm{RHB}}
    &=\sum_{i\in\Lambda}(\mathbb{E}_{i*}-\id).
       \label{eq:generator-regularised-heatbath}
 \end{align}
\end{subequations}
\end{definition}

Every update uses the same global reference state $\sigma$.
The site label specifies the region discarded in the Petz
construction; the exact update maps may act beyond that site.
For the heat-bath updates $\cP_{i*}$, this dependence decays
exponentially with distance: Proposition~\ref{loc:heatbath-quasilocal}
proves quasi-locality by constructing local channel approximations.

We write $\EP^a$ for the entropy production
of the sampler $a\in\{\mathrm{HB},\mathrm{RHB}\}$, defined in
\eqref{eq:ep} below.
Our MLSI convention is $\EP^a(\rho)\ge\alpha D(\rho\Vert\sigma)$,
without a factor of two.

\begin{theorem}[Modified logarithmic Sobolev inequality]\label{thm:main}
Under the assumptions above, both generators in \eqref{eq:generators}
are primitive and KMS symmetric with invariant state $\sigma$.
The heat-bath generator $\cL_*^{\mathrm{HB}}$ is quasi-local in
the sense of Proposition~\ref{loc:heatbath-quasilocal}.
There is a constant $C_\beta<\infty$, independent of $n$, such that
\begin{equation}\label{eq:mlsi}
 D(\rho\Vert\sigma)\le C_\beta\log n\,\EP^a(\rho),
 \qquad a\in\{\mathrm{HB},\mathrm{RHB}\},
\end{equation}
for every faithful state $\rho$. In particular,
$\alpha_{\rm MLSI}(\cL_*^a)\ge c_\beta/\log n$ at every fixed
positive temperature.
\end{theorem}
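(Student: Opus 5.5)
The plan follows the outline in the introduction, in four stages: structural properties, global entropy factorisation, local entropy--energy comparison, and removal of the additive defect. Throughout, constants carrying the subscript $\beta$ do not depend on $n$.

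\textbf{Structural properties.} First I will verify KMS symmetry and primitivity. Each $\cP_S$ is the $\sigma$-KMS adjoint of the partial-trace channel composed with its own adjoint, so $\cP_S$ is KMS self-adjoint and positive semidefinite in the KMS inner product. The conditional expectation $\mathbb E_S$, being the limit of powers of $\cP_S$, inherits both properties, and the sum of such terms minus the identity is KMS symmetric. For primitivity, the common fixed algebra of all $\cP_i$ is contained in $\bigcap_i(\mathcal B(\mathbb C^d)_i'\,\text{-part})$, which is trivial, so the kernel of the generator is spanned by $\one$. Quasi-locality is Proposition~\ref{loc:heatbath-quasilocal}.

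\textbf{Global factorisation.} Cover $\Lambda$ by overlapping intervals $A_1,\dots,A_m$ of length $\ell=\lceil c\log n\rceil$, with overlaps also of order $\log n$. Using weak quasi-factorisation of relative entropy, I will bound
\[
 D(\rho\Vert\sigma)\le (1+\delta)\sum_k D_{A_k}(\rho\Vert\sigma)+K_\beta,
\]
where $D_{A}$ denotes the conditional relative entropy, $\delta$ is small and $K_\beta$ is bounded independently of $n$. The error terms are operator norms of logarithmic marginal combinations of the form $\log\sigma_{ABC}-\log\sigma_{AB}-\log\sigma_{BC}+\log\sigma_B$. These are controlled through the matrix-valued conditional mutual information bounds together with the exponential clustering of 1D Gibbs states, and they decay as $e^{-\ell/\xi}$. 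Choosing the constant $c$ large makes the sum of these errors over all $m\le n$ blocks $\mathcal O(1)$. Multiplicative mixing across a traced separator, in the style of Bergamaschi--Chen, converts the additive clustering into the multiplicative form that the factorisation requires.

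\textbf{Local entropy--energy comparison.} This is the step I expect to be the main obstacle. For each block $A_k$, enlarged by a buffer to an interval $\tilde A_k$ of length $\mathcal O(\log n)$, I want
\[
 D_{A_k}(\rho\Vert\sigma)\le C_\beta\,|\tilde A_k|\sum_{i\in\tilde A_k}\mathcal E_i(\rho)+\epsilon_k,
\]
where $\mathcal E_i$ is the quadratic (Dirichlet) energy of the $i$-th update and $\sum_k\epsilon_k=\mathcal O(1)$. The strategy has three ingredients:
\begin{enumerate}
\item Transfer the uniform gap of the canonical purified Hamiltonian (Lucia--Pérez-García--Pérez-Hernández) to a conditional gap for the heat-bath updates restricted to $\tilde A_k$, using Kadison--Schwarz.
\item Compare entropy with variance via a crude local log-Sobolev-type bound. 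The factor $|\tilde A_k|\sim\log n$ arises here, since conditional entropy is at most a dimension-logarithm times the $\chi^2$-type quantity when the block is boundedly conditioned.
\item Show that the Petz recovery for $\tilde A_k$ approximately reproduces the conditional expectation for $A_k$. This uses Araki expansionals and Kuwahara's marginal-logarithm locality to bound the change beyond the buffer, which is where non-commutativity bites.
\end{enumerate}
To handle non-idempotence, I will compare with $\mathbb E_i$ through the inequality $\id-\cP_i\ge c(\id-\mathbb E_i)$ between KMS quadratic forms on the relevant subspaces, which holds with a uniform spectral gap of $\cP_i$ on the complement of its fixed algebra.

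\textbf{Summation and removal of the defect.} Summing the local comparisons over $k$ (each site lies in $\mathcal O(1)$ buffered blocks) and using $\EP\ge 4\mathcal E$-type comparisons between entropy production and quadratic energy gives
\[
 D(\rho\Vert\sigma)\le C_\beta\log n\,\EP^a(\rho)+K'_\beta .
\]
To remove $K'_\beta$, I apply the norm-to-entropy method: the global gap $\lambda_\beta$ (a uniform gap at all temperatures) shows that a defective MLSI plus a gap yields a genuine MLSI. When $D$ is large relative to $K'_\beta$, the defect is absorbed into a factor of two. When $D$ is small, $D\lesssim \chi^2$ holds up to a constant controlled by $\|\Gamma_\sigma^{-1}(\rho)\|$ near equilibrium, and the gap then bounds $\EP$ from below. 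This final case analysis follows \cite{norms}. The same argument covers both samplers, since their Dirichlet forms are comparable.
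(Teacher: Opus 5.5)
Your structural and factorisation steps are in line with the paper. The defect-removal step, however, has a genuine gap. You first reduce everything to a weak MLSI $D(\rho\Vert\sigma)\le C_\beta\log n\,\EP^a(\rho)+K'_\beta$ and then split into cases, but the small-entropy case does not close.

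The condition $D(\rho\Vert\sigma)\le 2K'_\beta$ does not make $\rho$ close to $\sigma$ in any sense that keeps $\chi^2$ or $\|\Gamma_\sigma^{-1}(\rho)\|_\infty$ bounded. For example, take $\rho=(1-\tfrac1n)\sigma+\tfrac1n|\psi\rangle\langle\psi|$, with $\psi$ an eigenvector for the smallest eigenvalue of $\sigma$. Then $D(\rho\Vert\sigma)\le\tfrac1n\|\log\sigma\|_\infty=\mathcal O(1)$, but $\|\Gamma_\sigma^{-1}(\rho)\|_\infty\ge d^n/n$. Any comparison between $D$, $\chi^2$ and $\EP$ whose constant depends on $\|\Gamma_\sigma^{-1}(\rho)\|_\infty$ therefore degenerates with the volume. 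Combined with the gap, it only returns a volume-dependent rate of the kind in Corollary~\ref{cmlsi:global}, not $c_\beta/\log n$.

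The mechanism that actually removes an additive defect (Rothaus-type centering) works at the $L_2$ level. This is why the paper never passes to $\EP$ before that step.
\begin{itemize}
\item Proposition~\ref{global:defective} keeps the quadratic energy: $D(\rho\Vert\sigma)\le C_\beta\log n\,\mathcal Q^a_\Lambda(\rho)+B_\beta$.
\item With $X=\Gamma_\sigma^{-1/2}(\sqrt\rho)$, this is exactly a defective $L_2$ log-Sobolev inequality $2\Ent_{2,\sigma}(X)\le C\,\mathcal E_{\mathrm{KMS}}(X)+b\|X\|_{2,\sigma}^2$.
\item Theorem~\ref{main:tightening} integrates this to $\|P_t\|_{2\to 2+2\alpha t,\sigma}\le e^{\delta t+o(t)}$.
\item It then removes the constant using the KMS gap and the centering inequality $\|a\one+Z\|_{p,\sigma}^2\le|a|^2+(p-1)\|Z\|_{p,\sigma}^2$, at the price of a slower exponent $2+2ct$.
\item Finally, duality and interpolation convert $\|P_t\|_{2\to p,\sigma}\le1$ into $D(P_{t*}\rho\Vert\sigma)\le\tfrac2pD(\rho\Vert\sigma)$.
\end{itemize}
So your local and summed estimates must have $\mathcal Q$, not $\EP$, on the right-hand side. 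Your ``$\EP\ge4\mathcal E$'' conversion discards exactly the structure this step needs.

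The local step also contains an incorrect claim and a missing ingredient.

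First, the comparison $\id-\cP_i\ge c(\id-\mathbb E_i)$ with a uniform $c$ is false. For $\sigma_\varepsilon\propto e^{-\varepsilon Z_1Z_2}$ one has $\cP_1(\one\otimes X_2)=\operatorname{sech}(\varepsilon)(\one\otimes X_2)$ and $\mathbb E_1(\one\otimes X_2)=0$. Hence the gap of $\cP_1$ off its fixed algebra is $1-\operatorname{sech}\varepsilon\to0$ inside the admissible class of interactions. Only the reverse order $0\le\id-\cP_i\le\id-\mathbb E_i$ holds, from the spectrum of $\cP_i$ lying in $[0,1]$; see \eqref{eq:energy-comparison}. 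You must therefore prove the estimates for the heat-bath energy itself and transfer them upward to the regularised sampler.

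Second, the Kadison--Schwarz conditional gap is a GNS statement: it controls the distance of $X\sigma^{1/2}$ to $\mathcal A_{S^c}\sigma^{1/2}$. The heat-bath generator is only KMS symmetric, and the entropy has to be compared with the KMS energy of $\sqrt\rho$. For a proper interval these two are not directly comparable.
\begin{itemize}
\item The paper bridges this with the exact modular average of Lemma~\ref{lem:average}, which bounds $\int 2\operatorname{sech}(2\pi t)\operatorname{dist}_2(\sqrt\rho,\mathcal V_{S[b]}(t))^2\,dt$ by $\gamma_\beta^{-1}\mathcal Q_{S[b]}(\rho)$.
\item This forces the shielding estimate of Theorem~\ref{loc-exterior} to be proved for modularly twisted exterior amplitudes with $|t|\le1$.
\item Your ``crude log-Sobolev'' comparison is then realised by Lemma~\ref{lem:superposition}, not by an entropy--$\chi^2$ bound. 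Amplitudes in the buffered exterior space have conditional entropy only $\mathcal O(1/n)$ on the core, after a buffer of width $\asymp\log n$. The orthogonal remainder costs $M_S+\log n=\mathcal O(L)$ per unit of squared weight, and this is where the $\log n$ factor comes from.
\end{itemize}
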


The MLSI bounds both mixing times in \eqref{eq:mixing-times}; a
superscript $a$ indicates the corresponding sampler.

\begin{corollary}[Entropy decay and rapid mixing]\label{cor:mixing}
For either sampler, all initial states $\rho$, and $t\ge0$,
\begin{equation}\label{eq:entropy-decay}
 D(e^{t\cL_*^a}\rho\Vert\sigma)
 \le e^{-c_\beta t/\log n}D(\rho\Vert\sigma).
\end{equation}
For $0<\epsilon<1$, the relative-entropy and trace-norm mixing times satisfy
\begin{equation}\label{eq:mixing}
 \max\{t_{{\rm mix},D}^a(\epsilon),t_{\rm mix}^a(\epsilon)\}
 \le C_\beta'\log n\,[\log n+\log\epsilon^{-1}].
\end{equation}
\end{corollary}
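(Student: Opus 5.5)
The corollary follows from Theorem~\ref{thm:main} by the standard Gronwall argument, followed by a uniform bound on the initial relative entropy. Throughout, write $\rho_t=e^{t\cL_*^a}\rho$.

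\textbf{Step 1: entropy decay for faithful states.} Let $\rho$ be faithful. Since the semigroup is primitive, $\rho_t$ remains faithful for $t>0$, and $t\mapsto D(\rho_t\Vert\sigma)$ is differentiable. The definition of entropy production in \eqref{eq:ep} gives
\[
 \frac{d}{dt}D(\rho_t\Vert\sigma)=-\EP^a(\rho_t).
\]
This identity uses $\cL_*^a(\sigma)=0$ and $\Tr[\cL_*^a(\rho_t)]=0$. Combining it with \eqref{eq:mlsi} gives
\[
 \frac{d}{dt}D(\rho_t\Vert\sigma)\le-\frac{1}{C_\beta\log n}D(\rho_t\Vert\sigma).
\]
Gronwall's lemma then yields \eqref{eq:entropy-decay} with $c_\beta=1/C_\beta$.

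\textbf{Step 2: non-faithful initial states.} For a general state $\rho$, set $\rho^\delta=(1-\delta)\rho+\delta\sigma$, which is faithful, and apply Step 1 to $\rho^\delta$. Let $\delta\to0$. The left-hand side converges by continuity of $D(\cdot\Vert\sigma)$ on finite dimensions, since $\sigma$ is faithful. The right-hand side converges because $D(\rho^\delta\Vert\sigma)\to D(\rho\Vert\sigma)$ (use convexity for the upper bound and lower semicontinuity for the lower bound). This establishes \eqref{eq:entropy-decay} for all initial states.

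\textbf{Step 3: uniform bound on the initial entropy.} For every state $\rho$,
\[
 D(\rho\Vert\sigma)\le-\log\lambda_{\min}(\sigma)\le \log Z_\Lambda(\beta)+\beta\norm{H}.
\]
The right-hand side is controlled as follows:
\begin{itemize}
 \item $\log Z_\Lambda(\beta)\le n\log d+\beta\norm{H}$;
 \item $\norm{H}\le nJ$, by \eqref{eq:interaction-bound}, since each $X$ is counted at least once in the sum over sites.
\end{itemize}
Hence $D(\rho\Vert\sigma)\le n(\log d+2\beta J)=:K_\beta n$.

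\textbf{Step 4: mixing times.} By Steps 2 and 3, $\sup_\rho D(\rho_t\Vert\sigma)\le\epsilon$ as soon as
\[
 t\ge c_\beta^{-1}\log n\,\log(K_\beta n/\epsilon).
\]
This gives $t^a_{{\rm mix},D}(\epsilon)\le C\log n[\log n+\log\epsilon^{-1}]$, using $n\ge2$ to absorb $\log K_\beta$ into a multiple of $\log n$.

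For the trace norm, \eqref{eq:mixing-comparison} gives $t^a_{\rm mix}(\epsilon)\le t^a_{{\rm mix},D}(\epsilon^2/2)$. Since
\[
 \log(2/\epsilon^2)\le 2\log\epsilon^{-1}+\log 2,
\]
this changes the bound only by a constant factor. Together these give \eqref{eq:mixing} with a suitable constant $C'_\beta$.

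There is no genuine obstacle, since everything rests on Theorem~\ref{thm:main}. The only points needing care are the differentiability of the entropy along the flow and the approximation argument for non-faithful $\rho$ in Step 2.
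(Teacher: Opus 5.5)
Your proposal is correct and follows the paper's own proof essentially step for step. Both arguments run as follows: Gronwall from \eqref{eq:mlsi} along faithful trajectories, extension to all states via $(1-\delta)\rho+\delta\sigma$, the uniform bound $D(\rho\Vert\sigma)\le\|\log\sigma\|_\infty\le n(\log d+2\beta J)$, and Pinsker with $\epsilon\mapsto\epsilon^2/2$ for the trace norm. One small remark: faithfulness of $\rho_t$ for a faithful initial state follows more directly from $e^{t\cL_*^a}-e^{-nt}\,\id$ being completely positive than from primitivity.
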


For completeness, when $n=1$ both samplers reduce to the reset generator
$\cL_*(\rho)=\sigma-\rho$. For faithful $\rho$,
$\EP(\rho)=D(\rho\Vert\sigma)+D(\sigma\Vert\rho)\ge D(\rho\Vert\sigma)$,
so this case has MLSI constant at least one.

%
%

\subsection{Structure of the proof: heat-bath dynamics}\label{sec:proof-overview}
We focus on the heat-bath sampler. Each update erases one site and
reconstructs it with the Petz recovery channel associated with $\sigma$
\cite{petz-sufficiency}. This connects thermalisation to recoverability
in quantum information, where small conditional mutual information
guarantees the existence of an accurate recovery channel
\cite{fawzi-renner2015}. Moreover, the updates
are quasi-local: they admit local channel approximations with errors
that decay exponentially with the buffer width
(Proposition~\ref{loc:heatbath-quasilocal}). Recovery and locality
are essential for the key entropy estimate below.

\paragraph{Quantities and scales.}
Fix a chain $\Lambda$ of size $n\ge2$ and its Gibbs state $\sigma=\sigma^\Lambda$.
All dynamics below are heat-bath. For $S\subseteq\Lambda$, define
\[
 D_S(\rho\Vert\sigma)=D(\rho\Vert\sigma)-D(\rho_{S^c}\Vert\sigma_{S^c}),
 \qquad S[b]=\{i\in\Lambda:\operatorname{dist}(i,S)\le b\}.
\]
 $D_S(\rho\Vert\sigma)$ is the conditional relative entropy in $S$ \cite{CapelLuciaPerezGarcia_ConditionalRelativeEntropy_2018}.
The nonnegative, additive quadratic energy is
\[
 \mathcal Q_T(\rho)=\sum_{i\in T}\mathcal Q_i(\rho)
 =-\bigl\langle\sqrt\rho,
 \Gamma_\sigma^{1/2}\cL_T^{\mathrm{HB}}
 \Gamma_\sigma^{-1/2}(\sqrt\rho)\bigr\rangle_{\rm HS},
 \qquad \cL_T^{\mathrm{HB}}=\sum_{i\in T}(\cP_i-\id).
\]
Here $T$ selects update centres and
$\EP(\rho)=-\Tr[\cL_*^{\mathrm{HB}}(\rho)(\log\rho-\log\sigma)]$
is the entropy production. We seek $D(\rho\Vert\sigma)\le C_\beta\log n\,\EP(\rho)$,
with constants independent of $n$.

Consider the cover $\mathcal G=\{A_i\}_i\cup\{B_j\}_j$ in
Figure~\ref{fig:summary-cover}, with $A=\bigcup_i A_i$, $B=\bigcup_j B_j$.
Choose $L\asymp\log n$, $b\asymp L$ so that $|S|\le6L$,
$|\mathcal G|\le2n$ and
$\nu:=\sup_i|\{S\in\mathcal G:i\in S[b]\}|\le C_\beta$.
For $n<4L$, take $A=\Lambda$, $B=\varnothing$ and $\mathcal G=\{\Lambda\}$.
We first split the entropy between $A$ and $B$, then split their
conditional entropies into the interval components. This gives
\begin{equation}\label{eq:road-qf}
 D(\rho\Vert\sigma)
 \le D_A(\rho\Vert\sigma)+D_B(\rho\Vert\sigma)+1
 \le\sum_{S\in\mathcal G}\left(D_S(\rho\Vert\sigma)
                              +\frac{C_\beta}{n}\right)+1.
\end{equation}
The $C_\beta/n$ bound per segment requires Araki's expansional
estimates \cite{araki1969,bcp} and effective-Hamiltonian
approximations \cite{kuwahara2025clustering} to control the logarithms
of Gibbs marginals (Section~\ref{stat:section}). The approximation
errors decay exponentially in $L$, giving this accuracy at $L\asymp\log n$.

\medskip
%
\noindent\begin{minipage}{\linewidth}
\makeatletter\def\@captype{figure}\makeatother
\centering
\setlength{\abovecaptionskip}{4pt}
\setlength{\belowcaptionskip}{0pt}
\vspace{0.5cm}
\begin{tikzpicture}[x=.55cm,y=1cm,font=\small]
  \definecolor{qfball}{RGB}{102,0,102}
  \definecolor{qfcyan}{RGB}{50,165,210}
  \definecolor{qfpink}{RGB}{175,55,160}
  \foreach \a/\z/\lab in {1.5/7.5/A_1,9.5/15.5/{S=A_2},17.5/23.5/A_3}{
    \fill[qfcyan,opacity=.28] (\a,.82) rectangle (\z,1.22);
    \draw[thick,qfcyan!80!black,decorate,
      decoration={brace,amplitude=3pt}]
      (\a,1.32)--(\z,1.32);
    \node[text=qfcyan!65!black] at ({(\a+\z)/2},1.65) {$\lab$};
  }
  \foreach \a/\z/\lab in {-.5/3.5/B_1,5.5/11.5/B_2,13.5/19.5/B_3,21.5/23.5/B_4}{
    \fill[qfpink,opacity=.26] (\a,.82) rectangle (\z,1.22);
    \draw[thick,qfpink!80!black,decorate,
      decoration={brace,amplitude=3pt,mirror}]
      (\a,.72)--(\z,.72);
    \node[text=qfpink!70!black] at ({(\a+\z)/2},.4) {$\lab$};
  }
  \foreach \x/\lab in {.5/X_0,8.5/X_1,16.5/X_2}
    \node[text=black!65] at (\x,1.65) {$\lab$};
  \foreach \x/\lab in {4.5/Y_1,12.5/Y_2,20.5/Y_3}
    \node[text=black!65] at (\x,.4) {$\lab$};
  \draw[black!30] (0,1.02)--(23,1.02);
  \foreach \x in {0,...,23}
    \shade[shading=ball,ball color=qfball] (\x,1.02) circle (.12cm);
  \node[anchor=east] at (-1,1.02) {$\Lambda$};
  \node[anchor=east,text=qfcyan!65!black] at (-1,1.65) {$A$};
  \node[anchor=east,text=qfpink!70!black] at (-1,.4) {$B$};
  \foreach \x in {9.5,15.5}
    \draw[dashed,qfcyan!70!black] (\x,.17)--(\x,-.15);
  \node[text=qfcyan!70!black] at (12.5,-.01) {$S[b]$};
  \fill[black!5] (-.5,-.55) rectangle (23.5,-.15);
  \fill[qfcyan,opacity=.16] (6.5,-.55) rectangle (18.5,-.15);
  \fill[qfcyan,opacity=.30] (9.5,-.55) rectangle (15.5,-.15);
  \draw[black!30] (0,-.35)--(23,-.35);
  \foreach \x in {0,...,23}
    \shade[shading=ball,ball color=qfball] (\x,-.35) circle (.12cm);
  \node[anchor=east] at (-1,-.35) {$\Lambda$};
  \foreach \a/\z/\lab in {6.5/9.5/b,9.5/15.5/S,15.5/18.5/b}{
    \draw[thick,qfcyan!70!black,decorate,
      decoration={brace,amplitude=3pt,mirror}]
      (\a,-.65)--(\z,-.65);
    \node at ({(\a+\z)/2},-.97) {$\lab$};
  }
\end{tikzpicture}
\caption{Overlapping blocks $A=\bigcup_i A_i$, $B=\bigcup_j B_j$
and a buffered core. The gaps $X_i,Y_j$ form
$\Lambda\setminus A,\Lambda\setminus B$. Each site lies in at most
$C_\beta$ enlarged intervals $S[b]$, independently of $n$.
The intervals specify update centres;
the maps have decaying tails.}
\label{fig:summary-cover}
\vspace{0.5cm}
\end{minipage}\par

 
%
%

Following the block-entropy strategy in
\cite{BardetEtAlEntropyDecayDavies}, one would next compare block
entropies with local entropy production. Even with such comparisons,
the additive terms in \eqref{eq:road-qf} survive summation, yielding
only a \emph{weak MLSI}, of the form
$D(\rho\Vert\sigma)\le A_n\EP(\rho)+B_\beta$.
The defect $B_\beta$ stays positive as $\rho$ approaches $\sigma$,
so this bound alone only guarantees decay to a nonzero entropy level.
We instead control entropy by the quadratic energy
$\mathcal Q_\Lambda(\rho)$. This intermediate bound allows
the global gap and weighted norm interpolation to remove the defect
and yield MLSI.

\paragraph{The buffered block estimate: recovery and gap.}
The canonical purified Hamiltonian introduced in
\cite{LuciaPerezGarciaPerezHernandez2025Gap}
yields a uniform conditional GNS gap (Theorem~\ref{gap:conditional}).
To control the KMS energy $\mathcal Q_{S[b]}(\rho)$, we transfer this
estimate by averaging over modular twists (Lemma~\ref{lem:average}).
A unit Hilbert--Schmidt amplitude $v$ represents the state $vv^\dagger$.
The twisted exterior spaces in \eqref{eq:exterior-spaces} give
\[
 \mathcal V_{S[b]}(t)=\{\Delta_\sigma^{-it}(O)\sigma^{1/2}:
                         O\in\mathcal A_{(S[b])^c}\},
 \qquad x_t(\rho)=\operatorname{dist}_2(\sqrt\rho,\mathcal V_{S[b]}(t))^2,
\]
where $\mathcal A_{(S[b])^c}$ consists of observables outside $S[b]$.
Here $t$ is a modular parameter and $x_t(\rho)$ measures the component
outside the twisted exterior space. Recovery and modular averaging give
\begin{align*}
 D_S(\rho\Vert\sigma)
 &\le C_\beta Lx_t(\rho)+C_\beta/n \quad (|t|\le1)
 &&\text{superposition, Lem.~\ref{lem:superposition}},\\
 \int_{\mathbb R} f(t)x_t(\rho)\,dt
 &\le C_\beta\mathcal Q_{S[b]}(\rho),
 \qquad f(t)=2\operatorname{sech}(2\pi t)
 &&\text{averaging, Lem.~\ref{lem:average}}.
\end{align*}
For the first estimate, Theorem~\ref{loc-exterior} shows that the Petz
update almost fixes exterior changes. Lemma~\ref{lem:kernel-entropy}
and Corollary~\ref{main:shielding} give
$D_S(vv^\dagger\Vert\sigma)\le C_\beta/n$ for every unit
$v\in\mathcal V_{S[b]}(t)$.
Lemma~\ref{lem:superposition} bounds the cost of the remaining
component by $C_\beta Lx_t(\rho)+C_\beta/n$.
The second estimate controls averaged exterior distances; it yields
a genuine KMS spectral gap when the update region is the whole chain.
Since $\int_{-1}^1 f(t)\,dt$ is a fixed positive constant, integrating
the first estimate and using the second proves
\begin{equation}\label{eq:road-buffer}
 \boxed{D_S(\rho\Vert\sigma)\le C_\beta L\,\mathcal Q_{S[b]}(\rho)+C_\beta/n.}
 \qquad\text{(Theorem~\ref{ent:buffered})}
\end{equation}
The logarithmic coefficient comes from the block size and the
$1/n$ accuracy needed to sum over all blocks.

\paragraph{From blocks to the whole chain.}
Combining quasi-factorisation (Theorem~\ref{stat:weak-at})
with the energy estimate \eqref{eq:road-buffer} gives
\begin{align}
 D(\rho\Vert\sigma)
 &\le C_\beta L\sum_{S\in\mathcal G}\mathcal Q_{S[b]}(\rho)
            +\frac{2C_\beta|\mathcal G|}{n}+1
 &&\text{energy estimate, Thm.~\ref{ent:buffered}}
 \nonumber\\[3pt]
 &\le C_\beta'\log n\,\mathcal Q_\Lambda(\rho)+B_\beta
 &&\text{overlap, Prop.~\ref{global:defective}}.
 \label{eq:road-defect}
\end{align}
Here, we are using that by \eqref{global:overlap},
$\sum_S\mathcal Q_{S[b]}(\rho)\le\nu\mathcal Q_\Lambda(\rho)$;
$|\mathcal G|\le2n$ makes $B_\beta$ independent of $n$.

\paragraph{Remove the bounded error.}
Proposition~\ref{pre:properties} gives primitivity and KMS symmetry;
Lemma~\ref{lem:average} gives the global KMS gap $\gamma_\beta>0$
in \eqref{eq:kms-gap}. Following the norm-to-entropy method in
\cite{norms}, Theorem~\ref{main:tightening} gives the implication
below for $C>0$ and $B\ge0$, with both bounds valid for every state $\rho$:
\[
 D(\rho\Vert\sigma)\le C\mathcal Q_\Lambda(\rho)+B
 \quad\Longrightarrow\quad
 D(\rho\Vert\sigma)\le
 \left(\frac C2+\frac{2+B}{2\gamma_\beta}\right)\EP(\rho).
\]
Section~\ref{app:tightening} proves this step. Section~\ref{sec:conclusion} completes the proof of
Theorem~\ref{thm:main}: taking $C=C_\beta'\log n$ and $B=B_\beta$ from \eqref{eq:road-defect} gives
\begin{equation}\label{eq:road-end}
 D(\rho\Vert\sigma)\le C_\beta''\log n\,\EP(\rho),
 \qquad
 \boxed{\alpha_{\rm MLSI}(\cL_*^{\mathrm{HB}})
                 \ge\frac{c_\beta}{\log n}.}
\end{equation}
 
\section{Preliminaries and properties of the dynamics}\label{sec:prelim}
\subsection{Update maps and weighted inner products}\label{pre:maps}
We use the model, update maps, and normalisation of
Section~\ref{sec:model}. Observables on a region $T$ are identified
with their identity extensions in
$\mathcal A_T=\mathcal B(\mathcal H_T)\otimes\one_{T^c}$.
For a map $Q$ on observables, its trace adjoint satisfies
$\Tr[Q_*(\rho)X]=\Tr[\rho Q(X)]$.
The observable-picture heat-bath update, adjoint to \eqref{eq:petz}, is
\begin{equation}\label{eq:single-site-updates}
 \cP_i(X)
   =\one_i\otimes\left(
      \sigma_{i^c}^{-1/2}\,
      \Tr_i[\sigma^{1/2}X\sigma^{1/2}]\,
      \sigma_{i^c}^{-1/2}\right).
\end{equation}
Tensor factors are placed in their original order. The map $\cP_i$
is the \emph{Accardi--Cecchini generalised conditional expectation}
associated with $\sigma$ and $\mathcal A_{i^c}$
\cite{accardi-cecchini1982}; see also
\cite[Section 2.4.1]{BardetCapelRouze2021NoncommutingAT}.
It is completely positive, unital, and $\sigma$-preserving, with
range contained in $\mathcal A_{i^c}$, but it need not be idempotent.

In \eqref{eq:regularisation}, \emph{conditional expectation} means a
unital completely positive projection $\mathbb{E}$ onto a subalgebra
$\mathcal N$, satisfying
$\mathbb{E}(AXB)=A\mathbb{E}(X)B$ for $A,B\in\mathcal N$.
It preserves $\sigma$ when
$\Tr[\sigma\mathbb{E}(X)]=\Tr[\sigma X]$; see
\cite[Section 2.3]{BardetCapelRouze2021NoncommutingAT}.
For $\mathbb{E}_S$, the algebra is
$\mathcal N_S=\{X:\cP_S(X)=X\}\subseteq\mathcal A_{S^c}$,
as proved in Proposition~\ref{pre:properties}.

The operator norm is $\|\cdot\|_\infty$, while $\|\cdot\|_2$ denotes
the Hilbert--Schmidt norm. The KMS and GNS inner products are
\begin{equation}\label{eq:inner-products}
 \begin{aligned}
 \langle X,Y\rangle_{\mathrm{KMS},\sigma}
     &=\Tr[X^\dagger\Gamma_\sigma(Y)],\\
 \langle X,Y\rangle_{\mathrm{GNS},\sigma}
     &=\Tr[\sigma X^\dagger Y].
 \end{aligned}
\end{equation}
Their induced norms are denoted by $\|\cdot\|_{r,\sigma}$,
where $r\in\{\mathrm{GNS},\mathrm{KMS}\}$; the subscript $\sigma$
on an inner product is omitted when the reference is clear.
KMS symmetry and GNS symmetry mean self-adjointness in the respective
inner products; see \cite{fagnola-umanita2010,carlen-maas2017}.
In particular, $\|X\|_{\mathrm{KMS},\sigma}
=\|\Gamma_\sigma^{1/2}(X)\|_2$ is the weighted $L_2$ norm used in
\cite{olkiewicz-zegarlinski1999,kastoryano-temme2013}.

For any set of update centres $S\subseteq\Lambda$, define
\begin{equation}\label{pre:restricted-generators}
 \cL_S^{\mathrm{HB}}=\sum_{i\in S}(\cP_i-\id),\qquad
 \cL_S^{\mathrm{RHB}}=\sum_{i\in S}(\mathbb{E}_i-\id),\qquad
 \cL^a=\cL_\Lambda^a.
\end{equation}
Here $S$ selects update centres, with $\sigma$ fixed; the maps may
act beyond $S$. The block update $\cP_S-\id$ should be distinguished
from this sum. An omitted sampler superscript means $a=\mathrm{HB}$,
unless an arbitrary generator has explicitly been fixed.

We also regard these generators as families on increasing volumes.
Fix a finite-range interaction $\Phi$ on $\mathbb Z$ satisfying
\eqref{eq:interaction-bound}, and take increasing finite intervals
$\Lambda_m\uparrow\mathbb Z$. For each $m$, set
$\sigma^{(m)}:=e^{-\beta H_{\Lambda_m}}/Z_{\Lambda_m}(\beta)$
and use this Gibbs state in \eqref{eq:generators} to define the two families
\[
 \bigl\{\cL_{\Lambda_m,*}^{a}\bigr\}_{m\ge1},
 \qquad a\in\{\mathrm{HB},\mathrm{RHB}\}.
\]
All updates for a given $m$ use $\sigma^{(m)}$, so the update
maps depend on the ambient interval.

\subsection{Variance and Dirichlet forms}\label{pre:quadratic-objects}
Write $\langle X\rangle_\sigma=\Tr[\sigma X]$ and
$E_\sigma(X)=\langle X\rangle_\sigma\one$.
The \emph{variance} in either weighted inner product is
\begin{equation}\label{pre:variance}
 \operatorname{Var}_\sigma^r(X)
 :=\|X-E_\sigma(X)\|_{r,\sigma}^2
 =\|X\|_{r,\sigma}^2-|\Tr[\sigma X]|^2,
 \qquad r\in\{\mathrm{GNS},\mathrm{KMS}\}.
\end{equation}
This is the usual squared distance from the scalar observables;
the KMS convention appears, for example, in
\cite[Section II]{kastoryano-temme2013}.
We abbreviate $\operatorname{Var}_\sigma
=\operatorname{Var}_\sigma^{\mathrm{GNS}}$.

The spatial \emph{conditional variance} used in the gap argument is
\begin{equation}\label{gap:variance}
 \operatorname{Var}_S(X)
 :=\inf_{O\in\mathcal A_{S^c}}
          \|X-O\|_{\mathrm{GNS},\sigma}^2.
\end{equation}
Thus $\operatorname{Var}_\Lambda(X)=\operatorname{Var}_\sigma(X)$,
and $\operatorname{Var}_i=\operatorname{Var}_{\{i\}}$.
Under the isometry $X\mapsto X\sigma^{1/2}$, this is the squared
distance to $\mathcal A_{S^c}\sigma^{1/2}$, the subspace used in the
canonical purified-Hamiltonian construction of
\cite[Section 1.2]{LuciaPerezGarciaPerezHernandez2025Gap}.
The minimisation is over the full exterior algebra. Since
$\mathcal N_S$ may be strictly smaller than $\mathcal A_{S^c}$,
one cannot in general replace \eqref{gap:variance} by
$\|X-\mathbb{E}_S(X)\|_{\mathrm{GNS},\sigma}^2$.

For $a\in\{\mathrm{HB},\mathrm{RHB}\}$, the \emph{Dirichlet form}
and its \emph{conditional} version on $S$ are the quadratic forms
\begin{equation}\label{pre:dirichlet-forms}
 \begin{aligned}
 \mathcal E_r^a(X)
   &:=-\operatorname{Re}\langle X,\cL^a(X)\rangle_{r,\sigma},\\
 \mathcal E_{S,r}^a(X)
   &:=-\operatorname{Re}\langle X,\cL_S^a(X)\rangle_{r,\sigma},
 \qquad r\in\{\mathrm{GNS},\mathrm{KMS}\}.
 \end{aligned}
\end{equation}
These are the quadratic dissipation forms associated with the
weighted norms \cite{olkiewicz-zegarlinski1999,kastoryano-temme2013,
carlen-maas2017}. They are nonnegative and additive in the update
centres:
$\mathcal E_{S,r}^a=\sum_{i\in S}\mathcal E_{i,r}^a$ and
$\mathcal E_r^a=\mathcal E_{\Lambda,r}^a$.
For the conditional-gap estimates we use the abbreviations
\begin{equation}\label{gap:energy}
 \mathcal E_S^a:=\mathcal E_{S,\mathrm{GNS}}^a,\qquad
 \mathcal E^a:=\mathcal E_{\mathrm{GNS}}^a,\qquad
 \mathcal E_S:=\mathcal E_S^{\mathrm{HB}},\qquad
 \mathcal E:=\mathcal E^{\mathrm{HB}}.
\end{equation}
The real part in the GNS form is needed because the heat-bath
generator is only assumed KMS symmetric.

For a state $\rho$, the \emph{quadratic energy} used in the entropy
argument is the KMS Dirichlet form evaluated at its weighted
square-root amplitude:
\begin{equation}\label{eq:quadratic}
 \begin{aligned}
 \mathcal Q_S^a(\rho)
 &:=\mathcal E_{S,\mathrm{KMS}}^a
                 (\Gamma_\sigma^{-1/2}(\sqrt\rho))\\
 &=-\ip{\sqrt\rho}
       {\Gamma_\sigma^{1/2}\cL_S^a\Gamma_\sigma^{-1/2}(\sqrt\rho)}.
 \end{aligned}
\end{equation}
In particular, $\mathcal Q_S^a=\sum_{i\in S}\mathcal Q_i^a$,
$\mathcal Q^a=\mathcal Q_\Lambda^a$, and $\mathcal Q_S$
denotes the heat-bath energy. This is the same weighted
$L_2$ form used in the norm-to-entropy argument of \cite{norms}.

\subsection{Relative entropy and entropy production}\label{pre:entropy}
For the relative entropy introduced in Section~\ref{sec:introduction}, we
use the convention $D(\rho\Vert\tau)=+\infty$ unless
$\supp\rho\subseteq\supp\tau$. Our Gibbs reference $\sigma$ is faithful.
The spatial \emph{conditional relative entropy} in a region $S$ is
\begin{equation}\label{eq:conditional-entropy}
 D_S(\rho\Vert\sigma)
 :=D(\rho\Vert\sigma)-D(\rho_{S^c}\Vert\sigma_{S^c}),
\end{equation}
following \cite{CapelLuciaPerezGarcia_ConditionalRelativeEntropy_2018}.
It is nonnegative by data processing and satisfies
$D_\Lambda(\rho\Vert\sigma)=D(\rho\Vert\sigma)$.
The fixed-algebra relative entropy $D(\rho\Vert \mathbb{E}_{S*}\rho)$ is a
different quantity; their comparison is given in
Proposition~\ref{cmlsi:local}.

For a faithful state $\rho$, the \emph{entropy production} of
$\cL^a$ is \cite{spohn1978,kastoryano-temme2013}
\begin{equation}\label{eq:ep}
 \EP^a(\rho)
 :=-\Tr[\cL_*^a(\rho)(\log\rho-\log\sigma)]
 =-\left.\frac{d}{dt}D(e^{t\cL_*^a}\rho\Vert\sigma)\right|_{t=0}.
\end{equation}
Its \emph{conditional entropy production} on $S$ is
\begin{equation}\label{pre:conditional-ep}
 \begin{aligned}
 \EP_S^a(\rho)
 &:=-\Tr[\cL_{S*}^a(\rho)(\log\rho-\log\sigma)]\\
 &=-\left.\frac{d}{dt}
          D(e^{t\cL_{S*}^a}\rho\Vert\sigma)\right|_{t=0}.
 \end{aligned}
\end{equation}
This is the local entropy-production convention used for heat-bath
dynamics in \cite{BardetCapelLuciaPerezGarciaRouze_HeatBath1D_2021}.
Here ``conditional'' refers to restricting the update centres to
$S$, while keeping the global reference $\sigma$. It is the
dissipation of the global relative entropy under that restricted
evolution, rather than the derivative of the spatial quantity $D_S$.
Since every site update preserves $\sigma$, data processing gives
\begin{equation}\label{pre:ep-additivity}
 \EP_S^a(\rho)=\sum_{i\in S}\EP_i^a(\rho)\ge0,
 \qquad \EP^a=\EP_\Lambda^a.
\end{equation}
We write $\EP_S=\EP_S^{\mathrm{HB}}$ and $\EP=\EP^{\mathrm{HB}}$.
The same trace formula defines $\EP_{\cG}$ for any explicitly
specified generator $\cG$ and its stated faithful invariant reference,
including generators tensored with an untouched memory.
Formulas with logarithms are first used for faithful states;
singular states are treated by faithful approximation where needed.

\subsection{Stationarity and symmetry}\label{pre:stationarity}
We now verify the properties of the update maps and the comparison
of quadratic forms used to transfer estimates from the heat-bath
sampler to the regularised heat-bath sampler.

\begin{proposition}[Stationarity, symmetry, and conditional expectations]
\label{pre:properties}
The maps $\cP_{S*}$ are channels preserving $\sigma$. Their adjoints
are KMS self-adjoint with spectrum in $[0,1]$. The limit $\mathbb{E}_S$ is
the $\sigma$-preserving conditional expectation onto an algebra
$\mathcal N_S\subseteq\mathcal A_{S^c}$ and is the KMS orthogonal
projection onto that algebra. In particular, both generators in
\eqref{eq:generators} are primitive, and
\begin{equation}\label{eq:energy-comparison}
 0\le-\Gamma_\sigma^{1/2}\cL_T^{\mathrm{HB}}\Gamma_\sigma^{-1/2}
   \le-\Gamma_\sigma^{1/2}\cL_T^{\mathrm{RHB}}\Gamma_\sigma^{-1/2}.
\end{equation}
The inequalities are in the operator order on Hilbert--Schmidt space.
\end{proposition}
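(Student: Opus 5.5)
The plan is to conjugate everything into Hilbert--Schmidt space. Write $\widehat{Q}:=\Gamma_\sigma^{1/2}Q\,\Gamma_\sigma^{-1/2}$ for a map $Q$ on observables. Then KMS self-adjointness of $Q$ is the same as ordinary self-adjointness of $\widehat Q$ in $\ip{\cdot}{\cdot}$.

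\emph{Channel and stationarity.} By \eqref{eq:petz}, $\cP_{S*}$ is the composition of the partial trace over $S$, the completely positive map $\Gamma^{-1}_{\sigma_{S^c}}$, tensoring with $\one_S$, and $\Gamma_\sigma$. Hence it is completely positive. Trace preservation follows from $\Tr[\Gamma_\sigma(\one_S\otimes Y)]=\Tr[\sigma_{S^c}Y]$. Substituting $\rho=\sigma$ returns $\sigma$.

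\emph{KMS symmetry and spectrum.} I would exhibit a factorisation
\[
\widehat{\cP_S}=V^\dagger V, \qquad V(X)=\Gamma_{\sigma_{S^c}}^{-1/2}\bigl(\Tr_S[\Gamma_\sigma^{1/2}X]\bigr),
\]
where $V$ maps $\mathcal B(\mathcal H_\Lambda)$ into $\mathcal B(\mathcal H_{S^c})$. Its adjoint is $V^\dagger(Y)=\Gamma_\sigma^{1/2}(\one_S\otimes\Gamma_{\sigma_{S^c}}^{-1/2}Y)$. A direct check from \eqref{eq:single-site-updates} gives $\widehat{\cP_S}=V^\dagger V$, so $\widehat{\cP_S}$ is self-adjoint and positive. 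Moreover $VV^\dagger(Y)=\Gamma_{\sigma_{S^c}}^{-1/2}\Tr_S[\sigma^{1/2}(\one\otimes \Gamma^{-1/2}_{\sigma_{S^c}}Y)\sigma^{1/2}]$. Taking $Y=\sigma_{S^c}^{1/4}Z\sigma_{S^c}^{1/4}$ shows this equals $Y$, since $\Tr_S\sigma=\sigma_{S^c}$. So $V$ is a coisometry, i.e.\ a partial isometry, and $\widehat{\cP_S}$ is an orthogonal projection. In particular it is idempotent in this form, and its spectrum lies in $\{0,1\}\subseteq[0,1]$.

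I must double-check this step, since the paper says $\cP_S$ need not be idempotent. The mismatch suggests my $V^\dagger V$ has the wrong powers. The true map is $\Gamma_{\sigma}^{-1/2}$-conjugated with mixed powers, giving $\widehat{\cP_S}=W^\dagger W$ with $WW^\dagger\ne\id$ but $\|W\|\le1$ by the Kadison--Schwarz inequality and unitality. The fallback is therefore to write $\widehat{\cP_S}=W^\dagger W$ with $W$ a contraction. That yields self-adjointness and spectrum in $[0,1]$, which is all the statement requires. I expect getting the exact contraction $W$ right to be the main bookkeeping obstacle.

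\emph{Regularisation.} Since $\widehat{\cP_S}$ is self-adjoint with spectrum in $[0,1]$, the powers $\widehat{\cP_S}^m$ converge to the spectral projection onto the eigenvalue $1$. Undoing the conjugation, $\mathbb E_S$ is the KMS orthogonal projection onto $\mathcal N_S=\{X:\cP_S X=X\}$. This set lies in $\Ran\cP_S\subseteq\mathcal A_{S^c}$. $\mathbb E_S$ is unital, completely positive and $\sigma$-preserving as a limit of such maps. The fixed-point set of a faithful-state-preserving unital channel is a $*$-algebra by the Kadison--Schwarz multiplicative-domain argument: $\cP_S(X^\dagger X)\ge X^\dagger X$ for a fixed point $X$, and $\sigma$-invariance forces equality. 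A unital CP idempotent onto a subalgebra has the bimodule property by Tomiyama's theorem, so $\mathbb E_S$ is a conditional expectation.

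\emph{Operator inequality and primitivity.} For each $i$, $\widehat{\mathbb E_i}$ is the spectral projection of $\widehat{\cP_i}$ at $1$. Since $0\le\widehat{\cP_i}\le\id$, this gives $\widehat{\mathbb E_i}\le\widehat{\cP_i}$, so $0\le\id-\widehat{\cP_i}\le \id-\widehat{\mathbb E_i}$. Summing over $i\in T$ gives \eqref{eq:energy-comparison}.

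For primitivity, the KMS kernel of $-\cL^{a}$ is $\bigcap_i\ker(\id-\cP_i)=\bigcap_i\mathcal N_i$ in both cases. This intersection lies in $\bigcap_i\mathcal A_{i^c}=\mathbb C\one$. Therefore $1$ is a simple eigenvalue, and self-adjointness then gives primitivity with invariant state $\sigma$.
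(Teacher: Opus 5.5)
\textbf{Verdict: genuine gap in the spectral step, easily repaired.} Your overall architecture matches the paper's. You factorise the conjugated update, take the spectral projection at $1$, and obtain the algebra structure from the equality case of Kadison--Schwarz. You then compare $1-\lambda$ with $1-\mathbf 1_{\{1\}}(\lambda)$ and intersect the fixed algebras. The regularisation, operator-inequality and primitivity steps are fine. Using Tomiyama's theorem for bimodularity, instead of iterating the multiplicative-domain identity $\cP_S(XY)=X\cP_S(Y)$ as the paper does, is a harmless variant.

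\textbf{The coisometry claim is false.} The gap is in the spectral step, and everything downstream depends on it: $\widehat{\cP_S}\le\id$ is needed both for convergence of the powers and for $\widehat{\mathbb E_i}\le\widehat{\cP_i}$. The identity $\Tr_S[\sigma^{1/2}(\one_S\otimes Z)\sigma^{1/2}]=\sigma_{S^c}^{1/2}Z\sigma_{S^c}^{1/2}$ holds for $Z=\one$, but not in general when $\sigma$ does not split across $S$. In the paper's two-qubit example $\sigma_\varepsilon\propto e^{-\varepsilon Z_1Z_2}$ with $S=\{1\}$, one finds $VV^\dagger(X_2)=\operatorname{sech}(\varepsilon)X_2$. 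Hence $\widehat{\cP_1}$ has the eigenvalue $\operatorname{sech}\varepsilon\in(0,1)$ and is not a projection.

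Your diagnosis of the mismatch is also off. The factorisation $\widehat{\cP_S}=V^\dagger V$ is correct as written, since $\Gamma^{-1/2}_{\sigma_{S^c}}\circ\Gamma^{-1/2}_{\sigma_{S^c}}=\Gamma^{-1}_{\sigma_{S^c}}$. It is exactly the paper's identity $\langle X,\cP_SX\rangle_{\mathrm{KMS}}=\Tr[T_X^\dagger\Gamma^{-1}_{\sigma_{S^c}}(T_X)]$ with $T_X=\Tr_S[\Gamma_\sigma(X)]$. Only the coisometry check fails.

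\textbf{The fallback leaves $\widehat{\cP_S}\le\id$ unproved.} You do not exhibit $W$. Moreover, Kadison--Schwarz gives contractivity in the GNS norm, not in the Hilbert--Schmidt norm in which $W$ must be a contraction. The missing link is the spectral radius. Take $W=V$. Since $\widehat{\cP_S}$ is self-adjoint, $\|V\|^2=\|\widehat{\cP_S}\|_{2\to2}=r(\widehat{\cP_S})$. This equals $r(\cP_S)$, because the two maps are similar via $\Gamma_\sigma^{1/2}$. Finally, $r(\cP_S)$ is bounded by any operator norm of $\cP_S$. You can use $\|\cP_S\|_{\infty\to\infty}=\norm{\cP_S(\one)}=1$, by unital positivity (Russo--Dye), or the GNS norm, which your Kadison--Schwarz argument shows is at most one. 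This is the paper's step ``unital positivity bounds the spectral radius by one''. With it inserted, $0\le\widehat{\cP_S}\le\id$ holds and the rest of your argument goes through unchanged.
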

\begin{proof}
The formula for $\cP_{S*}$ is a composition of completely positive
maps. Writing $C=S^c$, cyclicity and partial trace give
\[
 \Tr[\cP_{S*}(\rho)]
 =\Tr[\sigma_C\sigma_C^{-1/2}\rho_C\sigma_C^{-1/2}]
 =\Tr[\rho],\qquad \cP_{S*}(\sigma)=\sigma.
\]
Thus it is a channel preserving $\sigma$. Its adjoint is
\[
 \cP_S(X)=\one_S\otimes\Gamma_{\sigma_{S^c}}^{-1}
             \bigl(\Tr_S[\Gamma_\sigma(X)]\bigr).
\]
Putting $T_X=\Tr_S[\Gamma_\sigma(X)]$, we have
\[
 \langle X,\cP_S X\rangle_{\mathrm{KMS}}
 =\Tr[T_X^\dagger\Gamma_{\sigma_{S^c}}^{-1}(T_X)]\ge0.
\]
Polarisation gives KMS self-adjointness, and unital positivity bounds
the spectral radius by one. Thus the spectrum lies in $[0,1]$, and
the powers converge to the orthogonal projection onto the eigenvalue-one
space.

If $X$ is fixed, Schwarz gives
$\cP_S(X^\dagger X)-X^\dagger X\ge0$. Its expectation in the faithful
state $\sigma$ is zero, so equality holds. The same applies to
$XX^\dagger$. The equality case of Schwarz places $X$ in the
multiplicative domain of $\cP_S$: for every $Y$,
$\cP_S(XY)=X\cP_S(Y)$ and $\cP_S(YX)=\cP_S(Y)X$.
The fixed space is consequently an algebra, and the limiting map is
bimodular over it. Every fixed point belongs to $\mathcal A_{S^c}$
by the formula for $\cP_S$.

For a single site, the intersection of these algebras over all sites
is scalar. Positivity of the individual KMS forms shows that the
kernel of either sum of forms is precisely this intersection. Thus
both generators are primitive with invariant state $\sigma$.
Finally $1-\lambda\le1-\mathbf1_{\{1\}}(\lambda)$ for
$0\le\lambda\le1$. Applying this to each $\cP_i$ proves
\eqref{eq:energy-comparison}.
\end{proof}

%
\subsection{Complete inequalities for individual updates}
\label{cmlsi:section}

We distinguish the complete estimates used for individual updates from
the ordinary MLSI proved for the sum of the single-site generators.
Let $\mathbb{E}$ be a state-preserving conditional expectation and let $M$ be an
arbitrary finite-dimensional memory. Write
\[
 \tau=(\mathbb{E}_*\otimes\mathrm{id}_M)(\rho),\qquad
 D_{\mathbb{E}}(\rho)=D(\rho\Vert\tau).
\]
For a generator $\mathcal L$ with limiting expectation $\mathbb{E}$, CMLSI with rate $\alpha$
means
\begin{equation}\label{cmlsi:definition}
 \operatorname{EP}_{\mathcal L\otimes\mathrm{id}_M}(\rho)
 \ge\alpha D(\rho\Vert\tau)
 \qquad\text{for every }M\text{ and every faithful }\rho,
\end{equation}
with a constant independent of the memory. For a primitive generator,
$\tau=\sigma\otimes\rho_M$. We use the rate convention in
\eqref{cmlsi:definition}, without an additional factor of two.

\begin{proposition}[Complete entropy estimates for an update]
\label{cmlsi:local}
Every regularised heat-bath update generator $\mathbb{E}_S-\mathrm{id}$ satisfies CMLSI with
rate at least one. More precisely, for
$\tau=(\mathbb{E}_{S*}\otimes\mathrm{id}_M)(\rho)$,
\begin{equation}\label{cmlsi:rhb}
 \operatorname{EP}_{(\mathbb{E}_S-\mathrm{id})\otimes\mathrm{id}_M}(\rho)
 =D(\rho\Vert\tau)+D(\tau\Vert\rho)
 \ge D(\rho\Vert\tau).
\end{equation}
The usual spatial conditional relative entropy is bounded by this
fixed-algebra relative entropy:
\begin{equation}\label{cmlsi:conditional-comparison}
 \begin{aligned}
 D_S(\rho\Vert\sigma\otimes\rho_M)
 &:=D(\rho\Vert\sigma\otimes\rho_M)
     -D(\rho_{S^cM}\Vert\sigma_{S^c}\otimes\rho_M)\\
 &\le D(\rho\Vert\tau).
 \end{aligned}
\end{equation}
Here the complement of $S$ includes the memory $M$.
For both the heat-bath and regularised heat-bath updates, $Q_S=\mathcal P_S$
or $Q_S=\mathbb{E}_S$, one also has
\begin{equation}\label{cmlsi:spatial}
 \operatorname{EP}_{(Q_S-\mathrm{id})\otimes\mathrm{id}_M}(\rho)
 \ge D_S(\rho\Vert\sigma\otimes\rho_M).
\end{equation}
\end{proposition}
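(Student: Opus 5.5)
The proof splits into three parts, matching the three displayed claims.

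\emph{Step 1: the identity \eqref{cmlsi:rhb}.} Write $\widetilde{\mathbb E}=\mathbb E_S\otimes\mathrm{id}_M$. This is again a conditional expectation, onto $\mathcal N_S\otimes\mathcal B(\mathcal H_M)$, and it preserves $\sigma\otimes\rho_M$ for every $\rho_M$. For faithful $\rho$ put $\tau=\widetilde{\mathbb E}_*(\rho)$. Then
\[
\EP(\rho)=-\Tr[(\tau-\rho)(\log\rho-\log(\sigma\otimes\rho_M))]
=\Tr[\rho\log\rho]-\Tr[\tau\log\rho]+\Tr[(\tau-\rho)\log(\sigma\otimes\rho_M)].
\]
The key fact is the standard one for $\sigma$-preserving conditional expectations: $\log\tau-\log(\sigma\otimes\rho_M)$ lies in the fixed algebra. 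This follows from the Takesaki-type structure $\tau=\Gamma_{\sigma\otimes\rho_M}$ applied to a fixed-algebra density, in block form $\tau=\bigoplus_k \tau_k\otimes \omega_k$. Because $\widetilde{\mathbb E}$ is trace-dual and leaves fixed-algebra elements invariant, $\Tr[(\tau-\rho)Y]=0$ for every $Y$ in the fixed algebra. We can therefore replace $\log(\sigma\otimes\rho_M)$ by $\log\tau$ in the last term, which gives $D(\rho\Vert\tau)+D(\tau\Vert\rho)$. Nonnegativity of $D(\tau\Vert\rho)$ then gives the inequality in \eqref{cmlsi:rhb}.

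\emph{Step 2: the comparison \eqref{cmlsi:conditional-comparison}.} Since $\tau$ and $\sigma\otimes\rho_M$ differ by a fixed-algebra logarithm, the chain rule $D(\rho\Vert\sigma\otimes\rho_M)=D(\rho\Vert\tau)+D(\tau\Vert\sigma\otimes\rho_M)$ holds. It is then enough to show
\[
D(\tau\Vert\sigma\otimes\rho_M)\le D(\rho_{S^cM}\Vert\sigma_{S^c}\otimes\rho_M).
\]
I would prove this by the data-processing inequality for the recovery map. We have $\tau=\lim_m(\cP_{S*}\otimes\mathrm{id})^m\rho$, and $\cP_{S*}\otimes\mathrm{id}$ factors as the partial trace $\Tr_S$ followed by the Petz channel $\mathcal R$, which sends $\sigma_{S^c}\otimes\rho_M$ to $\sigma\otimes\rho_M$. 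Hence one application already gives $D(\cP_{S*}\rho\Vert\sigma\otimes\rho_M)\le D(\rho_{S^cM}\Vert\sigma_{S^c}\otimes\rho_M)$. Each further application of $\Tr_S$ followed by $\mathcal R$ cannot increase the relative entropy. For this to chain, one must check that the marginal of the iterate on $S^cM$ has relative entropy at most that of $\rho_{S^cM}$. This holds because $\Tr_S\circ\mathcal R$ is itself a channel fixing $\sigma_{S^c}\otimes\rho_M$. Lower semicontinuity of $D$ passes the bound to the limit $\tau$.

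\emph{Step 3: the bound \eqref{cmlsi:spatial}.} For $Q_S=\mathbb E_S$ this is Steps 1 and 2 combined. For $Q_S=\cP_S$, convexity of $D(\cdot\Vert\sigma\otimes\rho_M)$ along the segment $\rho+t(\cP_{S*}\rho-\rho)$ gives
\[
\EP\ge D(\rho\Vert\sigma\otimes\rho_M)-D(\cP_{S*}\rho\Vert\sigma\otimes\rho_M).
\]
This uses the derivative at $t=0$ of a convex function lying below the chord, with $\rho_M$ unchanged by the update. The Petz data-processing bound from Step 2 then gives the right-hand side $\ge D_S(\rho\Vert\sigma\otimes\rho_M)$.

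The main obstacle is Step 1's structural claim, that $\log\tau-\log(\sigma\otimes\rho_M)$ is affiliated with the fixed algebra. It requires the KMS-projection property from Proposition~\ref{pre:properties} to imply that $\mathbb E_S$ commutes with the modular group of $\sigma$, so that Takesaki's theorem applies.
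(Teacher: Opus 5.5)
Your proposal is correct and follows essentially the same route as the paper. The block structure puts $\log\tau-\log(\sigma\otimes\rho_M)$ in the fixed algebra, which gives both the two-sided identity and the chain rule. Petz data processing, carried through the iterates $\cP_{S*}^m$ to the limit, then yields \eqref{cmlsi:conditional-comparison}. Your convexity step for the heat-bath case is the paper's exact identity $\EP_{Q-\id}(\rho)=D(\rho\Vert\Omega)-D(Q_*\rho\Vert\Omega)+D(Q_*\rho\Vert\rho)$, with $\Omega=\sigma\otimes\nu_M$, after dropping the nonnegative last term. Using $\rho_M$ instead of an arbitrary faithful $\nu_M$ as the memory reference is harmless, because both sides are independent of that choice.
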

\begin{proof}
Fix a faithful reference $\Omega=\sigma\otimes\nu_M$, with $\nu_M>0$.
In the finite-dimensional decomposition of the conditional-expectation
algebra, the invariant states have blocks
$\omega_{j,L}\otimes\theta_{j,R}$, where the right factors
$\theta_{j,R}$ are fixed. Hence
$\log\tau-\log\Omega$ belongs to the expectation algebra, and
$\rho-\tau$ has zero trace pairing with it. This is the conditional
expectation chain rule; see \cite[Appendix A, equation (A.1)]{BardetCapelRouze2021NoncommutingAT}.
Consequently
\[
 \Tr\bigl[(\rho-\tau)(\log\rho-\log\Omega)\bigr]
 =\Tr\bigl[(\rho-\tau)(\log\rho-\log\tau)\bigr],
\]
which proves \eqref{cmlsi:rhb}. Equivalently, the semigroup is explicit:
\[
 e^{t(\mathbb{E}_{S*}-\mathrm{id})}\otimes\mathrm{id}_M(\rho)
 =e^{-t}\rho+(1-e^{-t})\tau.
\]
Convexity of relative entropy gives contraction of $D_{\mathbb{E}}$ by $e^{-t}$,
also for singular states on the support of their memory marginal.

For any channel $Q_*$ preserving $\Omega$, direct expansion gives
\[
 \operatorname{EP}_{Q-\mathrm{id}}(\rho)
 =D(\rho\Vert\Omega)-D(Q_*\rho\Vert\Omega)
                              +D(Q_*\rho\Vert\rho).
\]
For $Q_* = \mathcal P_{S*}\otimes\mathrm{id}_M$, data processing
under the Petz recovery channel yields
\[
 D(Q_*\rho\Vert\Omega)
 \le D(\rho_{S^cM}\Vert\sigma_{S^c}\otimes\nu_M).
\]
The same inequality holds for every positive power of this channel,
and then for its limit $\mathbb{E}_{S*}\otimes\mathrm{id}_M$ by continuity.
For that limit the same trace-pairing identity gives the chain rule
\[
 D(\rho\Vert\Omega)=D(\rho\Vert\tau)+D(\tau\Vert\Omega).
\]
Combining it with the preceding data-processing bound proves
\eqref{cmlsi:conditional-comparison}.
Dropping the last nonnegative term proves \eqref{cmlsi:spatial},
since the difference of the two relative entropies is independent
of $\nu_M$. Faithful approximation gives the usual extended-value
statement for singular $\rho$.
\end{proof}

\begin{remark}[Local heat-bath CMLSI and the conditional gap] 
A uniform local CMLSI relative to the exact fixed algebra of
$\mathcal P_i$ cannot replace the conditional-gap argument. For example,
on two qubits let $\sigma_\varepsilon\propto
e^{-\varepsilon Z_1Z_2}$, where $X_i,Z_i$ are Pauli matrices and
$\varepsilon>0$. Its second-site marginal is $\one/2$, and direct
evaluation of the Petz formula gives
\[
 \mathcal P_1(\one\otimes X_2)
       =\operatorname{sech}(\varepsilon)(\one\otimes X_2),
 \qquad \mathbb{E}_1(\one\otimes X_2)=0.
\]
By KMS symmetry, $V=\Gamma_{\sigma_\varepsilon}(\one\otimes X_2)$
is the corresponding Schr\"odinger eigenvector. For small real $u$,
$\rho_u=\sigma_\varepsilon+uV$ is a faithful density matrix and
$\mathbb{E}_{1*}\rho_u=\sigma_\varepsilon$. The second-order expansion of
relative entropy at its faithful minimum gives
\[
 \lim_{u\to0}
 \frac{\EP_{\mathcal P_1-\mathrm{id}}(\rho_u)}
      {D(\rho_u\Vert \mathbb{E}_{1*}\rho_u)}
       =2[1-\operatorname{sech}(\varepsilon)]\longrightarrow0
       \quad(\varepsilon\downarrow0).
\]
Indeed the numerator is $(1-\operatorname{sech}\varepsilon)u$
times the derivative of $D(\rho_u\Vert\sigma_\varepsilon)$.
Thus even an ordinary local MLSI constant can vanish within a family
of uniformly bounded interactions. The ambient conditional gap and
the global sum of site updates are different statements.
\end{remark}

\begin{remark}[Scope of the final inequality]\label{cmlsi:scope}
The spatial comparison \eqref{cmlsi:spatial} for a single heat-bath update
is not, by itself, a CMLSI relative to its fixed-point algebra:
its right-hand side need not equal $D_{\mathbb{E}_S}(\rho)$. Likewise,
\eqref{cmlsi:rhb} for each regularised heat-bath update does not supply the
factorisation needed for a complete inequality for their sum.
The defect-removal argument used below proves ordinary MLSI.
It centres observables around the scalar expectation
$X\mapsto\Tr[\sigma X]\mathbf1$ and uses a primitive
spectral gap, as in \cite[Lemma 2.3 and Theorem 2.4]{norms}.
After tensoring with an untouched memory, the fixed algebra is
$\mathbf1\otimes\mathcal B(\mathcal H_M)$ and this scalar-centred
gap vanishes: every nonzero $\mathbf1\otimes Z$ with
$\Tr[\nu_MZ]=0$ is fixed. Thus that proof cannot be
amplified verbatim. A memory-uniform conditional defect-removal
argument would be needed to conclude the same global CMLSI rate;
that rate remains unproved in the complete setting. Global CMLSI
with weaker volume dependence follows separately from the GNS gap,
as shown in Corollary~\ref{cmlsi:global}. The distinction between
primitive hypercontractivity and the conditional norm setting is
discussed in \cite[Section 2.2]{norms}.
\end{remark}
 %
%
%
\definecolor{darkblue}{RGB}{0,0,153}
\definecolor{darkpurple}{RGB}{102,0,102}
\definecolor{midgreen}{RGB}{0,204,0}
\definecolor{midpink}{RGB}{153,0,153}

\section{Weak quasi-factorisation of the Gibbs state}
\label{stat:section}

The entropy decomposition follows the overlapping-cover strategy of
\cite{BardetEtAlEntropyDecayDavies}.  Differently from the commuting setting, the noncommuting case
introduces an additive error, which we control in the operator norm.
All states and marginals in this section refer to the same Gibbs state
$\sigma$ on $\Lambda=[1,n]$.  A superscript distinguishes a restricted
Gibbs state, $\sigma^V=e^{-\beta H_V}/\Tr[e^{-\beta H_V}]$, from a marginal
$\sigma_V=\Tr_{V^c}[\sigma]$.  Every marginal logarithm is lifted to the
full tensor product when added to other operators.

\begin{theorem}[Weak quasi-factorisation]
\label{stat:weak-at}
We fix the interaction parameters and $\beta>0$ as in
Section~\ref{sec:model}.  There are constants $L_\beta,B_\beta,a_\beta>0$
such that, for every integer $L\geq L_\beta$, we can choose a cover
$\mathcal G=\{A_i\}_i\cup\{B_j\}_j$, illustrated in
Figure~\ref{fig:summary-cover}, with the following properties.
If $n\geq4L$, every member is an interval of length between $L$ and $6L$,
each site belongs to at most two members, and
$|\mathcal G|\leq n/L+2$.  If $n<4L$, we take $\mathcal G=\{\Lambda\}$.
For every state $\rho$,
\begin{equation}
 D(\rho\Vert\sigma)
 \leq\sum_{S\in\mathcal G}D_S(\rho\Vert\sigma)
       +B_\beta n e^{-a_\beta L}.
 \label{stat:weak-at-equation}
\end{equation}
When $n<4L$, the choice $\mathcal G=\{\Lambda\}$ gives
$D_\Lambda(\rho\Vert\sigma)=D(\rho\Vert\sigma)$, so the additive
error can be omitted and the inequality becomes an equality.
In particular, a constant
$K_\beta$ can be chosen so that $L=\lceil K_\beta\log n\rceil$ makes
the error at most one for every $n\ge2$.
\end{theorem}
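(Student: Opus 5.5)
The plan has two stages: first split the entropy between $A$ and $B$, then split each conditional entropy into its interval components. Both stages rest on one identity. For any regions, $D(\rho\Vert\sigma)-D_A(\rho\Vert\sigma)-D_B(\rho\Vert\sigma)$ can be rewritten as
\[
-D(\rho_{A^c}\Vert\sigma_{A^c})\cdots
\]
Concretely, since $A\cup B=\Lambda$, we have
\[
D(\rho\Vert\sigma)-D_A(\rho\Vert\sigma)-D_B(\rho\Vert\sigma)
= D(\rho_{A^c}\Vert\sigma_{A^c})+D(\rho_{B^c}\Vert\sigma_{B^c})-D(\rho\Vert\sigma).
\]
Expanding the relative entropies gives
\[
-S(\rho_{A^c})-S(\rho_{B^c})+S(\rho)
+\Tr\bigl[\rho\,(\log\sigma-\log\sigma_{A^c}-\log\sigma_{B^c})\bigr].
\]
Here $A^c$ and $B^c$ are disjoint unions of gaps $X_i$ and $Y_j$ separated by distance $\ge L$. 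Subadditivity gives $S(\rho)\le S(\rho_{A^c\cup B^c\cup\ldots})$, which is the wrong direction. So I would instead follow the standard route of \cite{BardetEtAlEntropyDecayDavies}: write the defect as
\[
-I_\rho(A^c:B^c)\ \text{-type term}
\;+\;\Tr\bigl[\rho\,M\bigr],
\]
with $M:=\log\sigma-\log\sigma_{A^c}-\log\sigma_{B^c}+\log\sigma_{(A\cup B)^c}$, suitably lifted. The entropy part is nonpositive by strong subadditivity, i.e. nonnegativity of the conditional mutual information of $\rho$. Hence the defect is at most $\|M\|_\infty$, and the whole problem reduces to an operator-norm bound on a combination of marginal logarithms of the single Gibbs state $\sigma$. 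This is precisely the matrix-valued conditional mutual information of \cite{CapelGondolfKochanowskiRouze2024MCMI}.

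\textbf{Step 1 (combinatorics).} I would build the cover explicitly: $A_i$ of length about $5L$, with gaps $X_i$ of length about $L$. The $B_j$ are shifted so that each $Y_j$ sits in the middle of some $A_i$, and each $X_i$ sits in the middle of some $B_j$, leaving distance $\ge 2L$ to the boundaries. This gives lengths in $[L,6L]$, overlap at most two, and $|\mathcal G|\le n/L+2$. The case $n<4L$ is trivial.

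\textbf{Step 2 (split $A$ vs.\ $B$).} Apply the reduction above. The operator $M$ decomposes, up to local errors, into a sum over the $\mathcal O(n/L)$ places where an $X$-gap and a $Y$-gap interleave. I would replace each marginal logarithm $\log\sigma_V$ by an effective local Hamiltonian $-\beta H_V+\sum_{\partial}\Psi_\partial$. Here the boundary terms $\Psi_\partial$ are localised near the endpoints of $V$, with quasi-local tails decaying like $e^{-aL}$, following the marginal-logarithm locality of \cite{kuwahara2025clustering} together with Araki expansional bounds \cite{araki1969,bcp}. The bulk Hamiltonian parts cancel exactly in $M$ because $\mathbf1_{\Lambda}-\mathbf1_{A^c}-\mathbf1_{B^c}+\mathbf1_{A^c\cap B^c}$ vanishes on interaction terms not straddling boundaries. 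Each boundary term then appears with cancelling signs up to an error $\mathcal O(e^{-aL})$, because the boundary correction at an endpoint depends, up to exponentially small error, only on the configuration within distance $L$ of it. Summing over $\mathcal O(n)$ endpoints gives $\|M\|_\infty\le B n e^{-aL}$.

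\textbf{Step 3 (split $D_A$ into $\sum_i D_{A_i}$, and likewise for $B$).} The $A_i$ are disjoint and separated by gaps of length about $L$. I would iterate a two-region version: with $A=A_1\cup A'$,
\[
D_A-D_{A_1}-D_{A'}
= -I_\rho(A_1:A'\mid A^c)+\Tr\bigl[\rho\,(\log\sigma_{A^c}-\log\sigma_{A_1^c}-\log\sigma_{A'^c}+\log\sigma)\bigr].
\]
The conditional mutual information term is $\le0$. The operator is again a marginal-logarithm combination, now with $A_1$ and $A'$ separated by a traced gap. Its norm is $\mathcal O(e^{-aL})$ by the same effective-Hamiltonian cancellation. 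Alternatively, one can use the multiplicative mixing across the traced separator from \cite{BergamaschiChen2025FastMixing}, which gives $\sigma_{A^c\cup A_1\cup A'}\approx$ product structure in the sense of $\|\log\cdots\|$. Telescoping over the $\mathcal O(n/L)$ blocks adds a further $\mathcal O(n e^{-aL})$. The final claim follows by choosing $L=\lceil K\log n\rceil$ with $K a>1$.

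\textbf{Main obstacle.} The hard step is the operator-norm (not merely expectation-value) control of $\log\sigma_V$ for noncommuting Gibbs marginals. One must show that the boundary corrections are quasi-local with $\mathcal O(1)$ norm and exponentially decaying dependence, uniformly in $n$. I would first try to apply \cite{kuwahara2025clustering} directly to each marginal. If that gives only decay of the conditional mutual information rather than a norm bound on the logarithm, I would fall back on Araki's expansional $E=e^{-\beta H}e^{\beta(H_V+H_{V^c})}$. With it I would write $\sigma_V$ as a perturbation of $\sigma^V$, and then use operator-monotone integral representations of $\log$ with multiplicative, two-sided bounds $c\,\sigma^V\le\sigma_V\le C\sigma^V$ localised at the boundary.
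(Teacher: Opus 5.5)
\textbf{Genuine gap in Step 2.} The reduction there is wrong, and the bound it leads to is false. Since $A\cup B=\Lambda$, the term $\log\sigma_{(A\cup B)^c}$ is a scalar, so your operator is effectively $M=\log\sigma-\log\sigma_{A^c}-\log\sigma_{B^c}$. The exact identity is then $D-D_A-D_B=S(\rho)-S(\rho_{A^c})-S(\rho_{B^c})+\Tr[\rho M]$.

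The entropy part is not nonpositive. It equals $-I_\rho(A^c:B^c)+S(\rho)-S(\rho_{A^cB^c})$, and the last difference can be as large as $|A\cap B|\log d$.

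Correspondingly, $M$ is not small. The indicator $\mathbf1_\Lambda-\mathbf1_{A^c}-\mathbf1_{B^c}+\mathbf1_{A^c\cap B^c}$ equals one on every interaction term meeting the overlap $A\cap B$, so no bulk cancellation occurs there. Already for non-interacting spins, $\sigma=\bigotimes_i\sigma_i$, one gets $M=\sum_{i\in A\cap B}\log\sigma_i$. Its norm is at least $|A\cap B|\log d$, which is of order $n$, so $\|M\|_\infty\le Bne^{-aL}$ cannot hold.

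The missing step is data processing before you expand: $D(\rho\Vert\sigma)\ge D(\rho_{A^cB^c}\Vert\sigma_{A^cB^c})$, which discards the overlap. After this, the entropy part is exactly $-I_\rho(A^c:B^c)\le0$ and the operator becomes $\log\sigma_{A^cB^c}-\log\sigma_{A^c}-\log\sigma_{B^c}$, as in \eqref{stat:two-set}.

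In this operator the overlap cells are traced out rather than conditioned on. So what you actually need is decay of correlations of the gap marginal across traced separators. The paper gets this from a two-sided comparison $e^{-g}\sigma_X\otimes\sigma_Z\le\sigma_{XZ}\le e^{g}\sigma_X\otimes\sigma_Z$ with $g=ce^{-aL}$, obtained from local indistinguishability plus Araki expansionals on the folded chain. This comparison survives further partial traces and telescopes over the separated gaps (Lemma~\ref{stat:cut-retain}). The $O(1)$ bounds $c\,\sigma^V\le\sigma_V\le C\sigma^V$ from your last paragraph would not suffice: the factors must be $e^{\pm\epsilon}$ with $\epsilon$ exponentially small in $L$.

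\textbf{Step 3.} Your identity is correct and is the paper's $K_\sigma(A)$ for two blocks, but the norm bound is not established.

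The gap between $A_1$ and $A'$ lies in the conditioning region $A^c$, so it is retained, not traced. Each block, however, is traced out in $\log\sigma_{A_1^c}$ or $\log\sigma_{A'^c}$. Consider the boundary corrections at an endpoint of $A_1$ in $\log\sigma_{A^c}$ and in $\log\sigma_{A_1^c}$. They cancel only if they are insensitive to whether distant blocks are traced out. That does not follow from the single-interval commutator locality of \cite{kuwahara2025clustering}. Your alternative, product structure of $\sigma_{A^c\cup A_1\cup A'}$, is meaningless, since that region is $\Lambda$.

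The paper resolves this with an exact connected-cut expansion. $K_\sigma(A)$ equals the $m-1$ connected conditional MCMIs $\mathbf H_\sigma(L_iA_i:A_{i+1}R_{i+1}\mid X_i)$ across the retained gaps, plus the factorisation defects $\mathcal E_\sigma(X_0,\ldots,X_m)$ and $\mathcal E_\sigma(L_i,R_i)$ across the traced blocks. The MCMI terms are bounded via Kuwahara's five-piece approximant and a relative operator bound; the factorisation defects are bounded by the product comparison. You need both kinds of estimate. Your Step 1 cover is fine.
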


We first reduce the entropy inequality to logarithmic marginal errors.
We then establish two connected correlation estimates and combine them
in an overlapping cover.  The two estimates distinguish
a retained separator, controlled by conditional MCMI, from a traced-out
separator, controlled by a multiplicative comparison with a product state.

\subsection{Reduction to logarithmic marginal errors}

The conditional relative entropy as introduced in
\cite{CapelLuciaPerezGarcia_ConditionalRelativeEntropy_2018} is
\[
 D_S(\rho\Vert\sigma)=D(\rho\Vert\sigma)
                 -D(\rho_{S^c}\Vert\sigma_{S^c}).
\]
We use the matrix-valued conditional mutual information notation of
\cite[Definition~2.3]{CapelGondolfKochanowskiRouze2024MCMI}:
\begin{align}
 \mathbf H_\sigma(X:Z\mid Y)
 &=\log\sigma_{XYZ}+\log\sigma_Y
       -\log\sigma_{XY}-\log\sigma_{YZ},
 &H_\sigma(X:Z\mid Y)&=\norm{\mathbf H_\sigma(X:Z\mid Y)}.
 \label{stat:mcmi-definition}
\end{align}
The regions in this definition are disjoint and they need not cover the whole
chain.  We set $\sigma_\varnothing=1$.  The corresponding scalar
conditional mutual information is
$I_\rho(X:Z\mid Y)=S(\rho_{XY})+S(\rho_{YZ})-S(\rho_Y)-S(\rho_{XYZ})$.

Figure~\ref{stat:entropy-geometries} illustrates the two set
decompositions used in the following lemma.

\begin{lemma}[Entropy identities for a cover and for a union]
\label{stat:entropy-identities}
If $A\cup B=\Lambda$, we set $X=\Lambda\setminus A$ and
$Y=\Lambda\setminus B$.  Then
\begin{equation}
 D(\rho\Vert\sigma)
 \leq D_A(\rho\Vert\sigma)+D_B(\rho\Vert\sigma)
 +\norm{\log\sigma_{XY}-\log\sigma_{X}-\log\sigma_{Y}}.
 \label{stat:two-set}
\end{equation}
For pairwise disjoint nonempty sets $A_1,\ldots,A_m$, we set
$A=\bigcup_i A_i$ and $X=\Lambda\setminus A$, and we define
\begin{equation}
 K_\sigma(A):=(m-1)\log\sigma+\log\sigma_{X}
                    -\sum_{i=1}^m\log\sigma_{A_i^c}.
 \label{stat:multipartite-defect}
\end{equation}
Then
\begin{equation}
 D_A(\rho\Vert\sigma)\leq\sum_{i=1}^mD_{A_i}(\rho\Vert\sigma)
                       +\norm{K_\sigma(A)}.
 \label{stat:multipartite}
\end{equation}
\end{lemma}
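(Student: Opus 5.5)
Both inequalities are linear rearrangements of relative entropies. After expanding every relative entropy into a von Neumann entropy term and a term $\Tr[\rho\log\sigma_V]$, the logarithmic terms combine into a single operator, which is bounded by its operator norm. The remaining entropy terms have a definite sign, by subadditivity in the first case and by strong subadditivity in the second. We may take $\rho$ faithful and recover the general case by faithful approximation, as elsewhere in the paper. Recall that each marginal logarithm is lifted to the full chain, so $\Tr[\rho\log\sigma_V]=\Tr[\rho_V\log\sigma_V]$.

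\emph{Two-set case.} Since $A\cup B=\Lambda$, the sets $X=A^c$ and $Y=B^c$ are disjoint. By definition $D_A(\rho\Vert\sigma)=D(\rho\Vert\sigma)-D(\rho_X\Vert\sigma_X)$, and similarly for $B$ with $Y$. Hence \eqref{stat:two-set} is equivalent to
\[
 D(\rho_X\Vert\sigma_X)+D(\rho_Y\Vert\sigma_Y)
 \le D(\rho\Vert\sigma)
 +\norm{\log\sigma_{XY}-\log\sigma_X-\log\sigma_Y}.
\]
Data processing under the partial trace onto $XY$ gives $D(\rho\Vert\sigma)\ge D(\rho_{XY}\Vert\sigma_{XY})$. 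Expanding the three relative entropies on $X$, $Y$ and $XY$ yields
\[
 D(\rho_X\Vert\sigma_X)+D(\rho_Y\Vert\sigma_Y)-D(\rho_{XY}\Vert\sigma_{XY})
 =-I_\rho(X:Y)+\Tr\bigl[\rho_{XY}(\log\sigma_{XY}-\log\sigma_X-\log\sigma_Y)\bigr].
\]
The mutual information $I_\rho(X:Y)=S(\rho_X)+S(\rho_Y)-S(\rho_{XY})$ is nonnegative by subadditivity. The trace term is at most the stated operator norm because $\rho_{XY}$ is a state. This proves \eqref{stat:two-set}.

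\emph{Multipartite case.} Write $D_{A_i}=D(\rho\Vert\sigma)-D(\rho_{A_i^c}\Vert\sigma_{A_i^c})$ and $D_A=D(\rho\Vert\sigma)-D(\rho_X\Vert\sigma_X)$. Then \eqref{stat:multipartite} is equivalent to
\[
 \sum_{i=1}^m D(\rho_{A_i^c}\Vert\sigma_{A_i^c})
 \le(m-1)D(\rho\Vert\sigma)+D(\rho_X\Vert\sigma_X)+\norm{K_\sigma(A)}.
\]
Expanding every term, the left side minus the right side without the norm equals
\[
 -\Tr[\rho K_\sigma(A)]
 +\Bigl[(m-1)S(\rho)+S(\rho_X)-\sum_{i=1}^m S(\rho_{A_i^c})\Bigr].
\]
The first summand is at most $\norm{K_\sigma(A)}$. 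It therefore suffices to show
\[
 \sum_{i=1}^m S(\rho_{A_i^c})\ge(m-1)S(\rho)+S(\rho_X).
\]

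This entropy inequality is the only step requiring care, and we prove it by induction using strong subadditivity in the form $S(\rho_P)+S(\rho_Q)\ge S(\rho_{P\cup Q})+S(\rho_{P\cap Q})$. Set
\[
 T_k:=\bigcap_{i\le k}A_i^c=X\cup\bigcup_{j>k}A_j ,
\]
where the second equality uses that the $A_j$ are pairwise disjoint. The claim to prove is
\[
 \sum_{i\le k}S(\rho_{A_i^c})\ge(k-1)S(\rho)+S(\rho_{T_k}).
\]
For $k=1$ it is an equality, since $T_1=A_1^c$. For the inductive step, apply strong subadditivity with $P=T_k$ and $Q=A_{k+1}^c$. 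Their intersection is $T_{k+1}$. Their union is $\Lambda$, because $T_k\supseteq A_{k+1}$ by disjointness and $A_{k+1}^c$ covers the rest of the chain. This gives $S(\rho_{T_k})+S(\rho_{A_{k+1}^c})\ge S(\rho)+S(\rho_{T_{k+1}})$, which advances the induction. Taking $k=m$ and using $T_m=X$ yields the required inequality, and hence \eqref{stat:multipartite}.
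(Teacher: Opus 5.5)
Your proposal is correct and follows the paper's proof. The two-set case is the same data-processing plus subadditivity argument. Your strong-subadditivity induction over $T_k$ is exactly the paper's telescoping identity, since each inductive step is the nonnegativity of $I_\rho(A_{1:k}:A_{k+1}\mid C_{k+1})$ with $C_{k+1}=T_{k+1}$.

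There is one harmless sign slip: the multipartite expansion actually produces $+\Tr[\rho K_\sigma(A)]$, not $-\Tr[\rho K_\sigma(A)]$. This does not affect the conclusion, because either sign is bounded by $\norm{K_\sigma(A)}$.
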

\begin{proof}
For the first statement, data processing and entropy subadditivity give
\begin{align*}
 D(\rho\Vert\sigma)-D_A(\rho\Vert\sigma)-D_B(\rho\Vert\sigma)
 &\leq-D(\rho_{XY}\Vert\sigma_{XY})
        +D(\rho_{X}\Vert\sigma_{X})
        +D(\rho_{Y}\Vert\sigma_{Y})\\
 &=-I_\rho(X:Y)
    +\Tr\bigl[\rho\bigl(\log\sigma_{XY}-\log\sigma_{X}
                                  -\log\sigma_{Y}\bigr)\bigr],
\end{align*}
which implies \eqref{stat:two-set}.
For the second statement, we set $A_{1:j}=\bigcup_{i\leq j}A_i$ and
$C_j=\Lambda\setminus A_{1:j}$.  Expanding and telescoping gives
\begin{align}
 D_A(\rho\Vert\sigma)-\sum_iD_{A_i}(\rho\Vert\sigma)
 &=\Tr[\rho K_\sigma(A)]
     -\sum_{j=2}^m I_\rho(A_{1:j-1}:A_j\mid C_j).
 \label{stat:entropy-remainder}
\end{align}
By strong subadditivity, each conditional mutual information
$I_\rho(A_{1:j-1}:A_j\mid C_j)$ is nonnegative.  Dropping these
terms from the right-hand side of \eqref{stat:entropy-remainder}
and using that $\rho$ is a state gives
\[
 D_A(\rho\Vert\sigma)-\sum_{i=1}^m D_{A_i}(\rho\Vert\sigma)
 \leq \Tr[\rho K_\sigma(A)]
 \leq \|\rho\|_1\norm{K_\sigma(A)}
 =\norm{K_\sigma(A)},
\]
which is \eqref{stat:multipartite}.  For $m=1$, we have $A=A_1$
and $X=A_1^c$, so $K_\sigma(A)=0$ and the sum of conditional mutual
informations in \eqref{stat:entropy-remainder} is empty.  Thus
\eqref{stat:multipartite} is an equality in this case.
\end{proof}

\begin{remark}[Comparison with multiplicative quasi-factorisation]
\label{stat:multiplicative-alternative}
For the two-set estimate \eqref{stat:two-set}, we could instead use
the superadditivity result of
\cite[Theorem 1]{CapelLuciaPerezGarcia2018Superadditivity} and its
quasi-factorisation reformulation
\cite[Theorem 5]{CapelLuciaPerezGarcia_ConditionalRelativeEntropy_2018}.
With $X=A^c$ and $Y=B^c$, we set
\[
 h(\sigma_{XY})
 :=(\sigma_X\otimes\sigma_Y)^{-1/2}\sigma_{XY}
                    (\sigma_X\otimes\sigma_Y)^{-1/2}
                      -\one_{XY}.
\]
The superadditivity estimate reads
\[
 D(\rho_X\Vert\sigma_X)+D(\rho_Y\Vert\sigma_Y)
 \le\bigl(1+2\|h(\sigma_{XY})\|_\infty\bigr)
                  D(\rho_{XY}\Vert\sigma_{XY}).
\]
By data processing and the definition of $D_A,D_B$, this gives,
whenever $\|h(\sigma_{XY})\|_\infty<1/2$,
\begin{equation}\label{stat:two-set-multiplicative}
 D(\rho\Vert\sigma)
 \le\frac{D_A(\rho\Vert\sigma)+D_B(\rho\Vert\sigma)}
              {1-2\|h(\sigma_{XY})\|_\infty}.
\end{equation}
Here the defect enters multiplicatively. It measures the same
departure from a product state as the additive logarithmic defect
in \eqref{stat:two-set}. More precisely, a comparison
$e^{-\varepsilon}\sigma_X\otimes\sigma_Y\le\sigma_{XY}
\le e^{\varepsilon}\sigma_X\otimes\sigma_Y$ implies
\[
 \|\log\sigma_{XY}-\log\sigma_X-\log\sigma_Y\|_\infty
       \le\varepsilon,
 \qquad \|h(\sigma_{XY})\|_\infty\le e^\varepsilon-1.
\]
Thus the same product-comparison estimates control both defects,
with comparable bounds for small $\varepsilon$. We retain the
additive form \eqref{stat:two-set}, since the multipartite
decomposition and the buffered entropy--energy comparison below
introduce further additive errors in any case.
\end{remark}

%
\begin{figure}[htbp]
\centering
\begin{tikzpicture}[x=.55cm,y=.55cm,font=\small,
  region/.style={thick,decorate,decoration={brace,mirror,amplitude=4pt}}]
 \node[anchor=west] at (-.5,4.5) {(a) Overlapping cover: $A\cup B=\Lambda$};
 \foreach \a/\b/\col/\lab in
  {-.5/6.5/darkblue!30!cyan/X,6.5/15.5/midgreen/A\cap B,
   15.5/22.5/midpink/Y}{
  \fill[\col,opacity=.24] (\a,2.05) rectangle (\b,2.95);
  \node at ({(\a+\b)/2},3.55) {$\lab$};
 }
 \draw[darkpurple!35] (0,2.5)--(22,2.5);
 \foreach \x in {0,...,22}
  \shade[shading=ball,ball color=darkpurple] (\x,2.5) circle (.24);
 \draw[region,midpink] (6.5,1.8)--(22.5,1.8)
  node[midway,below=5pt] {$A$};
 \draw[region,darkblue!70!black] (-.5,.55)--(15.5,.55)
  node[midway,below=5pt] {$B$};

 \node[anchor=west] at (-.5,-1.5)
  {(b) Disjoint union: $A=A_1\sqcup A_2\sqcup A_3$};
 \foreach \a/\b/\lab in
  {-.5/2.5/X_0,6.5/9.5/X_1,13.5/16.5/X_2,20.5/22.5/X_3}{
  \fill[black!8] (\a,-3.95) rectangle (\b,-3.05);
  \node at ({(\a+\b)/2},-2.55) {$\lab$};
 }
 \foreach \a/\b/\lab in
  {2.5/6.5/A_1,9.5/13.5/A_2,16.5/20.5/A_3}{
  \fill[darkblue!30!cyan,opacity=.24] (\a,-3.95) rectangle (\b,-3.05);
  \node at ({(\a+\b)/2},-2.55) {$\lab$};
 }
 \draw[darkpurple!35] (0,-3.5)--(22,-3.5);
 \foreach \x in {0,...,22}
  \shade[shading=ball,ball color=darkpurple] (\x,-3.5) circle (.24);

 \foreach \a/\b in {-.5/2.5,6.5/9.5,13.5/22.5}{
  \fill[midgreen,opacity=.24] (\a,-5.85) rectangle (\b,-4.85);
  \node at ({(\a+\b)/2},-5.35) {$C_2$};
 }
 \fill[darkblue!30!cyan,opacity=.24] (2.5,-5.85) rectangle (6.5,-4.85);
 \node at (4.5,-5.35) {$P_1=A_1$};
 \fill[orange,opacity=.28] (9.5,-5.85) rectangle (13.5,-4.85);
 \node at (11.5,-5.35) {$A_2$};
 \foreach \x in {2.5,6.5,9.5,13.5}
  \draw[black!35,dashed] (\x,-4.1)--(\x,-4.75);
 \node at (11,-6.7)
  {$X=X_0\sqcup X_1\sqcup X_2\sqcup X_3,\qquad C_2=X\cup A_3$};
\end{tikzpicture}
\caption{The two decompositions in Lemma~\ref{stat:entropy-identities}.
(a) The complements are $X=B\setminus A$ and $Y=A\setminus B$;
tracing out the overlap $A\cap B$ leaves the marginal on $XY$ in
\eqref{stat:two-set}. (b) For $m=3$ and the merger $j=2$ in
\eqref{stat:entropy-remainder}, the systems are $P_1=A_1$, $A_2$,
and $C_2=X\cup A_3$. The green pieces form the generally disconnected
conditioning set $C_2$. The interval arrangements are illustrative:
the lemma holds for arbitrary sets.}
\label{stat:entropy-geometries}
\end{figure}
 
We first establish MCMI estimates for connected regions and then
extend them to the disconnected regions appearing in $K_\sigma(A)$
through a cut decomposition that keeps the original physical regions
fixed.

\subsection{Two connected correlation estimates}
\label{stat:connected-estimates}

The entropy identities above lead to two correlation estimates across a
separator: one with the separator retained as a conditioning region, and
one after it is traced out.  For the first, we follow the local
interpolation and effective-Hamiltonian approach of
\cite[Supplemental Material]{kuwahara2025clustering}.  The idea is to
cut the Hamiltonian at two points inside the separator and restore the
removed interactions approximately by operators localised near the cuts.
The five-piece decomposition below provides a buffer around each cut
and a central region separating the two corrections.  This produces an
approximating state with controlled MCMI; a relative operator bound then
transfers this control to the original Gibbs state, since it survives
partial traces and bounds the logarithms of the marginals.

\subsubsection{Retaining the separator}
We partition $\Lambda=X\sqcup Y\sqcup Z$ into three consecutive intervals
and write $\ell=|Y|$.  The following construction uses only the finite-range
interaction bounds.

We split $Y$ into five consecutive pieces $Y_1,\ldots,Y_5$ of lengths
differing by at most one, as illustrated in
Figure~\ref{stat:connected-five-piece}, and we set
\[
 W=e^{-\beta H_\Lambda},\qquad
 H_0=H_{XY_1}+H_{Y_2Y_3Y_4}+H_{Y_5Z}.
\]
Thus $H_0$ is obtained from $H_\Lambda$ by removing the interactions
across the cuts between $Y_1,Y_2$ and between $Y_4,Y_5$.
The local interpolation construction in
\cite[Supplemental Material]{kuwahara2025clustering} supplies
invertible operators $T_{12}$ and $T_{45}$, supported on $Y_1Y_2$ and
$Y_4Y_5$, and the positive operator
\begin{equation}
 \widetilde W=T_{45}T_{12}e^{-\beta H_0}T_{12}^*T_{45}^*,
 \qquad \widetilde\sigma=\frac{\widetilde W}{\Tr[\widetilde W]}.
 \label{stat:five-piece-approximation}
\end{equation}
The corresponding relative approximation and effective-Hamiltonian estimates in
\cite[Supplemental Material]{kuwahara2025clustering} give
\begin{align}
 H_{\widetilde\sigma}(X:Z\mid Y)
 &\leq a_\beta(\ell),
 &a_\beta(\ell)&=A_\beta(1+\ell)^{s_\beta}e^{-b_\beta \ell},
 \label{stat:approximate-connected-mcmi}\\
 \sup_{\psi\ne0}
 \frac{|\langle\psi,(W-\widetilde W)\psi\rangle|}
      {\langle\psi,W\psi\rangle}
 &\leq\delta_\beta(\ell),
 &\delta_\beta(\ell)&=e^{\xi_\beta-\ell/c_\beta}.
 \label{stat:global-relative-approximation}
\end{align}
Here $A_\beta,b_\beta,c_\beta>0$ and $s_\beta,\xi_\beta\geq0$ are
constants depending only on the inverse temperature $\beta$, the local
dimension $d$, the interaction range $r_0$, and the interaction bound
$J$.  The estimates hold for $\ell\geq\ell_\beta$, where the threshold
$\ell_\beta>0$ has the same parameter dependence.  In particular, these
constants are independent of the chain length and of the sizes and
positions of $X,Y,Z$; the dependence on the separator length $\ell$
is explicit in $a_\beta(\ell)$ and $\delta_\beta(\ell)$.
Both estimates concern the same approximant $\widetilde W$, and the relative error in
\eqref{stat:global-relative-approximation} controls the \emph{whole}
unnormalised operator $W$.

%
\begin{figure}[htbp]
\centering
\begin{tikzpicture}[x=.58cm,y=.58cm,font=\small]
 \fill[darkblue!30!cyan,opacity=.28] (-.5,.45) rectangle (3.5,1.55);
 \foreach \a/\b/\col in
  {3.5/6.5/midgreen,6.5/9.5/midgreen!65!cyan,
   9.5/12.5/midgreen,12.5/15.5/midgreen!65!cyan,15.5/18.5/midgreen}{
  \fill[\col,opacity=.23] (\a,.45) rectangle (\b,1.55);
 }
 \fill[midpink,opacity=.23] (18.5,.45) rectangle (22.5,1.55);
 \draw[darkpurple!35,line width=.5pt] (0,1)--(22,1);
 \foreach \x in {0,...,22}
  \shade[shading=ball,ball color=darkpurple] (\x,1) circle (.28);

 \foreach \a/\b/\lab/\col in
  {-.5/3.5/X/darkblue!30!cyan,3.5/6.5/Y_1/midgreen!50!black,
   6.5/9.5/Y_2/midgreen!50!black,9.5/12.5/Y_3/midgreen!50!black,
   12.5/15.5/Y_4/midgreen!50!black,15.5/18.5/Y_5/midgreen!50!black,
   18.5/22.5/Z/midpink}{
  \draw[thick,\col,decorate,decoration={brace,mirror,amplitude=4pt}]
   (\a,.24)--(\b,.24);
  \node[text=\col] at ({(\a+\b)/2},-.43) {$\lab$};
 }

 \foreach \a/\b/\lab/\col in
  {3.5/9.5/T_{12}/darkblue!30!cyan,12.5/18.5/T_{45}/midpink}{
  \draw[thick,\col] (\a,2.15)--(\a,2.55)--(\b,2.55)--(\b,2.15);
  \node[text=\col,above] at ({(\a+\b)/2},2.62) {$\lab$};
 }
 \foreach \x in {6.5,15.5}
  \draw[black!55,dashed] (\x,1.92)--(\x,-1.5);
 \foreach \a/\b/\lab in
  {-.5/6.5/H_{XY_1},6.5/15.5/H_{Y_2Y_3Y_4},15.5/22.5/H_{Y_5Z}}{
  \draw[darkpurple,thick,decorate,decoration={brace,mirror,amplitude=5pt}]
   (\a,-1.37)--(\b,-1.37);
  \node[text=darkpurple] at ({(\a+\b)/2},-2.12) {$\lab$};
 }
\end{tikzpicture}
\caption{The five-piece approximation on the spin chain. Each ball is a
site; the coloured bands and braces specify $X,Y_1,\ldots,Y_5,Z$.
The two dashed cuts split $H_0$ into the three Hamiltonians shown below.
The local operators $T_{12}$ and $T_{45}$ act on the indicated pairs of
separator pieces. Both estimates \eqref{stat:approximate-connected-mcmi}
and \eqref{stat:global-relative-approximation} concern this construction.}
\label{stat:connected-five-piece}
\end{figure}
 
\begin{lemma}[Connected MCMI from relative approximation]
\label{stat:connected-mcmi-relative}
In the setting above, if $\delta_\beta(\ell)<1$, then
\begin{equation}
 H_\sigma(X:Z\mid Y)
 \leq a_\beta(\ell)-4\log(1-\delta_\beta(\ell)).
 \label{stat:connected-operator-bound}
\end{equation}
Consequently, there are $C_\beta,c'_\beta,\ell'_\beta>0$, independent of
$|\Lambda|$, such that
\begin{equation}
 H_\sigma(X:Z\mid Y)\leq F_\beta(\ell):=C_\beta e^{-c'_\beta \ell}
 \qquad(\ell\geq \ell'_\beta).
 \label{stat:connected-mcmi-exponential}
\end{equation}
\end{lemma}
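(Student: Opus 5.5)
The bound follows by transferring the MCMI control of the approximant $\widetilde\sigma$ to $\sigma$ through the relative approximation \eqref{stat:global-relative-approximation}. The key observation is that the four marginals in $\mathbf H_\sigma(X:Z\mid Y)$ can each be compared with the corresponding marginal of $\widetilde\sigma$ in the positive-semidefinite order. Writing $\delta=\delta_\beta(\ell)<1$, \eqref{stat:global-relative-approximation} says $(1-\delta)W\le\widetilde W\le(1+\delta)W$ as operators.

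First, this operator inequality is preserved by partial traces, since $\Tr_{V^c}$ is positive. Hence for every region $V\in\{XYZ,XY,YZ,Y\}$,
\[
 (1-\delta)\Tr_{V^c}W\le\Tr_{V^c}\widetilde W\le(1+\delta)\Tr_{V^c}W .
\]
Taking traces gives the same sandwich for the normalisations $\Tr W$ and $\Tr\widetilde W$. Dividing yields
\[
 \tfrac{1-\delta}{1+\delta}\,\sigma_V\le\widetilde\sigma_V\le\tfrac{1+\delta}{1-\delta}\,\sigma_V .
\]

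Second, operator monotonicity of the logarithm converts this into
\[
 \norm{\log\sigma_V-\log\widetilde\sigma_V}\le\log\frac{1+\delta}{1-\delta}\le-2\log(1-\delta),
\]
using $1+\delta\le(1-\delta)^{-1}$. Here the marginal on $Y$ alone is handled identically, and the convention $\sigma_\varnothing=1$ is irrelevant since $Y\ne\varnothing$.

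Third, $\mathbf H_\sigma-\mathbf H_{\widetilde\sigma}$ is a signed sum of four such differences. The triangle inequality would give $a_\beta(\ell)-8\log(1-\delta)$. To reach the stated constant $4$, I would normalise more carefully: the scalar normalisation factors cancel exactly in $\mathbf H$, because two marginals enter with $+$ and two with $-$. One may therefore work with the unnormalised partial traces of $W$ and $\widetilde W$ directly. For these, $\log(1-\delta)\le\log\Tr_{V^c}\widetilde W-\log\Tr_{V^c}W\le\log(1+\delta)$. Each difference then lies in an interval of length $\log\frac{1+\delta}{1-\delta}$, centred appropriately.

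Grouping the $+$ terms and the $-$ terms, the total deviation lies in $[2\log(1-\delta)-2\log(1+\delta),\,2\log(1+\delta)-2\log(1-\delta)]$. Its norm is therefore at most $-4\log(1-\delta)$. Adding the bound $H_{\widetilde\sigma}(X:Z\mid Y)\le a_\beta(\ell)$ from \eqref{stat:approximate-connected-mcmi} gives \eqref{stat:connected-operator-bound}.

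For \eqref{stat:connected-mcmi-exponential}, choose $\ell'_\beta\ge\ell_\beta$ large enough that $\delta_\beta(\ell)\le1/2$. Then $-\log(1-\delta)\le2\delta$, so $-4\log(1-\delta)\le 8e^{\xi_\beta}e^{-\ell/c_\beta}$. The polynomial prefactor in $a_\beta$ is absorbed by weakening the rate: $(1+\ell)^{s_\beta}e^{-b_\beta\ell}\le C e^{-b_\beta\ell/2}$. Taking $c'_\beta=\min\{b_\beta/2,1/c_\beta\}$ and a suitable $C_\beta$ finishes the argument.

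The main point requiring care is the constant bookkeeping in the third step. The operator-monotonicity and partial-trace steps are routine; the only subtlety is that exactly tracking the sign structure, rather than applying the crude triangle inequality, is what produces the factor $4$. If that finer grouping fails, the argument still yields the same exponential conclusion with constant $8$.
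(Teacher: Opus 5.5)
Your proposal is correct and follows the paper's proof. Both arguments use positivity of partial traces, operator monotonicity of the logarithm on the unnormalised marginals $\Tr_{V^c}W$ and $\Tr_{V^c}\widetilde W$, cancellation of the normalisations in the four-logarithm expression, and absorption of the polynomial prefactor with $c'_\beta=\min\{b_\beta/2,1/c_\beta\}$.

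One correction to your commentary: the factor $4$ does not need the sign grouping. For unnormalised marginals each difference already satisfies $\|\log\widetilde W_V-\log W_V\|_\infty\le-\log(1-\delta)$, so the plain triangle inequality gives $-4\log(1-\delta)$, exactly as in the paper. The factor $8$ came only from normalising each marginal separately.
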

\begin{proof}
We compare the Gibbs state $\sigma$ of the original Hamiltonian
$H_\Lambda$ with the approximating state
$\widetilde\sigma=\widetilde W/\Tr[\widetilde W]$ constructed from the
truncated Hamiltonian $H_0$ and the local corrections $T_{12},T_{45}$.
The aim is to compare their MCMIs while keeping track of the additive
error introduced by replacing each logarithmic marginal.
We write $\delta=\delta_\beta(\ell)$.  Since
\eqref{stat:global-relative-approximation} holds for every vector,
\[
 (1-\delta)W\leq\widetilde W\leq(1+\delta)W.
\]
Since the partial trace is positive, writing
$W_V=\Tr_{V^c}[W]$ and $\widetilde W_V=\Tr_{V^c}[\widetilde W]$, we have
\begin{equation}
 (1-\delta)W_V\leq\widetilde W_V\leq(1+\delta)W_V
 \qquad\text{for every }V\subseteq\Lambda.
 \label{stat:marginal-relative-order}
\end{equation}

By operator monotonicity of the logarithm,
\[
 \log(1-\delta)I
 \leq\log\widetilde W_V-\log W_V
 \leq\log(1+\delta)I.
\]
Since $\log(1+\delta)\leq-\log(1-\delta)$, it follows that
\begin{equation}
 \|\log\widetilde W_V-\log W_V\|_\infty\leq-\log(1-\delta).
 \label{stat:relative-log-comparison}
\end{equation}

The normalising constants cancel in the four-logarithm MCMI expression.
The triangle inequality therefore gives
\begin{align*}
 H_\sigma(X:Z\mid Y)
 &\leq H_{\widetilde\sigma}(X:Z\mid Y)
   +\sum_{V\in\{XYZ,Y,XY,YZ\}}
     \|\log W_V-\log\widetilde W_V\|_\infty\\
 &\leq a_\beta(\ell)-4\log(1-\delta_\beta(\ell)).
\end{align*}
This proves \eqref{stat:connected-operator-bound}.

For sufficiently large $\ell$, $\delta_\beta(\ell)\leq1/2$ and
$-\log(1-\delta_\beta(\ell))\leq2\delta_\beta(\ell)$.
Absorbing the polynomial prefactor into the exponential proves
\eqref{stat:connected-mcmi-exponential}, with
$c'_\beta=\min\{b_\beta/2,1/c_\beta\}$.
\end{proof}

\subsubsection{Tracing out the separator}
The second estimate compares the remaining marginal with a product
state in operator order, which will also allow further partial traces.
It is an operator-order consequence of \cite[Proposition~8.1]{bcp},
whose exponential uniform clustering assumption is supplied for
inhomogeneous chains by
\cite[Corollary~I.2]{BergamaschiChen2025FastMixing}.
We give a derivation below from local indistinguishability and Araki
expansional bounds.

\begin{lemma}[Multiplicative comparison across a connected gap]
\label{stat:inhomogeneous-connected-mixing}
There are constants $c_\beta,a_\beta>0$ and $L_0$, independent of the volume,
such that every consecutive partition $\Lambda=X\sqcup Y\sqcup Z$ with
$|Y|\geq L\geq L_0$ satisfies
\begin{equation}
 e^{-g_\beta(L)}\,\sigma_X\otimes\sigma_Z
 \leq\sigma_{XZ}\leq
 e^{g_\beta(L)}\,\sigma_X\otimes\sigma_Z,
 \qquad g_\beta(L)=c_\beta e^{-a_\beta L}.
\label{stat:connected-mix-input}
\end{equation}
Consequently
$H_\sigma(X:Z\mid\varnothing)\leq g_\beta(L)$.
\end{lemma}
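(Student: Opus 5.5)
We plan to prove \eqref{stat:connected-mix-input} by factorising $e^{-\beta H}$ into a product state times local corrections near the two boundaries of the gap $Y$, and then letting local indistinguishability wash out those corrections after tracing out $Y$. We write $H=H_{XY_1}+H_{Y_2}+H_{Y_3Z}+V_1+V_2$, where $Y=Y_1Y_2Y_3$ has three pieces of length $\asymp L/3$. Here $V_1$ and $V_2$ are the boundary interactions across the cuts $Y_1|Y_2$ and $Y_2|Y_3$, each of norm at most $J$ and supported within distance $r_0$ of its cut. Araki's expansional in one dimension gives
\[
 e^{-\beta H}=E_1E_2\,e^{-\beta(H_{XY_1}+H_{Y_2}+H_{Y_3Z})}\,E_2^\dagger E_1^\dagger .
\]
The operators $E_k$ and $E_k^{-1}$ have norms bounded by a constant depending only on $\beta,J,r_0$, and they are quasi-local around the cuts. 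Truncating them to radius $L/10$ costs $e^{-cL}$ in norm. This is the input from \cite{araki1969,perez-locality2023}, which holds for inhomogeneous interactions. The two boundary corrections are separated by the middle piece $Y_2$, so we may treat them independently.

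The key step is a multiplicative comparison of the marginal. We aim to show
\[
 \sigma_{XZ}=(1+O(e^{-aL}))\,\frac{\Tr_{Y}[\,\cdots]}{\cdots}
\]
in operator order, and that the right-hand side is comparable to $\sigma_X\otimes\sigma_Z$. The natural route is through Lemma~\ref{stat:connected-mcmi-relative}'s mechanism. Build an approximant $\widetilde W$ in which the $X$-side and the $Z$-side are coupled only through $Y_2$, with relative error $(1\pm\delta)W$, $\delta=e^{-cL}$, exactly as in \eqref{stat:global-relative-approximation}. The partial trace preserves the operator order, so it suffices to prove the product comparison for $\widetilde W_{XZ}$.

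For the approximant, tracing out $Y_2$ leaves an operator of the form $\Tr_{Y_2}[\widetilde W]=\sum$ of terms coupling the $X$ and $Z$ sides only through expectations in a Gibbs-like state of $Y_2$. Uniform exponential clustering of the middle state then gives $\widetilde W_{XZ}=(1\pm e^{-cL})\widetilde W_X\otimes\widetilde W_Z/\Tr\widetilde W$. The clustering we need is for observables with support of order $L$ near the two ends of $Y_2$, and it must hold in operator order rather than only in expectation. For this we invoke \cite[Corollary~I.2]{BergamaschiChen2025FastMixing}, which gives clustering for inhomogeneous chains, and convert it into operator order via \cite[Proposition~8.1]{bcp}. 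Concretely, positive operators $P\ge0$ on $X$ and $Q\ge0$ on $Z$ satisfy $\Tr[\sigma PQ]=(1\pm\varepsilon)\Tr[\sigma P]\Tr[\sigma Q]$, and this multiplicative product-state statement, applied on a tomographically complete family of product positives, yields the order bound. We expect this upgrade from clustering to a uniform multiplicative comparison, with $\varepsilon$ independent of the sizes of $X$ and $Z$, to be the main obstacle. The Araki bounds on $E_k^{\pm1}$ are what make the errors relative rather than additive.

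Finally, we return to the true marginals. From $(1-\delta)W\le\widetilde W\le(1+\delta)W$ we get the same bounds for the $X$, $Z$ and $XZ$ marginals, and chaining these with the product comparison for $\widetilde W$ gives \eqref{stat:connected-mix-input} with $g_\beta(L)=3\log\frac{1+\delta}{1-\delta}+O(e^{-cL})\le c_\beta e^{-a_\beta L}$ for $L\ge L_0$. The consequence $H_\sigma(X:Z\mid\varnothing)\le g_\beta(L)$ then follows as in Remark~\ref{stat:multiplicative-alternative}. Operator monotonicity of the logarithm turns the two-sided order into $-g_\beta\le\log\sigma_{XZ}-\log(\sigma_X\otimes\sigma_Z)\le g_\beta$, and since $\log(\sigma_X\otimes\sigma_Z)=\log\sigma_X+\log\sigma_Z$ and $\sigma_\varnothing=1$, this is exactly the claimed bound.
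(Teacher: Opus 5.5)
\textbf{Where the proposal breaks.} The gap is at the step you yourself flag as the main obstacle, and the mechanism you propose there fails. Write $\tau=\sigma_X\otimes\sigma_Z$ and $\Gamma=\tau^{-1/2}\sigma_{XZ}\tau^{-1/2}-\one$. The bounds $\Tr[\sigma PQ]=(1\pm\varepsilon)\Tr[\sigma P]\Tr[\sigma Q]$ for all $P,Q\ge0$ only give $|\langle x\otimes z,\Gamma(x\otimes z)\rangle|\le\varepsilon$ for unit product vectors. In other words, $(1+\varepsilon)\tau-\sigma_{XZ}$ is block-positive. But \eqref{stat:connected-mix-input} needs $\norm{\Gamma}\lesssim\varepsilon$.

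Tomographic completeness of product positives concerns their linear span, not the positive cone. For $\tau=\one/D^2$ and the isotropic state built from the maximally entangled projector on $\mathbb C^D\otimes\mathbb C^D$, the product numerical radius of $\Gamma$ is smaller than its norm by a factor of order $D$. So this upgrade can lose a factor of order $\min(\dim\mathcal H_X,\dim\mathcal H_Z)$, which is exponential in the volume.

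There are two further problems. First, the statement you invoke is about $\sigma$ itself, so it cannot establish the product comparison for $\widetilde W_{XZ}$; and if it held in operator form, the approximant would be superfluous. Second, clustering of the middle state of $\widetilde W$ in expectation, for observables on $O(L)$ sites, yields only additive errors. These multiply $d^{O(L)}$ operator-valued coefficients on $X$ and $Z$, so by itself this never gives $\widetilde W_{XZ}=(1\pm e^{-cL})\widetilde W_X\otimes\widetilde W_Z/\Tr\widetilde W$. Saying that the Araki bounds make the errors relative names the right ingredient but does not implement it.

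\textbf{What is missing, and how the paper supplies it.} You need a representation in which all dependence on the large regions sits in common outer factors, so that the only approximated object is an operator bounded above and below. The paper writes $(\sigma^{UV})_U=S\,T_U^{UV}S/\Tr[S\,T_U^{UV}S]$, with $S=e^{-\beta H_U/2}$, $T_U^{UV}=\Tr_V[\sigma^VF_{U,V}]$ and $F_{U,V}=E_{U,V}^*E_{U,V}$. This gives $Q^{-2}\one\le T_U^{UV},T_U^{UVW}\le Q^2\one$.

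With this bracket, an operator-\emph{norm} bound $\norm{T_U^{UVW}-T_U^{UV}}\le\omega$ upgrades automatically to $(1\pm Q^2\omega)T_U^{UV}$, and congruence by $S$ preserves the order. The norm bound comes from two inputs:
\begin{enumerate}
 \item the expansional tails, which localise both $F$'s to $F_{U,V_1}$;
 \item a \emph{trace-norm} input, namely local indistinguishability of $(\sigma^{VW})_{V_1}$ and $(\sigma^V)_{V_1}$, used through $\norm{\Tr_{V_1}[DF_{U,V_1}]}\le\|D\|_1\norm{F_{U,V_1}}$.
\end{enumerate}
The second input is exactly what product-observable clustering cannot supply without dimension loss.

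Folding through the midpoint of $Y_2$ then turns your two cuts into a single boundary removal, with $U=X\cup Z$, $V=Y_1\cup Y_3$ and $W=Y_2$. This compares $\sigma_{XZ}$ with $(\sigma^{XY_1})_X\otimes(\sigma^{Y_3Z})_Z$ up to factors $e^{\pm s}$. Partial traces of that same inequality then give $e^{\pm3s}$ against $\sigma_X\otimes\sigma_Z$. No Kuwahara-type approximant is needed.

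If you instead want to use \cite[Proposition~8.1]{bcp} as a black box, you must use it in operator form. For example, a bound $\norm{\sigma_{XZ}\tau^{-1}-\one}\le\varepsilon$ gives the order directly. Indeed, $\tau^{-1/2}\sigma_{XZ}\tau^{-1/2}$ is Hermitian and similar to $\sigma_{XZ}\tau^{-1}$, so its spectrum lies in $[1-\varepsilon,1+\varepsilon]$.
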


We first record the two known one-dimensional estimates used in the
proof.  Blocking the fixed interaction range and rescaling the strength
gives a nearest-neighbour chain.  The constants below depend only on
$\beta,d,r_0,J$ and may change under this fixed blocking; they are
uniform in all finite restrictions, their lengths and their positions.

\begin{lemma}[Exponential local indistinguishability]
\label{stat:local-indistinguishability}
There are constants $b_\beta,c_\beta>0$ such that, for consecutive
intervals $U,V,W$,
\begin{equation}
 \| (\sigma^{UVW})_U-(\sigma^{UV})_U\|_1
 \leq b_\beta e^{-c_\beta|V|}.
\label{stat:mix-foundation-li}
\end{equation}
Here $\sigma^R=e^{-\beta H_R}/\Tr[e^{-\beta H_R}]$ denotes the
Gibbs state of the Hamiltonian restricted to $R$.
\end{lemma}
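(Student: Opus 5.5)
The plan is to reduce the estimate to a statement about Araki's expansionals for the boundary interaction at the right end of $V$. We write $\partial_{VW}=H_{UVW}-H_{UV}-H_W$ for the interactions crossing the cut between $V$ and $W$; by the finite-range assumption it is supported within distance $r_0$ of that cut and $\norm{\partial_{VW}}\le J r_0$. After the fixed blocking we may take nearest-neighbour interactions, so $\partial_{VW}$ is a single bounded two-site term. Araki's one-dimensional estimate \cite{araki1969}, in the uniform form of \cite{perez-locality2023}, gives the expansional
\[
 E=e^{-\beta H_{UVW}}e^{\beta(H_{UV}+H_W)}.
\]
Both $E$ and $E^{-1}$ are bounded in norm by a constant $G_\beta$ independent of the lengths of $U,V,W$. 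Moreover, $E$ is approximable by operators localised near the cut: there is an $E_\ell$ supported on the $\ell$ sites on either side of the cut with
\[
 \norm{E-E_\ell}\le G_\beta e^{-c\ell}.
\]

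The key steps are as follows. First, we write $e^{-\beta H_{UVW}}=E\,(e^{-\beta H_{UV}}\otimes e^{-\beta H_W})$. Second, we choose $\ell=\lfloor |V|/2\rfloor$, so that $E_\ell$ is supported in $V_2\cup W$, where $V_2$ is the right half of $V$. Third, we test against an observable $O_U$ with $\norm{O_U}\le1$. This gives
\[
 \Tr[O_U e^{-\beta H_{UVW}}]=\Tr[O_U E_\ell (e^{-\beta H_{UV}}\otimes e^{-\beta H_W})]+\text{error}.
\]
The error is bounded by $G_\beta e^{-c\ell}\Tr[e^{-\beta H_{UV}}]\Tr[e^{-\beta H_W}]$, using $|\Tr[XY]|\le\norm{X}\Tr|Y|$ with $Y$ positive. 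The same holds with $O_U=\one$ for the partition functions, and the ratio $Z_{UVW}/(Z_{UV}Z_W)$ is bounded below by $G_\beta^{-1}$.

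Since $O_U$ and $E_\ell$ have disjoint supports, the main term still couples them through the correlated state $\sigma^{UV}\otimes\sigma^W$. We therefore need the decay of correlations for $\sigma^{UV}$ between $U$ and $V_2$, which are separated by $|V|/2$ sites. Exponential clustering in one-dimensional Gibbs states at every temperature is given by Araki's result (or \cite[Corollary~I.2]{BergamaschiChen2025FastMixing} for inhomogeneous chains). With it, the main term factorises up to an error of order $e^{-c'|V|}$:
\[
 \langle O_U\rangle_{\sigma^{UV}}\cdot\Tr[E_\ell\,\sigma^{UV}\otimes\sigma^W]\cdot Z_{UV}Z_W .
\]
The denominator (the case $O_U=\one$) factorises in the same way. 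Dividing the two expressions leaves $\langle O_U\rangle_{\sigma^{UV}}$ plus $\mathcal O(e^{-c''|V|})$. This requires $\Tr[E_\ell\,\sigma^{UV}\otimes\sigma^W]$ to be bounded below by a positive constant. That holds because this quantity equals $Z_{UVW}/(Z_{UV}Z_W)$ up to an exponentially small error, and that ratio is at least $G_\beta^{-1}$. Taking the supremum over $O_U$ gives the trace-norm bound, with the dimension-free constants absorbed into $b_\beta$. The case of short $V$ is handled by enlarging $b_\beta$, since the trace distance is at most $2$.

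I expect the main obstacle to be the uniform Araki bounds: the norm bound on $E^{\pm1}$ and, more delicately, its quasi-locality in the inhomogeneous setting. I would take these directly from \cite{perez-locality2023}, which treats general finite-range interactions with uniform constants. The clustering input is a second nontrivial ingredient. If it were not available as stated, I would instead iterate the expansional argument with the cut placed inside $V$. That alternative derives the clustering from the same expansional bounds through a transfer-operator contraction in the style of Araki.
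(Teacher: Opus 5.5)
Your proposal is correct and follows essentially the paper's route: the paper takes the uniform interval covariance (clustering) estimate for inhomogeneous chains from \cite{BergamaschiChen2025FastMixing} and feeds it into the clustering-to-local-indistinguishability implication of \cite[Proposition~7.1]{bcp}, which is exactly what you reconstruct by factorising $e^{-\beta H_{UVW}}=E\,(e^{-\beta H_{UV}}\otimes e^{-\beta H_W})$, localising $E$ near the cut, and using clustering of $\sigma^{UV}$ across the left half of $V$. Your lower bound $Z_{UVW}/(Z_{UV}Z_W)\ge\|E^{-1}\|_\infty^{-1}$ and your absorption of the factorial localisation tail into $e^{-c|V|}$ correspond to the paper's remark that the superexponential term is absorbed into the exponential.
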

\begin{proof}
This follows from the interval covariance estimate in
\cite{BergamaschiChen2025FastMixing}, followed by the local
indistinguishability implication in \cite[Proposition~7.1]{bcp}; its
superexponential term is absorbed into the exponential above.
\end{proof}

\begin{lemma}[Uniform bounds and locality of expansionals]
\label{stat:expansional-estimates}
For consecutive intervals $U,V,W$, we set
\[
 E_{U,V}:=e^{-\beta H_{UV}/2}
               e^{\beta(H_U+H_V)/2}.
\]
These operators are called \emph{expansionals}.
There are constants $Q\geq1$ and $b'_\beta,c'_\beta>0$, and a
decaying function $r_\beta(t)\leq b'_\beta e^{-c'_\beta t}$, such that
\begin{align}
 \|E_{U,V}\|_\infty,\ \|E_{U,V}^{-1}\|_\infty&\leq Q,
 \label{stat:mix-foundation-expansional-norm}\\
 \|E_{U,VW}-E_{U,V}\|_\infty&\leq r_\beta(|V|).
 \label{stat:mix-foundation-expansional-locality}
\end{align}
\end{lemma}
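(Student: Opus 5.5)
The plan is to derive both bounds from Araki's one-dimensional expansional theory \cite{araki1969}, in the form used by \cite{bcp} and \cite{perez-locality2023}. After blocking the fixed range $r_0$ into sites of size $r_0$, we may assume nearest-neighbour interactions with uniform bound $J'$ depending only on $J,r_0$. Write $h_{U,V}$ for the interaction terms crossing the cut between $U$ and $V$; after blocking this is a single bounded two-site term. Since $H_{UV}=H_U+H_V+h_{U,V}$ and $H_U,H_V$ commute, the standard expansional (Dyson) formula applies:
\[
 E_{U,V}=e^{-\beta H_{UV}/2}e^{\beta(H_U+H_V)/2}
 =\sum_{k\ge0}(-1)^k\!\!\int_{0\le t_1\le\cdots\le t_k\le\beta/2}\!\!
 \Gamma_{t_1}(h)\cdots\Gamma_{t_k}(h)\,dt,
\]
where $\Gamma_t(h)=e^{-tH_{UV}}\,h\,e^{tH_{UV}}$ is the imaginary-time evolution of the boundary term.

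The first key step is the norm bound \eqref{stat:mix-foundation-expansional-norm}. I would use Araki's uniform bound on imaginary-time evolution of a local operator in 1D:
\[
 \|e^{-tH_R}\,h\,e^{tH_R}\|_\infty\le C(t)\quad\text{for }|t|\le\beta/2,
\]
uniformly in the interval $R$ and the position of $h$. This comes from the expansion of $e^{-tH}he^{tH}$ in nested commutators. In 1D, nested commutators with a nearest-neighbour chain spread only linearly, so the combinatorics give factorial-type control. Summing that series is convergent for every finite $t$, not just small $t$, which is the essential one-dimensional input. Summing the Dyson series then gives $\|E_{U,V}\|_\infty\le e^{\beta C(\beta/2)/2}=:Q$. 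The inverse $E_{U,V}^{-1}=e^{-\beta(H_U+H_V)/2}e^{\beta H_{UV}/2}$ is the analogous expansional with the roles of the two Hamiltonians exchanged. It is handled by the same series and the same bound on $e^{t(H_U+H_V)}he^{-t(H_U+H_V)}$.

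The second key step is locality \eqref{stat:mix-foundation-expansional-locality}. I would write $E_{U,VW}-E_{U,V}$ as a difference of two Dyson series, one built with evolutions under $H_{UVW}$ and the other under $H_{UV}$. In both series the boundary term is the same, namely the term at the $U|V$ cut, since that cut is unchanged. The series differ only in the evolution, so the task reduces to comparing $e^{-tH_{UVW}}he^{tH_{UVW}}$ with $e^{-tH_{UV}}he^{tH_{UV}}$.

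Their difference is governed by Araki's locality estimate for imaginary-time evolution. Truncating the commutator expansion at order $m\approx|V|$ shows that the two evolutions agree up to that order, because the $V|W$ boundary term is not reached before then. The remaining tail is bounded by
\[
 \sum_{k\ge m}\frac{(2J't)^k}{k!}\,\mathrm{poly}(k)\le b\,e^{-c|V|}.
\]
A telescoping argument inside the ordered product of $\Gamma_{t_j}(h)$ factors, with the uniform bounds from the first step on all the other factors, then gives an error at most $r_\beta(|V|)\le b'_\beta e^{-c'_\beta|V|}$ after summing over $k$. At fixed $\beta$ the tail is in fact superexponential, and we simply absorb it into an exponential.

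The main obstacle is making the imaginary-time locality bound uniform in $\beta$: the tail only decays once $|V|$ exceeds a multiple of $\beta J'$. This is harmless, because constants may depend on $\beta$ and for short $|V|$ the claimed bound holds trivially by the first step, taking $b'_\beta\ge 2Q e^{c'_\beta|V_0|}$. If the factorial counting becomes delicate, I would instead cite the explicit statements of Araki's expansional bounds in \cite[Section~4]{bcp} or \cite{perez-locality2023}, which cover inhomogeneous finite-range chains.
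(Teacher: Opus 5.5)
\textbf{Gap in the self-derived imaginary-time estimates.} Your reduction of both bounds to estimates on the imaginary-time evolved boundary term, via the Dyson series or Gr\"onwall plus telescoping, is sound. One harmless slip: with $\Gamma_t$ defined through $H_{UV}$ one has $\frac{d}{ds}E(s)=-\Gamma_s(h)E(s)$, so later times stand on the left in the ordered products. The problem is the justification you give for the two imaginary-time estimates. Truncating the commutator expansion cannot produce the tail $\sum_{k\ge m}\frac{(2J't)^k}{k!}\mathrm{poly}(k)$. The number of sequences $(j_1,\dots,j_k)$ for which $[h_{j_k},[\cdots,[h_{j_1},h]\cdots]]$ need not vanish is of order $(k+2)!$: at step $j$ the support has about $j$ sites and about $j$ terms overlap it. Nothing forces the individual terms to be smaller than $(2J')^k\|h\|_\infty$. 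This factorial cancels the $1/k!$ and leaves a geometric series in $2J't$, which converges only for $t<1/(2J')$.

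The same count shows that your heuristic for the uniform bound cannot be right either. Linear spreading of supports is not a one-dimensional feature and holds verbatim in every dimension, where uniform all-$t$ bounds are not available in general. Hence for $\beta J'\gtrsim1$, which the lemma must cover, your locality step proves nothing as written. The obstacle is not that the tail only starts decaying once $|V|\gtrsim\beta J'$, but that the termwise series diverges. The formula you wrote has the shape of Araki's actual result, a factorial tail with a $\beta$-dependent base as in \eqref{loc-input-imaginary}. But that result is a genuinely one-dimensional theorem and is not obtained termwise.

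\textbf{How to close it, and what the paper does.} The paper rederives nothing. It imports both bounds from \cite[Proposition~3.2 and Corollary~3.4]{bcp}, controls the inverse by $E_{U,V}(z)^{-1}=E_{U,V}(-z)^*$ for real $z$, and notes that the two expansionals share their truncation on collars of radius below $|V|$, so the factorial tails give the exponential bound. Your fallback of citing these statements is exactly this proof.

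If you want to keep your own route, it suffices to import only Araki's uniform bound, but on a complex disk. Suppose $\|e^{-zH}he^{zH}\|_\infty\le C_R$ for all $|z|\le R$, uniformly in the interval and for both $H=H_{UVW}$ and $H=H_{UV}$. Then $g(z)=e^{-zH_{UVW}}he^{zH_{UVW}}-e^{-zH_{UV}}he^{zH_{UV}}$ is analytic, and its Taylor coefficients vanish up to order $m\asymp|V|/r_0$, since the $V|W$ cut is not reached before then. The maximum principle applied to $g(z)/z^m$ gives $\|g(t)\|_\infty\le2C_R(|t|/R)^m$. Taking $R=\beta$ and $|t|\le\beta/2$ yields $2C_\beta2^{-m}$, which your telescoping converts into \eqref{stat:mix-foundation-expansional-locality}. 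This needs the uniform bound on a disk of radius larger than $\beta/2$, not only on the real segment you state; Araki's bound as recorded in \cite{bcp} provides it.
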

\begin{proof}
These follow from \cite[Proposition~3.2 and Corollary~3.4]{bcp}.
Writing $E_{U,V}(z)=e^{-zH_{UV}}e^{z(H_U+H_V)}$, the inverse is
controlled by $E_{U,V}(z)^{-1}=E_{U,V}(-z)^*$ for real $z$.
The two expansionals share the same truncation on
collars of radius less than $|V|$.  Their factorial tails imply the
displayed exponential bound.
\end{proof}

\begin{proof}[Proof of Lemma~\ref{stat:inhomogeneous-connected-mixing}]
We combine Lemmas~\ref{stat:local-indistinguishability}
and~\ref{stat:expansional-estimates} to obtain a comparison in relative
operator order.  Both estimates also apply after folding, which keeps
the local dimension a fixed power of two.

The geometry has two stages.  We first compare the marginals on $U$
for the consecutive chains $UV_1V_2W$ and $UV_1V_2$: the buffer
$V=V_1V_2$ separates $U$ from the region $W$ that we remove, as
illustrated in Figure~\ref{stat:folded-gap-removal}(a).
We then apply this comparison to the folded chain in
Figure~\ref{stat:folded-gap-removal}(c), where the retained region is
$X\cup Z$, the buffer is $Y_1\cup Y_3$, and the removed region is
$Y_2$.  In the original chain, this removes the middle of the separator
and leaves two independent intervals.

We fix $U,V,W$ and split $V=V_1V_2$, with $V_1$ adjacent to $U$ and both
pieces of length at least $(|V|-1)/2$.  We set
\[
 F_{U,V}=E_{U,V}^{*}E_{U,V},\qquad
 T_U^{UV}=\Tr_V[\sigma^V F_{U,V}],\qquad
 T_U^{UVW}=\Tr_{VW}[\sigma^{VW}F_{U,VW}].
\]
The superscript records the region on which the operator before the
partial trace is defined, while the subscript records the retained
region. Thus $T_U^{UV}$ and $T_U^{UVW}$ both act on $U$, but arise
from operators on $UV$ and $UVW$, respectively. This follows the
same convention as the Gibbs marginals $(\sigma^{UV})_U$ and
$(\sigma^{UVW})_U$.
Weighted partial trace is unital and completely positive, since
$\Tr_V[\sigma^V F]=\Tr_V[(\sigma^V)^{1/2}F(\sigma^V)^{1/2}]$.  Hence
\eqref{stat:mix-foundation-expansional-norm} gives
\begin{equation}
 Q^{-2}\one\leq T_U^{UV},T_U^{UVW}\leq Q^2\one.
\label{stat:positive-boundary-bounds}
\end{equation}
Moreover, multiplication of the exponentials gives the exact identity
\[
 F_{U,V}=e^{\beta(H_U+H_V)/2}
          e^{-\beta H_{UV}}e^{\beta(H_U+H_V)/2},
 \qquad
 \Tr_V[e^{-\beta H_{UV}}]
 =Z_V e^{-\beta H_U/2}T_U^{UV}e^{-\beta H_U/2}.
\]
Here $Z_V=\Tr[e^{-\beta H_V}]$. With $S=e^{-\beta H_U/2}$, both reduced Gibbs states have the
same outside factors:
\begin{equation}
 (\sigma^{UV})_U=\frac{ST_U^{UV}S}{\Tr[ST_U^{UV}S]},\qquad
 (\sigma^{UVW})_U=\frac{ST_U^{UVW}S}{\Tr[ST_U^{UVW}S]}.
\label{stat:positive-boundary-marginals}
\end{equation}

Next, for any two expansionals bounded by $Q$,
\[
 \|E^*E-\widetilde E^*\widetilde E\|_\infty
 \leq2Q\|E-\widetilde E\|_\infty.
\]
We localise both $F_{U,VW}$ and $F_{U,V}$ to $F_{U,V_1}$.
By hypothesis, $r_\beta(t)\leq b'_\beta e^{-c'_\beta t}$, and hence the two
localisation errors sum to at most $4Qr_\beta(|V_1|)$.
For the remaining difference, we set
$D=(\sigma^{VW})_{V_1}-(\sigma^V)_{V_1}$.  Trace duality gives
\[
 \|\Tr_{V_1}[D F_{U,V_1}]\|_\infty
 \leq\|D\|_1\|F_{U,V_1}\|_\infty.
\]
Combining \eqref{stat:mix-foundation-li} for $V_1,V_2,W$ with the two
localisation errors yields
\begin{equation}
 \|T_U^{UVW}-T_U^{UV}\|_\infty
 \leq4Qr_\beta(|V_1|)+Q^2b_\beta e^{-c_\beta|V_2|}
 =:\omega_\beta(|V|)
 \leq b''_\beta e^{-c''_\beta|V|}.
\label{stat:positive-boundary-difference}
\end{equation}
We set $\epsilon=Q^2\omega_\beta(|V|)$.  Positivity and
\eqref{stat:positive-boundary-bounds} imply
\[
 (1-\epsilon)T_U^{UV}\leq T_U^{UVW}\leq(1+\epsilon)T_U^{UV}.
\]
For $\epsilon<1$, congruence by $S$ and normalisation give
\begin{equation}
 \frac{1-\epsilon}{1+\epsilon}(\sigma^{UV})_U
 \leq(\sigma^{UVW})_U\leq
 \frac{1+\epsilon}{1-\epsilon}(\sigma^{UV})_U.
\label{stat:proved-strong-li}
\end{equation}
Thus the comparison factors are $e^{\pm s_\beta(|V|)}$, where
$s_\beta(|V|)=\log[(1+\epsilon)/(1-\epsilon)]
\leq4\epsilon\leq Ce^{-a|V|}$ for sufficiently large $|V|$.

We split $Y=Y_1Y_2Y_3$ so that $|Y_1|=|Y_3|=\lfloor|Y|/3\rfloor$.
The large-gap threshold ensures that $Y_2$ exceeds the interaction
range.  Removing it leaves two noninteracting intervals, and hence
\[
 \sigma^{XY_1\sqcup Y_3Z}
 =\sigma^{XY_1}\otimes\sigma^{Y_3Z}.
\]
We fold the chain through the midpoint of $Y_2$, as shown in
Figure~\ref{stat:folded-gap-removal}(b)--(c).  Paired sites become a
single site, the two outer regions $X,Z$ form an outer interval, the
two equal collars $Y_1,Y_3$ form its buffer, and $Y_2$ forms the inner
interval.  The folded chain has doubled local qubit number and bounded
nearest-neighbour strength; shorter legs can be padded with decoupled
spins.  Thus the preceding relative-order estimate applies to this chain too, with constants depending
only on the original fixed parameters.  This is the folding geometry
used in \cite{BergamaschiChen2025FastMixing}.

%
\begin{figure}[!htb]
\centering
\begin{tikzpicture}[x=.57cm,y=.57cm,font=\small,>=Stealth]
 \node[anchor=west] at (-.5,7.9) {\textbf{(a)} Boundary comparison};
 \foreach \a/\b/\lab/\col in
  {-.5/5.5/U/darkblue!30!cyan,5.5/10.5/V_1/midgreen!65!cyan,
   10.5/15.5/V_2/midgreen,15.5/21.5/W/orange}{
  \fill[\col,opacity=.25] (\a,5.95) rectangle (\b,7.05);
  \draw[thick,\col!75!black,decorate,
    decoration={brace,mirror,amplitude=4pt}] (\a,5.74)--(\b,5.74);
  \node[text=\col!75!black] at ({(\a+\b)/2},5.05) {$\lab$};
 }
 \draw[darkpurple!35] (0,6.5)--(21,6.5);
 \foreach \x in {0,...,21}
  \shade[shading=ball,ball color=darkpurple] (\x,6.5) circle (.28);
 \draw[black!55,dashed,thick] (15.5,5.9)--(15.5,7.35);
 \node[font=\footnotesize] at (18.5,7.4) {we remove $W$};
 \node at (10.5,4.25)
  {We compare $(\sigma^{UV_1V_2W})_U$ and $(\sigma^{UV_1V_2})_U$,
   with $V=V_1V_2$.};

 \node[anchor=west] at (-.5,3.25) {\textbf{(b)} Original chain};
 \foreach \a/\b/\lab/\col in
  {-.5/4.5/X/darkblue!30!cyan,4.5/8.5/Y_1/midgreen!65!cyan,
   8.5/12.5/Y_2/orange,12.5/16.5/Y_3/midgreen!65!cyan,
   16.5/21.5/Z/midpink}{
  \fill[\col,opacity=.25] (\a,1.45) rectangle (\b,2.55);
  \draw[thick,\col!75!black,decorate,
    decoration={brace,mirror,amplitude=4pt}] (\a,1.24)--(\b,1.24);
  \node[text=\col!75!black] at ({(\a+\b)/2},.55) {$\lab$};
 }
 \draw[darkpurple!35] (0,2)--(21,2);
 \foreach \x in {0,...,21}
  \shade[shading=ball,ball color=darkpurple] (\x,2) circle (.28);
 \draw[black!45,dashed] (10.5,1.38)--(10.5,2.88);
 \node[anchor=south,font=\footnotesize] at (10.5,2.91) {we fold};
 \draw[->,thick,black!55] (10.5,.0)--(10.5,-.9);

 \node[anchor=west] at (-.5,-1.2) {\textbf{(c)} Folded chain};
 \foreach \a/\b/\col in
  {-.5/9.5/darkblue!30!cyan,9.5/17.5/midgreen!65!cyan,
   17.5/21.5/orange}{
  \fill[\col,opacity=.22] (\a,-3.52) rectangle (\b,-2.08);
 }
 \draw[darkpurple!30] (.5,-2.5)--(20.5,-2.5);
 \draw[darkpurple!30] (.5,-3.1)--(20.5,-3.1);
 \foreach \x in {.5,2.5,...,20.5}{
  \draw[darkpurple!40] (\x,-2.5)--(\x,-3.1);
  \shade[shading=ball,ball color=darkpurple] (\x,-2.5) circle (.25);
  \shade[shading=ball,ball color=darkpurple] (\x,-3.1) circle (.25);
 }
 \foreach \a/\b/\lab/\col in
  {-.5/9.5/{U=X\cup Z}/darkblue!30!cyan,
   9.5/17.5/{V=Y_1\cup Y_3}/midgreen!50!black,17.5/21.5/{W=Y_2}/orange!75!black}{
  \draw[thick,\col,decorate,decoration={brace,mirror,amplitude=5pt}]
   (\a,-3.76)--(\b,-3.76);
  \node[text=\col] at ({(\a+\b)/2},-4.45) {$\lab$};
 }
 \draw[black!55,dashed,thick] (17.5,-1.95)--(17.5,-3.65);
 \node[font=\footnotesize] at (4.5,-5.05) {retained region};
 \node[font=\footnotesize] at (13.5,-5.05) {buffer};
 \node[font=\footnotesize] at (19.5,-5.05) {we remove};
\end{tikzpicture}
\caption{The two geometries in the proof of
Lemma~\ref{stat:inhomogeneous-connected-mixing}.
In (a), we compare the marginals on $U$ before and after removing $W$
from the Hamiltonian. The buffer $V=V_1V_2$ separates $U$ from $W$;
we localise the boundary operators to $UV_1$ and use local
indistinguishability across $V_2$.
Panels (b) and (c) show the original chain and its folding.
Each vertical pair of balls in (c) is one
folded site, pairing original sites at equal distance from the midpoint.
The collars $Y_1,Y_3$ have equal lengths. The labels $U,V,W$ in (c)
identify the regions to which we apply the comparison from (a).
In the folded geometry,
removing $Y_2$ is a boundary-removal operation; in the original geometry,
it leaves independent left and right chains. Their marginal on
$XZ$ is $(\sigma^{XY_1})_X\otimes(\sigma^{Y_3Z})_Z$, which is compared with $\sigma_{XZ}$.}
\label{stat:folded-gap-removal}
\end{figure}
 
We now apply \eqref{stat:proved-strong-li} to the folded chain, with
\[
 U=X\cup Z,\qquad V=Y_1\cup Y_3,\qquad W=Y_2.
\]
Then the two states compared in \eqref{stat:proved-strong-li} are
precisely the states in the following inequality.  We write their
comparison factors as $e^{\pm s}$.  The folded buffer has
$|Y_1|=|Y_3|=\lfloor|Y|/3\rfloor$ sites, so the exponential estimate
following \eqref{stat:proved-strong-li}, with constants adjusted for
folding, gives
\begin{equation}
 e^{-s}(\sigma^{XY_1})_X\otimes(\sigma^{Y_3Z})_Z\leq\sigma_{XZ}
 \leq e^s(\sigma^{XY_1})_X\otimes(\sigma^{Y_3Z})_Z,
 \qquad s\leq Ce^{-a\lfloor|Y|/3\rfloor}.
\label{stat:middle-gap-removal}
\end{equation}
Taking the partial traces of this same inequality gives
$e^{-s}(\sigma^{XY_1})_X\leq\sigma_X\leq e^s(\sigma^{XY_1})_X$ and
$e^{-s}(\sigma^{Y_3Z})_Z\leq\sigma_Z\leq e^s(\sigma^{Y_3Z})_Z$.
Tensoring and combining with \eqref{stat:middle-gap-removal} proves
\[
 e^{-3s}\sigma_X\otimes\sigma_Z\leq\sigma_{XZ}
 \leq e^{3s}\sigma_X\otimes\sigma_Z.
\]
After adjusting the constants, this is
\eqref{stat:connected-mix-input}.

Operator monotonicity of the logarithm gives the asserted MCMI bound.
\end{proof}

\subsection{From connected cuts to the overlapping cover}

We now use the preceding estimates for regions that are unions of
connected components; to this end, we first establish the following
lemma on the factorisation of separated marginals.

\subsubsection{Factorisation of separated marginals}

For ordered, pairwise disjoint regions $R_1<\cdots<R_k$, we define
\begin{equation}
 \mathcal E_\sigma(R_1,\ldots,R_k)
 :=\log\sigma_{\cup_iR_i}-\sum_{i=1}^k\log\sigma_{R_i}.
 \label{stat:disconnected-factorisation-error}
\end{equation}
Here $R_i<R_{i+1}$ means that every site of $R_i$ lies to the left of
every site of $R_{i+1}$; the regions need not be connected.
Empty regions are omitted, and the defect is zero for at most one
region. For two regions,
$\mathcal E_\sigma(U,V)=\mathbf H_\sigma(U:V\mid\varnothing)$.

\begin{lemma}[Factorisation of separated marginals]
\label{stat:cut-retain}
\label{stat:separated-interval-factorisation}
If $k\geq1$ and every consecutive pair $R_i,R_{i+1}$ is separated by
at least $L$ sites, with $L$ above the threshold in
Lemma~\ref{stat:inhomogeneous-connected-mixing}, then
\begin{equation}
 \|\mathcal E_\sigma(R_1,\ldots,R_k)\|_\infty
 \leq(k-1)g_\beta(L).
 \label{stat:separated-log-bound}
\end{equation}
\end{lemma}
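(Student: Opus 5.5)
We argue by induction on $k$, peeling off the last region. For $k=1$ the defect is zero by definition, so there is nothing to prove. For $k\geq2$, set $R=R_1\cup\cdots\cup R_{k-1}$. The defect telescopes:
\[
 \mathcal E_\sigma(R_1,\ldots,R_k)
 =\bigl(\log\sigma_{R\cup R_k}-\log\sigma_R-\log\sigma_{R_k}\bigr)
 +\mathcal E_\sigma(R_1,\ldots,R_{k-1}).
\]
Every operator here is lifted to the full tensor product. The triangle inequality and the induction hypothesis then reduce \eqref{stat:separated-log-bound} to a single two-region bound:
\[
 \|\log\sigma_{R\cup R_k}-\log\sigma_R-\log\sigma_{R_k}\|_\infty\leq g_\beta(L).
\]

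To obtain the two-region bound, choose a consecutive partition $\Lambda=X\sqcup Y\sqcup Z$. Let $X$ run from site $1$ to the last site of $R_{k-1}$, let $Y$ be the gap between $R_{k-1}$ and $R_k$, which has $|Y|\geq L$, and let $Z$ be the remainder. Then $R\subseteq X$ and $R_k\subseteq Z$. Lemma~\ref{stat:inhomogeneous-connected-mixing} gives the operator-order comparison
\[
 e^{-g_\beta(L)}\sigma_X\otimes\sigma_Z\leq\sigma_{XZ}\leq e^{g_\beta(L)}\sigma_X\otimes\sigma_Z .
\]

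The key point is that this comparison survives partial traces. Tracing out $X\setminus R$ and $Z\setminus R_k$ is a positive map. Since the right-hand side is a product, its partial trace factorises as $\sigma_R\otimes\sigma_{R_k}$. We therefore get
\[
 e^{-g_\beta(L)}\sigma_R\otimes\sigma_{R_k}\leq\sigma_{R\cup R_k}\leq e^{g_\beta(L)}\sigma_R\otimes\sigma_{R_k}.
\]
This holds even though $R$ may be disconnected. All these marginals are faithful because $\sigma$ is faithful. Operator monotonicity of the logarithm, together with $\log(\sigma_R\otimes\sigma_{R_k})=\log\sigma_R+\log\sigma_{R_k}$, gives the two-region bound.

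I expect no serious obstacle. The only delicate point is that the mixing lemma is stated for a consecutive partition, whereas $R$ and $R_k$ are arbitrary sets. Taking the partition from the positions of $R_{k-1}$ and $R_k$ and then tracing out handles this, because positivity and product structure commute with the partial trace. One should also check that the separation hypothesis gives $|Y|\geq L\geq L_0$, which is assumed. Summing over the $k-1$ peeling steps gives $(k-1)g_\beta(L)$.
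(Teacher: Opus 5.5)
Your proof is correct and follows essentially the same route as the paper. It uses the same telescoping of $\mathcal E_\sigma$ into the two-region defects $\mathbf H_\sigma(R_1\cup\cdots\cup R_j:R_{j+1}\mid\varnothing)$, the same consecutive cut at the gap before $R_{j+1}$, and the same transfer of the multiplicative comparison of Lemma~\ref{stat:inhomogeneous-connected-mixing} through partial traces before taking logarithms; writing it as an induction rather than a direct sum is only a cosmetic difference.
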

\begin{proof}
For a consecutive partition $\Lambda=U\sqcup V\sqcup W$ with
$|V|\geq L$, positivity of partial trace transfers
\eqref{stat:connected-mix-input} to any $U'\subseteq U$, $W'\subseteq W$:
\begin{equation}
 e^{-g_\beta(L)}\sigma_{U'}\otimes\sigma_{W'}
 \leq\sigma_{U'W'}\leq
 e^{g_\beta(L)}\sigma_{U'}\otimes\sigma_{W'}.
 \label{stat:retained-order}
\end{equation}
Operator monotonicity of the logarithm therefore gives
\begin{equation}
 H_\sigma(U':W'\mid\varnothing)\leq g_\beta(L).
 \label{stat:retained-mcmi}
\end{equation}
This applies to disconnected retained regions because the partial trace
is taken in the operator-order estimate, before taking logarithms.

We now telescope:
\[
 \mathcal E_\sigma(R_1,\ldots,R_k)
 =\sum_{j=1}^{k-1}
 \mathbf H_\sigma(R_1\cup\cdots\cup R_j:R_{j+1}\mid\varnothing).
\]
For each term, we partition the full chain at the gap between $R_j$ and
$R_{j+1}$, retaining the preceding regions from the connected prefix
and $R_{j+1}$ from the connected suffix.
Equation~\eqref{stat:retained-mcmi} bounds every term by $g_\beta(L)$,
proving \eqref{stat:separated-log-bound}.
\end{proof}

\subsubsection{The multipartite defect}
We write the chain in its spatial order:
\begin{equation}
 \Lambda=X_0A_1X_1A_2\cdots X_{m-1}A_mX_m,
 \qquad A=\bigcup_{i=1}^mA_i,\qquad
 X=\Lambda\setminus A=\bigcup_{i=0}^mX_i.
 \label{stat:ordered-components-gaps}
\end{equation}
For $1\leq i<m$, the gap $X_i$ lies between $A_i$ and $A_{i+1}$; the endpoint gaps
$X_0,X_m$ may be empty. We set
\[
 L_i=X_0A_1X_1\cdots A_{i-1}X_{i-1},\qquad
 R_i=X_iA_{i+1}X_{i+1}\cdots A_mX_m.
\]
These are the connected tails on either side of $A_i$, so
$A_i^c=L_iR_i$, $L_1=X_0$, and $R_m=X_m$.
For $B=\bigcup_jB_j$, we use the same convention, with $Y_j$ the gaps
between its adjacent blocks.

Figure~\ref{stat:connected-and-traced-cuts} shows the three contributions
below: a gap retained as a conditioning region, a block traced out
between its two tails, and further traces retaining only selected gaps.

%
\begin{figure}[ht]
 \centering
 \begin{tikzpicture}[x=.55cm,y=.55cm,font=\small,
  qfbrace/.style={line width=.65pt,decorate,
   decoration={brace,mirror,amplitude=4pt}},
  qfghost/.style={draw=gray!55,fill=white,line width=.5pt}]
  \node[anchor=west] at (-.5,1.9)
   {\textbf{(a) Condition on $X_2$:}\quad
    $\mathbf H_\sigma(L_2A_2:A_3R_3\mid X_2)$};
  \fill[darkblue!30!cyan,opacity=.28]
   (-.5,-.45) rectangle (14.5,.45);
  \fill[orange,opacity=.28]
   (14.5,-.45) rectangle (17.5,.45);
  \fill[midgreen!60!cyan,opacity=.25]
   (17.5,-.45) rectangle (23.5,.45);
  \draw[darkpurple!35,line width=.4pt] (0,0)--(23,0);
  \foreach \x in {0,...,23}
   \shade[shading=ball,ball color=darkpurple] (\x,0) circle (.27);
  \foreach \x/\lab in {.5/X_0,4/A_1,8/X_1,12/A_2}
   \node[text=darkblue!75!black] at (\x,.91) {$\lab$};
  \node[text=orange!70!black] at (16,.91) {$X_2$};
  \foreach \x/\lab in {19.5/A_3,22.5/X_3}
   \node[text=midgreen!40!black] at (\x,.91) {$\lab$};
  \draw[qfbrace,darkblue!65!cyan] (-.5,-.64)--(14.5,-.64);
  \draw[qfbrace,orange!80!black] (14.5,-.64)--(17.5,-.64);
  \draw[qfbrace,midgreen!45!black] (17.5,-.64)--(23.5,-.64);
  \node[text=darkblue!75!black] at (7,-1.32) {$L_2A_2$};
  \node[text=orange!70!black] at (16,-1.32) {$X_2$};
  \node[text=midgreen!40!black] at (20.5,-1.32) {$A_3R_3$};

  \begin{scope}[yshift=-2.42cm]
   \node[anchor=west] at (-.5,1.9)
    {\textbf{(b) Trace out $A_2$:}\quad
     $\mathbf H_\sigma(L_2:R_2\mid\varnothing)$};
   \fill[darkblue!30!cyan,opacity=.28]
    (-.5,-.45) rectangle (9.5,.45);
   \fill[midgreen!60!cyan,opacity=.25]
    (14.5,-.45) rectangle (23.5,.45);
   \fill[orange,opacity=.09]
    (9.5,-.45) rectangle (14.5,.45);
   \draw[orange!75!black,dashed,line width=.6pt]
    (9.5,-.45) rectangle (14.5,.45);
   \draw[darkpurple!35,line width=.4pt] (0,0)--(9,0);
   \draw[gray!35,dashed,line width=.4pt] (9,0)--(15,0);
   \draw[darkpurple!35,line width=.4pt] (15,0)--(23,0);
   \foreach \x in {0,...,9}
    \shade[shading=ball,ball color=darkpurple] (\x,0) circle (.27);
   \foreach \x in {15,...,23}
    \shade[shading=ball,ball color=darkpurple] (\x,0) circle (.27);
   \foreach \x in {10,...,14}
    \draw[qfghost] (\x,0) circle (.27);
   \foreach \x/\lab in {.5/X_0,4/A_1,8/X_1}
    \node[text=darkblue!75!black] at (\x,.91) {$\lab$};
   \node[text=orange!70!black] at (12,.91) {$A_2$};
   \foreach \x/\lab in {16/X_2,19.5/A_3,22.5/X_3}
    \node[text=midgreen!40!black] at (\x,.91) {$\lab$};
   \draw[qfbrace,darkblue!65!cyan] (-.5,-.64)--(9.5,-.64);
   \draw[qfbrace,orange!80!black] (9.5,-.64)--(14.5,-.64);
   \draw[qfbrace,midgreen!45!black] (14.5,-.64)--(23.5,-.64);
   \node[text=darkblue!75!black] at (4.5,-1.32) {$L_2$};
   \node[text=orange!70!black,font=\footnotesize]
    at (12,-1.32) {traced out};
   \node[text=midgreen!40!black] at (19,-1.32) {$R_2$};
  \end{scope}

  \begin{scope}[yshift=-4.84cm]
   \node[anchor=west] at (-.5,1.9)
    {\textbf{(c) Retain the selected gaps:}\quad
     $\mathbf H_\sigma(X_0X_1:X_2\mid\varnothing)$};
   \foreach \a/\b in {-.5/1.5,6.5/9.5}
    \fill[darkblue!30!cyan,opacity=.28]
     (\a,-.45) rectangle (\b,.45);
   \fill[midgreen!60!cyan,opacity=.25]
    (14.5,-.45) rectangle (17.5,.45);
   \foreach \a/\b in {1.5/6.5,17.5/21.5,21.5/23.5}{
    \fill[gray!6] (\a,-.45) rectangle (\b,.45);
    \draw[gray!45,dashed,line width=.5pt]
     (\a,-.45) rectangle (\b,.45);
   }
   \fill[orange,opacity=.09]
    (9.5,-.45) rectangle (14.5,.45);
   \draw[orange!75!black,dashed,line width=.6pt]
    (9.5,-.45) rectangle (14.5,.45);
   \draw[gray!35,dashed,line width=.4pt] (0,0)--(23,0);
   \foreach \a/\b in {0/1,7/9,15/17}
    \draw[darkpurple!35,line width=.4pt] (\a,0)--(\b,0);
   \foreach \x in {0,1,7,8,9,15,16,17}
    \shade[shading=ball,ball color=darkpurple] (\x,0) circle (.27);
   \foreach \x in {2,...,6}
    \draw[qfghost] (\x,0) circle (.27);
   \foreach \x in {10,...,14}
    \draw[qfghost] (\x,0) circle (.27);
   \foreach \x in {18,...,23}
    \draw[qfghost] (\x,0) circle (.27);
   \foreach \x/\lab in {.5/X_0,8/X_1}
    \node[text=darkblue!75!black] at (\x,.91) {$\lab$};
   \foreach \x/\lab in {4/A_1,19.5/A_3,22.5/X_3}
    \node[text=gray!75] at (\x,.91) {$\lab$};
   \node[text=orange!70!black] at (12,.91) {$A_2$};
   \node[text=midgreen!40!black] at (16,.91) {$X_2$};
   \foreach \a/\b in {-.5/1.5,6.5/9.5}
    \draw[qfbrace,darkblue!65!cyan] (\a,-.64)--(\b,-.64);
   \draw[qfbrace,orange!80!black] (9.5,-.64)--(14.5,-.64);
   \draw[qfbrace,midgreen!45!black] (14.5,-.64)--(17.5,-.64);
   \node[text=darkblue!75!black] at (.5,-1.32) {$X_0$};
   \node[text=darkblue!75!black] at (8,-1.32) {$X_1$};
   \node[text=orange!70!black,font=\footnotesize]
    at (12,-1.32) {traced out};
   \node[text=midgreen!40!black] at (16,-1.32) {$X_2$};
  \end{scope}
 \end{tikzpicture}
 \caption{The three contributions to the defect on the same chain
 $X_0A_1X_1A_2X_2A_3X_3$.  Purple balls are retained sites; hollow
 grey balls are traced-out sites.  In (a), the orange interval $X_2$
 is retained as the conditioning system, so the connected conditional
 MCMI estimate applies.  In (b), the orange interval $A_2$ is traced
 out, leaving the two tails $L_2$ and $R_2$.  In (c), further partial
 traces retain only $X_0\cup X_1$ and $X_2$, giving one merger in the
 gap marginal.  The multiplicative estimate across $A_2$ controls both
 (b) and (c).}
 \label{stat:connected-and-traced-cuts}
\end{figure}
 
\begin{lemma}[Connected-cut expansion of the multipartite defect]
\label{stat:connected-cut-expansion}
For the decomposition \eqref{stat:ordered-components-gaps},
\begin{equation}
 \begin{aligned}
 K_\sigma(A)
 &=\sum_{i=1}^{m-1}
   \mathbf H_\sigma(L_iA_i:A_{i+1}R_{i+1}\mid X_i)+\mathcal E_\sigma(X_0,\ldots,X_m)
       -\sum_{i=1}^m\mathcal E_\sigma(L_i,R_i).
 \end{aligned}
 \label{stat:connected-cut-identity}
\end{equation}
If every block $A_i$ and every internal gap $X_i$ has length at least
$L$, with $L$ above the thresholds of the connected estimates, then
\begin{equation}
 \|K_\sigma(A)\|_\infty
 \leq(m-1)F_\beta(L)+2m g_\beta(L).
 \label{stat:multipartite-connected-bound}
\end{equation}
Here $F_\beta$ and $g_\beta$ are the exponentially decaying errors in
\eqref{stat:connected-mcmi-exponential} and
\eqref{stat:connected-mix-input}.
\end{lemma}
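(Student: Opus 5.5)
The identity \eqref{stat:connected-cut-identity} is purely algebraic, so I will prove it by expanding every term into marginal logarithms and checking that the coefficients agree. Recall $A_i^c=L_iR_i$, and that $L_iA_iX_i=L_{i+1}$ and $X_iA_{i+1}R_{i+1}=R_i$. Hence the retained-separator term is
\[
 \mathbf H_\sigma(L_iA_i:A_{i+1}R_{i+1}\mid X_i)
 =\log\sigma+\log\sigma_{X_i}-\log\sigma_{L_{i+1}}-\log\sigma_{R_i}.
\]
Summing over $i=1,\dots,m-1$ gives
\[
 (m-1)\log\sigma+\sum_{i=1}^{m-1}\log\sigma_{X_i}
 -\sum_{i=2}^{m}\log\sigma_{L_i}-\sum_{i=1}^{m-1}\log\sigma_{R_i}.
\]
Next, $-\sum_i\mathcal E_\sigma(L_i,R_i)=-\sum_{i=1}^m\log\sigma_{A_i^c}+\sum_{i=1}^m(\log\sigma_{L_i}+\log\sigma_{R_i})$. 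Adding, the tails telescope and leave $\log\sigma_{L_1}+\log\sigma_{R_m}=\log\sigma_{X_0}+\log\sigma_{X_m}$. Finally, $\mathcal E_\sigma(X_0,\dots,X_m)=\log\sigma_X-\sum_{i=0}^m\log\sigma_{X_i}$ cancels all the gap logarithms, and what remains is exactly $K_\sigma(A)$ from \eqref{stat:multipartite-defect}.

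Empty endpoint gaps need care, using the conventions $\sigma_\varnothing=1$ and that empty regions are omitted from $\mathcal E_\sigma$. If $X_0=\varnothing$, then $L_1=\varnothing$, so $\mathcal E_\sigma(L_1,R_1)=0$, and the term $\log\sigma_{L_1}=0$ matches the missing $\log\sigma_{X_0}$. The bookkeeping is therefore consistent. For $m=1$ both sides reduce to $\log\sigma_X-\log\sigma_{A_1^c}=0$.

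For the norm bound \eqref{stat:multipartite-connected-bound} I will apply the triangle inequality to the three groups of terms.

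\textbf{Retained separator.} Each term $\mathbf H_\sigma(L_iA_i:A_{i+1}R_{i+1}\mid X_i)$ has the consecutive connected geometry of Lemma~\ref{stat:connected-mcmi-relative}, with separator $X_i$ of length at least $L$. It is therefore bounded by $F_\beta(L)$, giving $(m-1)F_\beta(L)$ in total.

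\textbf{Traced block.} Each term $\mathcal E_\sigma(L_i,R_i)=\mathbf H_\sigma(L_i:R_i\mid\varnothing)$ falls under Lemma~\ref{stat:inhomogeneous-connected-mixing}, applied to the partition $L_i\sqcup A_i\sqcup R_i$ with $|A_i|\ge L$. It is bounded by $g_\beta(L)$, and it vanishes if one tail is empty. This group contributes at most $m\,g_\beta(L)$.

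\textbf{Gap marginal.} The nonempty gaps among $X_0,\dots,X_m$ are ordered, and consecutive ones are separated by some block $A_i$, hence by at least $L$ sites. Lemma~\ref{stat:separated-interval-factorisation} then bounds $\|\mathcal E_\sigma(X_0,\dots,X_m)\|_\infty$ by $(k-1)g_\beta(L)\le m\,g_\beta(L)$, since there are at most $k\le m+1$ nonempty gaps.

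Summing the three contributions gives $(m-1)F_\beta(L)+2m\,g_\beta(L)$. The main obstacle is only the bookkeeping in the identity, especially checking that the definitions of $L_i$ and $R_i$ make the retained-separator terms telescope and handling empty endpoint gaps. The analytic input is entirely supplied by the earlier lemmas. One should also confirm that Lemma~\ref{stat:connected-mcmi-relative} applies when $L_iA_i$ and $A_{i+1}R_{i+1}$ are arbitrary consecutive intervals, which holds because its constants are independent of the sizes and positions of $X,Y,Z$.
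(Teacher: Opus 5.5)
Your proposal is correct and follows the paper's proof essentially step by step. You expand each retained-separator term as $\log\sigma+\log\sigma_{X_i}-\log\sigma_{L_{i+1}}-\log\sigma_{R_i}$, let the tail logarithms telescope using $L_1=X_0$ and $R_m=X_m$, and bound the $m-1$ conditional terms by $F_\beta(L)$ and the gap and tail factorisation defects by $m\,g_\beta(L)$ each. The only cosmetic difference is that you bound each $\mathcal E_\sigma(L_i,R_i)$ directly by Lemma~\ref{stat:inhomogeneous-connected-mixing}, whereas the paper cites Lemma~\ref{stat:separated-interval-factorisation}, which gives the same estimate for two regions.
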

\begin{proof}
For each internal gap,
\[
 \mathbf H_\sigma(L_iA_i:A_{i+1}R_{i+1}\mid X_i)
 =\log\sigma+\log\sigma_{X_i}
  -\log\sigma_{L_{i+1}}-\log\sigma_{R_i}.
\]
Summing and expanding the $\mathcal E_\sigma$ terms proves
\eqref{stat:connected-cut-identity}: the boundary logarithms cancel
because $L_1=X_0$ and $R_m=X_m$.

Each of the $m-1$ conditional MCMI terms is bounded by $F_\beta(L)$.
Since the gaps $X_i$ are separated by the blocks $A_i$, and the tails
$L_i,R_i$ are separated by $A_i$,
Lemma~\ref{stat:separated-interval-factorisation} gives
\[
 \|\mathcal E_\sigma(X_0,\ldots,X_m)\|_\infty\leq m g_\beta(L),
 \qquad
 \sum_{i=1}^m\|\mathcal E_\sigma(L_i,R_i)\|_\infty
 \leq m g_\beta(L).
\]
Empty gaps or tails contribute zero. The triangle inequality proves
\eqref{stat:multipartite-connected-bound}; for $m=1$ the exact identity
gives $K_\sigma(A)=0$.
\end{proof}

\subsubsection{The four-cell cover}

We now prove Theorem~\ref{stat:weak-at}.  The geometry in
Figure~\ref{stat:four-cell-cover} makes both the components and the gaps
large enough for the preceding estimates.

%
\begin{figure}[htbp]
\centering
\begin{tikzpicture}[x=.55cm,y=.55cm,font=\small,
  component/.style={thick,black!65,decorate,
    decoration={brace,amplitude=5pt,mirror}}]
  \definecolor{qfball}{RGB}{102,0,102}
  \definecolor{qfcyan}{RGB}{50,165,210}
  \definecolor{qfgreen}{RGB}{0,170,65}
  \definecolor{qfpink}{RGB}{175,55,160}
  \foreach \x/\col/\lab in {
    0/qfcyan/X_0,3/qfgreen/O,6/qfpink/Y_1,9/qfgreen/O,
    12/qfcyan/X_1,15/qfgreen/O,18/qfpink/Y_2,21/qfgreen/O}{
    \fill[\col,opacity=.27] (\x-.5,1.95) rectangle (\x+2.5,3.05);
    \node[text=\col!70!black] at (\x+1,3.7) {$\lab$};
  }
  \draw[black!30] (0,2.5)--(23,2.5);
  \foreach \x in {0,...,23}
    \shade[shading=ball,ball color=qfball] (\x,2.5) circle (.27);
  \node[anchor=east] at (-1,2.5) {$\Lambda$};
  \foreach \a/\z/\lab in {2.5/11.5/A_1,14.5/23.5/A_2}{
    \draw[component] (\a,1.5)--(\z,1.5);
    \node at ({(\a+\z)/2},.65) {$\lab$};
  }
  \node[anchor=east] at (-1,.65) {$A$};
  \foreach \a/\z/\lab in {-.5/5.5/B_1,8.5/17.5/B_2,20.5/23.5/B_3}{
    \draw[component] (\a,-.15)--(\z,-.15);
    \node at ({(\a+\z)/2},-1) {$\lab$};
  }
  \node[anchor=east] at (-1,-1) {$B$};
\end{tikzpicture}
\caption{The four-cell cover, drawn with three sites per cell for
illustration. The braces display the two families of intervals
$A=\bigcup_iA_i$ and $B=\bigcup_jB_j$, shifted by two cells relative
to each other. Their overlaps $A\cap B$ are the green $O$-cells.
We then define $X=X_0\cup X_1=\Lambda\setminus A$ and
$Y=Y_1\cup Y_2=\Lambda\setminus B$; the endpoint gaps
$X_2,Y_0,Y_3$ are empty. The resulting cell types repeat in the order
$X,O,Y,O$, and the actual cell lengths lie in $[L,2L]$.
Both the components and their separating
gaps have length at least $L$, and $A\cup B=\Lambda$.}
\label{stat:four-cell-cover}
\end{figure}

\begin{proof}[Proof of Theorem~\ref{stat:weak-at}]
If $n<4L$, the single block $\Lambda$ gives equality.  Otherwise we set
$N=\lfloor n/L\rfloor$.  We partition the chain into $N$ consecutive cells
$C_1,\ldots,C_N$:
the first $N-1$ have length $L$, and the last has length in $[L,2L)$.
We construct two families of intervals by taking three consecutive
cells at a time, shifting by four cells within each family and by two
cells between the families, as illustrated in Figure~\ref{fig:summary-cover}.
More precisely, with $C_k=\varnothing$
for $k\notin\{1,\ldots,N\}$, we define
\[
 A_i=\bigcup_{k=4i-2}^{4i}C_k,\qquad
 B_j=\bigcup_{k=4j-4}^{4j-2}C_k
 \qquad(i,j\geq1),
\]
and omit empty intervals.  We then set
\[
 A=\bigcup_i A_i,\qquad B=\bigcup_j B_j,\qquad
 \mathcal G=\{A_i\}_i\cup\{B_j\}_j.
\]
The intervals within each family are disjoint, while successive
intervals from the two families overlap in one cell.  Thus
$A\cup B=\Lambda$ and the cover multiplicity is at most two.
Every internal component
contains three cells; every nonempty endpoint component contains one,
two, or three cells.  Thus their lengths belong to $[L,6L]$ and all
their internal gaps have length at least $L$.

We now define the complementary regions
\[
 X=\Lambda\setminus A,\qquad Y=\Lambda\setminus B.
\]
The region $X$ consists of the cells $C_1,C_5,C_9,\ldots$, and $Y$
consists of $C_3,C_7,C_{11},\ldots$, with indices at most $N$.
The remaining cells form the overlaps $A\cap B$, so the cell types
occur in the order $X,O,Y,O$, where $O$ denotes an overlap cell.

We number the gaps by the adjacent blocks as above: the $X$-cells are
$X_0,\ldots,X_{s-1}$ and the $Y$-cells are $Y_1,\ldots,Y_t$,
where $s$ and $t$ are their respective numbers.  The left endpoint
gap $Y_0$ is empty.  In the notation
of \eqref{stat:disconnected-factorisation-error}, the first error is
\begin{align*}
 \log\sigma_{XY}-\log\sigma_{X}
                  -\log\sigma_{Y}
 &=\mathcal E_\sigma(X_0,Y_1,X_1,Y_2,\ldots)-\mathcal E_\sigma(X_0,X_1,\ldots)
        -\mathcal E_\sigma(Y_1,Y_2,\ldots).
\end{align*}
Lemma~\ref{stat:separated-interval-factorisation} bounds its norm by
$2(s+t)g_\beta(L)$.  Hence \eqref{stat:two-set} gives
\[
 D(\rho\Vert\sigma)\leq D_A(\rho\Vert\sigma)+D_B(\rho\Vert\sigma)
                         +2(s+t)g_\beta(L).
\]
Removing $s$ cells creates at most $s+1$ components of $A$; similarly
$B$ has at most $t+1$ components.  As $s+t\leq N$, there are at most
$N+2\leq2n$ blocks.  We apply \eqref{stat:multipartite} and
Lemma~\ref{stat:connected-cut-expansion} to $A$ and $B$.  The resulting
estimate is
\begin{equation}
 D(\rho\Vert\sigma)
 \leq\sum_{S\in\mathcal G}D_S(\rho\Vert\sigma)
 +4\left(\frac nL+1\right)\bigl(F_\beta(L)+g_\beta(L)\bigr).
 \label{stat:explicit-defect}
\end{equation}
The connected estimates provide fixed constants $M_\beta,a_\beta,L_\beta$
with $F_\beta(L)+g_\beta(L)\leq M_\beta e^{-a_\beta L}$ for
$L\geq L_\beta$.  Since $n/L+1\leq2n$ for $L\geq1$, choosing
$B_\beta=\max\{1,8M_\beta\}$ proves \eqref{stat:weak-at-equation}.
Finally, we take
\[
 K_\beta\geq\max\left\{1,\frac{L_\beta}{\log2},
                  \frac{1+\log B_\beta/\log2}{a_\beta}\right\}.
\]
For $L=\lceil K_\beta\log n\rceil$,
we have $L\ge L_\beta$ and
\[
 B_\beta n e^{-a_\beta L}
 \le B_\beta n^{1-a_\beta K_\beta}
 \le B_\beta 2^{1-a_\beta K_\beta}\le1.
\]
\end{proof}
 %
\section{A uniform conditional gap}
\label{gap:section}

In this section we show how the canonical purified Hamiltonian from~\cite{LuciaPerezGarciaPerezHernandez2025Gap} can be used to prove a uniform conditional gap for the heat-bath sampler (and consequently for the regularised generator).

In particular, we are able to prove that:
\begin{theorem}[Uniform conditional gap]
	\label{gap:conditional}
	For a bounded finite-range interaction on an open chain, at every
	fixed $\beta>0$, there is $\gamma_\beta>0$, depending only on
	$\beta,d,r_0,J$, 
	such that every interval $S\subseteq\Lambda$ satisfies
	\begin{equation}
		\mathcal E_S^a(X)\ge\gamma_\beta\operatorname{Var}_S(X)
		\qquad(X\in\mathcal B(\mathcal H_\Lambda))\,,
		\label{gap:conditional-equation}
	\end{equation}
	where $\mathcal E_S^a(X)$ is the conditional GNS Dirichlet form on $S$ defined in Eq.~\eqref{gap:energy} with the global Gibbs state $\sigma$.
	Here $a\in\{\mathrm{HB},\mathrm{RHB}\}$ denotes the chosen sampler.
\end{theorem}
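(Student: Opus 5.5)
Under the isometry $X\mapsto X\sigma^{1/2}$, $\operatorname{Var}_S(X)$ becomes the squared Hilbert--Schmidt distance from $X\sigma^{1/2}$ to $\mathcal A_{S^c}\sigma^{1/2}$. The canonical purified Hamiltonian of \cite{LuciaPerezGarciaPerezHernandez2025Gap} is a sum of local projections $\Pi_i^\perp$, where $\Pi_i$ projects onto $\mathcal A_{i^c}\sigma^{1/2}$. Its uniform gap is the statement that, for every interval $S$,
\[
 \sum_{i\in S}\operatorname{dist}_2\bigl(X\sigma^{1/2},\mathcal A_{i^c}\sigma^{1/2}\bigr)^2
 \;\ge\;\gamma'_\beta\,\operatorname{dist}_2\bigl(X\sigma^{1/2},\mathcal A_{S^c}\sigma^{1/2}\bigr)^2 .
\]
The intersection of the ranges of $\Pi_i$, $i\in S$, is $\mathcal A_{S^c}\sigma^{1/2}$. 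I would first check that the cited gap is uniform over subintervals $S$ and holds with the global $\sigma$ (the frustration-free structure, boundary-independent). If needed, I would restrict the purified Hamiltonian to the terms centred in $S$, via the local gap/martingale argument of that paper. In the notation above, this gives $\sum_{i\in S}\operatorname{Var}_i(X)\ge\gamma'_\beta\operatorname{Var}_S(X)$.

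It then remains to prove a single-site comparison $\mathcal E_i^{\mathrm{HB}}(X)\ge c\,\operatorname{Var}_i(X)$ with a universal $c$, and to sum over $i\in S$. Here I would use the Kadison--Schwarz inequality. Since $\cP_i$ is unital, completely positive and $\sigma$-preserving,
\[
 \operatorname{Re}\langle X,\cP_i X\rangle_{\mathrm{GNS}}\le\|X\|_{\mathrm{GNS}}\|\cP_iX\|_{\mathrm{GNS}},\qquad \|\cP_iX\|_{\mathrm{GNS}}^2=\Tr[\sigma\,\cP_i(X)^\dagger\cP_i(X)]\le\|X\|_{\mathrm{GNS}}^2 .
\]
This only gives $\mathcal E_i\ge0$, so I need a quantitative version. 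The cleanest route is to note that $\mathcal E_i(X)\ge\frac12\|X-\cP_iX\|^2_{\mathrm{GNS}}$ whenever $\cP_i$ is a GNS contraction. This follows by expanding $\|X-\cP_iX\|^2=\|X\|^2-2\operatorname{Re}\langle X,\cP_iX\rangle+\|\cP_iX\|^2\le 2\mathcal E_i(X)$, using the Kadison--Schwarz bound on the last term.

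Because $\cP_iX\in\mathcal A_{i^c}$, we get $\|X-\cP_iX\|^2_{\mathrm{GNS}}\ge\operatorname{Var}_i(X)$. Hence $\mathcal E_i^{\mathrm{HB}}(X)\ge\frac12\operatorname{Var}_i(X)$, and summing gives the theorem with $\gamma_\beta=\gamma'_\beta/2$ for $a=\mathrm{HB}$.

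For $a=\mathrm{RHB}$, the same argument applies verbatim with $\mathbb E_i$ in place of $\cP_i$. Indeed, $\mathbb E_i$ is a unital completely positive $\sigma$-preserving map with range in $\mathcal N_i\subseteq\mathcal A_{i^c}$ (Proposition~\ref{pre:properties}). This avoids the fact that \eqref{eq:energy-comparison} is only a KMS comparison, and the GNS forms here are not obviously ordered.

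The main obstacle is the first step. I expect it to require a careful reading of \cite{LuciaPerezGarciaPerezHernandez2025Gap}: the gap must be uniform for conditional versions on arbitrary subintervals $S$ of an inhomogeneous open chain, with reference state the global marginal structure rather than $\sigma^S$. I would first try to obtain this directly from the frustration-free projector form, since restricting to the projectors centred in $S$ gives a Hamiltonian whose ground space is $\mathcal A_{S^c}\sigma^{1/2}$. I would then invoke their local-gap (Knabe or martingale) criterion, whose finite-size input is uniform in position by the uniform interaction bounds.
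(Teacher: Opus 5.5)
Your proposal follows the same two-step route as the paper. The single-site estimate $\mathcal E_i^a(X)\ge\frac12\operatorname{Var}_i(X)$, obtained from Kadison--Schwarz and $\cP_iX\in\mathcal A_{i^c}$, is exactly the paper's argument. For $a=\mathrm{RHB}$ the paper instead uses that $\id-\mathbb E_i$ is a GNS orthogonal projection, but your verbatim Kadison--Schwarz argument works equally well. The remaining step is the paper's Lemma~\ref{gap:canonical}, i.e.\ $\mathbf H_S\ge2\gamma_\beta(\id-\Pi_{S,0})$ for the purified Hamiltonian restricted to centres in $S$, whose kernel is $\mathcal V_S(0)$ as you note.

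The step you leave open is where the paper supplies a concrete reduction rather than a new local-gap argument. The cited uniform-gap theorem concerns $\mathbf H_\Lambda$ only. So the inequality you attribute to it ``for every interval $S$'' is precisely what must be proved.

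The paper first takes a bound from the proof of \cite[Theorem 25]{LuciaPerezGarciaPerezHernandez2025Gap}, with $\delta_k$ from Theorem 36 and the base gap from Theorem 22 and Proposition 37 there. This gives a size-independent lower bound on the gap of $\mathbf H_I$ for every interval $I$ of length at most $\lceil N/2\rceil+1$ on a ring of $N$ sites, with projections defined by the global ring Gibbs state.

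It then closes the open chain into a ring of $2n$ sites by appending $n$ uncoupled spins $Q$. The Gibbs state becomes $\sigma\otimes\one_Q/d^n$, so $\widetilde\Pi_{S,0}=\Pi_{S,0}\otimes\id_{\mathrm{HS},Q}$ and $\widetilde{\mathbf H}_S=\mathbf H_S\otimes\id_{\mathrm{HS},Q}$. Every $S\subseteq\Lambda$, including $S=\Lambda$ and intervals touching the ends, occupies at most half the ring, so the ring bound transfers with the same constant.

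Your alternative of re-running the cited martingale (or Knabe-type) argument directly on subintervals of the open chain is plausible. However, it would require re-verifying the approximate-orthogonality estimates near the open boundary and for $S$ comparable to $\Lambda$. The padding trick avoids all of that.
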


\begin{proof}
	We begin by proving that we can lower bound the conditional Dirichlet form in terms of the on-site conditional variance as:
	\begin{align}
		\mathcal E_S^a(X) \geq \frac{1}{2}\sum_{i\in S} \operatorname{Var}_i(X) \,,\label{eq:4.2}
	\end{align}
	both for the heat-bath and the regularised sampler. Since $\mathcal E_S^a(X)  = \sum_{i\in S} \mathcal E_i^a(X) $, this is equivalent to proving:
	\begin{align}
		\operatorname{Re}
		\langle X,-\cL_i^a(X)\rangle_{\mathrm{GNS},\sigma}\geq \frac{1}{2}\operatorname{Var}_i(X) \qquad a\in\{\mathrm{HB},\mathrm{RHB}\}.\label{eq:4.3}
	\end{align}
	Let us start with the case of the heat-bath sampler $\cL_i^{\mathrm{HB}} := -(\id-\cP_i)$.  First, notice that by applying the Kadison--Schwarz inequality to $\cP_i$ it follows that:
	\[
	\|\cP_iX\|_{\mathrm{GNS},\sigma}^2 = \Tr[\sigma \cP_i(X)^\dagger\cP_i(X)]
	\le\Tr[\sigma\cP_i(X^\dagger X)]
	=\Tr[\sigma X^\dagger X]
	=\|X\|_{\mathrm{GNS},\sigma}^2,
	\]
	where we implicitly used the invariance $\cP_{i*}(\sigma) = \sigma$. Then, Eq.~\eqref{eq:4.3} follows from the inequality:
	\[
	\begin{aligned}
		2\operatorname{Re}\langle X,(\id-\cP_i)X\rangle_{\mathrm{GNS},\sigma}
		&=\|X-\cP_iX\|_{\mathrm{GNS},\sigma}^2
		+\|X\|_{\mathrm{GNS},\sigma}^2
		-\|\cP_iX\|_{\mathrm{GNS},\sigma}^2\\
		&\ge\|X-\cP_iX\|_{\mathrm{GNS},\sigma}^2
		\ge\operatorname{Var}_i(X),
	\end{aligned}
	\]
	where we used $(\|X\|_{\mathrm{GNS},\sigma}^2
	-\|\cP_iX\|_{\mathrm{GNS},\sigma}^2)\ge0$, and the last inequality follows from $\cP_iX\in\mathcal A_{i^c}$. This proves Eq.~\eqref{eq:4.2} in the case of the heat-bath sampler.  For the regularised generator the argument is even simpler, since $(\id - \mathbb E_i)$ is a GNS orthogonal projection. Then, it follows that:
	\begin{align}
		\operatorname{Re}\langle X,(\id-\mathbb E_i)X\rangle_{\mathrm{GNS},\sigma} = \operatorname{Re}\langle (\id-\mathbb E_i)X,(\id-\mathbb E_i)X\rangle_{\mathrm{GNS},\sigma} =\|X-\mathbb E_iX\|_{\mathrm{GNS},\sigma}^2
		\ge\operatorname{Var}_i(X)\,.
	\end{align}
	Then, the result follows from the application of the following lemma:
	\begin{lemma}
		\label{gap:canonical}
		For every interval $S\subseteq \Lambda$, there exists a constant $\gamma_\beta>0$ uniform in $S$ and $\Lambda$ such that:
		\begin{equation}\label{gap:variance-factorisation}
			\sum_{i\in S}\operatorname{Var}_i(X)
			\ge2\gamma_\beta\operatorname{Var}_S(X).
		\end{equation}
	\end{lemma}	
	We defer the proof of this fact to the next subsection.
\end{proof}

\begin{remark}\label{gap:global-memory}
	Theorem~\ref{gap:conditional} directly implies a lower bound on the KMS spectral gap:
	\begin{equation}\label{gap:global-spectral}
		\lambda_{\mathrm{KMS}}(\cL_\Lambda^a)\ge\gamma_\beta.\qquad a\in\{\mathrm{HB},\mathrm{RHB}\}
	\end{equation}
	Indeed, since $\cL_\Lambda^a$ is KMS self-adjoint, its spectrum is real. Then, for any eigenoperator $X$ it holds that $-\cL_\Lambda^a X=\lambda X$ (with
	$\lambda\ne0$), which implies that 
	\[
	\lambda\|X\|_{\mathrm{GNS},\sigma}^2
	=\mathcal E_\Lambda^a(X)
	\ge\gamma_\beta\|X\|_{\mathrm{GNS},\sigma}^2,
	\]
	where in the last step we applied Theorem~\ref{gap:conditional}. Since the KMS gap is the smallest positive eigenvalue, this directly proves Eq.~\eqref{gap:global-spectral}.

	Interestingly, by tensoring $\sigma$ with any full-rank state $\tau$, we can prove the same lower bound for $\cL_\Lambda^a\otimes \id$ (since the Kadison--Schwarz inequality is unchanged by this step). In the opposite scenario, if $S$ is a proper interval, Theorem~\ref{gap:conditional} is not enough to bound the kernel of $\cL_S^a$, and more sophisticated methods are needed (see Lemma~\ref{lem:average}). 
\end{remark}

\begin{remark}\label{cmlsi:global-section}
	The global GNS gap has a further consequence for quantum memories.
	For a primitive generator $\cL$, let $\alpha_{\rm CMLSI}$ be the largest
	constant such that, for every memory $M$,
	\[
	\EP_{\cL\otimes\id_M}(\rho_{\Lambda M})
	\ge\alpha_{\rm CMLSI}
	D(\rho_{\Lambda M}\Vert\sigma\otimes\rho_M).
	\]
	Define the complete index of the equilibrium expectation
	$E_\sigma(X)=\langle\one,X\rangle_{\mathrm{GNS},\sigma}\one$ as
	\[
	C_{\rm cb}(E_\sigma)
	:=\inf\{C>0:C E_\sigma-\id\text{ is completely positive}\}.
	\]
	We use the GNS gap
	\[
	\lambda_{\rm GNS}(\cL)
	:=\inf_{\substack{X\ne0\\\Tr[\sigma X]=0}}
	\frac{-\operatorname{Re}\langle X,\cL(X)\rangle_{\mathrm{GNS},\sigma}}
	{\|X\|_{\mathrm{GNS},\sigma}^{2}}.
	\]
	This is the gap of the GNS symmetrisation; it coincides with
	the spectral gap of $-\cL$ when $\cL$ is GNS symmetric.
	Then from \cite[Corollary 5.5]{bkm-coercivity} and \cite[Theorem 1.1 and Corollary 4.13]{cp-order} we have the two estimates:
	\begin{align}
		\alpha_{\rm CMLSI}(\cL)
		&\ge\frac{\lambda_{\rm GNS}(\cL)}{C_{\rm cb}(E_\sigma)},
		\label{cmlsi:general-gns}\\
		\alpha_{\rm CMLSI}(\cL)
		&\ge\frac{\lambda_{\rm GNS}(\cL)}
		{\log(10C_{\rm cb}(E_\sigma))}
		\quad\text{if $\cL$ is GNS symmetric}.
		\label{cmlsi:symmetric-gns}
	\end{align}
	The first comparison requires no detailed balance. Both
	references use $\EP\ge2\alpha D$; the constants above have been
	multiplied by two to match our convention $\EP\ge\alpha D$.
	Then, putting this ingredient together with Theorem~\ref{gap:conditional} we obtain:
	\begin{corollary}[Complete bounds at finite volume]\label{cmlsi:global}
		Set $Z_\Lambda(t)=\Tr[e^{-tH_\Lambda}]$. For either sampler,
		\begin{equation}\label{cmlsi:global-bound}
			\alpha_{\rm CMLSI}(\cL^a)
			\ge\frac{\gamma_\beta}{Z_\Lambda(\beta)Z_\Lambda(-\beta)}
			\ge\gamma_\beta e^{-2n(\log d+\beta J)}>0,
			\qquad a\in\{\mathrm{HB},\mathrm{RHB}\}.
		\end{equation}
		For the regularised heat-bath sampler one has the stronger bound
		\begin{equation}\label{cmlsi:rhb-global-bound}
			\alpha_{\rm CMLSI}(\cL^{\mathrm{RHB}})
			\ge\frac{\gamma_\beta}
			{\log\!\left(10Z_\Lambda(\beta)Z_\Lambda(-\beta)\right)}
			\ge\frac{\gamma_\beta}{\log10+2n(\log d+\beta J)}.
		\end{equation}
	\end{corollary}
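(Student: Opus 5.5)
The plan is to feed the two abstract comparison estimates \eqref{cmlsi:general-gns} and \eqref{cmlsi:symmetric-gns} with two concrete inputs: a lower bound on $\lambda_{\rm GNS}(\cL^a)$ and an upper bound on $C_{\rm cb}(E_\sigma)$. Both must be explicit and independent of any memory.

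\emph{Step 1: the GNS gap.} Take $S=\Lambda$ in Theorem~\ref{gap:conditional}. Since $\mathcal A_{\Lambda^c}=\mathbb C\one$, we have $\operatorname{Var}_\Lambda(X)=\operatorname{Var}_\sigma(X)$, which equals $\|X\|_{\mathrm{GNS},\sigma}^2$ whenever $\Tr[\sigma X]=0$. The theorem then reads
\[
 -\operatorname{Re}\langle X,\cL^a(X)\rangle_{\mathrm{GNS},\sigma}=\mathcal E^a(X)\ge\gamma_\beta\|X\|_{\mathrm{GNS},\sigma}^2
\]
for all such $X$. This is exactly $\lambda_{\rm GNS}(\cL^a)\ge\gamma_\beta$, and the real part in the definition of $\lambda_{\rm GNS}$ matches the real part in \eqref{pre:dirichlet-forms}.

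\emph{Step 2: the complete index.} For a faithful state, $E_{\sigma}(X)=\Tr[\sigma X]\one$. I would use the standard estimate $C_{\rm cb}(E_\sigma)\le\|\sigma^{-1}\|_\infty$. To derive it, write $E_\sigma(X)=\Tr_\Lambda[(\sigma^{1/2}\otimes\one)X(\sigma^{1/2}\otimes\one)]\otimes\one$ on the amplified space. Then compare $C\,(\Tr_\Lambda\otimes\id_M)[(\sigma^{1/2}\otimes\one)Y(\sigma^{1/2}\otimes\one)]\otimes\one_\Lambda$ with $Y$ for positive $Y$ on $\Lambda M$.

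The key fact is that $d_\Lambda^{-1}\Tr_\Lambda(Y)\otimes\one\ge d_\Lambda^{-2}\,Y$ fails in general. The correct Pimsner--Popa statement is $\Tr_\Lambda(Y)\otimes\one_\Lambda\ge d_\Lambda^{-1}Y$, with complete index $d_\Lambda$ for the tracial expectation. Combining this with the weighting by $\sigma\ge\lambda_{\min}(\sigma)\one$ should give $C_{\rm cb}(E_\sigma)\le d_\Lambda\,\|\sigma^{-1}\|_\infty$.

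To reach the stated form, I would rather use the sharper bound $C_{\rm cb}\le Z_\Lambda(\beta)Z_\Lambda(-\beta)$. The route is to write $\|\sigma^{-1}\|_\infty=Z_\Lambda(\beta)e^{\beta\lambda_{\max}(H)}$ and bound $d_\Lambda e^{\beta\lambda_{\max}(H)}\le\Tr[e^{\beta H}]=Z_\Lambda(-\beta)$. That inequality is false as written, so the cleaner route is the direct one: verify complete positivity of $CE_\sigma-\id$ via its Choi matrix. This Choi matrix is $C(\one\otimes\sigma^{T})\,$ minus the maximally entangled projector (unnormalised). Its positivity holds as soon as $C\ge\Tr[\sigma^{-1}]$. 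Finally, $\Tr[\sigma^{-1}]=Z_\Lambda(\beta)\Tr[e^{\beta H}]=Z_\Lambda(\beta)Z_\Lambda(-\beta)$. I expect this step to be the main obstacle: it requires identifying the exact Choi-matrix positivity condition, namely $\langle\Omega|(\one\otimes\sigma^T)^{-1}|\Omega\rangle\le C$ for the unnormalised maximally entangled vector, which is a Schur-complement check.

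\emph{Step 3: assembling.} Inserting the two inputs into \eqref{cmlsi:general-gns} gives the first inequality of \eqref{cmlsi:global-bound}. For the second, use $|\lambda(H)|\le\|H\|_\infty\le nJ$, which follows from \eqref{eq:interaction-bound} by summing over sites. Then $Z_\Lambda(\pm\beta)\le d^ne^{\beta nJ}$, and so $Z_\Lambda(\beta)Z_\Lambda(-\beta)\le e^{2n(\log d+\beta J)}$.

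For the regularised sampler, the estimate \eqref{cmlsi:symmetric-gns} needs GNS symmetry, which does not follow from KMS symmetry alone. I would argue it as follows. Each $\mathbb E_i$ is a $\sigma$-preserving conditional expectation, and by Takesaki's theorem any such expectation commutes with the modular group $\Delta_\sigma^{it}$. A KMS self-adjoint map commuting with $\Delta_\sigma$ is GNS self-adjoint. Hence $\cL^{\mathrm{RHB}}$ is GNS symmetric, and \eqref{cmlsi:symmetric-gns} together with the same partition-function bound yields \eqref{cmlsi:rhb-global-bound}.
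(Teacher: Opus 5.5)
Your final argument is correct and is essentially the paper's proof: the $S=\Lambda$ case of Theorem~\ref{gap:conditional} gives $\lambda_{\rm GNS}(\cL^a)\ge\gamma_\beta$, the Choi matrix $C\sigma^{\mathsf T}\otimes\one-|\Omega\rangle\langle\Omega|$ gives $C_{\rm cb}(E_\sigma)=\Tr[\sigma^{-1}]=Z_\Lambda(\beta)Z_\Lambda(-\beta)$, and the regularised sampler is GNS symmetric (you get this from Takesaki's modular invariance plus KMS symmetry, while the paper checks directly from bimodularity and $\sigma$-invariance that each $\mathbb E_i$ is the GNS-orthogonal projection onto $\mathcal N_i$; both routes are valid). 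You should, however, delete the abandoned detours in Step~2, because they contain false statements: $C_{\rm cb}(E_\sigma)\le\|\sigma^{-1}\|_\infty$ already fails for $\sigma=\one/d^n$, in the paper's normalisation the complete index of the tracial expectation is $d_\Lambda^2$ rather than $d_\Lambda$, and the two Pimsner--Popa inequalities you contrast are one and the same (true) inequality.
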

	\begin{proof}
		Set $S= \Lambda$. Then, Theorem~\ref{gap:conditional} shows that $\gamma_\beta$ is a lower bound on the GNS gap, which we can plug directly into Eq.~\eqref{cmlsi:general-gns}. Moreover, notice that the regularised heat-bath sampler satisfies GNS detailed balance, so one can also use Eq.~\eqref{cmlsi:symmetric-gns} to lower bound the CMLSI constant.
		Indeed, each $\mathbb E_i$ is a $\sigma$-preserving conditional
		expectation, hence a GNS orthogonal projection, as shown in the proof of
		Proposition~\ref{lem:rhb-direct-comparisons}. Thus their generator sum
		is GNS symmetric. The Petz heat-bath sampler is generally only KMS
		symmetric, so the stronger comparison does not apply to it in general.

		Then, it remains to prove that $C_{\rm cb}(E_\sigma) \leq d^{2n}e^{2\beta J n}$.
		Denote by
		$D=\dim\mathcal H_\Lambda=d^n$ and
		$|\Omega\rangle=\sum_{j=1}^D|j\rangle\otimes|j\rangle$.
		The Choi matrix of $C E_\sigma-\id$ is $C\sigma^{\mathsf T}\otimes\one-|\Omega\rangle\langle\Omega|$, which is positive semidefinite if and only if $C\geq \Tr[\sigma^{-1}]$. Indeed, conjugating by $(\sigma^{\mathsf T})^{-1/2}\otimes\one$ shows that
		\[
		C\ge\langle\Omega|(\sigma^{\mathsf T})^{-1}\otimes\one
		|\Omega\rangle=\Tr[\sigma^{-1}].
		\]
		This implies that $C_{\rm cb}(E_\sigma)=\Tr[\sigma^{-1}] =Z_\Lambda(\beta)Z_\Lambda(-\beta)$. 
		Since $\|H_\Lambda\|_\infty\le Jn$,
		$Z_\Lambda(\pm\beta)\le d^n e^{\beta Jn}$. This proves the claim.
	\end{proof}

	These complete bounds hold uniformly in the memory dimension. Their
	dependence on the chain length is weaker than the ordinary
	$c_\beta/\log n$ rate in Theorem~\ref{thm:main}. In particular,
	$\Tr[\sigma^{-1}]\ge d^{2n}$, so the index $C_{\rm cb}(E_\sigma)$
	cannot be bounded independently of the volume even at infinite temperature.

	It should also be noticed that at the local level, controlling the conditional gap is not enough to give an estimate on the CMLSI: indeed, the first depends on the distance to $\mathcal A_{S^c}$, whereas for either sampler $a$ the local CMLSI accounts for the
	distance to $\operatorname{Fix}\cL_S^a$. The former might be strictly smaller, so there are directions that are ignored by the conditional gap, but that are relevant for the CMLSI. The two-qubit example in Section~\ref{cmlsi:section} exhibits this
	distinction explicitly.
\end{remark}

\subsection{The canonical purified Hamiltonian estimate}\label{sec:purifiedHamiltonian}

In this section we discuss how one can use the canonical purified Hamiltonian from~\cite{LuciaPerezGarciaPerezHernandez2025Gap} to derive Lemma~\ref{gap:canonical}.

To this end, first it should be noticed that the map $X\rightarrow \widetilde{X}:=X\sigma^{1/2}$ is an isometry from the GNS scalar product to the Hilbert--Schmidt one, since:
\[
\|X\|_{\mathrm{GNS},\sigma}^2 = \Tr[\sigma X^\dagger X] = \Tr[(X\sigma^{1/2})^\dagger (X\sigma^{1/2})] = \|\widetilde{X}\|_{2}^2.
\]
Then, we denote by $\mathcal V_S(0)$ the image of $\mathcal{A}_{S^c}$ under this mapping, that is:
\begin{equation}\label{gap:canonical-spaces}
	\mathcal V_S(0)=\{O\sigma^{1/2}:O\in\mathcal A_{S^c}\}.
\end{equation}
Let us also introduce the orthogonal projection on $\mathcal V_S(0)$, which we denote by $\Pi_{S,0}$. In this context, in~\cite{LuciaPerezGarciaPerezHernandez2025Gap} the canonical purified Hamiltonian associated to $\sigma$ was defined as:
\[
\mathbf{H}_S := \sum_{i\in S} (\id - \Pi_{i,0})\,.
\]
This is positive semidefinite, frustration-free and has kernel $\ker\!\left(\mathbf{H}_S\right)=\mathcal V_S(0)$. Moreover, in~\cite{LuciaPerezGarciaPerezHernandez2025Gap} it was shown that in the case $S=\Lambda$, for one-dimensional systems $\mathbf{H}_\Lambda$ has a gap which is independent of the system size (Theorem 38 therein). The proof of Lemma~\ref{gap:canonical} hinges on this fact, which we extend to the case in which $S\subset\Lambda$.

\begin{proof}[Proof of Lemma~\ref{gap:canonical}]
	First, it should be noticed that Eq.~\eqref{gap:variance-factorisation} is equivalent to the fact that there exists a $\gamma_\beta>0$ uniform in $S$ and $\Lambda$ such that:
	\begin{equation}\label{gap:canonical-equation}
		\mathbf{H}_S\ge2\gamma_\beta(\id-\Pi_{S,0})\,.
	\end{equation}
	Since $(1-\Pi_{S,0})$ has the same kernel as $\mathbf{H}_S$, this is the same as saying that the canonical purified Hamiltonian has a gap. To show that Eq.~\eqref{gap:canonical-equation} is indeed equivalent to Eq.~\eqref{gap:variance-factorisation}, notice that for any $S\subseteq \Lambda$ it holds that:
	\begin{align}
		\operatorname{Var}_S(X)&:=\inf_{O\in\mathcal A_{S^c}}
		\|X-O\|_{\mathrm{GNS},\sigma}^2 = \inf_{\widetilde{O}\in\mathcal{V}_S(0)}
		\|(X\sigma^{1/2})-\widetilde{O}\|_{2}^2 = \|(\id-\Pi_{S,0})\widetilde{X}\|_{2}^2, 
	\end{align}
	since $\Pi_{S,0}$ is an orthogonal projection. Then, assume that Eq.~\eqref{gap:canonical-equation} holds. We have:
	\[
	\sum_{i\in S}\operatorname{Var}_i(X) = \sum_{i\in S} \|(\id-\Pi_{i,0})\widetilde{X}\|_{2}^2 = \Tr[\widetilde{X}^\dagger \mathbf{H}_S \widetilde{X}]
	\ge2\gamma_\beta\Tr[\widetilde{X}^\dagger (\id-\Pi_{S,0}) \widetilde{X}]=2\gamma_\beta\operatorname{Var}_S(X).
	\]
	For this reason it is sufficient to prove Eq.~\eqref{gap:canonical-equation}. In order to do so, we first show how to estimate the gap of the canonical purified Hamiltonian on a ring, and then move on to show how this bound can be lifted to an open chain.

	\begin{figure}
		\centering
		\begin{tikzpicture}[x=.55cm,y=.55cm,font=\small]
			\foreach \a/\b in {-.5/2.5,8.5/11.5}
			\fill[darkblue!30!cyan,opacity=.27] (\a,.45) rectangle (\b,1.55);
			\fill[midgreen,opacity=.25] (2.5,.45) rectangle (8.5,1.55);
			\fill[midpink,opacity=.23] (12.5,.45) rectangle (24.5,1.55);

			\draw[darkpurple!35,line width=.5pt] (0,1)--(11,1);
			\draw[black!45,dashed] (11,1)--(24,1);
			\draw[black!45,dashed,rounded corners=4pt]
			(0,1)--(-1,1)--(-1,-1.75)--(25,-1.75)--(25,1)--(24,1);
			\foreach \x in {0,...,11}{
				\shade[shading=ball,ball color=darkpurple] (\x,1) circle (.27);
				\shade[shading=ball,ball color=darkpurple] ({\x+13},1) circle (.27);
			}

			\foreach \a/\b/\lab/\col in
			{-.5/11.5/{original chain $\Lambda$: $n$ spins}/darkblue!30!cyan,
				12.5/24.5/{uncoupled $Q$: $n$ spins}/midpink}{
				\draw[thick,\col!75!black,decorate,decoration={brace,amplitude=5pt}]
				(\a,1.83)--(\b,1.83);
				\node[text=\col!75!black] at ({(\a+\b)/2},2.65) {\lab};
			}
			\draw[thick,midgreen!50!black,decorate,
			decoration={brace,mirror,amplitude=4pt}] (2.5,.2)--(8.5,.2);
			\node[text=midgreen!50!black] at (5.5,-.5) {$S$};
			\node[fill=white,inner sep=3pt,font=\footnotesize,text=black!70]
			at (12,-1.75) {dashed connections carry no interaction};
		\end{tikzpicture}
		\caption{Embedding the open chain into a ring with uncoupled spins.
			The green band marks $S\subseteq\Lambda$. The connecting lines indicate
			the ring geometry; all interactions involving $Q$ vanish.
			Every original interval occupies at most half the ring. The added
			spins tensor its projections with the identity, so the imported
			gap applies to the original Gibbs state and its entire exterior.}
		\label{gap:fig-padding}
	\end{figure}

	Let us begin by considering a ring of $n$ sites. In the proof of
	\cite[Theorem 25]{LuciaPerezGarciaPerezHernandez2025Gap} it is shown that
	for intervals satisfying $|I|\leq \lceil n/2\rceil+1$, we can lower bound the gap of the canonical purified Hamiltonian as:
	\begin{align}
		{\rm gap}(\mathbf{H}_I)\geq \gamma_0\,\prod_{k\ge1}
		\frac{1-\delta_k}{1+\lfloor(4/3)^k\rfloor^{-1}} \qquad\delta_k=C\exp\!\left[-a
		\sqrt{\left\lfloor\frac\mu9(9/8)^k\right\rfloor}\right]\,,\label{eq:414}
	\end{align}
	where $\gamma_0$ is the minimal gap on a region of size at most $\mu$, which is chosen to satisfy
	$\mu=\mu_\beta\ge18$ and $\delta_1\le1/2$, and we used \cite[Theorem 36]{LuciaPerezGarciaPerezHernandez2025Gap} to estimate $\delta_k$.
	Here $C,a>0$ depend only on $\beta,d,r_0,J$.

	Such an integer $\mu_\beta$ exists, because
	$\delta_1=C\exp[-a\sqrt{\lfloor\mu_\beta/8\rfloor}]\to0$ as
	$\mu_\beta\to\infty$. Then $0\le\delta_k\le1/2$ for all $k\ge1$,
	and both $\sum_k\delta_k$ and $\sum_k\lfloor(4/3)^k\rfloor^{-1}$
	converge. Using $\log(1-x)\ge-2x$ for $0\le x\le1/2$ gives
	\[
	\prod_{k\ge1}\frac{1-\delta_k}{1+\lfloor(4/3)^k\rfloor^{-1}}
	\ge\exp\!\left[-2\sum_{k\ge1}\delta_k
	-\sum_{k\ge1}\lfloor(4/3)^k\rfloor^{-1}\right]>0.
	\]
	This lower bound depends only on the interaction parameters
	and $\beta$, as in \cite[Remark 26]{LuciaPerezGarciaPerezHernandez2025Gap}.
	For $|I|\le\mu_\beta$, the base gap $\gamma_0$ already implies
	\eqref{eq:414}, since the product is at most one.
	On the other hand, Theorem 22 and Proposition 37 from \cite{LuciaPerezGarciaPerezHernandez2025Gap} imply that $\gamma_0>0$ has a lower bound that is also independent of the system size. This shows that, for any interval smaller than half the ring, we can bound the gap of the canonical purified Hamiltonian independently of the system size.

	At this point, consider an open chain of size $n$. Append to it a region $Q$, composed of $n$ $d$-dimensional spins. We set all interactions involving $Q$ to zero (see Fig.~\ref{gap:fig-padding}). Then, the corresponding Gibbs state is
	\[
	\widetilde\sigma=\sigma\otimes\tau_Q,
	\qquad \tau_Q=\one_Q/d^n,
	\]
	and its interaction range and strength obey the same bounds $r_0$ and $J$. Moreover,
	\[
	\widetilde{\mathcal V}_S(0)
	=\mathcal V_S(0)\otimes\mathcal B(\mathcal H_Q),
	\]
	since the multiplication by $\tau_Q^{1/2}$ maps
	$\mathcal B(\mathcal H_Q)$ onto itself. Thus, setting
	$\id_{\mathrm{HS},Q}$ to be the identity on the auxiliary Hilbert--Schmidt
	space,
	\[
	\widetilde\Pi_{S,0}=\Pi_{S,0}\otimes\id_{\mathrm{HS},Q},
	\qquad
	\widetilde{\mathbf{H}}_S
	=\mathbf{H}_S\otimes\id_{\mathrm{HS},Q}.
	\]
	Then, since $|\Lambda\cup Q| = 2n$, we can apply the estimate in Eq.~\eqref{eq:414} to any $S\subseteq \Lambda$. Denote by $2\gamma_\beta$ the resulting uniform lower bound. This proves
	\eqref{gap:canonical-equation} on the original system.
\end{proof}

\subsection{Amplitudes and the conditional gap}

As discussed in Remark~\ref{gap:global-memory}, Theorem~\ref{gap:conditional} is not enough to prove
 a uniform local KMS gap.

Modular averaging instead gives a lower bound involving an
average of exterior projections, sufficient for the buffered entropy
argument in Section~\ref{sec:buffered}. This is not a uniform local spectral
gap relative to the exact fixed algebra: the local heat-bath gap in the
two-qubit example of Section~\ref{cmlsi:section} tends to zero. For
$S=\Lambda$, the averaged estimate does yield the uniform global KMS gap.
To this end, let us introduce the rotated version of the algebra in Eq.~\eqref{gap:canonical-spaces}
\begin{equation}\label{eq:exterior-spaces}
	\mathcal V_{S}(t)=
	\{\Delta_\sigma^{-it}(O)\sigma^{1/2}:O\in\mathcal A_{S^c}\},
\end{equation}
and we define $\Pi_{S,t} = \Delta_\sigma^{-it} \Pi_{S,0} \Delta_\sigma^{it} $ to be the corresponding orthogonal projection. Also notice that for either sampler $a$, which satisfies KMS detailed balance, the operator $-\Gamma_\sigma^{1/2}\cL_{S}^a\Gamma_\sigma^{-1/2}
=\sum_{i\in S}
\bigl(-\Gamma_\sigma^{1/2}\cL_i^a\Gamma_\sigma^{-1/2}\bigr)$ is self-adjoint with respect to the Hilbert--Schmidt inner product, with the local terms satisfying
\[
-\Gamma_\sigma^{1/2}\cL_i^a\Gamma_\sigma^{-1/2}\ge0.
\]

With this notation, we have that
\begin{lemma}[Exact modular averaging]\label{lem:average}
	For a bounded finite-range interaction on an open chain, at every
	fixed $\beta>0$, there is $\gamma_\beta>0$, depending only on
	$\beta,d,r_0,J$, such that every interval $S\subseteq\Lambda$ satisfies
	\begin{equation}
		 -\Gamma_\sigma^{1/2}\cL_{S}^a\Gamma_\sigma^{-1/2}
		\ge\gamma_\beta\int_{\mathbb R}2\operatorname{sech}(2\pi t)
		(1-\Pi_{S,t})\,dt \qquad a\in\{\mathrm{HB},\mathrm{RHB}\}.                \label{eq:average}
	\end{equation}
	In particular, for $S=\Lambda$, this gives the KMS gap
	\begin{equation}
		-\Gamma_\sigma^{1/2}\cL_\Lambda^a\Gamma_\sigma^{-1/2}
		\ge\gamma_\beta(1-|\sigma^{1/2}\rangle
		\langle\sigma^{1/2}|)\qquad a\in\{\mathrm{HB},\mathrm{RHB}\}.
		\label{eq:kms-gap}
	\end{equation}
\end{lemma}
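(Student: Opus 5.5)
The plan is to reduce \eqref{eq:average} to a single-site operator inequality, and then import Lemma~\ref{gap:canonical} after conjugating it by modular unitaries. Set $f(t)=2\operatorname{sech}(2\pi t)$. Since $\int_{\mathbb R}2\operatorname{sech}(2\pi t)\,dt=1$, $f$ is a probability density. By \eqref{eq:energy-comparison} it suffices to treat $a=\mathrm{HB}$, since the RHB operator dominates the HB one. The target single-site bound is
\[
 -\Gamma_\sigma^{1/2}\cL_i\Gamma_\sigma^{-1/2}\ \ge\ c\int_{\mathbb R} f(t)\,(1-\Pi_{i,t})\,dt
\]
for a universal constant $c>0$, with $c=1/2$ being the natural guess.

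Given this, the rest is formal. Lemma~\ref{gap:canonical} in the form \eqref{gap:canonical-equation} reads $\sum_{i\in S}(1-\Pi_{i,0})\ge 2\gamma_\beta(1-\Pi_{S,0})$ as operators on Hilbert--Schmidt space. The modular map $\Delta_\sigma^{-it}$ is a unitary on HS space, and $\Pi_{S,t}=\Delta_\sigma^{-it}\Pi_{S,0}\Delta_\sigma^{it}$ for every region, singletons included. Conjugation by $\Delta_\sigma^{-it}$ therefore gives $\sum_{i\in S}(1-\Pi_{i,t})\ge2\gamma_\beta(1-\Pi_{S,t})$ for every $t$. Integrating against $f$ and summing the single-site bound over $i\in S$ yields \eqref{eq:average}, with the constant renamed.

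For $S=\Lambda$ we have $\mathcal A_{\Lambda^c}=\mathbb C\one$. Moreover $\Delta_\sigma^{-it}(\one)=\one$, so $\Pi_{\Lambda,t}=|\sigma^{1/2}\rangle\langle\sigma^{1/2}|$ for all $t$. Since $\int f=1$, this gives \eqref{eq:kms-gap}.

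For the single-site bound, write an HS vector as $v=\Gamma_\sigma^{1/2}(X)=\Delta_\sigma^{1/4}(X)\sigma^{1/2}$. Because $\cP_i$ is KMS self-adjoint with spectrum in $[0,1]$ (Proposition~\ref{pre:properties}), we have
\[
 \langle X,(\id-\cP_i)X\rangle_{\mathrm{KMS}}\ \ge\ \|(\id-\cP_i)X\|_{\mathrm{KMS}}^2 .
\]
Since $\cP_iX\in\mathcal A_{i^c}$, the right-hand side is at least $\operatorname{dist}(v,W_i)^2$, where
\[
 W_i=\Gamma_\sigma^{1/2}(\mathcal A_{i^c})=\{\Delta_\sigma^{1/4}(O)\sigma^{1/2}:O\in\mathcal A_{i^c}\}.
\]
It therefore remains to prove the projection comparison $\Pi_{W_i}\le(1-c)+c\int f\,\Pi_{i,t}\,dt$, equivalently $1-\Pi_{W_i}\ge c\int f\,(1-\Pi_{i,t})\,dt$.

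The idea is analytic interpolation in the strip $0\le\operatorname{Im}z\le1/2$, applied to $z\mapsto\Delta_\sigma^{-iz}(O)\sigma^{1/2}$. At height $1/4$ this function equals the KMS vector in $W_i$. On the line $\operatorname{Im}z=0$ its values lie in $\mathcal V_i(t)$. On the line $\operatorname{Im}z=1/2$ its values are $\sigma^{1/2}\Delta_\sigma^{-it}(O)$, which are related to $\mathcal V_i(t)$ by the adjoint (modular conjugation) symmetry $v\mapsto v^\dagger$. This symmetry preserves $\mathcal A_{i^c}$ and commutes appropriately with the projections. The Poisson kernel of this strip at its midline is exactly proportional to $\operatorname{sech}(2\pi t)$, which explains the weight $f$.

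The main obstacle is converting this Poisson representation into an operator inequality with the correct direction and norms. Pairing with $v$ and applying Cauchy--Schwarz gives
\[
 |\langle v,w\rangle|^2\le\int f\,\|\Pi_{i,t}v\|^2\,dt\cdot\int f\,\|g_t\|^2\,dt ,
\]
where $g_t$ are the boundary values. The boundary norms $\|g_t\|_2^2$ are GNS norms of $O$, whereas $\|w\|_2$ is a KMS norm, and the needed comparison between them goes the wrong way in general. My first attempt would be to run the interpolation on the orthogonal complement instead. I would take $u\perp W_i$ and apply the three-lines/Poisson argument to the bounded analytic function
\[
 F(z)=\bigl\langle u,\Delta_\sigma^{-iz}(\cdot)\sigma^{1/2}\bigr\rangle ,
\]
viewed as a functional on $\mathcal A_{i^c}$. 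The goal is to bound $\int f\,\|\Pi_{i,t}u\|^2\,dt$ by a constant times $\|u\|^2$ in a way that is strictly below $\|u\|^2$. Failing that, I would exploit the subordination identity expressing $\Gamma_\sigma^{1/2}$ as a $\operatorname{sech}$-average of the two GNS embeddings, $X\mapsto\Delta_\sigma^{-it}(X)\sigma^{1/2}$ and its adjoint version, to express $\operatorname{dist}_{\mathrm{KMS}}$ directly as such an average of twisted GNS distances.
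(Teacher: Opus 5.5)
Your outer reduction is sound. Passing to $a=\mathrm{HB}$ via \eqref{eq:energy-comparison}, conjugating \eqref{gap:canonical-equation} by the Hilbert--Schmidt unitaries $\Delta_\sigma^{-it}$, integrating against $f$, and the $S=\Lambda$ case all work. Averaging site by site before summing is equivalent to the paper's order, which first uses Theorem~\ref{gap:conditional} and then averages. Your single-site target with $c=1/2$ is also true.

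The gap is in your proposed proof of that target. After the step $-\Gamma_\sigma^{1/2}\cL_i\Gamma_\sigma^{-1/2}\ge1-\Pi_{W_i}$, which is correct but lossy, you need the projection comparison $1-\Pi_{W_i}\ge c\int f(t)(1-\Pi_{i,t})\,dt$. This is false for every $c>0$ unless $\sigma$ factorises across $\{i\}|\{i\}^c$. At $u=\Gamma_\sigma^{1/2}(O)\in W_i$ the left side vanishes, so the comparison would force $W_i\subseteq\mathcal V_i(t)$ for all $t$. Since the dimensions agree, this gives $\mathcal V_i(t)=\mathcal V_i(0)$, i.e.\ modular invariance of $\mathcal A_{i^c}$, which forces $\log\sigma=h_i\otimes\one+\one\otimes g_{i^c}$.

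Concretely, take the paper's two-qubit example $\sigma\propto e^{-\varepsilon Z_1Z_2}$ with $i=1$ and $O=\one\otimes X_2$. Then $\Delta_\sigma^{s}(X_2)=X_2\bigl(\cosh(2s\varepsilon)+\sinh(2s\varepsilon)Z_1Z_2\bigr)$. For $s=\tfrac14+it$ the $\sinh$ never vanishes, so $u\notin\mathcal V_1(t)$ for any $t$.

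No variant of your strip, Cauchy--Schwarz or orthogonal-complement arguments can repair this. The only properties of $\cP_i$ you use are KMS self-adjointness, spectrum in $[0,1]$, and range in $\mathcal A_{i^c}$. These are shared by the KMS-orthogonal projection $\mathcal E^{\mathrm{KMS}}$ onto $\mathcal A_{i^c}$, for which $-\Gamma_\sigma^{1/2}(\mathcal E^{\mathrm{KMS}}-\id)\Gamma_\sigma^{-1/2}=1-\Pi_{W_i}$ exactly. The slack $\Pi_{W_i}-\Gamma_\sigma^{1/2}\cP_i\Gamma_\sigma^{-1/2}\ge0$ that you discard is precisely what makes the true bound hold.

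The missing ingredients are positivity of $\cP_i$, entering through Kadison--Schwarz in the GNS inner product, and an exact Fourier identity applied to the generator itself rather than to a range projection.

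Kadison--Schwarz gives $\|\cP_iX\|_{\mathrm{GNS},\sigma}\le\|X\|_{\mathrm{GNS},\sigma}$. Hence $\operatorname{Re}\langle X,(\id-\cP_i)X\rangle_{\mathrm{GNS},\sigma}\ge\frac12\|X-\cP_iX\|_{\mathrm{GNS},\sigma}^2\ge\frac12\operatorname{Var}_i(X)$.

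Set $\widehat L_i:=-\Gamma_\sigma^{1/2}\cL_i\Gamma_\sigma^{-1/2}$, which is Hilbert--Schmidt self-adjoint, and $\psi=\Gamma_\sigma^{1/2}(X)$, so that $X\sigma^{1/2}=\Delta_\sigma^{-1/4}\psi$. The GNS bound then reads $B_i:=\frac12\bigl(\Delta_\sigma^{-1/4}\widehat L_i\Delta_\sigma^{1/4}+\Delta_\sigma^{1/4}\widehat L_i\Delta_\sigma^{-1/4}\bigr)\ge\frac12(1-\Pi_{i,0})$.

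In an eigenbasis of $\Delta_\sigma$ one has $(B_i)_{jk}=\cosh\bigl(\frac{\omega_j-\omega_k}{4}\bigr)(\widehat L_i)_{jk}$. The identity $\int_{\mathbb R}f(t)e^{itx}\,dt=\operatorname{sech}(x/4)$ inverts this exactly: $\widehat L_i=\int_{\mathbb R}f(t)\Delta_\sigma^{-it}B_i\Delta_\sigma^{it}\,dt\ge\frac12\int_{\mathbb R}f(t)(1-\Pi_{i,t})\,dt$.

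So your $\operatorname{sech}$/strip intuition identifies the right kernel, but it must act on $\widehat L_i$, whose KMS symmetry makes the inversion exact. With this single-site bound the rest of your plan goes through. The paper performs the same computation after summing over $S$ and invoking Theorem~\ref{gap:conditional}.
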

\begin{proof}
	First, notice that thanks to Prop.~\ref{pre:properties}, it is sufficient to prove the bound for the heat-bath sampler. Let us introduce the mapping $X\rightarrow\psi_{X} := \Gamma_\sigma^{1/2}(X)$. Then, since $\Delta_\sigma^{-1/4}(\psi_X)=X\sigma^{1/2} = \widetilde{X}$ (as defined in Sec.~\ref{sec:purifiedHamiltonian}), we can express the GNS norm in terms of the Hilbert--Schmidt one as
	\[
	\|X\|_{\mathrm{GNS},\sigma}^2 = \|\widetilde{X}\|_{2}^2 = \|\Delta_\sigma^{-1/4}(\psi_X)\|_{2}^2
	\]
	We can rewrite Eq.~\eqref{gap:conditional-equation} in this notation:
	\begin{align}
		-\operatorname{Re}\Tr[\psi_X^\dagger(\Delta_\sigma^{-1/2} \Gamma_\sigma^{1/2}\cL_S\Gamma_\sigma^{-1/2})(\psi_X)]\geq \gamma_\beta \Tr[\psi_X^\dagger(\Delta_\sigma^{-1/4}(1-\Pi_{S,0})\Delta_\sigma^{-1/4})(\psi_X)]\,.\label{eq:418}
	\end{align}
	Define $T:= -\operatorname{Re}(\Delta_\sigma^{-1/2} \Gamma_\sigma^{1/2}\cL_S\Gamma_\sigma^{-1/2})$ and $B=\Delta_\sigma^{1/4}T\Delta_\sigma^{1/4}$. Since Eq.~\eqref{eq:418} holds for any $\psi_{X}$, it implies the operator inequality 
	\[
	\begin{aligned}
		B\ge\gamma_\beta(1-\Pi_{S,0}).
	\end{aligned}
	\]
	This is the first ingredient of the proof. At this point, we need to relate $B$ to the original Lindbladian. To this end, denote by $\{|\omega_i\rangle\}$ an eigenbasis of $\Delta_\sigma$ with eigenvalues $\Delta_\sigma|\omega_i\rangle= e^{\omega_i}|\omega_i\rangle$, and for any operator $X$, we denote its coordinates by $X_{ij} := \langle\omega_i|X|\omega_j\rangle$. Then, direct calculation gives:
	\[
	B_{ij}=-\cosh\!\left(\frac{\omega_i -\omega_j}{4}\right)(\Gamma_\sigma^{1/2}\cL_{S}\Gamma_\sigma^{-1/2}\bigr)_{ij}.
	\]
	Inverting the relation above and using the Fourier identity
	\[
	2\int_{\mathbb R}\operatorname{sech}(2\pi t)e^{itx}\,dt
	=\operatorname{sech}(x/4)
	\]
	gives the exact relation
	\[
	\begin{aligned}
	-\Gamma_\sigma^{1/2}\cL_S\Gamma_\sigma^{-1/2}
	&=2\int_{\mathbb R}\operatorname{sech}(2\pi t)\Delta_\sigma^{-it}B\Delta_\sigma^{it}\,dt\\
	&\geq 2\gamma_\beta\int_{\mathbb R}\operatorname{sech}(2\pi t)(1-\Pi_{S,t})\,dt,
	\end{aligned}
	\]
	where in the last step we used the fact that conjugation by
	$\Delta_\sigma^{-it}$
	preserves positivity. This proves
	\eqref{eq:average}. Additionally, for $S=\Lambda$, the range is
	$\mathbb C\sigma^{1/2}$ and is fixed by all modular unitaries.
	The density $2\operatorname{sech}(2\pi t)$ integrates to one,
	which proves \eqref{eq:kms-gap}.
\end{proof}
 %
%
\section{Locality of exterior changes}\label{sec:loc}

In this section we use the special locality properties of one-dimensional short-range systems to prove that applying operations outside a buffered interval has little effect on the Petz sampler inside the interval. In particular, we are able to show the following:
\begin{theorem}[Exterior changes are almost fixed]\label{loc-exterior}
	Fix an interval $S$, a buffer $S[b]$ and $|t|\le1$, and consider an operator $v\in\mathcal{V}_{S[b]}(t)$, i.e., an operator that can be written as
	\[
	v=\Delta_\sigma^{-it}(O)\sigma^{1/2},\qquad
	\text{for some } O\in\mathcal A_{S[b]^c},\qquad S[b] :=\{x:\operatorname{dist}(x,S)\le b\}.
	\]
	Then, there exist constants $C,c,\kappa>0$, depending on the interaction parameters and the temperature (and independent
	of the chain length), for which it holds that
\begin{equation}\label{loc-exterior-fixed}
 \frac12\bigl\|\cP_{S*}(vv^\dagger)-vv^\dagger\bigr\|_1
       \le C e^{c|S|-\kappa b}\hs{v}^2.
\end{equation}
If $S[b]=\Lambda$, the left side is zero.
\end{theorem}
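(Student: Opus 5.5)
The plan is to reduce the statement to a quasi-locality property of one operator, $M:=\sigma_{S^c}^{-1/2}\sigma^{1/2}$, which encodes the whole Petz update. Two exact identities make this work. First, directly from \eqref{eq:petz},
\[
 \cP_{S*}(\rho)=M^\dagger\rho_{S^c}M,
 \qquad
 \sigma=M^\dagger\sigma_{S^c}M .
\]
Second, for $v=\Delta_\sigma^{-it}(O)\sigma^{1/2}$ one computes $vv^\dagger=\Delta_\sigma^{-it}(O\sigma O^\dagger)$.

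\textbf{The case $t=0$.} Here $vv^\dagger=O\sigma O^\dagger$. Since $O\in\mathcal A_{S[b]^c}\subseteq\mathcal A_{S^c}$, we have $\Tr_S[O\sigma O^\dagger]=O\sigma_{S^c}O^\dagger$. The two identities then give
\[
 \cP_{S*}(vv^\dagger)-vv^\dagger
 =M^\dagger O\sigma_{S^c}O^\dagger M-OM^\dagger\sigma_{S^c}MO^\dagger .
\]
This vanishes whenever $M$ commutes with $O$. In particular it vanishes when $S[b]=\Lambda$, where $O$ is a scalar. So the whole task is to show that $M$ is approximately supported on $S[b/2]$. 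I would prove this with the effective-Hamiltonian description of marginals from \cite{kuwahara2025clustering}. It writes $\sigma_{S^c}\propto e^{-\beta\widetilde H_{S^c}}$, where $\widetilde H_{S^c}=H_{S^c}+\Phi_{\partial}$ and the correction $\Phi_\partial$ is quasi-local near the two boundary points of $S$. With this, $M\propto e^{\beta\widetilde H_{S^c}/2}e^{-\beta H/2}$ is an Araki expansional.

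Lemma~\ref{stat:expansional-estimates}, extended to this perturbed Hamiltonian, then provides $M_{\rm loc}\in\mathcal A_{S[b/2]}$ with
\[
 \|M-M_{\rm loc}\|_\infty\le Ce^{c|S|-\kappa b}.
\]
The factor $e^{c|S|}$ is the norm of the non-local part $e^{-\beta H_S/2}$ inside the expansional. The marginal-comparison estimates of Section~\ref{stat:section} control the normalisation.

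\textbf{Main obstacle: norm bookkeeping.} The error must be proportional to $\hs{v}^2=\Tr[O\sigma O^\dagger]=\|O\sigma_{S^c}^{1/2}\|_2^2$, not to $\|O\|_\infty^2$. I would expand both products around $M_{\rm loc}$. The zeroth-order terms cancel exactly, because $[M_{\rm loc},O]=0$. In every remaining term I would place the error next to a factor $O\sigma_{S^c}^{1/2}$ or $\sigma_{S^c}^{1/2}MO^\dagger=\sigma^{1/2}O^\dagger$, whose Hilbert--Schmidt norms equal $\hs v$. H\"older then gives terms of size $\|M-M_{\rm loc}\|_\infty\|M\|_\infty\hs v^2$. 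The awkward terms are those in which $O^\dagger$ sits to the right of $M-M_{\rm loc}$.

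For these I would try a multiplicative approximation first. The idea is to build $M_{\rm loc}$ so that $M_{\rm loc}=M(\one+\varepsilon)$, with $\varepsilon$ small and itself quasi-local. The tool would be the relative operator-order argument of Lemma~\ref{stat:inhomogeneous-connected-mixing}, applied to $M^\dagger M$ and $M_{\rm loc}^\dagger M_{\rm loc}$. The aim is an order comparison $(1-\eta)\,O\sigma O^\dagger\le O M_{\rm loc}^\dagger\sigma_{S^c}M_{\rm loc}O^\dagger\le(1+\eta)\,O\sigma O^\dagger$, with $\eta=Ce^{c|S|-\kappa b}$. An order bound of this kind converts directly into a trace-norm bound proportional to $\Tr[O\sigma O^\dagger]$.

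\textbf{The case $|t|\le1$.} I would reduce this to $t=0$ by complex-time locality \cite{perez-locality2023}. For $|t|\le1$, the operator $\Delta_\sigma^{-it}(O)$ equals $O$ conjugated by expansionals whose norms are uniformly bounded. Hence it is approximately supported outside $S[b/2]$, up to $e^{-\kappa b}$. The same effective-Hamiltonian representation shows that the twist intertwines $\sigma$ with $\sigma_{S^c}$ up to a quasi-local factor near $S$. This lets me commute $\Delta_\sigma^{-it}$ through $\cP_{S*}$ with an error of the same form, using that $\Delta_\sigma^{-it}$ is a trace-norm isometry. Adjusting the constants $C,c,\kappa$ then yields \eqref{loc-exterior-fixed}.
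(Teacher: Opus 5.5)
Your reduction is the right one. $M=\sigma_{S^c}^{-1/2}\sigma^{1/2}$ is the adjoint of the paper's conditional amplitude $W$, and the $t=0$ difference is governed by $M^\dagger O\sigma_{S^c}^{1/2}-O\sigma^{1/2}=[M^\dagger,O]\sigma_{S^c}^{1/2}$. You also correctly locate the difficulty in the terms where $O$ sits on the wrong side of $M-M_{\rm loc}$. However, \textbf{the proposal never resolves that difficulty}, and the multiplicative fix only restates it. If $M_{\rm loc}=M(\one+\varepsilon)$, then $M_{\rm loc}^\dagger\sigma_{S^c}M_{\rm loc}=(\one+\varepsilon)^\dagger\sigma(\one+\varepsilon)$. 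Your order bound is then equivalent to $\|(\one+X)^\dagger(\one+X)-\one\|_\infty\le\eta$ with $X=\sigma^{1/2}\varepsilon\sigma^{-1/2}$. That is a condition on a modularly weighted operator, which $\|\varepsilon\|_\infty$ does not control, since the weights are exponentially large in $n$. An order comparison of $M^\dagger M$ with $M_{\rm loc}^\dagger M_{\rm loc}$ also says nothing about the sandwich with $\sigma_{S^c}$, which does not commute with $M$.

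What is missing is a single local $J$, commuting with $O$, that is small in two norms: $\|W-J\|_\infty$ and $\|\Delta_\sigma^{-1/2}(W-J)\|_\infty$. Then $[W,O]\sigma^{1/2}=(W-J)v-v\,\Delta_\sigma^{-1/2}(W-J)$ gives a Hilbert--Schmidt bound proportional to $\hs v$. The $t=0$ case follows from $\sigma_{S^c}^{1/2}=\sigma^{1/2}(W^\dagger)^{-1}$ and $\|W^{-1}\|_\infty\le Ce^{c|S|}$. This inverse, not $e^{-\beta H_S/2}$, is where $e^{c|S|}$ enters; $\|W\|_\infty$ is uniformly bounded. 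The paper builds $J$ from two facts your plan lacks:
\begin{itemize}
\item $\Delta_\sigma^{-1/2}(W)=W^\dagger$, so the weighted amplitude is again quasi-local.
\item For $J=E^{\mathrm{tr}}_DW$ with $D=S[b]$ and $F=F_D(1/2)$, one has $\Delta_\sigma^{-1/2}(J)=FE^{\mathrm{tr}}_D(F^{-1}W^\dagger F)F^{-1}$, because $E^{\mathrm{tr}}_D$ commutes with the modular flow of $K_D+K_{D^c}$.
\end{itemize}
The weighted error is then bounded by $\|(\id-E^{\mathrm{tr}}_D)W^\dagger\|_\infty+\|[F,W^\dagger]\|_\infty$. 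This is small because $F$ lives at $\partial D$ and $W^\dagger$ near $S$ (Lemma~\ref{loc-two-norms}).

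The same gap recurs for $|t|\le1$. Approximate support of $\Delta_\sigma^{-it}(O)$ outside $S[b/2]$ in operator norm only yields errors proportional to $\|O\|_\infty^2$. Nor can $\Delta_\sigma^{-it}$ be commuted through $\cP_{S*}$. The intertwiner $\sigma^{it}\sigma_{S^c}^{-it}$ is a quasi-local unitary near $S$ but not close to the identity. So such a commutation could hold only on states like $O\sigma O^\dagger$, and proving it there with error $\propto\hs v^2$ needs the same weighted control. The paper instead writes $v-v_0=U_0[V,O]\sigma^{1/2}U^\dagger$, with $O_0=U_0OU_0^\dagger$ supported outside $I=S[\lfloor b/2\rfloor]$. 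It applies the two-norm lemma to $V=F_I(-it)^{-1}$, using $\Delta_\sigma^{-1/2}(V)=F_I(1/2)F_I(1/2-it)^{-1}$, and then the $t=0$ case with buffer $\lfloor b/2\rfloor$ handles $O_0$.

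Your route to quasi-locality of $M$ is not automatic either. Lemma~\ref{stat:expansional-estimates} does not extend directly to $H_S+\widetilde H_{S^c}$. The boundary correction is only quasi-local, with some rate $a_\beta$. Imaginary-time conjugation can amplify a piece supported on $r$ sites by $G^r$, so the naive Duhamel bound $\sum_r G^re^{-a_\beta r}$ is finite only if $a_\beta$ beats $\log G$. The paper avoids this by proving locality of $W(it)=\sigma^{it}\sigma_{S^c}^{-it}$ with real-time Lieb--Robinson bounds. It then interpolates to $z=1/2$ by the three-lines theorem, using the uniform bound on $W(1+it)$.
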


The proof of this fact is based on the concept of Araki expansionals~\cite{bcp} and on the work on effective Hamiltonians from~\cite{kuwahara2025clustering}. Both these ingredients are peculiar to one-dimensional systems. 

Before passing to the proof, let us fix the notation and state the three main lemmas used in the derivation. First, for simplicity, we absorb the inverse temperature into the Hamiltonian, which we denote by $K=\beta H=\sum_Z k_Z$, and write
\[
 \operatorname{diam}(Z)\le r_0,\qquad
 \sup_x\sum_{Z\ni x}\norm{k_Z}\le j,\qquad
 K_D=\sum_{Z\subseteq D}k_Z,\qquad
 \sigma=e^{-K}/Z_\Lambda.
\]
For a set $D$, we introduce the tracial conditional expectation
\[
  E^{\mathrm{tr}}_D(X)=\Tr_{D^c}[X]\otimes
                  \frac{\one_{D^c}}{\dim\mathcal H_{D^c}}.
\]
This is contractive and bimodular over $\mathcal A_D$. We also define the Araki expansional; for any $D$ this is given by
\begin{align}
	F_D(z) := e^{zK} e^{-z(K_D+K_{D^c})}\,, \label{eq:ArakiExpansionals}
\end{align}
where $z\in\mathbb{C}$. Finally, the last concept that we need is the conditional amplitude, which is given by
\begin{align}
	W := \sigma^{1/2} (\mathbf{1}_S \otimes \sigma_{S^c})^{-1/2}.\label{eq:conditionalAmplitude}
\end{align}

Direct calculation shows that we can write the Petz sampler as $\cP_{S*}(\rho)=W(\one_S\otimes\rho_{S^c})W^\dagger$.

With this notation in hand, we can state the three lemmas needed for the proof of Thm.~\ref{loc-exterior}. 

\begin{lemma}[Araki estimates]
\label{loc-input-expansionals}
For an interval $D$, let
$F_D(z)$ be the Araki expansional defined in Eq.~\eqref{eq:ArakiExpansionals}. Let $\mathcal B_D$ be the
region containing all the interaction terms
removed at the endpoints of $D$. For $|z|\le2$ and every fixed $\nu>0$, there exist constants $G,C_\nu>0$ independent of the chain length such that
\begin{equation}\label{loc-boundary-cocycle-tail}
 \norm{F_D(z)^{\pm1}}\le G,\qquad
 \norm{(\id- E^{\mathrm{tr}}_{\mathcal B_D[m]})F_D(z)^{\pm1}}
       \le C_\nu e^{-\nu m}.
\end{equation}

\end{lemma}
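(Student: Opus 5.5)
This is the one-dimensional Araki expansional estimate, and I would derive it from the Dyson series rather than from scratch. Write $K=K_D+K_{D^c}+V_D$, where $V_D=\sum_{Z\cap D\ne\varnothing,\,Z\cap D^c\ne\varnothing}k_Z$ is the boundary interaction. Since $D$ is an interval, $V_D$ is a sum of at most two clusters. Each cluster has norm at most $r_0 j$ and is supported in $\mathcal B_D$, which lies within distance $r_0$ of the endpoints.

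The expansional satisfies the differential equation $\frac{d}{dz}F_D(z)=F_D(z)\,e^{z(K_D+K_{D^c})}V_De^{-z(K_D+K_{D^c})}$. Iterating it gives the Dyson expansion
\[
 F_D(z)=\sum_{k\ge0}\int_{0\le s_1\le\cdots\le s_k\le 1} z^k\,
 \tau_{s_1z}(V_D)\cdots\tau_{s_kz}(V_D)\,ds,
 \qquad \tau_w(X)=e^{wK_0}Xe^{-wK_0},
\]
with $K_0=K_D+K_{D^c}$ and a suitable ordering of the factors. The same series with $z\mapsto -z$, or with the adjoint, handles $F_D(z)^{-1}=e^{z K_0}e^{-zK}$.

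The key input is Araki's one-dimensional bound on imaginary-time evolutions of local operators. For a local $X$ supported on a set of size $O(r_0)$ and $|w|\le 2$, expand $\tau_w(X)=\sum_{m}\frac{w^m}{m!}\operatorname{ad}_{K_0}^m(X)$. In 1D the nested commutator $\operatorname{ad}^m_{K_0}(X)$ contains only terms supported within distance $m r_0$ of $\operatorname{supp}X$. Counting connected chains of interaction terms gives at most $(Cj)^m m!$ many terms, which yields the usual convergence for all complex $w$. This is the content of \cite[Proposition~3.2 and Corollary~3.4]{bcp}, as already used in Lemma~\ref{stat:expansional-estimates}.

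For the norm bound $G$, I would not resum crude exponentials, since these would grow with the order. Instead I would use Araki's estimate $\|F_D(z)\|\le \exp\bigl(\int_0^{|z|}\|\tau_{s}(V_D)\|ds\bigr)$ together with the uniform 1D bound $\sup_{|w|\le2}\|\tau_w(V_D)\|\le C(\beta,r_0,j)$. The latter holds because the locality expansion converges for all complex times in one dimension. This gives $G=e^{2C}$. The bound is independent of $|D|$ and $n$ because only the two boundary clusters enter.

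For the tail bound, I truncate each $\tau_{s_iz}(V_D)$ to $\tau^{(m)}$, defined as the sum of commutator terms supported in $\mathcal B_D[m]$. The truncation error is controlled by the factorial-suppressed tail $\sum_{p>m/r_0}\frac{2^p}{p!}(Cj)^pp!\cdots$. Here is the main obstacle: the naive count only gives convergence and does not give decay faster than any exponential. The Araki bound is sharper, with the tail of order $\sum_{p>m/r_0}(C'|w|)^p/p!$, so it is superexponentially small in $m$. This is exactly what allows any fixed rate $\nu$, with $C_\nu$ absorbing the constants.

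Replacing each factor in the Dyson series by its truncation produces an operator $F^{(m)}$ supported in $\mathcal B_D[m]$. A telescoping argument, using the uniform bound $G$ on partial products, gives $\|F_D(z)-F^{(m)}\|\le C_\nu e^{-\nu m}$. Finally, since $E^{\mathrm{tr}}_{\mathcal B_D[m]}$ is a contraction fixing $F^{(m)}$, we get
\[
 \|(\id-E^{\mathrm{tr}}_{\mathcal B_D[m]})F_D\|\le 2\|F_D-F^{(m)}\|,
\]
which proves \eqref{loc-boundary-cocycle-tail}. If $D$ touches the chain ends, there is only one boundary cluster, and if $D=\Lambda$ then $F_D=\one$; both cases are covered trivially.
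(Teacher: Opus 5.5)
Your proof is correct and is essentially the paper's proof written in series form. The paper writes $\frac{d}{ds}F_D(sz)=zF_D(sz)B(sz)$ with $B(w)=e^{wK_0}(K-K_0)e^{-wK_0}$ and takes the uniform bound and superexponential tail of $B$ from \cite[Proposition~3.2]{bcp}. It then applies Gr\"onwall once to bound $\|F_D(z)^{\pm1}\|_\infty$, and once more to compare with the solution of the same ODE driven by $E^{\mathrm{tr}}_{\mathcal B_D[m]}B$; your Dyson-series bound and termwise telescoping do exactly this.

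One caveat concerns your counting heuristic. A count of $(Cj)^m m!$ nested-commutator terms does not give convergence for all complex $w$: the $m!$ cancels the $1/m!$ and leaves a radius of order $1/(Cj)$. So the refined one-dimensional estimate of \cite{bcp} is needed already for $\sup_{|w|\le2}\|\tau_w(V_D)\|_\infty\le C$, not only for the tail. Since you cite that result for both, the proof stands.
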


\begin{proof}
	First, let us report the following result from~\cite[Proposition 3.2]{bcp}: for $X$ supported on an interval $I$ and $|z|\le2$,
	\begin{equation}\label{loc-input-imaginary}
		\norm{e^{zK}Xe^{-zK}}\le G^{|I|}\norm X,\qquad
		\norm{(\id- E^{\mathrm{tr}}_{I[m]})e^{zK}Xe^{-zK}}
		\le C G^{|I|}\frac{G^m}{(\lfloor m/r_0\rfloor+1)!}\norm X.
	\end{equation}

	Then, define $K_0=K_D+K_{D^c}$, so that $F_D(z) := e^{zK} e^{-zK_0}$.
	The perturbation $(K-K_0)$ is a bounded sum of finite-range terms
	supported at no more than two cuts. Also introduce 
	$B(z)=e^{zK_0}(K-K_0)e^{-zK_0}$. Thanks to Eq.~\eqref{loc-input-imaginary}, there exist constants $\tilde{G}$ and $\tilde C$ such that
	\[
		\|B(z)\|_\infty\leq \tilde{G},\qquad\qquad \|(\id -  E^{\mathrm{tr}}_{\mathcal B_D[m]})B(z)\|_\infty \leq \varepsilon_m := \tilde{C} \frac{\tilde{G}^m}{(\lfloor m/r_0\rfloor+1)!}.
	\]
	Moreover, for any $0\le s\le1$,
	the expansional satisfies
	$\frac{d}{ds}F_D(sz)=zF_D(sz)B(sz)$, $F_D(0)=\one$. Then, by applying Grönwall’s inequality together with the bound on $\|B(z)\|_\infty$, we get $\|F_D(z)\| \leq e^{|z|\tilde{G}}$. The same argument can also be made for $ F_D(z)^{-1}$, since $\frac{d}{ds}F_D(sz)^{-1}=-zB(sz)F_D(sz)^{-1}$.  This proves the first inequality in Eq.~\eqref{loc-boundary-cocycle-tail} with $G = e^{2 \tilde{G}}$.

	At this point, let us introduce the notation $B_m(s) :=  E^{\mathrm{tr}}_{\mathcal B_D[m]}B(sz)$, and define the operator $F_m(s)$ to solve the differential equation $\frac{d}{ds}F_m(s)=zF_m(s)B_m(s)$, $F_m(0)=\one$. Since conditional expectations are contractive, $\|B_m(s)\|_\infty\leq \tilde{G}$, and  therefore $\|F_m(s)^{\pm1}\|_\infty \leq G$. For the inverse, use $\frac{d}{ds}F_m(s)^{-1}=-zB_m(s)F_m(s)^{-1}$ and $F_m(0)^{-1}=\one$. Writing $D_m(s)=\norm{F_D(sz)-F_m(s)}$, integration of the differential equation gives
	\begin{align*}
		D_m(s)
		&\le |z|\int_0^s\!\bigl(\tilde G D_m(t)+G\varepsilon_m\bigr)\,dt\\
		&\le2\tilde G\int_0^s D_m(t)\,dt+2G\varepsilon_m.
	\end{align*}
	Then, applying Grönwall’s inequality one more time gives that $\|F_D(sz)-F_m(s)\|_\infty$ decays as $\varepsilon_m$. Then, we can use this to bound the right-hand side of Eq.~\eqref{loc-boundary-cocycle-tail}:
	\begin{align}
		\norm{(\id- E^{\mathrm{tr}}_{\mathcal B_D[m]})F_D(z)} = \norm{(\id- E^{\mathrm{tr}}_{\mathcal B_D[m]})(F_D(z)-F_m(1))} \leq \tilde{G}' \varepsilon_m,
	\end{align}
	where we implicitly used the fact that $ E^{\mathrm{tr}}_{\mathcal B_D[m]}(F_m(1))=F_m(1)$, and that conditional expectations are contractive.
	Finally, the factorial tail can be bounded by
	$C_\nu e^{-\nu m}$ for every $\nu>0$, by choosing $C_\nu$ sufficiently large. 

	The case for $\norm{(\id- E^{\mathrm{tr}}_{\mathcal B_D[m]})F_D(z)^{-1}}$ follows from the decomposition
	\[
	F_m(1)^{-1}-F_D(z)^{-1}
	=F_D(z)^{-1}(F_D(z)-F_m(1))F_m(1)^{-1},
	\]
	together with the bounds on the norms of
	$ F_D(z)^{-1},F_m(1)^{-1}$.
	Since $F_m(1)^{-1}$ is supported on $\mathcal B_D[m]$, its localization error vanishes. Hence the inverse tail is bounded by
	\[
	\norm{(\id-E^{\mathrm{tr}}_{\mathcal B_D[m]})F_D(z)^{-1}}
	\le2G^2\norm{F_D(z)-F_m(1)}.
	\]
\end{proof}

\begin{lemma}[A common approximation in two norms]\label{loc-two-norms}
	Let $S$ be an interval and $Y=\Delta_\sigma^{-1/2}(X)$. Suppose
	that there is a constant $a>0$, independent of $r$, such that for every $r\ge0$
	\[
	\norm X,\norm Y\le M,\qquad
	\max_{Z\in\{X,Y\}}\norm{(\id -  E^{\mathrm{tr}}_{S[r]})Z}\le Me^{-ar}.
	\]
	Then $J= E^{\mathrm{tr}}_{S[b]}X$ satisfies
	\begin{equation}\label{loc-two-norms-conclusion}
		\max\{\norm{X-J},\norm{\Delta_\sigma^{-1/2}(X-J)}\}
		\le CMe^{-\kappa b},
	\end{equation}
	where $C,\kappa$ depend only on $a$ and the interaction bounds.
\end{lemma}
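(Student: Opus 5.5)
The bound on $\norm{X-J}$ is immediate: $X-J=(\id-E^{\mathrm{tr}}_{S[b]})X$, so the hypothesis with $r=b$ gives $\norm{X-J}\le Me^{-ab}$. The real content is the second norm. Here I would \emph{not} apply $\Delta_\sigma^{-1/2}$ directly to the local operator $J$, since imaginary-time evolution of an operator supported on $S[b]$ costs a factor $G^{|S|+2b}$ by \eqref{loc-input-imaginary}, and this cannot be beaten by $e^{-ab}$ for general $a$. Instead I would write $\Delta_\sigma^{-1/2}(X-J)=Y-\Delta_\sigma^{-1/2}(J)$ and show that $\Delta_\sigma^{-1/2}(J)$ is close to $Y$. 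The error should come only from the tails of $Y$ and of an Araki expansional located at the two endpoints of $D:=S[b]$.

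\textbf{Step 1 (factorising the modular map at the cut).} Set $K_0=K_D+K_{D^c}$ and $F:=F_D(1/2)=e^{K/2}e^{-K_0/2}$, as in \eqref{eq:ArakiExpansionals}. Since $\sigma\propto e^{-K}$, we have
\[
 \Delta_\sigma^{-1/2}(Z)=e^{K/2}Ze^{-K/2}=F\,\Delta_0(Z)\,F^{-1},
 \qquad
 \Delta_0(Z):=e^{K_0/2}Ze^{-K_0/2}.
\]
The key observation is that $\Delta_0$ commutes with $E^{\mathrm{tr}}_D$. Indeed, $e^{K_0/2}=e^{K_D/2}e^{K_{D^c}/2}$ with commuting factors. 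The $D^c$ factor acts trivially on $\one_{D^c}$, and partial-trace cyclicity gives $\Tr_{D^c}[e^{K_{D^c}/2}Ze^{-K_{D^c}/2}]=\Tr_{D^c}[Z]$. Hence
\[
 \Delta_\sigma^{-1/2}(J)=F\,E^{\mathrm{tr}}_D\bigl(\Delta_0X\bigr)F^{-1}
 =F\,E^{\mathrm{tr}}_D\bigl(F^{-1}YF\bigr)F^{-1},
\]
using $\Delta_0X=F^{-1}\Delta_\sigma^{-1/2}(X)F=F^{-1}YF$.

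\textbf{Step 2 (localisation away from the cuts).} Take $r=m=\lfloor b/3\rfloor$, so that $S[r]$ and $\mathcal B_D[m]$ are disjoint once $b$ exceeds a constant multiple of $r_0$; smaller $b$ is absorbed into $C$. Replace $Y$ by $Y_r:=E^{\mathrm{tr}}_{S[r]}Y$, with error at most $Me^{-ar}$. Replace $F^{\pm1}$ by $F_m^{\pm1}:=E^{\mathrm{tr}}_{\mathcal B_D[m]}F^{\pm1}$, with error at most $C_\nu e^{-\nu m}$ by Lemma~\ref{loc-input-expansionals}. Two consequences of disjointness are needed:
\begin{itemize}
\item $Y_r$ commutes with $F_m^{\pm1}$, so $F_m^{-1}Y_rF_m$ is close to $Y_r$ up to the defect $\|F_m^{-1}F_m-\one\|$, which is itself $O(G^2C_\nu e^{-\nu m})$;
\item $Y_r$ is supported in $D$, so $E^{\mathrm{tr}}_D(Y_r)=Y_r$.
\end{itemize}
Collecting the errors, with all norms of $F^{\pm1}$ and $F_m^{\pm1}$ bounded by $G$ or $2G$, we obtain
\[
 \norm{F^{-1}YF-Y_r}\lesssim G^2\bigl(Me^{-ar}+MC_\nu e^{-\nu m}\bigr).
\]
Since $E^{\mathrm{tr}}_D$ is contractive, $E^{\mathrm{tr}}_D(F^{-1}YF)$ is within the same error of $Y_r$. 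Conjugating back by $F$ and applying the same commutation argument, $F\,Y_r\,F^{-1}$ is close to $Y_r$, and hence to $Y$. Altogether,
\[
 \norm{\Delta_\sigma^{-1/2}(X-J)}\le C M e^{-\kappa b},\qquad \kappa=\min\{a,\nu\}/3.
\]
Here one may fix $\nu=a$, so the constants depend only on $a$ and on the interaction bounds. This gives \eqref{loc-two-norms-conclusion}.

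\textbf{Main obstacle.} The decisive point is Step 1: recognising that the product modular map $\Delta_0$ commutes with the tracial expectation, so that all non-locality of $\Delta_\sigma^{-1/2}$ relative to $E^{\mathrm{tr}}_D$ is carried by the boundary cocycle $F$. Without it, any direct estimate incurs the volume factor $G^{|S[b]|}$. The remaining bookkeeping is routine, provided we use the uniform bounds $\norm{F^{\pm1}}\le G$ and choose the buffers so that the support of $Y_r$ is disjoint from the support of $F_m$.
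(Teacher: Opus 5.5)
Your proposal is correct and follows the paper's proof: it uses the same factorisation $\Delta_\sigma^{-1/2}(J)=F\,E^{\mathrm{tr}}_D(F^{-1}YF)\,F^{-1}$, obtained because $E^{\mathrm{tr}}_D$ commutes with conjugation by $e^{K_0/2}$, and then localises $Y$ to $S[m]$ and $F$ to $\mathcal B_D[m]$ with $m=\lfloor b/3\rfloor$, so that the approximants have disjoint supports and commute. The only cosmetic difference is in the bookkeeping: the paper reduces everything to the single commutator $\norm{[F,Y]}$ via $F^{-1}YF=Y+F^{-1}[Y,F]$, so it needs a local approximant of $F$ alone, whereas you localise both $F$ and $F^{-1}$ and control the defect $\bigl\|E^{\mathrm{tr}}_{\mathcal B_D[m]}(F^{-1})\,E^{\mathrm{tr}}_{\mathcal B_D[m]}(F)-\one\bigr\|_\infty$, which is equally valid given both tail bounds in Lemma~\ref{loc-input-expansionals}.
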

\begin{figure}
	\centering
	\begin{tikzpicture}[x=.55cm,y=.55cm,font=\small]
		\fill[darkblue!30!cyan,opacity=.14] (3.5,.45) rectangle (20.5,1.55);
		\fill[midgreen,opacity=.20] (7.5,.45) rectangle (16.5,1.55);
		\fill[midgreen,opacity=.24] (9.5,.45) rectangle (14.5,1.55);
		\foreach \a/\b in {1.5/5.5,18.5/22.5}
		\fill[midpink,opacity=.25] (\a,.45) rectangle (\b,1.55);

		\draw[darkpurple!35,line width=.5pt] (0,1)--(24,1);
		\foreach \x in {0,...,24}
		\shade[shading=ball,ball color=darkpurple] (\x,1) circle (.27);
		\node[anchor=east] at (-1,1) {$\Lambda$};
		\foreach \x in {3.5,20.5}
		\draw[black!55,dashed] (\x,.3)--(\x,1.7);

		\foreach \a/\b/\lab/\col in
		{1.5/5.5/F/midpink,7.5/16.5/{S[m]}/midgreen!50!black,
			18.5/22.5/F/midpink}{
			\draw[thick,\col,decorate,decoration={brace,amplitude=4pt}]
			(\a,1.85)--(\b,1.85);
			\node[text=\col] at ({(\a+\b)/2},2.65) {$\lab$};
		}
		\draw[thick,midgreen!50!black,decorate,
		decoration={brace,mirror,amplitude=4pt}] (9.5,.2)--(14.5,.2);
		\node[text=midgreen!50!black] at (12,-.5) {$S$};
		\draw[thick,darkblue!30!cyan,decorate,
		decoration={brace,mirror,amplitude=5pt}] (3.5,-1.05)--(20.5,-1.05);
		\node[text=darkblue!30!cyan] at (12,-1.9) {$D=S[b]$};
	\end{tikzpicture}
	\caption{Schematic localization regions for $X$,
		$Y=\Delta_\sigma^{-1/2}(X)$, and the expansional $F$.
		The green region $S[m]$, with $m=\lfloor b/3\rfloor$, is disjoint
		from the pink collars around the dashed boundary cuts of $D$.
		The corresponding local approximants commute, bounding $[F,Y]$.
		Identity~\eqref{loc-weighted-expectation} then controls
		the weighted error of the same local approximation $ E^{\mathrm{tr}}_DX$.}
	\label{loc-figure}
\end{figure}
\begin{proof}
	Set $D=S[b]$, $K_0=K_D+K_{D^c}$ and $F:=F_D(1/2)=e^{K/2}e^{-K_0/2}$. Direct computation shows that $ E^{\mathrm{tr}}_D$
	commutes with the conjugation by $e^{K_0/2}$. Hence
	\begin{equation}\label{loc-weighted-expectation}
		\Delta_\sigma^{-1/2}(J) = \Delta_\sigma^{-1/2} E^{\mathrm{tr}}_{D}(\Delta_\sigma^{1/2}(Y)) = \Delta_\sigma^{-1/2}\Delta_{\sigma_0}^{1/2} E^{\mathrm{tr}}_{D}(\Delta_{\sigma_0}^{-1/2}\Delta_\sigma^{1/2}(Y))
		=F E^{\mathrm{tr}}_D(F^{-1}YF)F^{-1} ,
	\end{equation}
	where $\sigma_0$ is the Gibbs state associated to $K_0$. At this point, let us use the identity $F^{-1}YF=Y+F^{-1}[Y,F]$, together with the bounds from \eqref{loc-boundary-cocycle-tail}, to obtain
	\begin{equation}\label{loc-weighted-cut-bound}
		\norm{\Delta_\sigma^{-1/2}(X-J)}
		\le  C_G\bigl(\norm{(\id - E^{\mathrm{tr}}_D)Y}+\norm{[F,Y]}\bigr).
	\end{equation}
	By assumption, the first term contributes at most $Me^{-ab}$. For the second, let us approximate
	$Y$ by $E^{\mathrm{tr}}_{S[m]}Y$, and $F$ on $\mathcal B_D[m]$ by $E^{\mathrm{tr}}_{\mathcal B_D[m]}F$, with
	$m=\lfloor b/3\rfloor$. For sufficiently large $b$ these supports
	are disjoint, so the approximants commute and we obtain
	\[
	\norm{[F,Y]}\le CM(e^{-am}+e^{-m}).
	\]
	This proves the weighted bound. The ordinary bound, on the other hand, follows from the assumption. Still, to cover the case in which $b$ is small, it is sufficient to use $\norm{[F,Y]}\le2\norm F M$ in
	\eqref{loc-weighted-cut-bound} and to enlarge $C$ so as to cover this case as well.
\end{proof}

\begin{lemma}[Locality of the conditional amplitude]\label{loc-ordinary} Let $S$ be an interval. Let $W$ be the matrix defined in Eq.~\eqref{eq:conditionalAmplitude}. Then, it holds that
	\begin{equation}\label{loc-amplitude-bounds}
		\norm W\le C,\qquad \norm{W^{-1}}\le Ce^{c|S|},
		\qquad
		\norm{(\id- E^{\mathrm{tr}}_{S[b]})W}
		\le Ce^{c|S|-ab}.
	\end{equation}
	The same estimates hold for $W^\dagger$.
\end{lemma}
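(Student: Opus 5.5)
The plan is to factor the Gibbs weight through the Araki expansional across the two cuts of $S$, so that $\sigma_{S^c}$ takes the "boundary-operator" form used in the proof of Lemma~\ref{stat:inhomogeneous-connected-mixing}. Set $K_0=K_S+K_{S^c}$, where the two terms commute, and $R=e^{-K_{S^c}/2}$. Let $E:=e^{-K/2}e^{K_0/2}=F_S(-1/2)$. By Lemma~\ref{loc-input-expansionals}, $E^{\pm1}$ are bounded by $G$ and quasi-local near the cut region $\mathcal B_S$. Then
\[
 e^{-K}=E\,R\,e^{-K_S}R\,E^{\dagger}.
\]
Tracing out $S$ with the unital completely positive weighted trace $X\mapsto\Tr_S[(\sigma^S)^{1/2}X(\sigma^S)^{1/2}]$ gives
\[
 \sigma_{S^c}=\frac{Z_S}{Z_\Lambda}\,R\,T\,R,
 \qquad T=\Tr_S\bigl[\sigma^S\,\widehat E^{\dagger}\widehat E\bigr],
 \qquad \widehat E:=R^{-1}ER,
\]
where the second identity uses that $R$ commutes with $\mathcal A_S$. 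Since $\widehat E$ is a complex-time modular conjugate of a boundary expansional, \eqref{loc-input-imaginary} and Lemma~\ref{loc-input-expansionals}, used with $|z|\le2$, bound $\widehat E^{\pm1}$ uniformly. Hence $Q^{-2}\one\le T\le Q^2\one$, with $Q$ independent of $n$ and $|S|$.

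With this form the norm bounds are direct. Using the polar decomposition $(RTR)^{-1/2}=U\,T^{-1/2}R^{-1}$ with $U$ unitary,
\[
 W^{\dagger}W\simeq T^{-1/2}\,\widehat E\,\frac{e^{-K_S}}{Z_S}\,\widehat E^{\dagger}\,T^{-1/2}
\]
up to unitary conjugation. Because $e^{-K_S}/Z_S=\sigma^S\le\one$, this gives $\|W\|_\infty^2\le Q^2\|\widehat E\|_\infty^2=:C^2$, uniformly. For $W^{-1}$ the same computation gives $Z_S e^{K_S}$ in place of $\sigma^S$. Since $\|Z_Se^{K_S}\|_\infty\le d^{|S|}e^{2j|S|}$, this produces exactly the factor $e^{c|S|}$. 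The statements for $W^\dagger$ follow by taking adjoints, using $\|(\id-E^{\mathrm{tr}})X^\dagger\|=\|(\id-E^{\mathrm{tr}})X\|$.

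For the locality estimate I rewrite the same factorisation as
\[
 W=(Z_\Lambda/Z_S)^{1/2}\,\sigma^{1/2}R^{-1}T^{-1/2}U^{\dagger}.
\]
Here $\sigma^{1/2}R^{-1}=Z_\Lambda^{-1/2}\,E\,R\,e^{-K_S/2}R^{-1}=Z_\Lambda^{-1/2}\,E\,e^{-K_S/2}$, which is quasi-local around $S\cup\mathcal B_S$ by Lemma~\ref{loc-input-expansionals}; the factor $\|e^{-K_S/2}\|$ costs at most $e^{c|S|}$. The remaining factor $T^{-1/2}U^\dagger$ equals $\bigl(R^{-1}(RTR)^{-1/2}\bigr)^{\dagger}$ up to ordering. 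It is handled in three steps:
\begin{enumerate}
 \item Localise $\widehat E$ on $\mathcal B_S[m]$, and hence $T$ on $\mathcal B_S[m]\setminus S$, with error $C_\nu e^{-\nu m}$. This is the same estimate as \eqref{stat:positive-boundary-difference}.
 \item Approximate $T^{-1/2}$, whose spectrum lies in $[Q^{-2},Q^2]$, by a polynomial of degree $\sim m$ in $T$. Because the spectrum is bounded away from zero, the polynomial error is exponentially small while the support grows only linearly.
 \item Undo the conjugation by $R$ using the imaginary-time Lieb--Robinson-type bound \eqref{loc-input-imaginary}, applied on the region $\mathcal B_S[m]$.
\end{enumerate}
Choosing $m\asymp b$ and collecting the $G^{|I|}$ prefactors gives $\|(\id-E^{\mathrm{tr}}_{S[b]})W\|_\infty\le Ce^{c|S|-ab}$.

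The main obstacle is this last part: controlling locality of the inverse square root $(1_S\otimes\sigma_{S^c})^{-1/2}$. A direct approach fails because $\sigma_{S^c}$ is a global operator. The factorisation reduces the problem to a boundary operator $T$ with uniform spectral bounds, but one must still keep track of the complex-time conjugations $R^{\pm1}$ and of the support growth in the functional calculus. If the polynomial approximation of $T^{-1/2}$ proves awkward, I would instead use the integral representation
\[
 T^{-1/2}=\pi^{-1}\int_0^\infty s^{-1/2}(s+T)^{-1}\,ds
\]
together with a Neumann-series localisation of each resolvent, which is uniformly convergent on $[Q^{-2},Q^2]$.
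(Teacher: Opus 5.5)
\textbf{Gap in the tail estimate.} Your factorisation turns the third bound in \eqref{loc-amplitude-bounds} into a claim about the factor $T^{-1/2}U^\dagger$. That factor is exactly $R(RTR)^{-1/2}=(Z_S/Z_\Lambda)^{1/2}e^{-K_{S^c}/2}\sigma_{S^c}^{-1/2}$, not a reordering of $(RTR)^{-1/2}R^{-1}$, and $W=E(\sigma^S)^{1/2}R(RTR)^{-1/2}$. Steps 1 and 2 localise $T$ and $T^{-1/2}$, but step 3 has no conjugation to undo. Congruence by $R$ does not commute with functional calculus, so $(RTR)^{-1/2}\neq R^{-1}T^{-1/2}R^{-1}$. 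The polar unitary $U=(RTR)^{-1/2}RT^{1/2}$ is a global, nonlinear function of the pair $(R,T)$, and it equals $\one$ precisely when $[R,T]=0$. All the noncommutativity therefore sits in $U$. The bound \eqref{loc-input-imaginary} controls conjugations $e^{zK}Xe^{-zK}$ of local $X$, so it says nothing about $U$.

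The resolvent fallback fails for the same reason. The resolvents that actually occur are $(s+RTR)^{-1}=R^{-1}(sR^{-2}+T)^{-1}R^{-1}$. They contain $R^{-2}=e^{K_{S^c}}$, which is neither local nor bounded uniformly in $n$, so localising $(s+T)^{-1}$ does not help. You identified the right obstacle, but overcoming it needs an input on $\log\sigma_{S^c}+K_{S^c}$: the quasi-locality of the mean-force correction $V_{S^c}$ in Lemma~\ref{loc-meanforce}. Your argument never invokes it, and it is not visible to functional calculus of $T$ alone.

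The paper uses that lemma as follows:
\begin{itemize}
\item $W(z)=\sigma^z(\one_S\otimes\sigma_{S^c})^{-z}$ satisfies $\frac{d}{dt}W(it)=i\alpha_t(V)W(it)$, with $V=-(K-K_{S^c})-\log Z_S\,\one-V_{S^c}$ quasi-local around $S$.
\item Lieb--Robinson bounds and Duhamel's formula then localise the unitaries $W(it)$, with only $e^{v|t|}$ growth.
\item The other boundary line is bounded: $W(1+it)=\sigma^{it}Q(\one_S\otimes\sigma_{S^c})^{-it}$ has norm at most $G^2$.
\item The three-lines theorem applied to $e^{z^2}(\id-E^{\mathrm{tr}}_{S[b]})W(z)$ gives the tail at $z=1/2$.
\end{itemize}

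\textbf{Error in the boundary operator.} Your norm bounds follow the paper's idea: write $\sigma_{S^c}$ through an Araki expansional and a boundary operator with two-sided bounds, then use a similarity or polar argument. However, your formula for $T$ is wrong. In $e^{-K}=R\widehat Ee^{-K_S}\widehat E^\dagger R$, the factor $e^{-K_S}$ is sandwiched between operators outside $\mathcal A_S$, so partial-trace cyclicity does not apply. Explicitly, $\Tr_S[\sigma^S\widehat E^\dagger\widehat E]=Z_S^{-1}\Tr_S[e^{-K/2}e^{K_{S^c}}e^{-K/2}]$, whereas $(Z_\Lambda/Z_S)R^{-1}\sigma_{S^c}R^{-1}=Z_S^{-1}e^{K_{S^c}/2}\Tr_S[e^{-K}]e^{K_{S^c}/2}$. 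These differ unless $[K,K_{S^c}]=0$.

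The correct factorisation is $e^{-K}=Re^{-K_S/2}E^\dagger Ee^{-K_S/2}R$. It gives $T=\Tr_S[\sigma^SE^\dagger E]$ and $G^{-2}\one\le T\le G^2\one$ directly from Lemma~\ref{loc-input-expansionals}, so the uniform bound on $\widehat E^{\pm1}$ is never needed. With this $T$, $W^\dagger W$ is unitarily equivalent to $T^{-1/2}(\sigma^S)^{1/2}E^\dagger E(\sigma^S)^{1/2}T^{-1/2}$. Your bounds $\norm W\le G^2$ and $\norm{W^{-1}}\le G^2e^{c|S|}$ then go through. This symmetric version is a small improvement on the paper's non-Hermitian $T=\Tr_S[\omega_SF_S(-1)]$, which needs a separately imported inverse bound.
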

We defer the proof of this fact to Sec.~\ref{sec:locCondAmpl}.

\begin{proof}[Proof of Theorem~\ref{loc-exterior}]

	First, notice that if $S[b]=\Lambda$, then $O$ is scalar (say $O=c\one$) and $vv^\dagger=|c|^2\sigma$. Since $\cP_{S*}(\sigma)=\sigma$, this proves that in this case the norm is zero. For this reason, we henceforth assume that the exterior is nonempty.

	We begin by proving the case $t=0$. Since $O$ acts on $S^c$ only, the operators
	\[
	a=O\sigma^{1/2},\qquad a'=WO(\one_S\otimes\sigma_{S^c})^{1/2}
	\]
	are amplitudes of $\omega=O\sigma O^\dagger$ and
	$\cP_{S*}(\omega)$, respectively; that is, $\omega=aa^\dagger$ and $\cP_{S*}(\omega)=a'(a')^\dagger$. Moreover, we can rewrite their difference as
	\[
	a'-a=[W,O](\one_S\otimes\sigma_{S^c})^{1/2}
	=[W,O]\sigma^{1/2}(W^\dagger)^{-1}.
	\]
	At this point, notice that if $[J,O]=0$ and $\xi=O\sigma^{1/2}$, then
	\begin{equation}\label{loc-amplitude-commutator}
		\begin{split}
			[X,O]\sigma^{1/2}
			&=(X-J)\xi-\xi\Delta_\sigma^{-1/2}(X-J),\qquad\qquad\\
			&\implies\qquad\qquad\hs{[X,O]\sigma^{1/2}}
			\le\bigl(\norm{X-J}+
			\norm{\Delta_\sigma^{-1/2}(X-J)}\bigr)\hs\xi.
		\end{split}
	\end{equation}

	Let us choose $X=W$ and
	$Y=W^\dagger$. Thanks to Lemma~\ref{loc-ordinary}, these satisfy the hypotheses of Lemma~\ref{loc-two-norms}.
	The resulting approximation $J= E^{\mathrm{tr}}_{S[b]}W$ commutes with $O$. Then, we can use Eq.~\eqref{loc-amplitude-commutator} together with the inverse bound in
	\eqref{loc-amplitude-bounds} to obtain:
	\[
	\hs{a'-a}\le Ce^{c_1|S|} \hs{[W,O]\sigma^{1/2}}\leq Ce^{c|S|-\kappa b} \hs{a}.
	\]
	Here the final $c$ includes the growth from both the approximation and $\norm{W^{-1}}$. This allows us to prove the theorem for $t=0$. Indeed, we have
	\begin{equation}\label{loc-exterior-unrotated}
		\frac12\|\cP_{S*}(O\sigma O^\dagger)-O\sigma O^\dagger\|_1 = \frac12\|a'a'^\dagger-aa^\dagger\|_1\leq \frac12(\hs a+\hs{a'})\hs{a-a'}\le Ce^{c|S|-\kappa b}\hs{a}^2.
	\end{equation}	
	Now let $|t|\le1$. Cut the Hamiltonian halfway through the buffer and define
	\[
	\ell=\lfloor b/2\rfloor,\qquad I=S[\ell],\qquad
	K_0=K_I+K_{I^c},\qquad
	U=e^{itK},\quad U_0=e^{itK_0},\quad V=U_0^\dagger U.
	\]
	With this notation, $v=UOU^\dagger\sigma^{1/2}$. Moreover, define $O_0=U_0OU_0^\dagger$ and $v_0=O_0\sigma^{1/2}$. Since $U$
	commutes with $\sigma^{1/2}$, we obtain the exact identity
	\[
	v-v_0=U_0[V,O]\sigma^{1/2}U^\dagger
	\]
	so that, once again, we can use Eq.~\eqref{loc-amplitude-commutator} to obtain the bound
	\[
		\hs{v-v_0} = \hs{U_0[V,O]\sigma^{1/2}U^\dagger}\leq \hs{[V,O]\sigma^{1/2}} \leq \bigl(\norm{V-J}+
		\norm{\Delta_\sigma^{-1/2}(V-J)}\bigr)\hs\xi.
	\]
	At this point set $X = V$, and $Y = \Delta_\sigma^{-1/2}(X)$. Then, we have the expansional identities
	\begin{equation}\label{loc-rotation-cocycles}
		V=F_I(-it)^{-1},\qquad
		\Delta_\sigma^{-1/2}(V)
		=F_I(1/2)F_I(1/2-it)^{-1}.
	\end{equation}
	Together with \eqref{loc-boundary-cocycle-tail}, these show that both operators have
	uniformly bounded norms and tails $Ce^{-ar}$ outside $I[r]$. This allows us to apply Lemma~\ref{loc-two-norms} with the base interval $I$ and buffer $\ell$.
	It gives an approximation supported on $I[\ell]\subseteq S[b]$,
	with error $Ce^{-\kappa b}$ in both norms, after decreasing
	$\kappa$. Putting everything together, we obtain that
	$\delta:=\hs{v-v_0}\le Ce^{-\kappa b} \hs v$. We used $\hs\xi=\hs{O\sigma^{1/2}}=\hs v$, since the full modular rotation is unitary in Hilbert--Schmidt norm.

	Notice that the operator $O_0$ remains supported on $I^c$, so
	\eqref{loc-exterior-unrotated} applies with buffer $\ell$.
	Writing $\rho=vv^\dagger$ and $\rho_0=v_0v_0^\dagger$, trace-norm
	contraction of $\cP_{S*}$ gives
	\begin{align*}
		\frac12\|\cP_{S*}(\rho)-\rho\|_1
		&\le		\frac12\left(\|\rho-\rho_0\|_1+\|\cP_{S*}(\rho-\rho_0)\|_1+\|\cP_{S*}(\rho_0)-\rho_0\|_1\right)\\
		&\le( 2\hs{v}+\delta)\delta+Ce^{c|S|-\kappa \ell} ( \hs{v}+\delta)^2
		\le \tilde{C}e^{\tilde{c}|S|-\tilde{\kappa }b}\hs{v}^2,
	\end{align*}
	after adjusting the constants. This proves the theorem.
\end{proof}

\subsection{Locality of the conditional amplitude (Lemma~\ref{loc-ordinary})}\label{sec:locCondAmpl}

In order to prove Lemma~\ref{loc-ordinary}, we first need some auxiliary results. First, we report the following lemma from~\cite{bcp} without proof.
\begin{lemma}[{\cite[Corollaries 3.4 and 4.4]{bcp}}]
	For adjacent intervals $I,J$, set
	\[
	E_{I,J}=e^{-K_{IJ}}e^{K_I+K_J},\qquad
	\omega_I=e^{-K_I}/\Tr[e^{-K_I}],\qquad
	T_{I,J}=\Tr_I[\omega_I E_{I,J}].
	\]
	Then $\norm{E_{I,J}^{\pm1}},\norm{T_{I,J}},\norm{T_{I,J}^{-1}}\le C$.
\end{lemma}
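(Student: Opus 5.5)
We would prove the three bounds in increasing order of difficulty: first $E_{I,J}^{\pm1}$ by an Araki-type differential equation, then $T_{I,J}$ by contractivity, and finally $T_{I,J}^{-1}$ by rewriting $T_{I,J}$ as a modular conjugate of a positive, quasi-local operator.

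\emph{Bound on $E_{I,J}^{\pm1}$.} Set $K_0=K_I+K_J$ and $E(s)=e^{-sK_{IJ}}e^{sK_0}$ for $s\in[0,1]$. Differentiating gives
\[
\tfrac{d}{ds}E(s)=-E(s)B(s),\qquad B(s)=e^{-sK_0}(K_{IJ}-K_0)e^{sK_0}.
\]
Here $K_{IJ}-K_0$ is a sum of $\mathcal O(1)$ finite-range terms at the single cut between $I$ and $J$, with norm $\mathcal O(j r_0)$. By the imaginary-time estimate \eqref{loc-input-imaginary} of \cite[Proposition 3.2]{bcp}, $\|B(s)\|_\infty\le\tilde G$ uniformly in $|I|,|J|$. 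If $\beta$ forces $|z|>2$, we first split $[0,1]$ into boundedly many subintervals and use the cocycle property of expansionals. Gr\"onwall's inequality then gives $\|E\|_\infty\le e^{\tilde G}$. The inverse $E^{-1}=e^{-K_0}e^{K_{IJ}}$ satisfies $\tfrac{d}{ds}E(s)^{-1}=B(s)E(s)^{-1}$ and is bounded in the same way, exactly as in the proof of Lemma~\ref{loc-input-expansionals}.

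\emph{Bound on $T_{I,J}$.} The map $X\mapsto\Tr_I[\omega_IX]=\Tr_I[\omega_I^{1/2}X\omega_I^{1/2}]$ is unital and completely positive, hence contractive in operator norm. Therefore $\|T_{I,J}\|_\infty\le\|E_{I,J}\|_\infty\le C$.

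\emph{Bound on $T_{I,J}^{-1}$.} This is the main obstacle, because $E_{I,J}$ is not positive and contractivity gives no lower bound. Using $[K_I,K_J]=0$ and cyclicity of $\Tr_I$ under conjugation by $e^{K_I}$, we obtain the exact identity
\[
T_{I,J}=Z_I^{-1}\Tr_I[e^{-K_{IJ}}]\,e^{K_J}.
\]
Next write $e^{-K_{IJ}}=E'e^{-K_0}E'^{*}$ with the symmetric expansional $E'=e^{-K_{IJ}/2}e^{K_0/2}$, which is bounded together with its inverse by the first step. This yields
\[
T_{I,J}=e^{-K_J/2}\,T'\,e^{K_J/2},\qquad T'=\Tr_I\bigl[\omega_I^{1/2}E'E'^{*}\omega_I^{1/2}\bigr],
\]
and positivity of $E'E'^{*}\ge\|E'^{-1}\|^{-2}$ gives $Q^{-2}\le T'\le Q^2$. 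Hence $T_{I,J}^{-1}=e^{-K_J/2}T'^{-1}e^{K_J/2}$ with $\|T'^{-1}\|_\infty\le Q^2$.

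The remaining problem is that conjugation by $e^{\mp K_J/2}$ is not bounded in general, so we need locality of $T'^{-1}$. By the tail estimate in Lemma~\ref{loc-input-expansionals}, $E'$ is quasi-local around the cut with superexponential tails, and hence so is $T'$. Since the spectrum of $T'$ lies in $[Q^{-2},Q^2]$, a Chebyshev polynomial approximation of $x\mapsto x^{-1}$ on that interval, of degree $k$ and with error $e^{-ck}$, shows that $T'^{-1}$ is approximated within $e^{-c'm}$ by operators supported on the cut neighbourhood $\mathcal B[m]$. We then telescope
\[
T'^{-1}=\sum_m\bigl(E^{\mathrm{tr}}_{\mathcal B[m]}-E^{\mathrm{tr}}_{\mathcal B[m-1]}\bigr)T'^{-1}
\]
and apply \eqref{loc-input-imaginary} to each term supported on an interval of length $\mathcal O(m)$. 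The conjugated pieces have norm at most $G^{\mathcal O(m)}e^{-c'm}$. Summability here requires care: the polynomial approximation only gives exponential decay, which must beat $G^{\mathcal O(m)}$. If it does not, we would instead use the superexponential tails directly, by controlling $e^{-K_J/2}E'^{\pm1}e^{K_J/2}$ through the expansional identities $F_D(z)$ at complex $z$, $|z|\le2$, as in \eqref{loc-rotation-cocycles}, and moving the conjugation inside the partial trace before inverting. Either route bounds $\|T_{I,J}^{-1}\|_\infty$ by a constant independent of $|I|$ and $|J|$.
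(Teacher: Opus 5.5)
\textbf{Verdict: genuine gap in the inverse bound.} Your first two steps are correct. The reduction in the third step is also the right one, up to a slip: the identity $T_{I,J}=e^{-K_J/2}T'e^{K_J/2}$ holds with $T'=\Tr_I[\omega_I^{1/2}E'^{*}E'\,\omega_I^{1/2}]$, because $E'^{*}E'=e^{K_0/2}e^{-K_{IJ}}e^{K_0/2}$ and $e^{-K_I/2}e^{K_0/2}=e^{K_J/2}$. With $E'E'^{*}$ the identity fails, although the bounds $Q^{-2}\le T'\le Q^2$ hold either way.

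The paper gives no argument of its own here; it imports the lemma from \cite{bcp}. So the only nontrivial content is your bound on $T_{I,J}^{-1}$, and that bound is not obtained. You stop at an unverified summability condition. A telescoped piece of $T'^{-1}$ supported on an interval of length about $m$ is amplified by up to $G^{m+O(1)}$ under conjugation by $e^{\mp K_J/2}$. A Chebyshev approximation of degree proportional to $m$ decays at a rate set by the condition number $Q^4$, and nothing forces that rate to exceed $\log G$. Your fallback is circular: moving the conjugation back inside the partial trace just returns $T_{I,J}=\Tr_I[\omega_I E_{I,J}]$. Its lack of positivity is exactly what forced the detour through $T'$.

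\textbf{How to close the gap.} The missing observation is that $T'^{-1}$ has superexponential tails, which beat any $G^{m}$. These follow without polynomial approximation.
\begin{itemize}
\item By Lemma~\ref{loc-input-expansionals}, $E'$, and hence $E'^{*}E'$, is approximated by operators on $\mathcal B[m]$ with error $C_\nu e^{-\nu m}$ for every $\nu>0$.
\item The weighted partial trace over $I$ is contractive and sends operators on $\mathcal B[m]$ to operators on $\mathcal B[m]\cap J$. Hence $T'_m:=E^{\mathrm{tr}}_{\mathcal B[m]}T'$ satisfies $\|T'-T'_m\|_\infty\le C'_\nu e^{-\nu m}$.
\item Since $E^{\mathrm{tr}}$ is unital and positive, also $T'_m\ge Q^{-2}$. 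The resolvent identity $T'^{-1}-T_m'^{-1}=T'^{-1}(T'_m-T')T_m'^{-1}$ then gives $\|(\id-E^{\mathrm{tr}}_{\mathcal B[m]})T'^{-1}\|_\infty\le 2Q^4C'_\nu e^{-\nu m}$. This is the same device the paper uses for $F_D(z)^{-1}$.
\item Take $\nu$ larger than the logarithm of the per-site growth constant in \eqref{loc-input-imaginary}. Your telescoped series then converges uniformly in $|I|$ and $|J|$, which closes the argument.
\end{itemize}
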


Moreover, we need the characterization of effective Hamiltonians from~\cite{kuwahara2025clustering}:
\begin{lemma}[Boundary correction to the marginal logarithm]
	\label{loc-meanforce}
	For an interval $S$, there is a Hermitian $V_{S^c}$ such that
	\begin{equation}\label{loc-meanforce-identity}
		\log\sigma_{S^c}=-K_{S^c}+(\log Z_S-\log Z_\Lambda)\one+V_{S^c},
		\qquad Z_S=\Tr[e^{-K_S}],
	\end{equation}
	and
	\begin{equation}\label{loc-meanforce-tail}
		\norm{V_{S^c}}\le C,\qquad
		\norm{(\id- E^{\mathrm{tr}}_{S[b]})V_{S^c}}\le Ce^{-ab}.
	\end{equation}
	Moreover, $ E^{\mathrm{tr}}_{S[b]}(V_{S^c})$ is supported on $S^c\cap S[b]$.
\end{lemma}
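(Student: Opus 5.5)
The plan is to obtain the decomposition \eqref{loc-meanforce-identity} by writing the marginal $\sigma_{S^c}$ as a congruence of $e^{-K_{S^c}}$ by an invertible boundary operator, and then to take logarithms. Since $S$ is an interval, its complement has at most two components, and the interactions coupling $S$ to $S^c$ sit near its two endpoints. Set $I=S$, $J=S^c$ (treating the two-sided case as a folded or successive application of the one-cut case). With the expansional $E_{S,S^c}=e^{-K}e^{K_S+K_{S^c}}$ the exact identity used in the proof of Lemma~\ref{stat:inhomogeneous-connected-mixing} reads
\[
 \Tr_S[e^{-K}]=Z_S\,e^{-K_{S^c}/2}\,T\,e^{-K_{S^c}/2},\qquad
 T=\Tr_S[\omega_S F],\quad F=E^{*}E,
\]
so that $\sigma_{S^c}=(Z_S/Z_\Lambda)\,e^{-K_{S^c}/2}Te^{-K_{S^c}/2}$. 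The cited lemma from \cite{bcp} gives $\norm{T^{\pm1}}\le C$ uniformly.

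Define $V_{S^c}:=\log\sigma_{S^c}+K_{S^c}-(\log Z_S-\log Z_\Lambda)\one$, which is Hermitian and makes \eqref{loc-meanforce-identity} a tautology. The norm bound follows from the operator inequality $C^{-1}e^{-K_{S^c}}\le e^{-K_{S^c}/2}Te^{-K_{S^c}/2}\le Ce^{-K_{S^c}}$. However, operator monotonicity of $\log$ only gives $-\log C\le\log(e^{-K_{S^c}/2}Te^{-K_{S^c}/2})+K_{S^c}\le\log C$ when the two operators commute. I would therefore not argue this way, and would instead invoke the effective-Hamiltonian (mean-force) construction of \cite[Supplemental Material]{kuwahara2025clustering}. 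That construction produces exactly $V_{S^c}$ as a uniformly bounded, quasi-local operator concentrated near $\partial S$, with the representation $V_{S^c}=\int$ of modular-conjugated boundary operators. The bound $\norm{V_{S^c}}\le C$ is then part of that result, with constants depending only on $\beta,d,r_0,j$ after blocking to nearest neighbour.

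For the tail \eqref{loc-meanforce-tail}, I would localise in three steps. First, replace $F$ by its truncation $E^{\mathrm{tr}}_{\mathcal B_S[m]}F$ using Lemma~\ref{loc-input-expansionals}. Second, replace $\omega_S$ by a Gibbs state on $S\cap\mathcal B_S[m]$ via local indistinguishability (Lemma~\ref{stat:local-indistinguishability}); this makes $T$ close to an operator supported on $S^c\cap S[m]$. Third, propagate the locality through the logarithm using the integral representation $\log A=\int_0^\infty[(1+s)^{-1}-(A+s)^{-1}]ds$ together with Araki's imaginary-time bound \eqref{loc-input-imaginary} for the conjugations by $e^{\pm K_{S^c}/2}$. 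This last step is the main obstacle, because the non-commutativity of $T$ with $K_{S^c}$ prevents a naive splitting. I would first try the Kuwahara interpolation, in which one differentiates $\log\Tr_S e^{-(K_{S^c}+K_S+\lambda\,\partial K)}$ in $\lambda$ and writes $V$ as an integral of quasi-local terms, each with factorial or exponential tails by clustering; this yields decay $Ce^{-ab}$ outside $S[b]$.

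Finally, the support statement follows from two facts. $V_{S^c}$ acts only on $S^c$, and $E^{\mathrm{tr}}_{S[b]}$ preserves $\mathcal A_{S^c}$ since it is bimodular and commutes with partial traces on disjoint factors. Hence $E^{\mathrm{tr}}_{S[b]}(V_{S^c})\in\mathcal A_{S^c}\cap\mathcal A_{S[b]}=\mathcal A_{S^c\cap S[b]}$.
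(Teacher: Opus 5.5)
Your skeleton agrees with the paper's. You use the same tautological definition of $V_{S^c}$, draw the locality from \cite{kuwahara2025clustering}, fold for an interior $S$, and your argument for the support statement is correct. Two steps, however, are not right as written.

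\textbf{The norm bound.} Your reason for discarding the direct argument is false. If $A,B>0$ and $C^{-1}A\le B\le CA$, operator monotonicity gives $\log A-\log C\le\log B\le\log A+\log C$ with no commutation hypothesis, because $\log(cA)=\log A+(\log c)\one$ for a scalar $c>0$. Hence $\|\log B-\log A\|_\infty\le\log C$. With $B=\Tr_S[e^{-K}]$ and $A=\Tr_S[e^{-(K_S+K_{S^c})}]=Z_Se^{-K_{S^c}}$, this is exactly $\|V_{S^c}\|_\infty\le\log C$.

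This is the paper's proof. With $F=F_S(-1/2)$ it writes $e^{-K}=e^{-K_0/2}F^\dagger Fe^{-K_0/2}$ and uses $G^{-2}\le F^\dagger F\le G^2$ from Lemma~\ref{loc-input-expansionals}. It then takes the partial trace over $S$ and gets $\|V_{S^c}\|_\infty\le2\log G$ for every interval, using only the expansional bound and no folding.

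Your symmetric $T$ does the same job, provided the expansional is the half one, $E=e^{-K/2}e^{(K_S+K_{S^c})/2}$. With $E=e^{-K}e^{K_S+K_{S^c}}$ as you wrote it, $E^*E$ is not the operator in the identity you quote. So there is no reason to hand the norm bound to a citation.

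\textbf{The tail bound.} This is the genuine gap: the tail bound is not derived. Your first two localisation steps only make $T$ approximately local. Transferring this to $\log(e^{-K_{S^c}/2}Te^{-K_{S^c}/2})+K_{S^c}$ is exactly where non-commutativity blocks the argument, as you note. The fallback (``this yields decay $Ce^{-ab}$'') is an unspecified appeal, not a step.

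The missing idea is to take from \cite{kuwahara2025clustering} only the site-wise commutator estimate $\sup_{u_x}\|[V_{S^c},u_x]\|_\infty\le Ce^{-a\operatorname{dist}(x,S)}$ for $x\in S^c$, over on-site unitaries $u_x$. It is applied after folding $S$ about its midpoint, so that $S$ and $S^c$ become adjacent intervals, and after adjoining uncoupled spins when $S$ touches an endpoint. You then convert it into the tracial tail by Haar averaging. With $\mathcal T_x(V)=\int u_xVu_x^\dagger\,d\mu(u_x)$ one has $E^{\mathrm{tr}}_D=\prod_{x\notin D}\mathcal T_x$ and $V-\mathcal T_x(V)=\int[V,u_x]u_x^\dagger\,d\mu(u_x)$. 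Telescoping and contractivity then give
\[
 \|(\id-E^{\mathrm{tr}}_{S[b]})V_{S^c}\|_\infty\le\sum_{x\in S^c\setminus S[b]}\sup_{u_x}\|[V_{S^c},u_x]\|_\infty\le2C\sum_{r>b}e^{-ar},
\]
since at most two sites lie at each distance from $S$. This never requires tracking locality through the logarithm by hand.

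Relatedly, the two-sided case should not be handled by ``successive application of the one-cut case''. For short $S$ the two boundary corrections interact through $S$ and are not additive, which is why the paper folds.
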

\begin{proof}
	We begin by identifying $V_{S^c}$ by direct computation:
	\begin{align}
		\log\sigma_{S^c} &=\log \Tr_S[e^{-K}] - \mathbf{1}\log Z_\Lambda \nonumber
		\\
		&= - K_{S^c} + (\log Z_S - \log Z_\Lambda )\mathbf{1} + \left( \log \Tr_S[e^{-K}] - \log\Tr_S[e^{-(K_S + K_{S^c})}]\right)\,,
	\end{align}
	where $V_{S^c}$ is given by the term inside the parentheses. In order to bound its norm, let us first introduce the matrix $F:=F_{S}(-1/2)=e^{-K/2}e^{(K_S+K_{S^c})/2}$. Thanks to \eqref{loc-boundary-cocycle-tail}, we have that
	\[
	G^{-2}  \id\leq F^\dagger F \leq G^2\id \qquad\implies \qquad
	G^{-2}e^{-(K_S+K_{S^c})}\le e^{-K}\le G^2e^{-(K_S+K_{S^c})}.
	\]
	Since the partial trace and the logarithm preserve the ordering (as the latter is an operator monotone function), this also implies the inequality:
	\begin{align}
		   \log \Tr_S[e^{-(K_S + K_{S^c})}]-2\log G \leq \log \Tr_S[e^{-K}] \leq \log \Tr_S[e^{-(K_S + K_{S^c})}]+2\log G .
	\end{align}
	This is enough to prove the bound on $V_{S^c}$, as we have:
	\[
		\norm{V_{S^c}} = \norm{\log \Tr_S[e^{-K}] - \log\Tr_S[e^{-(K_S + K_{S^c})}]}\leq 2 \log G.
	\]

	For the tail we use the locality estimate for the marginal logarithm in
	\cite[Supplemental Material]{kuwahara2025clustering},
	applied with Hamiltonian $-K$ and inverse temperature one. It gives
	\begin{align}
		\sup_{u_x}\norm{[V_{S^c},u_x]}
		\le C e^{-a\operatorname{dist}(x,S)},\qquad x\in  S^c,\label{eq:kuwaharaBound}
	\end{align}
	where $u_x$ ranges over on-site unitaries. For an interior $S$,
	fold the chain about its midpoint: $S$ and ${S^c}$ become adjacent
	intervals, with local dimension at most $d^2$ and uniformly bounded
	range and strength. This also justifies subsequent applications of
	the adjacent-interval estimates to $S,{S^c}$. Endpoint versions follow
	by adjoining uncoupled spins.\footnote{If the cited statement is
		formulated with two nonempty traced-out tails, append a retained
		uncoupled segment longer than the original chain, followed by a
		traced-out spin. The artificial boundary is farther away than $S$
		from every original retained site. Removing the added spins changes
		the marginal logarithm only by a scalar.}

	Then, since the normalized partial trace can be realized by averaging over all unitaries on the region that one wants to trace out, we can obtain the tail bound as follows.
	Set $\mathcal T_x(V)=\int u_xVu_x^\dagger\,d\mu(u_x)$. The identity
	\[
	V-\mathcal T_x(V)=\int[V,u_x]u_x^\dagger\,d\mu(u_x)
	\]
	and a telescoping expansion of $\prod_{x\notin D}\mathcal T_x=E_D^{\mathrm{tr}}$, using contractivity of the other factors, give
	\[
	\norm{V-E_D^{\mathrm{tr}}V}\le\sum_{x\notin D}\sup_{u_x}\norm{[V,u_x]}.
	\]
	Hence
	\begin{align*}
	\norm{(\id- E^{\mathrm{tr}}_{S[b]})V_{S^c}}
	&\le\sum_{x\in S^c\setminus S[b]}\sup_{u_x}\norm{[V_{S^c},u_x]}\\
	&\le 2C\sum_{r>b}e^{-ar}\le C'e^{-a'b},
	\end{align*}
	where we used the bound in Eq.~\eqref{eq:kuwaharaBound}, together with the fact that
	there are at most two sites at each distance from $S$. This proves the claim.
\end{proof}

We are ready to prove Lemma~\ref{loc-ordinary}.

\begin{proof}[Proof of Lemma~\ref{loc-ordinary}]
We start by bounding $Q=\sigma(\mathbf{1}_S \otimes \sigma_{S^c})^{-1}$. Define 
\[
 E:= F_{S}(-1)  = e^{-K}e^{(K_S+K_{S^c})},\qquad
 \omega_S=e^{-K_S}/Z_S,\qquad T=\Tr_S[\omega_SE]=\Tr_S[e^{-K}]e^{K_{S^c}}/Z_S.
\]
Notice that thanks to Lemma~\ref{loc-input-expansionals}  and \cite[Corollary 4.4]{bcp}, both $\norm{E^{\pm1}}, \norm{T^{\pm1}}\leq G$ after enlarging $G$.

The weighted partial trace is contractive, which bounds $\norm T$ only. The inverse bound is \cite[Corollary 4.4, Eq.~(14)]{bcp}, applied to the adjacent pair $S,S^c$ after the folding construction above. Folding changes the local dimension, range and strength only by fixed factors, so its constants remain independent of $|S|$ and $n$.

Thanks to cyclicity of the trace, we have the identity $\sigma_{S^c}=(Z_S/Z_\Lambda)Te^{-K_{S^c}}$. Moreover, it also holds that $Q=E\omega_ST^{-1}$, which directly yields that $\norm Q\le G^2$. On the other hand, $Q^{-1}=T\omega_S^{-1}E^{-1}$. By using the fact that
$\norm{\omega_S^{-1}}\le d^{|S|}e^{2j|S|}$, we also have $\norm{Q^{-1}}\le G^2 e^{c|S|}$.

At this point, notice that $W^\dagger W = (\mathbf{1}_S \otimes \sigma_{S^c})^{-1/2}\sigma(\mathbf{1}_S \otimes \sigma_{S^c})^{-1/2} = (\mathbf{1}_S \otimes \sigma_{S^c})^{-1/2}Q(\mathbf{1}_S \otimes \sigma_{S^c})^{1/2}$. Given that this is a similarity transformation, this also means that $\norm{W}^2=r(Q)\le G^2$, where $r$ is the spectral radius. Similarly, we also have $\norm{W^{-1}}^2\le r(Q^{-1})\le G^2e^{c|S|}$.

In order to study the locality properties of $W$, define $W(z)=\sigma^z(\mathbf{1}_S \otimes \sigma_{S^c})^{-z}$ and the time evolution $\alpha_t(X)=e^{-itK}Xe^{itK}$. Then, the evolution $W(it)$ solves the differential equation $\frac{d}{dt}W(it)=i\alpha_t(V)W(it)$, with $V$ given by:
\[
 V=\log\sigma-\log(\mathbf{1}_S \otimes \sigma_{S^c})  
   =-(K-K_{S^c})-\log Z_S\,\one-V_{S^c}\,,
\]
where we applied Lemma~\ref{loc-meanforce} in the second equality. This operator $V$ is a sum of $\mathcal{O}(1+|S|)$ bounded finite-range terms, a scalar, and the quasilocal bounded correction $-V_{S^c}$. Then, it directly follows from the Lieb--Robinson bound~\cite{lr} (for $X$ supported on an interval $I\subset D$),
\[
	\norm{(\id- E^{\mathrm{tr}}_D)\alpha_t(X)}
	\le C_\mu |I|\norm X
	e^{-\mu\operatorname{dist}(I,D^c)+v_\mu|t|},
\]
that $V$ has the locality property:
\begin{equation}\label{loc-generator-tail}
	\norm{(\id- E^{\mathrm{tr}}_{S[b]})\alpha_t(V)}
	\le C(1+|S|)e^{v|t|-ab}.
\end{equation}
Indeed, this can be obtained by applying the Lieb--Robinson bound separately to the finite-range terms and to
$ E^{\mathrm{tr}}_{S[\lfloor b/2\rfloor]}V_{S^c}$ (notice that we can absorb the support factor $|S|+b$ by
decreasing the spatial exponent $a$). Then, it follows from Lemma~\ref{loc-meanforce} that the error of approximating $V_{S^c}$ with $ E^{\mathrm{tr}}_{S[\lfloor b/2\rfloor]}V_{S^c}$ is upper-bounded by
$\mathcal{O}(e^{-ab/2})$. 

Then, using Duhamel's formula and
\eqref{loc-generator-tail}, we get that
\begin{align}
	 \norm{(\id- E^{\mathrm{tr}}_{S[b]})W(it)}
	\le C(1+|S|)e^{v'|t|-ab}.\label{eq:21}
\end{align}
Now, consider the evolution $W(1+it)=\sigma^{it}Q(\one_S\otimes\sigma_{S^c})^{-it}$. This has norm at most $G^2$. Moreover,
$W(x+it)$ is bounded uniformly in $t$ for $0\le x\le1$. Then, we can apply the three-lines theorem to 
$e^{z^2}(\id- E^{\mathrm{tr}}_{S[b]})W(z)$ for $z=1/2$ to obtain
\begin{equation}\label{loc-ordinary-tail}
 \norm{(\id- E^{\mathrm{tr}}_{S[b]})W}
       \le G\sqrt{1+|S|}\,e^{-ab/2}.
\end{equation}
It should be noticed that the Gaussian factor $e^{z^2}$ was introduced to control the growth in $|t|$ of the exponential on the right-hand side of Eq.~\eqref{eq:21}. By absorbing the dependence on the size of $S$ into the exponent, we obtain the claim for $W$. The same result also holds for $W^\dagger$, since taking the adjoint does not change the argument.
\end{proof}

\subsection{Quasi-locality of the heat-bath sampler}
\label{sec:heatbath-quasilocal}

The conditional-amplitude estimate of Lemma~\ref{loc-ordinary} also
provides local channel approximations to each heat-bath update, with
exponentially decaying error.

\begin{proposition}[Quasi-locality of the heat-bath sampler]
	\label{loc:heatbath-quasilocal}
	There are constants $C_\beta,\kappa_\beta>0$, independent of the
	chain length, such that for every site $i$ and integer $b\ge0$ there
	is a channel $\cP_{i*}^{(b)}$ supported on
	$D=\{i\}[b]$ satisfying
	\begin{equation}\label{loc:heatbath-channel-tail}
		\|\cP_{i*}-\cP_{i*}^{(b)}\|_\diamond
		\le C_\beta e^{-\kappa_\beta b}.
	\end{equation}
	Here supported on $D$ means that the channel is a tensor product
	of a channel on $D$ with the identity channel on $D^c$, and
	$\|\cdot\|_\diamond$ is the trace norm of a map maximised over
	inputs of trace norm one and arbitrary finite-dimensional memories.
	Consequently the heat-bath generator
	$\cL_*^{\mathrm{HB}}=\sum_i(\cP_{i*}-\id)$ is quasi-local, with
	uniform exponential tails for its individual terms.
\end{proposition}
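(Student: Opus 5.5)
We build the local channel from the conditional amplitude. By the identity stated after \eqref{eq:conditionalAmplitude}, for $S=\{i\}$ we have $\cP_{i*}(\rho)=W(\one_i\otimes\rho_{i^c})W^\dagger$ with $W=\sigma^{1/2}(\one_i\otimes\sigma_{i^c})^{-1/2}$. Lemma~\ref{loc-ordinary} with $|S|=1$ gives $\norm W\le C$ and $\norm{(\id-E^{\mathrm{tr}}_{D})W}\le Ce^{-ab}$ for $D=\{i\}[b]$. Set $W_b:=E^{\mathrm{tr}}_D W$, which is supported on $D$ with $\norm{W_b}\le\norm W\le C$. The map $\rho\mapsto W_b(\one_i\otimes\rho_{i^c})W_b^\dagger$ is completely positive and acts as a map on $D$ tensored with the identity on $D^c$: the partial trace over $i$ and the multiplication by $W_b$ both act nontrivially only on $D$. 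It need not be trace preserving, however, so the first task is to correct normalisation.

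\textbf{Restoring trace preservation.} Write $\Phi_b(\rho)=W_b(\one_i\otimes\rho_{i^c})W_b^\dagger$. Its adjoint on observables is $\Phi_b^*(X)=\one_i\otimes\Tr_i[W_b^\dagger XW_b]$. Trace preservation of $\cP_{i*}$ reads $\Tr_i[W^\dagger W]=\one_{i^c}$ (as an operator on $i^c$). Set $R_b:=\Tr_i[W_b^\dagger W_b]$, supported on $D\setminus\{i\}$. Then
\[
\norm{R_b-\one}\le\norm{W_b^\dagger W_b-W^\dagger W}\cdot d\le 2dC\,\norm{W-W_b}\le C'e^{-ab},
\]
since the partial trace over a single site costs at most a factor $d$ in operator norm.

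For $b$ large enough that $C'e^{-ab}\le1/2$, $R_b$ is invertible. Define
\[
\cP^{(b)}_{i*}(\rho):=\widetilde W_b(\one_i\otimes\rho_{i^c})\widetilde W_b^\dagger,\qquad \widetilde W_b=W_b(\one_i\otimes R_b^{-1/2}).
\]
This is trace preserving, because $\Tr_i[\widetilde W_b^\dagger\widetilde W_b]=R_b^{-1/2}R_bR_b^{-1/2}=\one$. It is completely positive and supported on $D$. Moreover $\norm{R_b^{-1/2}-\one}=O(e^{-ab})$, so $\norm{\widetilde W_b-W}\le C''e^{-ab}$. For small $b$ we instead take $\cP^{(b)}_{i*}$ to be any channel on $D$, for example the replacement channel, and absorb the resulting bound of at most $2$ into $C_\beta$.

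\textbf{Diamond norm estimate.} The map $\rho\mapsto\one_i\otimes\rho_{i^c}$ has diamond norm $d$ once memory is included, since it is $d$ times a channel. For any $Y\otimes\id_M$-input $\rho$ with $\|\rho\|_1\le1$, write $\rho'=\one_i\otimes\rho_{i^cM}$, so $\|\rho'\|_1\le d$. Then
\[
\|W\rho'W^\dagger-\widetilde W_b\rho'\widetilde W_b^\dagger\|_1\le(\norm W+\norm{\widetilde W_b})\norm{W-\widetilde W_b}\,\|\rho'\|_1\le C_\beta e^{-a b}.
\]
This gives \eqref{loc:heatbath-channel-tail} with $\kappa_\beta=a$.

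\textbf{Consequence for the generator.} Quasi-locality follows because each term $\cP_{i*}-\id$ is approximated in diamond norm by $\cP^{(b)}_{i*}-\id$, which is supported on $\{i\}[b]$, with error $C_\beta e^{-\kappa_\beta b}$. The main obstacle is already handled in Lemma~\ref{loc-ordinary}: the tail bound on $W$. The only delicate point left here is the renormalisation step, which we control by the elementary perturbation bound on $R_b^{-1/2}$ above.
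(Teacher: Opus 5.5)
Your proposal is correct and follows the paper's proof essentially step by step. You truncate the conditional amplitude to $W_b=E^{\mathrm{tr}}_DW$ via Lemma~\ref{loc-ordinary} with $S=\{i\}$, restore trace preservation by right-multiplying with $\one_i\otimes(\Tr_i[W_b^\dagger W_b])^{-1/2}$, bound the diamond distance by $d\,(\|W\|_\infty+\|\widetilde W_b\|_\infty)\|W-\widetilde W_b\|_\infty$, and handle small $b$ with a trivial channel on $D$ by enlarging $C_\beta$, exactly as the paper does.
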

\begin{proof}
	In Eq.~\eqref{eq:conditionalAmplitude}, choose $S=\{i\}$, and let
	$W_b= E^{\mathrm{tr}}_D W$. Then, thanks to Lemma~\ref{loc-ordinary},  it holds that
	\[
	\|W-W_b\|_\infty\le C_\beta e^{-\kappa_\beta b},
	\qquad \|W\|_\infty,\|W_b\|_\infty\le C_\beta.
	\]
	Moreover, the amplitude satisfies
	\[
	\Tr_i[W^\dagger W]
	=\sigma_{i^c}^{-1/2}\Tr_i[\sigma]\sigma_{i^c}^{-1/2}
	=\one_{i^c}.
	\]
	Set $A_b=\Tr_i[W_b^\dagger W_b]$. This is supported on
	$D\setminus\{i\}$, and it satisfies
	\[
	\|A_b-\one_{i^c}\|_\infty
	\le d(\|W_b\|_\infty+\|W\|_\infty)\|W_b-W\|_\infty
	\le C_\beta e^{-\kappa_\beta b}.
	\]
	Choose $b$ large enough so that the right-hand side is bounded above by $1/2$. Since $A_b$ is positive, this also means that its spectrum is contained in the interval $[1/2,3/2]$. Then, define
	\[
	R_b=W_b(\one_i\otimes A_b^{-1/2}),\qquad
	\cP_{i*}^{(b)}(\rho)
	=R_b(\one_i\otimes\Tr_i[\rho])R_b^\dagger.
	\]
	The map is completely positive and supported on $D$.
	Moreover, since $\Tr_i[R_b^\dagger R_b]=\one_{i^c}$, it is trace preserving. Then, thanks to the properties of the spectrum of $A_b$, and the fact that $f(x) = x^{-1/2}$ is Lipschitz, it follows that $\|A_b^{-1/2}-\one_{i^c}\|_\infty
	\le C\|A_b-\one_{i^c}\|_\infty$. This implies that:
	\[
	\|R_b-W\|_\infty\le \|W_b- W\| + \|W_b (\one_i\otimes (A_b^{-1/2}-\one_{i^c})) \|\le C_\beta e^{-\kappa_\beta b}.
	\]
	Also notice that the map $\rho\mapsto\one_i\otimes\Tr_i[\rho]$ has diamond
	norm $d$. Expanding the difference between the two amplitude
	formulas, also after tensoring with any memory, therefore yields
	\[
	\|\cP_{i*}-\cP_{i*}^{(b)}\|_\diamond
	\le d(\|W\|_\infty+\|R_b\|_\infty)\|W-R_b\|_\infty
	\le C_\beta e^{-\kappa_\beta b}.
	\]
	This proves the bound for the case of large $b$. On the other hand, if $b$ is small, use the on-site reset
	channel $\rho\mapsto(\one_i/d)\otimes\Tr_i[\rho]$.
	The diamond distance between two channels is at most two, so
	enlarging $C_\beta$ proves the same bound for all $b\ge0$.
	Subtracting the identity channel gives the identical approximation
	bound for each generator term $\cP_{i*}-\id$.
\end{proof}

\begin{remark}[Regularisation need not preserve quasi-locality]
	\label{loc:rhb-nonlocal}
	The limit $\mathbb{E}_i=\lim_{m\to\infty}\cP_i^m$ need not inherit the
	quasi-locality of $\cP_i$. For a concrete example, consider the
	nearest-neighbour qubit chain
	\[
	H=\sum_{j=1}^{n-1}(X_jX_{j+1}+Z_jZ_{j+1}),\qquad n\ge3,
	\]
	where $X_j,Z_j$ are Pauli matrices, and fix any $\beta>0$.
	We claim that $\mathcal N_1=\mathbb C\one$ (where $\mathcal{N}_1$ is the fixed algebra of $\mathbb{E}_1$), so that the regularised
	update at the first site resets the entire chain:
	\[
	\mathbb{E}_1(A)=\Tr[\sigma A]\one,\qquad \mathbb{E}_{1*}(\rho)=\sigma
	\quad\text{for every state }\rho.
	\]
	Indeed, it follows from Proposition~\ref{pre:properties} that $\mathcal N_1\subseteq\mathcal A_{\{2,\ldots,n\}}$. Since $\mathbb{E}_1$ is a
	$\sigma$-preserving conditional expectation, its range is invariant
	under the modular group
	\cite[Proposition 1(iv)]{BardetCapelRouze2021NoncommutingAT}.
	Differentiating this invariance gives
	$[H,A]\in\mathcal N_1$ for every $A\in\mathcal N_1$.
	Suppose inductively that
	$\mathcal N_1\subseteq\mathcal A_{\{k+1,\ldots,n\}}$.
	The only terms of $[H,A]$ acting nontrivially on site $k$ are
	\[
	X_k[X_{k+1},A]+Z_k[Z_{k+1},A].
	\]
	Since $[H,A]\in\mathcal N_1$ acts trivially on site $k$, linear
	independence of $\one_k,X_k,Z_k$ forces both commutators to vanish.
	Thus $A$ also acts trivially on site $k+1$. Induction from $k=1$
	to $n-1$ proves the claim.

	For any channel $T_*$ supported on $\{1\}[b]$ with $b<n-1$,
	its adjoint satisfies $T(Z_n)=Z_n$. Trace-norm duality therefore gives
	\[
	\|\mathbb{E}_{1*}-T_*\|_\diamond
	\ge\|\mathbb{E}_1(Z_n)-T(Z_n)\|_\infty
	=\|\Tr[\sigma Z_n]\one-Z_n\|_\infty\ge1.
	\]
	Hence no approximation error tending to zero with $b$, uniformly
	in the chain length, is possible for this regularised update or
	its generator term $\mathbb{E}_{1*}-\id$. Taking infinitely many iterations
	can therefore destroy the uniform locality of the original update.
\end{remark}

\subsection{Conditional entropy of exterior changes}
\label{sec:kernel}

Theorem~\ref{loc-exterior} proves that the Petz update approximately fixes states with an amplitude $v\in\mathcal{V}_{S[b]}(t)$. We now convert this trace-distance estimate into a conditional entropy bound by data processing and
continuity of the von Neumann entropy.

\begin{lemma}[Entropy of the exterior-amplitude subspaces]
	\label{lem:kernel-entropy}
	Fix an interval $S$, a buffer $S[b]$ and $|t|\le1$, and consider a normalised amplitude $v\in\mathcal{V}_{S[b]}(t)$, i.e., an operator satisfying
	\[
	v=\Delta_\sigma^{-it}(O)\sigma^{1/2},
	\qquad O\in\mathcal A_{S[b]^c},\qquad \hs v=1.
	\]
	There are constants $C,c,\kappa>0$ depending only on the fixed
	interaction parameters such that, if
	\begin{equation}
		q_b:=C e^{c|S|-\kappa b}\le\tfrac12,                   \label{eq:qb}
	\end{equation}
	then it holds that
	\begin{equation}
		D_S(vv^\dagger\Vert\sigma)\le c_\beta^{\rm ent} n q_b+h(q_b),                  \label{eq:kernel-entropy}
	\end{equation}
	where $c_\beta^{\rm ent}=(\log d+2\beta J)$ and $h(q)=-(q\log q+(1-q)\log(1-q))$. If $S[b]=\Lambda$, the left side is zero.
\end{lemma}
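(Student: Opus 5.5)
The plan is to write the conditional entropy as an entropy difference and compare $\rho:=vv^\dagger$ with its Petz update. Since $\cP_{S*}$ only sees $\rho_{S^c}$, Theorem~\ref{loc-exterior} lets me replace $\rho$ by $\rho':=\cP_{S*}(\rho)$ at trace-distance cost $q_b$. The point is that $\rho'$ has vanishing conditional entropy, because $\rho'_{S^c}=\rho_{S^c}$ and $\cP_{S*}$ is the Petz recovery. Then continuity of entropy converts trace distance into an entropy bound.

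First I dispose of the trivial case: if $S[b]=\Lambda$, Theorem~\ref{loc-exterior} gives $\rho=\cP_{S*}(\rho)$, so $D(\rho\Vert\sigma)=D(\cP_{S*}\rho\Vert\sigma)\le D(\rho_{S^c}\Vert\sigma_{S^c})$ by data processing, hence $D_S=0$.

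In general, I note that $\Tr_S\rho'=\rho_{S^c}$, since $\Tr_S\cP_{S*}(\omega)=\Gamma_{\sigma_{S^c}}\Gamma_{\sigma_{S^c}}^{-1}(\omega_{S^c})$. Data processing under $\cP_{S*}$, which fixes $\sigma$, gives $D(\rho'\Vert\sigma)\le D(\rho_{S^c}\Vert\sigma_{S^c})$, so $D_S(\rho'\Vert\sigma)\le0$ and hence $D_S(\rho'\Vert\sigma)=0$. The two states have equal $S^c$-marginals, so
\[
D_S(\rho\Vert\sigma)=D_S(\rho\Vert\sigma)-D_S(\rho'\Vert\sigma)=D(\rho\Vert\sigma)-D(\rho'\Vert\sigma)
=S(\rho')-S(\rho)+\Tr[(\rho'-\rho)\log\sigma].
\]

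I bound the two contributions separately, with $\varepsilon:=\tfrac12\|\rho-\rho'\|_1\le q_b\le1/2$ from Theorem~\ref{loc-exterior} (using $\hs v=1$).

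For the entropy difference I would use the Fannes--Audenaert inequality on $\mathcal H_\Lambda$ of dimension $d^n$, which gives $|S(\rho')-S(\rho)|\le \varepsilon\, n\log d+h(\varepsilon)$. Monotonicity of $h$ on $[0,1/2]$ lets me replace $\varepsilon$ by $q_b$.

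For the linear term, Hölder gives $|\Tr[(\rho'-\rho)\log\sigma]|\le 2\varepsilon\,\|\log\sigma-c\one\|_\infty$ for any scalar $c$, because $\Tr[\rho'-\rho]=0$. Taking $c$ to centre the spectrum, the spectrum of $\log\sigma=-\beta H-\log Z$ has width at most $2\beta\|H\|_\infty\le 2\beta Jn$, so I obtain the bound $2\varepsilon\beta Jn$.

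Summing gives $D_S\le(\log d+2\beta J)nq_b+h(q_b)$, which is \eqref{eq:kernel-entropy}. The main obstacle is bookkeeping in the linear term. The fact that trace-zero differences see only half the spectral width must be used so that the constant comes out as $2\beta J$ rather than $4\beta J$. Everything else is a direct application of Theorem~\ref{loc-exterior} and the continuity bound.
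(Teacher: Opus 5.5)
Your overall strategy is the paper's, and your constants are right: compare $\rho=vv^\dagger$ with $\rho'=\cP_{S*}(\rho)$, take the trace distance from Theorem~\ref{loc-exterior}, then use Audenaert's continuity bound and H\"older for the linear term. The linear term gives $2\beta Jn$ once the $\log Z_\Lambda$ contribution cancels.

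However, the claim $\Tr_S\rho'=\rho_{S^c}$ is false in general. Your computation implicitly uses $\Tr_S[\sigma^{1/2}(\one_S\otimes Y)\sigma^{1/2}]=\sigma_{S^c}^{1/2}Y\sigma_{S^c}^{1/2}$. But the partial trace over $S$ cannot be moved past $\sigma^{1/2}$, which acts jointly on $S$ and $S^c$. Only the full-trace identity $\Tr[\sigma(\one_S\otimes Y)]=\Tr[\sigma_{S^c}Y]$ holds, and it gives trace preservation only.

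If $\Tr_S\circ\cP_{S*}$ returned the $S^c$-marginal, then $\cP_{S*}$ would be idempotent, and the paper stresses that this fails for noncommuting $\sigma$. In the two-qubit example of Section~\ref{cmlsi:section}, $\cP_1(\one\otimes X_2)=\operatorname{sech}(\varepsilon)(\one\otimes X_2)$. So the $X_2$-component of the second-site marginal of $\cP_{1*}(\rho)$ is shrunk by $\operatorname{sech}(\varepsilon)$.

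Consequently, both $D_S(\rho'\Vert\sigma)=0$ and your exact identity $D_S(\rho\Vert\sigma)=D(\rho\Vert\sigma)-D(\rho'\Vert\sigma)$ fail in general. By Petz's equality condition, $D_S(\rho'\Vert\sigma)=0$ would force $\cP_{S*}(\rho')=\rho'$. Your treatment of the case $S[b]=\Lambda$ is unaffected, since it uses only data processing.

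The repair is immediate and uses an inequality you already wrote down. Since $\rho'$ and $\sigma$ are the images of $\rho_{S^c}$ and $\sigma_{S^c}$ under the Petz recovery channel, data processing gives $D(\rho'\Vert\sigma)\le D(\rho_{S^c}\Vert\sigma_{S^c})$. Hence, directly and with no marginal identity needed,
\[
D_S(\rho\Vert\sigma)\le D(\rho\Vert\sigma)-D(\rho'\Vert\sigma)=S(\rho')-S(\rho)+\Tr[(\rho'-\rho)\log\sigma].
\]
From here your continuity and H\"older estimates finish the proof exactly as in the paper.
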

\begin{proof}
	Let $\omega=vv^\dagger$ and $\omega'=\cP_{S*}(\omega)$. Theorem~\ref{loc-exterior} implies that 
	\[
	q=\tfrac12\|\omega-\omega'\|_1\le q_b.
	\]
	Notice that the Petz recovery channel $P_{S,\sigma}$ maps $\omega_{S^c}$ to $\omega'$ and
	$\sigma_{S^c}$ to $\sigma$. Thanks to the data processing inequality, it holds that
	$D(\omega_{S^c}\Vert\sigma_{S^c})\ge D(P_{S,\sigma}(\omega_{S^c})\Vert P_{S,\sigma}(\sigma_{S^c}))= D(\omega'\Vert\sigma)$. Then, we can use this bound to give the estimate
	\begin{align*}
		D_S(\omega\Vert\sigma)&= D(\omega\Vert\sigma)-D(\omega_{S^c}\Vert\sigma_{S^c})
		\le D(\omega\Vert\sigma)-D(\omega'\Vert\sigma)
		\\
		&=S(\omega')-S(\omega)+\beta\Tr[(\omega-\omega')H_\Lambda] + \cancel{\log Z_{\Lambda}\Tr[(\omega-\omega')]}
		\\
		&\le qn\log d+h(q)+\beta \|\omega-\omega'\|_1\norm{H_\Lambda}
		\\
		&\le (\log d+2\beta J)nq+h(q),
	\end{align*}
	where in the second line we used the fact that $\omega' = \cP_{S*}(\omega)$ and $\cP_{S*}$ is trace preserving, while the following inequality follows from Hölder's inequality, together with Audenaert's continuity
	bound \cite[Theorem 1]{entropy-continuity}, which states
	\[
	|S(\omega')-S(\omega)|\le q\log(d^n-1)+h(q)\le qn\log d+h(q).
	\]
	Finally, the last line uses
	$\norm{H_\Lambda}\le Jn$. Since $q\le q_b\le1/2$, monotonicity
	of $h(q)$ on $[0,1/2]$ proves Eq.~\eqref{eq:kernel-entropy}.
	If $S[b]=\Lambda$, $\mathcal{V}_{S[b]}(t) = \mathbb{C} \sigma^{1/2}$, so that 
	$vv^\dagger=\sigma$. Then, $D_S(\omega\Vert\sigma)=0$.
\end{proof}

\begin{figure}
	\centering
	\begin{tikzpicture}[x=.82cm,y=.7cm,font=\small]
		\fill[gray!12] (0,0) rectangle (2,1);
		\fill[blue!12] (2,0) rectangle (5,1);
		\fill[orange!25] (5,0) rectangle (9,1);
		\fill[blue!12] (9,0) rectangle (12,1);
		\fill[gray!12] (12,0) rectangle (14,1);
		\draw (0,0) rectangle (14,1);
		\foreach \x in {2,5,9,12} {\draw (\x,0)--(\x,1);}
		\node at (1,.5) {$O_{(S[b])^c}$};
		\node at (3.5,.5) {buffer};
		\node at (7,.5) {core $S$};
		\node at (10.5,.5) {buffer};
		\node at (13,.5) {$O_{(S[b])^c}$};
		\draw[decorate,decoration={brace,mirror,amplitude=5pt}]
		(2,-.2)--(12,-.2) node[midway,below=6pt] {$S[b]$: centres of the updates used};
		\draw[decorate,decoration={brace,amplitude=4pt}]
		(2,1.2)--(5,1.2) node[midway,above=5pt] {$b$ sites};
		\draw[decorate,decoration={brace,amplitude=4pt}]
		(9,1.2)--(12,1.2) node[midway,above=5pt] {$b$ sites};
	\end{tikzpicture}
	\caption{The buffer separates the core from the exterior change.
		A unit amplitude $v=O_{(S[b])^c}\sigma^{1/2}$ represents the state
		$O_{(S[b])^c}\sigma O_{(S[b])^c}^\dagger$. Its conditional entropy in $S$
		is small even if the exterior change is large. The same estimate
		holds after bounded modular rotations.}
\end{figure}

\begin{corollary}[Shielding at logarithmic scales]\label{main:shielding}
	For $n\ge2$, an interval $S$, $L\ge\log n$, $|S|\le6L$, and a sufficiently large fixed
	$k_\beta$, put $b=\lceil k_\beta L\rceil$. Every normalised amplitude
	$v\in\mathcal V_{S[b]}(t)$, $|t|\le1$, satisfies
	\begin{equation}\label{eq:shielding-statement}
		D_S(vv^\dagger\Vert\sigma)\le a_\beta/n.
	\end{equation}
	The constant $a_\beta$ is independent of $n$, $S$, and $t$.
\end{corollary}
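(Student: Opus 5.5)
This is a direct specialisation of Lemma~\ref{lem:kernel-entropy}. The plan is to choose $k_\beta$ so that $q_b$ from \eqref{eq:qb} is at most $n^{-2}$, and then bound both terms on the right of \eqref{eq:kernel-entropy} by a constant times $1/n$. If $S[b]=\Lambda$, the left side vanishes and there is nothing to prove, so assume the exterior is nonempty.

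First, control $q_b$. With $|S|\le6L$ and $b=\lceil k_\beta L\rceil\ge k_\beta L$, we get
\[
 q_b=Ce^{c|S|-\kappa b}\le Ce^{(6c-\kappa k_\beta)L}.
\]
Choose $k_\beta$ so large that $\kappa k_\beta-6c\ge 3$. Since $L\ge\log n$, this gives $q_b\le Ce^{-3L}\le Cn^{-3}$.

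Second, handle the smallness hypothesis $q_b\le 1/2$ and the factor $C$. For all $n$ above a threshold $n_\beta$ depending only on $C$, we have $Cn^{-3}\le\min\{1/2,n^{-2}\}$. The finitely many $n<n_\beta$ are harmless. Since $|S|\le n$, the left side is bounded by the crude estimate
\[
 D_S(vv^\dagger\Vert\sigma)\le D(vv^\dagger\Vert\sigma)\le\log\norm{\sigma^{-1}}\le n(\log d+2\beta J),
\]
which is at most a constant depending only on $n_\beta$. This constant is absorbed into $a_\beta$ by multiplying by $n_\beta/n\ge1$.

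Third, for $n\ge n_\beta$, apply \eqref{eq:kernel-entropy} with $q_b\le n^{-2}$. The linear term gives $c_\beta^{\rm ent}nq_b\le c_\beta^{\rm ent}/n$. For the binary entropy, use $h(q)\le q(1-\log q)$ for $q\le1/2$, together with monotonicity of $h$ on $[0,1/2]$. This gives
\[
 h(q_b)\le n^{-2}(1+2\log n)\le 3/n.
\]
Together these give \eqref{eq:shielding-statement} with $a_\beta=\max\{c_\beta^{\rm ent}+3,\;n_\beta(\log d+2\beta J)\}$. This constant is independent of $n$, $S$ and $t$, because the constants in Lemma~\ref{lem:kernel-entropy} are uniform in $|t|\le1$.

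The only point requiring care is the binary-entropy term. It decays only like $q\log(1/q)$, so $q_b$ must be a strictly higher power of $1/n$ than $1/n$; choosing exponent three leaves enough room for this. Everything else is bookkeeping of constants.
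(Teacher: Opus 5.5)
Your proof is correct and follows the paper's route: specialise Lemma~\ref{lem:kernel-entropy} with $q_b\le n^{-2}$ and bound $h(q)\le q\log(e/q)$. The paper only avoids your small-$n$ threshold by putting $\log(\max\{C,1\})/\log 2$ into $k_\beta$, which gives $q_b\le n^{-2}\le 1/4$ directly for every $n\ge2$. There is one bookkeeping slip: for $n<n_\beta$ your crude bound $D_S\le n(\log d+2\beta J)$ yields $a_\beta/n$ only if $a_\beta\ge n_\beta^2(\log d+2\beta J)$, so the second entry of your maximum should carry $n_\beta^2$ rather than $n_\beta$.
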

\begin{proof}
	This follows directly from Lemma~\ref{lem:kernel-entropy}. Indeed, notice that since $|S|\le6L$ and $L\ge\log n$, there exists
	a fixed choice of $b=\lceil k_\beta L\rceil$ such that
	$ q_b\le n^{-2}$.
	Specifically, for the constants in Eq.~\eqref{eq:qb}, choose
	\[
	k_\beta\ge\frac{6c+2+\log(\max\{C,1\})/\log2}{\kappa}.
	\]
	Then, for every $n\ge2$,
	\[
	q_b\le C e^{-(\kappa k_\beta-6c)L}
	\le Cn^{-2-\log(\max\{C,1\})/\log2}\le n^{-2}\le\tfrac14.
	\]
	Then, Eq.~\eqref{eq:kernel-entropy} implies:
	\[
	D_S(vv^\dagger\Vert\sigma)\le
	c_\beta^{\rm ent} n q_b+h(q_b)
	\le \frac{c_\beta^{\rm ent}+3}{n},
	\]
	where we used the fact that $h(q)\le q\log(e/q)$ and the upper bound $\log n\le n$. 
	This proves the claim.
\end{proof}
 %
\section{Conditional entropy controlled by buffered energy}\label{sec:buffered}
In this section we show how to combine the ingredients from Sections~\ref{gap:section} and~\ref{sec:loc} to prove that the conditional relative entropy in an interval is
controlled by the quadratic energy of either sampler's updates centred in a larger interval,
with an error of order $1/n$.

\begin{theorem}[Conditional entropy controlled by buffered energy]
\label{ent:buffered}
We fix the interaction parameters and $\beta>0$ as in
Section~\ref{sec:model}. There are constants
$k_\beta,C_\beta>0$, independent of the chain length and the position
of the interval, with the following property. For every $n\ge2$,
$L\ge\log n$, and nonempty interval $S\subseteq\Lambda$ with
$|S|\le6L$, every state $\rho$
satisfies
\begin{equation}\label{eq:buffered}
 D_S(\rho\Vert\sigma)
 \le C_\beta L\,\mathcal Q_{S[b]}^{a}(\rho)
                                      +\frac{C_\beta}{n},\qquad a\in\{\mathrm{HB},\mathrm{RHB}\} 
\end{equation}
with $b=\lceil k_\beta L\rceil$.
The energy is the sum of the single-site quadratic forms of sampler $a$ with
centres in $S[b]$, all using the original global Gibbs state $\sigma$.
\end{theorem}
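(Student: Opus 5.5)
The plan follows the road map in Section~\ref{sec:proof-overview}. First, by Proposition~\ref{pre:properties} the heat-bath energy is dominated by the regularised one, $\mathcal Q_{S[b]}^{\mathrm{HB}}\le\mathcal Q_{S[b]}^{\mathrm{RHB}}$. So it suffices to prove \eqref{eq:buffered} for $a=\mathrm{HB}$. We write $v=\sqrt\rho$, a unit Hilbert--Schmidt amplitude with $vv^\dagger=\rho$. For each modular parameter $t$ we decompose $v=\Pi_{S[b],t}v+(1-\Pi_{S[b],t})v=:u_t+w_t$, so that $x_t(\rho)=\hs{w_t}^2$. We take $b=\lceil k_\beta L\rceil$ with $k_\beta$ from Corollary~\ref{main:shielding}. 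Then every unit amplitude in $\mathcal V_{S[b]}(t)$, $|t|\le1$, has $D_S\le a_\beta/n$.

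\textbf{Superposition step.} The main step is a bound
\[
D_S(\rho\Vert\sigma)\le C_\beta L\,x_t(\rho)+C_\beta/n,\qquad |t|\le1.
\]
We first handle a large exterior distance. If $x_t(\rho)\ge1/4$, we use the crude bound $D_S(\rho\Vert\sigma)\le D(\rho_{S[\cdot]}\ldots)$. More precisely, $D_S(\rho\Vert\sigma)\le 2|S|\log d+\norm{\log\sigma-\log(\one_S\otimes\sigma_{S^c})}\le C_\beta|S|\le C_\beta L$. This holds because $\log\sigma-\log(\one_S\otimes\sigma_{S^c})$ has norm $\mathcal O(|S|)$ by Lemma~\ref{loc-meanforce}. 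That case then gives $D_S\le 4C_\beta L\,x_t$.

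For small $x_t$, we write $\rho=u_tu_t^\dagger+(u_tw_t^\dagger+w_tu_t^\dagger+w_tw_t^\dagger)$. Let $\omega=u_tu_t^\dagger/\hs{u_t}^2$; by Corollary~\ref{main:shielding}, $D_S(\omega\Vert\sigma)\le a_\beta/n$. We then control $D_S(\rho)-D_S(\omega)$, which is the difference of $S(\rho_{S^c})-S(\rho)$ and $\Tr[(\rho-\rho_{S^c})\ldots]$-type terms. The plan is to bound it by $C|S|\cdot\|\rho-\omega\|_1$-type continuity. The bound should come from expressing $D_S(\rho\Vert\sigma)=-S(\rho)+S(\rho_{S^c})-\Tr[\rho(\log\sigma-\log(\one_S\otimes\sigma_{S^c}))]$. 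The last operator has norm $\mathcal O(|S|)$, and the entropy difference is a conditional entropy $-H(S|S^c)$, continuous with modulus $\mathcal O(|S|\,\epsilon+h(\epsilon))$ by Alicki--Fannes--Winter. Since $\tfrac12\|\rho-\omega\|_1\le 2\hs{w_t}=2\sqrt{x_t}$, naive continuity gives only $\sqrt{x_t}$, not $x_t$.

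\textbf{Expected obstacle.} This loss from $x_t$ to $\sqrt{x_t}$ is the main difficulty. To get linear dependence, I would instead use concavity/convexity of $D_S$. Consider the pure-state-like mixture: the state $\rho$ is dominated in the sense $\rho\le(1+\epsilon)\,u u^\dagger/(\ldots)$? That fails. The alternative is a ``quasi-convexity'' bound for conditional entropy,
\[
D_S(p\omega_1+(1-p)\omega_2)\le pD_S(\omega_1)+(1-p)D_S(\omega_2)+h(p).
\]
Pair it with the operator inequality $\rho=vv^\dagger\le(1+s)u u^\dagger+(1+s^{-1})ww^\dagger$, where $s=\sqrt{x_t}$, after a normalisation rearrangement. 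The cross terms then cost $\mathcal O(|S|x_t)+\mathcal O(x_t\log(1/x_t))$. The $\log(1/x_t)$ can be absorbed as $x_t\log(1/x_t)\le x_t\log n+1/n$. This gives the $C_\beta L\,x_t+C_\beta/n$ form, and it is how the factor $L\ge\log n$ enters.

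\textbf{Averaging step.} Integrate the superposition bound against $f(t)=2\operatorname{sech}(2\pi t)$ over $[-1,1]$, and set $c_f=\int_{-1}^1f>0$. This gives $c_fD_S\le C_\beta L\int_{\mathbb R}f\,x_t\,dt+C_\beta/n$. Lemma~\ref{lem:average}, applied to the interval $S[b]$ and evaluated at $\sqrt\rho$, yields $\int f x_t\le\gamma_\beta^{-1}\mathcal Q_{S[b]}(\rho)$. Dividing by $c_f$ completes \eqref{eq:buffered}. The case $S[b]=\Lambda$ is included since then all exterior amplitudes are multiples of $\sigma^{1/2}$.
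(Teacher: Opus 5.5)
\textbf{Gap in the superposition step.} Your outer structure matches the paper: the transfer $\mathcal Q^{\mathrm{HB}}_{S[b]}\le\mathcal Q^{\mathrm{RHB}}_{S[b]}$ from \eqref{eq:energy-comparison}, the split of $\sqrt\rho$ by $\Pi_{S[b],t}$, shielding via Corollary~\ref{main:shielding}, averaging over $|t|\le1$, and Lemma~\ref{lem:average}. The gap is the step you flagged yourself. You need $D_S(\rho\Vert\sigma)\le C_\beta Lx_t+C_\beta/n$ with linear dependence on $x_t$, and your proposed fix does not give it.

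\begin{itemize}
\item $D_S(\cdot\Vert\sigma)$ is not monotone in the operator order, so $\rho\le T:=(1+s)u_tu_t^\dagger+(1+s^{-1})w_tw_t^\dagger$ bounds nothing by itself.
\item Your quasi-convexity inequality only bounds $D_S$ of the dominating mixture $\tau=T/\lambda$, $\lambda=\Tr T$. To return to $\rho$, you must write $\tau=\lambda^{-1}\rho+(1-\lambda^{-1})\rho'$ and use the reverse (almost-concave) inequality. This gives $D_S(\rho\Vert\sigma)\le\lambda\,[D_S(\tau\Vert\sigma)+h(\lambda^{-1})]$.
\item With $s=\sqrt{x_t}$, one has $\lambda=1+2\sqrt{x_t}-x_t^{3/2}$. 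The weight of $w_tw_t^\dagger/x_t$ in $\tau$ is $(x_t+\sqrt{x_t})/\lambda\approx\sqrt{x_t}$. So both the $M_S$ contribution and $h(\lambda^{-1})\approx\sqrt{x_t}\log(1/x_t)$ are of order $\sqrt{x_t}$. Your claimed cost $\mathcal O(|S|x_t)+\mathcal O(x_t\log(1/x_t))$ does not follow.
\item Taking $s\asymp1$ makes the weight $\mathcal O(x_t)$, but then $h(\lambda^{-1})\approx\log 2$. That is an order-one error per block, which destroys the bounded defect once summed over $\asymp n/L$ blocks.
\end{itemize}

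The trade-off between $s$ and $x_t/s$ caps this method at $\sqrt{x_t}$, and that is fatal. After averaging you would get $\sqrt{L\,\mathcal Q_{S[b]}(\rho)}$. This cannot be bounded by $C_\beta L\mathcal Q_{S[b]}(\rho)+C_\beta/n$ with $n$-independent $C_\beta$: keeping the additive error $\mathcal O(1/n)$ forces a coefficient of order $nL$.

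\textbf{The missing ingredient.} You need the paper's Lemma~\ref{lem:superposition}. It moves the coherence between the two components into a flag qubit $F$, counted as part of the complement of $S$.
\begin{itemize}
\item Let $v,w$ be the normalised components, $x=x_t$, and $|\Omega\rangle=\sqrt{1-x}\,|0\rangle_F|v\rangle_{\Lambda E}+\sqrt x\,|1\rangle_F|w\rangle_{\Lambda E}$ in the canonical vectorisation. Set $\psi_\pm=\sqrt{1-x}\,v\pm\sqrt x\,w$, so that $\psi_+=\sqrt\rho$.
\item Measuring $F$ in the $|\pm\rangle$ basis is a channel on the conditioning side. Data processing gives $\tfrac12D_S(\psi_+\psi_+^\dagger\Vert\sigma)+\tfrac12D_S(\psi_-\psi_-^\dagger\Vert\sigma)\le D_S(\Omega_{\Lambda F}\Vert\sigma\otimes\Omega_F)$.
\item No $h(1/2)$ is lost in that step. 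The outcome sits in the conditioning system, and $D_S$ of a classical--quantum state is the exact average of the branches.
\item Dephasing $F$ in the computational basis then costs only $h(x)$: $D_S(\Omega_{\Lambda F}\Vert\sigma\otimes\Omega_F)\le(1-x)\,a_\beta/n+xM_S+h(x)$.
\item Finally, $h(x)\le x\log n+\log(1-1/n)^{-1}$ yields $D_S(\rho\Vert\sigma)\le 2a_\beta/n+2(M_S+\log n)\,x_t+4/n$.
\end{itemize}
This is exactly the linear estimate your averaging step requires.
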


Before proving this result, we need one last ingredient:

\begin{lemma}[Entropy of a superposition]\label{lem:superposition}
	Assume that $S$ is a nonempty interval and $\eta\ge0$.
	Let $\Pi$ be a Hilbert--Schmidt orthogonal projection such that
	$D_S(vv^\dagger\Vert\sigma)\le\eta$ for every unit vector
	$v\in\Ran\Pi$. For every unit Hilbert--Schmidt amplitude $\psi$
	and $0<\varepsilon<1$,
	\begin{equation}
		D_S(\psi\psi^\dagger\Vert\sigma)
		\le2\eta+2(M_S+\log\varepsilon^{-1})\hs{(1-\Pi)\psi}^2
		+2\log(1-\varepsilon)^{-1},         \label{eq:superposition}
	\end{equation}
	where $M_S=c_\beta(|S|+1)$ and $c_\beta>0$ does not depend on the size of the chain.
\end{lemma}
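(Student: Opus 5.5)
Write $\psi=u+w$ with $u=\Pi\psi\in\Ran\Pi$ and $w=(1-\Pi)\psi$, so $\hs u^2+\hs w^2=1$. The state $\rho=\psi\psi^\dagger=(u+w)(u+w)^\dagger$ is dominated in operator order by a convex mixture:
\[
 \rho\le\frac{1}{1-\varepsilon}\,uu^\dagger+\frac{1}{\varepsilon}\,ww^\dagger .
\]
This is the operator inequality $(u+w)(u+w)^\dagger\le(1+s)uu^\dagger+(1+s^{-1})ww^\dagger$ with $1+s=(1-\varepsilon)^{-1}$, so that $1+s^{-1}=\varepsilon^{-1}$.

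The plan is to exploit the structure of $D_S$. For a fixed reference $\sigma$, we have
\[
 D_S(\rho\Vert\sigma)=-S(\rho)+S(\rho_{S^c})-\Tr[\rho(\log\sigma-\log\sigma_{S^c})].
\]
The first two terms give minus a conditional entropy, $-S(S|S^c)_\rho$. The conditional entropy $S(S|S^c)$ is concave in $\rho$, so $D_S(\cdot\Vert\sigma)$ is convex, and it obeys an almost-concavity bound: for a mixture $\sum_k p_k\rho_k$, concavity of the conditional entropy fails by at most the Shannon entropy $h(p)$ of the weights. I will use this to compare $\rho$ with a mixture rather than a superposition.

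\textbf{Step 1: a purification-free trick.} Introduce a qubit flag and the vector $\Psi=\sqrt{1-\varepsilon}\,\hat u\otimes|0\rangle+\sqrt\varepsilon\,\hat w\otimes|1\rangle$ (suitably normalised). Measuring the flag in the $|\pm\rangle$ basis and post-selecting on $+$ returns $\rho$, up to normalisation. The cleaner route is the following. The scalar $D_S(\rho\Vert\sigma)$ equals the sum of two pieces:
\begin{itemize}
 \item a linear term $\Tr[\rho\,G]$, where $G:=\log\sigma_{S^c}-\log\sigma$ has $\norm{G}\le M_S$ up to the constants hidden in $c_\beta$. This should follow from Lemma~\ref{loc-meanforce}: $\log\sigma_{S^c}+K$ equals $K_S$ plus bounded boundary terms, and $\norm{K_S}\le j|S|$;
 \item minus the conditional entropy $-S(S|S^c)_\rho$, which lies in $[-|S|\log d,\,|S|\log d]$.
\end{itemize}

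\textbf{Step 2: reduce to rank one plus error.} Write $\rho=\|u\|^2\hat u\hat u^\dagger+R$, where $R$ collects the cross terms and the $w$ term, with $\|R\|_1\le 3\hs w$. A continuity estimate of this kind is too weak, since it gives only a linear dependence on $\hs w$ rather than a quadratic one. I therefore prefer the operator-order route. Since $\rho\le\lambda\tau$ with
\[
 \tau=(1-\varepsilon)\hat u\hat u^\dagger\cdot\tfrac{\hs u^2}{(1-\varepsilon)N}+\cdots,
\]
I use the operator-convexity argument directly: apply the mixing bound to the state
\[
 \rho'=\frac{\frac{\hs u^2}{1-\varepsilon}\hat u\hat u^\dagger+\frac{\hs w^2}{\varepsilon}\hat w\hat w^\dagger}{N},\qquad N=\frac{\hs u^2}{1-\varepsilon}+\frac{\hs w^2}{\varepsilon}.
\]
Here $\rho'-\rho/N\ge0$, and $\rho$ is pure.

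\textbf{Step 3 (the main obstacle).} $D_S$ is not monotone in operator order, so I will split the argument into two parts:
\begin{itemize}
 \item For the linear term, use $\Tr[\rho G]$ directly.
 \item For the entropy term, $-S(S|S^c)_\rho=S(\rho_{S^c})$ because $\rho$ is pure, and $S(\rho_{S^c})=S(\rho_S)$. Pure states are equivalent under local rotation, so I compare $\rho_S$ with $(uu^\dagger)_S$.
\end{itemize}
Since $D_S(uu^\dagger\Vert\sigma)=S(\hat u_S)+\Tr[\hat u\hat u^\dagger G]\le\eta$, it remains to bound
\[
 S(\rho_S)-S(\hat u_S)+\Tr[(\rho-\hat u\hat u^\dagger)G].
\]
The $G$-difference is at most $2M_S\hs w^2$ after choosing phases optimally. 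For the entropy difference, operator order gives $\rho_S\le(1-\varepsilon)^{-1}\hs u^2\hat u_S+\varepsilon^{-1}\hs w^2\hat w_S$. Using $-\log$ operator monotone, $S(\rho_S)=-\Tr\rho_S\log\rho_S$ is bounded by $-\Tr\rho_S\log$ of the mixture plus $\log N$. The mixture entropy is controlled via $\log(1-\varepsilon)^{-1}$ and $\log\varepsilon^{-1}$ weights, with the $\hat w$ portion costing at most $|S|\log d\le M_S$, weighted by $\hs w^2$. The factor $2$ absorbs the linear-versus-quadratic mismatch in the cross terms.
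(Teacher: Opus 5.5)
There is a genuine gap, and it begins with a misreading of the objects. A unit Hilbert--Schmidt amplitude $\psi$ is an operator, so $\psi\psi^\dagger$ is in general a \emph{mixed} state. The lemma is applied with $\psi=\sqrt\rho$ for an arbitrary state $\rho$ in Theorem~\ref{ent:buffered}, and the elements of $\Ran\Pi$ are themselves mixed, e.g.\ $vv^\dagger=O\sigma O^\dagger$ for $v=O\sigma^{1/2}\in\mathcal V_{S[b]}(0)$.

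\textbf{Purity.} Your Step~3 uses purity twice: in $-S(S|S^c)_\rho=S(\rho_S)$ and in $D_S(uu^\dagger\Vert\sigma)=S(\hat u_S)+\Tr[\hat u\hat u^\dagger G]$. Both identities fail here. Passing to the vectorisation $|\psi\rangle_{\Lambda E}$ does not rescue them, because $D_S$ is evaluated on the mixed $\Lambda$-marginal.

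\textbf{Cross term.} Even granting purity, $\Tr[(\psi\psi^\dagger-\hat u\hat u^\dagger)G]$ contains $2\,\mathrm{Re}\ip{w}{Gu}$, which is of order $\hs{w}=\sqrt x$, not $x$. There are no phases to choose, since $\psi$ is given. Left multiplication by $G$ does not preserve $\Ran\Pi$: for $\Pi_{S[b],0}$, $GO\notin\mathcal A_{S[b]^c}$. So no factor of $2$ converts $\sqrt x$ into $x$.

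\textbf{Entropy step.} From $\rho_S\le N\rho'_S$, operator monotonicity of $\log$ gives a \emph{lower} bound on $S(\rho_S)$. The upper bound $S(\rho_S)\le-\Tr[\rho_S\log\rho'_S]$ is just Klein's inequality. It still pairs the cross terms of $\rho_S$ against the unbounded $-\log\rho'_S$, so it does not yield $(1-x)S(\hat u_S)+x|S|\log d+h(x)$. As you note, $D_S$ is not monotone in operator order, and nothing in the proposal substitutes for that.

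The missing idea is the flag construction you abandoned in Step~1, used with weights $\sqrt{1-x},\sqrt x$ rather than $\sqrt{1-\varepsilon},\sqrt\varepsilon$, and without post-selection. Write $\psi=\psi_+=\sqrt{1-x}\,v+\sqrt x\,w$ and $\psi_-=\sqrt{1-x}\,v-\sqrt x\,w$.

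\begin{enumerate}
\item Nonnegativity of $D_S$ gives $D_S(\psi\psi^\dagger\Vert\sigma)\le D_S(\psi_+\psi_+^\dagger\Vert\sigma)+D_S(\psi_-\psi_-^\dagger\Vert\sigma)$. This is the source of the factor $2$, and in the average the cross terms cancel: $\tfrac12(\psi_+\psi_+^\dagger+\psi_-\psi_-^\dagger)=(1-x)vv^\dagger+xww^\dagger$.
\item That average equals $D_S$ of the classical--quantum state obtained by measuring, in the $|\pm\rangle$ basis, the flag $F$ of $\Omega=\sqrt{1-x}\,|0\rangle_F|v\rangle+\sqrt x\,|1\rangle_F|w\rangle$, with $F$ counted in the complement of $S$.
\item Since $D_S(\Omega_{\Lambda F}\Vert\sigma\otimes\Omega_F)=D_S(\Omega_\Lambda\Vert\sigma)+I(S:F|S^c)$, data processing of the conditional mutual information under channels on $F$ bounds the measured quantity by the unmeasured one.
\item Dephasing $F$ in the computational basis raises $S(\Omega_{\Lambda F})=S(\Omega_E)$ by at most $h(x)$, by concavity, and does not lower $S(\Omega_{S^cF})$. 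This yields $(1-x)\eta+xM_S+h(x)$, using the universal bound $D_S(\omega\Vert\sigma)\le|S|\log d+\norm{G}\le M_S$ that your decomposition does correctly supply.
\end{enumerate}

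The parameter $\varepsilon$ enters only at the end, through $h(x)\le x\log\varepsilon^{-1}+\log(1-\varepsilon)^{-1}$, i.e.\ nonnegativity of the binary relative entropy. Your operator inequality with weights $(1-\varepsilon)^{-1},\varepsilon^{-1}$ plays no role.
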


\begin{proof}
	We use a construction akin to the one used in \cite{superpositions}.
		First suppose $0<x:=\hs{(1-\Pi)\psi}^2<1$.
		Let us decompose any unit Hilbert--Schmidt amplitude $\psi$ as
	$\psi=\sqrt{1-x}\,v+\sqrt x\,w$, where $v,w$ are
		orthogonal unit amplitudes, chosen as $v=\Pi\psi/\sqrt{1-x}$ and $w=(1-\Pi)\psi/\sqrt x$, with $v\in\Ran\Pi$, and
	$x=\hs{(1-\Pi)\psi}^2$. Introduce a purifying space $E$, and a qubit register $F$, and define
	\[
	|\Omega\rangle=\sqrt{1-x}\,|0\rangle_F|v\rangle_{\Lambda E}
	+\sqrt x\,|1\rangle_F|w\rangle_{\Lambda E},
	\]

		Here $E\simeq\mathcal H_\Lambda$, and both vectors use the same canonical vectorisation $|a\rangle_{\Lambda E}:=\sum_j a|j\rangle_\Lambda\otimes|j\rangle_E$ in a fixed orthonormal basis. Thus
		\[
		\Tr_E[|a\rangle\langle b|]=ab^\dagger,
		\qquad \langle a|b\rangle=\Tr[a^\dagger b].
		\]
		This preserves both the coherence $vw^\dagger$ and the orthogonality of $v,w$; in particular, $\Omega_F=(1-x)|0\rangle\langle0|+x|1\rangle\langle1|$.

	Define $\Omega_{\Lambda F} = \Tr_E[|\Omega\rangle\langle \Omega|]$ and $\Omega_F = \Tr_{\Lambda}[\Omega_{\Lambda F}]$. Let us explicitly expand the conditional relative entropy:
	\begin{align}
		D_S(\Omega_{\Lambda F}\Vert\sigma\otimes\Omega_F)
		= D_S(\Omega_\Lambda || \sigma) + I(S:F|S^c)_{\Omega_{\Lambda F}}\,, \label{eq:63}
	\end{align}
	where we have introduced the conditional mutual information. Since the latter satisfies a data processing inequality, for any channel of the form  $(\id_\Lambda\otimes\mathcal{N}_F)$ we have:
	\begin{align}
		D_S((\id_\Lambda\otimes\mathcal{N}_F)(\Omega_{\Lambda F})\Vert\sigma\otimes\mathcal{N}_F(\Omega_F))\leq D_S(\Omega_{\Lambda F}\Vert\sigma\otimes\Omega_F).\label{memory-data-processing}
	\end{align}
	Then, choose $\mathcal{N}_F(\rho) :=|+\rangle\langle +|\rho|+\rangle\langle +| + |-\rangle\langle -|\rho |-\rangle\langle -|$, where $|\pm\rangle := \frac{1}{\sqrt{2}} (|0\rangle  \pm |1\rangle )$. Applying this channel to register $F$ gives:
	\begin{align}
		(\id_\Lambda\otimes\mathcal{N}_F)(\Omega_{\Lambda F}) = \tfrac 12 (|+\rangle \langle +|\otimes \psi_+\psi_+^\dagger + |-\rangle \langle -|\otimes \psi_-\psi_-^\dagger )
	\end{align}
	where $\psi_{\pm}:=\sqrt{1-x}\,v\pm\sqrt x\,w$, which directly implies that:
	\begin{align}
		\frac{1}{2}D_S(\psi\psi^\dagger\Vert\sigma) &\leq \frac{1}{2}(D_S(\psi_+\psi_+^\dagger\Vert\sigma)+D_S(\psi_-\psi_-^\dagger\Vert\sigma)) =  D_S((\id_\Lambda\otimes\mathcal{N}_F)(\Omega_{\Lambda F})\Vert\sigma\otimes\mathcal{N}_F(\Omega_F))  \nonumber
		\\
		&\leq D_S(\Omega_{\Lambda F}\Vert\sigma\otimes\Omega_F)\label{eq:6.6}
	\end{align}
	This implies that bounding the conditional entropy in Eq.~\eqref{eq:6.6} is enough to prove the claim.

	Now rewrite Eq.~\eqref{eq:6.6} as:
	\[
	D_S(\Omega_{\Lambda F}\Vert\sigma\otimes\Omega_F)=- S(\Omega_{\Lambda F})+S(\Omega_{S^c F})
	-\Tr\bigl[\Omega_{\Lambda}(\log\sigma-\log\sigma_{S^c})]
	\]

		Dephasing $F$ in the computational basis increases $S(\Omega_{\Lambda F})$ by at most the binary entropy $h(x)$ and does not decrease $S(\Omega_{S^cF})$. It leaves $\Omega_\Lambda$ unchanged, and its conditional relative entropy is the average of those of the two branches. Therefore
	\[
	D_S(\Omega_{\Lambda F}\Vert\sigma\otimes\Omega_F)
	\le(1-x)D_S(vv^\dagger\Vert\sigma)
	+xD_S(ww^\dagger\Vert\sigma)+h(x).
	\]
		By assumption $D_S(vv^\dagger\Vert\sigma)\leq \eta$. On the other hand, we can establish that $D_S(ww^\dagger\Vert\sigma)\leq M_S$.

		Indeed, for any state $\omega$, write $\omega_{S^c}:=\Tr_S\omega$. The entropy-difference bound and Lemma~\ref{loc-meanforce} give
		\begin{align*}
		D_S(\omega\Vert\sigma)
		&=-S(\omega)+S(\omega_{S^c})
		-\Tr[\omega(\log\sigma-\one_S\otimes\log\sigma_{S^c})]\\
		&\le |S|\log d+\|\log\sigma-\one_S\otimes\log\sigma_{S^c}\|_\infty\\
		&\le c_\beta(|S|+1)=M_S.
		\end{align*}
		For the last inequality, use $K=\beta H_\Lambda$ and the mean-force identity to write the logarithmic difference as $-(K-K_{S^c})-\log Z_S\,\one-V_{S^c}$. Here $\|K-K_{S^c}\|_\infty\le\beta J|S|$, $|\log Z_S|\le(\log d+\beta J)|S|$, and $\|V_{S^c}\|_\infty$ is bounded independently of the chain. Enlarging $c_\beta$ gives the stated $M_S$. Applying this to $\omega=ww^\dagger$ proves the claim, including singular $ww^\dagger$: only the faithful reference states occur inside logarithms.

		Finally, notice that nonnegativity of binary relative entropy gives
		$h(x)\le x\log\varepsilon^{-1}+\log(1-\varepsilon)^{-1}$. Restoring $x=\hs{(1-\Pi)\psi}^2$ gives the claim for $0<x<1$. For $x=0$, the hypothesis directly gives $D_S(\psi\psi^\dagger\Vert\sigma)\le\eta$; for $x=1$, the universal bound just proved gives $D_S(\psi\psi^\dagger\Vert\sigma)\le M_S$. Both imply \eqref{eq:superposition}, including when $\Pi$ is zero or the identity.
\end{proof}

\begin{proof}[Proof of Theorem~\ref{ent:buffered}]
		Fix $a\in\{\mathrm{HB},\mathrm{RHB}\}$. We introduce the notation $\Pi_t:=\Pi_{S[b],t}$ and $x_t :=\|(1-\Pi_t)\sqrt\rho\|_2^2 $.  For $0<x_t<1$, we can split the canonical amplitude $\sqrt{\rho}$ as:
	\[
	\sqrt\rho
	=\sqrt{1-x_t}\,v_t+\sqrt{x_t}\,w_t,
	\]
	where
	\[
	v_t
	:=\frac{\Pi_t\sqrt\rho}{\sqrt{1-x_t}}
	\in\Ran\Pi_t=\mathcal V_{S[b]}(t),
	\qquad
	w_t
	:=\frac{(1-\Pi_t)\sqrt\rho}{\sqrt{x_t}},
	\]
		and $v_t\perp w_t$ are orthogonal unit Hilbert--Schmidt amplitudes. If $x_t=0$ or $x_t=1$, we use the endpoint cases of Lemma~\ref{lem:superposition}, without defining the amplitude whose coefficient vanishes. Notice that thanks to Corollary~\ref{main:shielding}, every unit amplitude
	$v\in\mathcal V_{S[b]}(t)$, for $|t|\le1$, satisfies
	\[
	D_S(vv^\dagger\Vert\sigma)\le\frac{a_\beta}{n}.
	\]
	Then, we can apply Lemma~\ref{lem:superposition} with $\eta=a_\beta/n$ and $\varepsilon=1/n$, to obtain
	\begin{align}
		D_S(\rho\Vert\sigma)
		\le
		\frac{2a_\beta}{n}
		+2\bigl(M_S+\log n\bigr)x_t
		+2\log\left(1-\frac1n\right)^{-1}.
		\label{eq:buffered-superposition-expanded}
	\end{align}
	We can now bound the two coefficients in
	\eqref{eq:buffered-superposition-expanded} in terms of $\mathcal Q_{S[b]}^{a}(\rho)$ plus a small correction. 

	\begin{figure}
		\centering
		\begin{tikzpicture}[x=1cm,y=1cm,font=\small,>=Stealth]
			\fill[blue!6] (-.45,-.18) rectangle (4.35,.18);
			\draw[blue!45] (-.45,0)--(4.35,0);

			\draw[->,very thick,blue!65!black]
			(0,0)--(3.8,0);

			\draw[->,very thick,orange!85!black]
			(3.8,0)--(3.8,2.1);

			\draw[->,very thick]
			(0,0)--(3.8,2.1)
			node[pos=.48,above=5pt,sloped]
			{$\psi=\sqrt{\rho}$};

			\node[blue!65!black,anchor=north]
			at (2.15,-.28)
			{$\Pi_{S[b],t}\psi=\sqrt{1-x_t}\,v_t$};

			\node[blue!65!black,anchor=north]
			at (2.15,-.90)
			{$\mathcal V_{S[b]}(t)$};

			\node[anchor=north east] at (-.10,-.18) {$0$};

			\draw[black!45]
			(3.50,0)--(3.50,.28)--(3.8,.28);

			\fill (0,0) circle (1.4pt);
			\node[anchor=north east] at (-.10,-.12) {$0$};

			\node[anchor=west,align=left,text width=6.2cm]
			at (5.05,1.25)
			{%
				\textcolor{red!85!black}
				{Remainder: \(\sqrt{x_t}\,w_t\)}\\[4pt]
				Squared length: \(x_t\).\\[6pt]
				The GNS gap controls its weighted average over \(t\).
			};
		\end{tikzpicture}
		\caption{A two-dimensional view of the Hilbert--Schmidt amplitude
			decomposition. Locality bounds the entropy of $v_tv_t^\dagger$;
			the gap controls the weight of the orthogonal remainder. Squaring
			$\psi$ includes cross terms, whose entropy cost is bounded by the
			binary-entropy term in Lemma~\ref{lem:superposition}.}
		\label{fig:summary-projection}
	\end{figure}

	First, notice that from the explicit expression of $M_S$ and the fact that $L\geq \log n$, it follows that there exists a constant $c'_\beta$ such that $2(M_S+\log n)\le c_{\beta}'L$. Moreover, direct Taylor expansion gives the upper bound $2\log\left(1-\frac1n\right)^{-1}\leq 4/n$. In this way we arrive at the uniform estimate
	\begin{equation}
		D_S(\rho\Vert\sigma)
		\le C_{\beta,1}Lx_t+\frac{C_{\beta,1}}{n},
		\label{eq:buffered-pointwise}
	\end{equation}
	where
	\[
	C_{\beta,1}
	:=
		\max\{c'_\beta,\,2a_\beta+4\}.
	\]
	Importantly, the constant $C_{\beta,1}$ is independent of
	$n$, $L$, the chain length, the position of $S$, and $t$.

	At this point, we can average Eq.~\eqref{eq:buffered-pointwise} over $t$, for $|t|\leq 1$, according to the probability distribution $f(t):=2\operatorname{sech}(2\pi t)$ to obtain:
	\begin{align}
		w_0 D_S(\rho\Vert\sigma)
		&\le
		C_{\beta,1}L
		\left(\int_{-1}^1 f(t)x_t\,dt\right)
		+\frac{C_{\beta,1} w_0}{n}
		\label{eq:buffered-integrated}
	\end{align}
	where we defined 
	\[
		w_0:=\int_{-1}^1 2\operatorname{sech}(2\pi t)\,dt = \frac{4}{\pi } \tan ^{-1}(\tanh (\pi )).
	\]
	It remains to control the integral of $x_t$. Since the integrand is non-negative, we can extend the integration range to $\pm \infty$. Then, it directly follows from Lemma~\ref{lem:average} that
	\begin{align}
		\int_\mathbb{R} f(t)\left\langle
		\sqrt\rho,\,
		(1-\Pi_{t})\sqrt\rho
		\right\rangle_{\mathrm{HS}}\,dt
		&\leq
		\gamma_\beta^{-1}
		\left\langle
		\sqrt\rho,\,
		(-\Gamma_\sigma^{1/2}
		\mathcal L^{a}_{S[b]}
		\Gamma_\sigma^{-1/2})(\sqrt\rho)
		\right\rangle_{\mathrm{HS}}\nonumber\\
		&=\gamma_\beta^{-1}\mathcal Q^{a}_{S[b]}(\rho).
	\end{align}
		Plugging this estimate back into Eq.~\eqref{eq:buffered-integrated} gives the desired bound, for example with $C_\beta=C_{\beta,1}\max\{1,(w_0\gamma_\beta)^{-1}\}$. This holds for every state, since Lemma~\ref{lem:superposition} allows singular amplitudes and the energy is defined through $\sqrt\rho$.
\end{proof}
 %
\section{Removal of the bounded entropy defect}
\label{app:tightening}

We now prove the abstract implication that completes the entropy
argument. It combines scalar centering with weighted norm interpolation,
following \cite[Lemma 2.3, Theorem 2.4, and Proposition 2.2]{norms}.
All constants are tracked below; no $L_p$ regularity assumption is used.

\begin{theorem}[Removing a bounded entropy defect]\label{main:tightening}
Let $\cL$ be a primitive KMS-symmetric Markov generator with invariant
state $\sigma$ and KMS gap $\gamma>0$. Put
\[
 \mathcal Q(\rho)
 =-\ip{\sqrt\rho}{\Gamma_\sigma^{1/2}\cL
                         \Gamma_\sigma^{-1/2}(\sqrt\rho)}.
\]
Suppose $D(\rho\Vert\sigma)\le C\mathcal Q(\rho)+b$ for every state,
where $C>0$ and $b\ge0$. Then
\begin{equation}\label{eq:tightening}
 D(\rho\Vert\sigma)
 \le\left(\frac C2+\frac{2+b}{2\gamma}\right)\EP(\rho).
\end{equation}
\end{theorem}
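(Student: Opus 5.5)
The plan is a quantum Rothaus-type argument: split off the equilibrium component of the amplitude $\psi=\sqrt\rho$, apply the hypothesis only to the orthogonal remainder, and convert the remaining weight into energy using the KMS gap. The target is the intermediate bound
\[
 D(\rho\Vert\sigma)\le C\,\mathcal Q(\rho)+(2+b)\,x,\qquad
 x:=\bigl\|\psi-\langle\sigma^{1/2},\psi\rangle_{\rm HS}\,\sigma^{1/2}\bigr\|_2^2 .
\]
Two further ingredients then finish the proof. The gap gives $\mathcal Q(\rho)\ge\gamma x$, since $\sigma^{1/2}$ spans the kernel of the nonnegative operator $-\Gamma_\sigma^{1/2}\cL\Gamma_\sigma^{-1/2}$ and that operator is at least $\gamma$ on the orthogonal complement. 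A quadratic comparison gives $\EP(\rho)\ge2\mathcal Q(\rho)$. Combining the three yields $D\le\bigl(C+\tfrac{2+b}{\gamma}\bigr)\mathcal Q\le\bigl(\tfrac C2+\tfrac{2+b}{2\gamma}\bigr)\EP$, which is exactly \eqref{eq:tightening}.

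\textbf{Step 1: the comparison $\EP\ge2\mathcal Q$.} I would prove it as in \cite[Proposition 2.2]{norms}. Write $\EP(\rho)=\Tr[(-\cL_*\rho)(\log\rho-\log\sigma)]$ and use KMS symmetry to move to the operator $\Gamma_\sigma^{1/2}\cL\Gamma_\sigma^{-1/2}$ acting on the amplitude. Then apply the operator inequality comparing the logarithmic mean with the square-root mean, $(a-b)(\log a-\log b)\ge 2(\sqrt a-\sqrt b)^2$-type, in the double-operator-integral (relative modular) calculus. The main effort here is bookkeeping to check that the factor is $2$ under the paper's convention $\EP\ge\alpha D$.

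\textbf{Step 2: the Rothaus step.} Write $c=\langle\sigma^{1/2},\psi\rangle_{\rm HS}\in[0,1]$ and $\psi=c\,\sigma^{1/2}+\phi$, with $\phi\perp\sigma^{1/2}$ and $\|\phi\|_2^2=x=1-c^2$. I would then prove a noncommutative Rothaus lemma following \cite[Lemma 2.3]{norms}:
\[
 D(\rho\Vert\sigma)\le x\,D(\tilde\rho\Vert\sigma)+2x,\qquad \tilde\rho=\phi\phi^\dagger/x .
\]
My first approach is weighted norm interpolation, as in the norm-to-entropy method. Express $D(\rho\Vert\sigma)$ as the derivative at $p=2$ of weighted $L_p$ norms of $\Gamma_\sigma^{-1/2}(\psi)$. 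Compare these norms to those of $\Gamma_\sigma^{-1/2}(\phi)$ using the orthogonal decomposition, which controls the cross terms. The additive $2x$ is the quantum counterpart of the classical $2\operatorname{Var}$.

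\textbf{Step 3: apply the hypothesis to the remainder.} Apply the hypothesis to $\tilde\rho$. The energy scales correctly because $\sigma^{1/2}$ lies in the kernel: the $\mathcal Q$-form of the amplitude $\phi/\sqrt x$ equals $\mathcal Q(\rho)/x$. Multiplying by $x$ gives $x\,D(\tilde\rho\Vert\sigma)\le C\mathcal Q(\rho)+bx$, which combined with Step 2 is the intermediate bound.

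\textbf{Main obstacle.} The hard point is Steps 2--3 together. The amplitude $\phi/\sqrt x$ is generally not the positive square root of $\tilde\rho$, but the hypothesis is stated through $\sqrt{\tilde\rho}$. I would try to resolve this in either of two ways. One is to use that $\mathcal Q$ evaluated on $\sqrt{\tilde\rho}$ is dominated by the same form evaluated on any amplitude $v$ with $vv^\dagger=\tilde\rho$. The other is to note that the hypothesis, as produced in Theorem~\ref{ent:buffered} through Lemma~\ref{lem:superposition}, already holds for arbitrary amplitudes, and to invoke it in that form. If neither works, I would fall back on proving the Rothaus inequality directly for the positive part, accepting a loss in the numerical constant.
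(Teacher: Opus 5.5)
\paragraph{The gap: Step~1.} The step that fails is the pointwise comparison $\EP(\rho)\ge2\mathcal Q(\rho)$, which you treat as bookkeeping. It is not bookkeeping. With $\mathcal Q$ built from $\Gamma_\sigma^{-1/2}(\sqrt\rho)$, this is an $L_1$-regularity statement (the $p=1$ case of strong $L_p$-regularity in the sense of \cite{kastoryano-temme2013}, with a weaker constant): it says that an $L_2$ log-Sobolev inequality yields an MLSI at the corresponding rate.

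The log-mean versus square-root-mean argument you sketch works classically and under GNS detailed balance. There $\cL$ commutes with $\Delta_\sigma$ and has a derivation representation, so $\partial_j\log\rho$ and $\partial_j\sqrt\rho$ are double operator integrals over the same pair of operators and the scalar inequality applies. This is also the only situation in which the paper proves such a bound: $\EP\ge4\mathcal Q$ for the conditional-expectation updates in Appendix~\ref{app:rhb-direct}.

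The obstruction is visible already at second order. Write $\rho=\sigma+\epsilon\,\Gamma_\sigma^{1/2}(u)$ and $\widehat{\mathcal L}=\Gamma_\sigma^{1/2}\cL\Gamma_\sigma^{-1/2}$. The $\epsilon^2$ coefficient of $\EP$ is $\langle f(\Delta_\sigma)u,-\widehat{\mathcal L}u\rangle$, while that of $\mathcal Q$ is $\tfrac14\langle g(\Delta_\sigma)u,-\widehat{\mathcal L}\,g(\Delta_\sigma)u\rangle$. These involve different functions of the modular operator, $f(r)=\sqrt r\log r/(r-1)$ and $g(r)=2r^{1/4}/(1+\sqrt r)$. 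The scalar inequality can be applied directly only when $\widehat{\mathcal L}$ commutes with $\Delta_\sigma$, and this fails in general for a merely KMS-symmetric generator such as the Petz heat-bath one.

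The paper says explicitly that no $L_p$-regularity is used, and the form of \eqref{eq:tightening} reflects this. If $\EP\ge2\mathcal Q$ held in general, the hypothesis with $b=0$ would give $D\le\frac C2\EP$ with no gap at all. The theorem instead keeps a $1/\gamma$ term even then.

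\paragraph{What survives, and how the paper closes the argument.} Steps~2--3 are sound once stated correctly, and with the gap they yield a defect-free bound $D\le(C+\frac{2+b}{\gamma})\mathcal Q$.
\begin{itemize}
\item Your Rothaus inequality follows by differentiating the centering inequality \eqref{app:centering} at $p=2^+$, since $\frac{d}{dp}\|X\|_{p,\sigma}^2\big|_{p=2}=\Ent_{2,\sigma}(X)$.
\item In Step~3, domination by an \emph{arbitrary} amplitude is false. For the reset generator, $-\widehat{\mathcal L}=1-|\sigma^{1/2}\rangle\langle\sigma^{1/2}|$, a suitable twist $\sqrt{\tilde\rho}\,U$ lowers the form whenever $\tilde\rho$ and $\sigma$ do not commute.
\item You only need domination for the self-adjoint $\phi=\sqrt\rho-c\sigma^{1/2}$. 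Then $\sqrt{\tilde\rho}=|\phi|/\sqrt x$, and $\langle|\phi|,-\widehat{\mathcal L}|\phi|\rangle\le\langle\phi,-\widehat{\mathcal L}\phi\rangle$. This holds because $\Gamma_\sigma^{1/2}P_t\Gamma_\sigma^{-1/2}$ is positivity preserving, so the cross terms $\Tr[\phi_\pm\,\Gamma_\sigma^{1/2}P_t\Gamma_\sigma^{-1/2}(\phi_\mp)]$ are nonnegative.
\end{itemize}

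What is still missing is the passage from a $\mathcal Q$-bound to an $\EP$-bound. The paper does this without any pointwise comparison.
\begin{itemize}
\item It uses the $p=2$ inequality only to get $\|P_t\|_{2\to2+2\alpha t,\sigma}\le e^{\delta t+o(t)}$.
\item It upgrades this, via the gap, interpolation and centering, to $\|P_t\|_{2\to2+2ct,\sigma}\le1$ for small $t$ and any $c<\alpha\gamma/(\alpha+\delta+\gamma)$.
\item It then dualises and interpolates towards $L_1$ to obtain $D(P_{t*}\rho\Vert\sigma)\le(1+ct)^{-1}D(\rho\Vert\sigma)$.
\end{itemize}
The matching factor in \eqref{eq:tightening} thus comes from the hypercontractive exponent and the interpolation step, not from $\EP\ge2\mathcal Q$.

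If you run your tightened inequality through this machinery, the gap is used twice and you only get the constant $\frac C2+\frac{4+b}{2\gamma}$. To reach the stated constant, feed the defective inequality, with $\alpha=2/C$ and $\delta=b/C$, into the machinery directly.
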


We separate the proof into three lemmas: the first converts the entropy
hypothesis into a defective norm bound, the second removes the norm
defect using the KMS gap, and the third converts norm contraction into
entropy decay. This result is of independent interest and does not necessarily require $\cL$ to be the heat-bath or regularised heat-bath generator. Indeed, throughout $\cL$ is a primitive KMS-symmetric Markov
generator with faithful invariant state $\sigma$.

We write $P_t=e^{t\cL}$ and, for $1\le q<\infty$, we set
\[
 \|X\|_{q,\sigma}
   :=\|\Gamma_\sigma^{1/q}(X)\|_q,
 \qquad E_\sigma X:=\Tr[\sigma X]\one.
\]
These are the usual weighted Schatten norms.  Their interpolation and
duality, and the norm centering inequality used below, are recalled in
\cite[Section 2]{norms}.  In the notation of Section~\ref{pre:quadratic-objects}, we set
$\mathcal E_{\mathrm{KMS}}(X)=-\langle X,\cL(X)\rangle_{\mathrm{KMS}}$.
For positive $X$, we use the $L_2$ relative entropy of
\cite[Definition~3]{kastoryano-temme2013}:
\begin{align*}
 \Ent_{2,\sigma}(X)
 &=\frac12\Tr\!\left[(\Gamma_\sigma^{1/2}(X))^2
 \left(\log[(\Gamma_\sigma^{1/2}(X))^2]-\log\sigma\right)\right]\\
 &\quad-\frac12\|X\|_{2,\sigma}^{2}\log\|X\|_{2,\sigma}^{2}.
\end{align*}

\subsection{From an entropy defect to a norm bound}

We now express the relative entropy on the left-hand side of
\eqref{eq:tightening} in the Heisenberg picture through
$\Ent_{2,\sigma}$ and use the following lemma to begin the norm
estimates in that picture, before returning to the Schr\"odinger
picture to obtain the desired entropy bound.

\begin{lemma}
\label{tightening:defective-norm}
Suppose that, for some $\alpha>0$ and $\delta\geq0$,
\begin{equation}
 \alpha\Ent_{2,\sigma}(X)
 \leq\mathcal E_{\mathrm{KMS}}(X)+\delta\|X\|_{2,\sigma}^2
 \qquad\text{for every positive }X.
 \label{app:ls2-defect}
\end{equation}
Then, with $q(t)=2+2\alpha t$, there is $M(t)\geq1$ such that
\[
 \|P_t\|_{2\to q(t),\sigma}\leq M(t),
 \qquad \log M(t)=\delta t+o(t)\quad(t\downarrow0).
\]
The remainder is uniform over inputs in each fixed finite system.
\end{lemma}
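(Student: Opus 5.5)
The plan is Gross's differential argument for hypercontractivity, run on the weighted $L_q$ norms, with the defect $\delta$ carried as an additive term in the logarithm of the norm. Because the system is finite-dimensional, every quantity below is smooth in $t$, so the argument can be made short-time and uniform.

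\textbf{Step 1: reduction to positive inputs.} It suffices to take $X>0$. For a general $X$ we use the standard modulus reduction for weighted norms of completely positive unital KMS-symmetric semigroups recalled in \cite[Section 2]{norms}. This reduction bounds $\|P_tX\|_{q,\sigma}$ by the norm of $P_t$ applied to a positive operator with the same $\|\cdot\|_{2,\sigma}$ norm. Any constant it introduces is harmless, since the conclusion only asks for some $M(t)\ge1$ with the stated short-time logarithm; in finite dimension it can be absorbed.

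\textbf{Step 2: the differential inequality.} Fix $X>0$ and set $X_t=P_tX$, which stays positive. Let $q(t)=2+2\alpha t$ and $N(t)=\log\|X_t\|_{q(t),\sigma}$. Differentiate $\|X_t\|_{q,\sigma}^q=\Tr[(\Gamma_\sigma^{1/q}X_t)^q]$, exactly as in Olkiewicz--Zegarli\'nski and \cite{kastoryano-temme2013}. This gives
\[
 \frac{d}{dt}N(t)=\frac{\dot q}{q^2\|X_t\|_{q,\sigma}^q}\,q\,\Ent_{q,\sigma}(X_t)-\frac{1}{\|X_t\|_{q,\sigma}^q}\,\mathcal E_{q}(X_t),
\]
where $\mathcal E_q$ is the $L_q$ Dirichlet form.

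\textbf{Step 3: closing the inequality.} We need to compare $\Ent_{q,\sigma}$ and $\mathcal E_q$ with $\Ent_{2,\sigma}$ and $\mathcal E_{\mathrm{KMS}}$ evaluated at the power $I_{q,2}(X_t)=\Gamma_\sigma^{-1/2}\bigl((\Gamma_\sigma^{1/q}X_t)^{q/2}\bigr)$. Two facts are used here.
\begin{itemize}
\item The entropy identity $\Ent_{q,\sigma}(X)=\tfrac{2}{q}\Ent_{2,\sigma}(I_{q,2}X)$ holds.
\item The Stroock--Varopoulos type inequality $\mathcal E_q(X)\ge\frac{4(q-1)}{q^2}\,\mathcal E_{\mathrm{KMS}}(I_{q,2}X)$ holds for KMS-symmetric generators. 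Only the weak form needed as $q\downarrow2$ is required, with coefficient $\to1$.
\end{itemize}
Inserting hypothesis \eqref{app:ls2-defect} for $I_{q,2}X_t$, and using $\dot q=2\alpha$ and $\|I_{q,2}X_t\|_{2,\sigma}^2=\|X_t\|_{q,\sigma}^q$, gives
\[
 \frac{d}{dt}N(t)\le \frac{2\alpha}{q}\cdot\frac{\mathcal E_{\mathrm{KMS}}+\delta\|\cdot\|^2}{\alpha\|\cdot\|^2}-\frac{4(q-1)}{q^2}\cdot\frac{\mathcal E_{\mathrm{KMS}}}{\|\cdot\|^2}.
\]
The gap between the Dirichlet coefficients is $\frac{2}{q}-\frac{4(q-1)}{q^2}=\frac{2(2-q)}{q^2}\le0$, so the Dirichlet terms cancel with a nonpositive remainder. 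What survives is $\frac{d}{dt}N\le 2\delta/q(t)\le\delta$. Integrating gives $\|P_tX\|_{q(t),\sigma}\le e^{\delta t}\|X\|_{2,\sigma}$.

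I therefore expect $M(t)=e^{\delta t}$, up to the reduction constant from Step 1. For that constant, only the $o(t)$ statement is needed, and the uniformity comes from compactness of the unit sphere in finite dimension.

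\textbf{Main obstacle.} The delicate step is the $L_q$ Stroock--Varopoulos comparison in the noncommutative KMS setting. For Petz-type non-GNS-symmetric generators, the standard proofs require either GNS symmetry or a regularity assumption, and the paper explicitly says none is used. What I would try first is to avoid a full $L_q$ regularity inequality. Since only $q$ near $2$ matters and the claim is asymptotic, I would expand both $\Ent_{q,\sigma}$ and $\mathcal E_q$ to first order in $q-2$ around the quadratic case, where they coincide with $\Ent_{2,\sigma}$ and $\mathcal E_{\mathrm{KMS}}$ exactly. I would then bound the first-order corrections uniformly, using smoothness in finite dimension. Any positive multiple of $t$ produced this way would break the claimed exact slope $\delta$, so the first-order correction of the Dirichlet part has to be shown to have a favourable sign. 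That sign should come from the same coefficient comparison as in Step 3, through the operator convexity estimates of \cite[Proposition 2.2]{norms}.
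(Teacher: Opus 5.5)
Your proposal has two genuine gaps, one in each half.

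\textbf{Step~1 is wrong as justified.} The conclusion $\log M(t)=\delta t+o(t)$ forces $M(t)\to1$ as $t\downarrow0$. A fixed factor $c>1$ lost in passing from positive to general inputs (for instance by splitting $X$ into four positive parts) therefore yields $\log M(t)=\log c+\delta t+o(t)$, which is not of the required form. Finite dimension and compactness cannot turn a constant into $e^{o(t)}$. The loss would also break the subsequent defect-removal lemma, whose coefficient $(p(t)-1)e^{-2\gamma(1-\theta(t))t}M(t)^{2\theta(t)}$ must drop below one as $t\downarrow0$.

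You need a reduction with constant exactly one. The paper gets it from complete positivity of $\Phi_t=\Gamma_\sigma^{1/q}\circ P_t\circ\Gamma_\sigma^{-1/2}$. Applying $\Phi_t$ to the positive block matrix $\begin{pmatrix}|Z^\dagger|&Z\\ Z^\dagger&|Z|\end{pmatrix}$ gives $\Phi_t(Z)=A^{1/2}KB^{1/2}$, with $K$ a contraction, $A=\Phi_t(|Z^\dagger|)$ and $B=\Phi_t(|Z|)$. H\"older then yields $\|\Phi_t(Z)\|_q\le\|A\|_q^{1/2}\|B\|_q^{1/2}\le N_+(t)\|Z\|_2$, where $N_+(t)$ is the norm over positive unit inputs. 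Hence the full $2\to q(t)$ norm equals $N_+(t)$.

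\textbf{Step~3 rests on an unavailable inequality.} It uses $\mathcal E_q(X)\ge\frac{4(q-1)}{q^2}\mathcal E_{\mathrm{KMS}}(I_{q,2}X)$, which is not established for generators that are only KMS symmetric, such as the Petz heat-bath sampler. So your main route does not close. You flag this, but your fallback misidentifies what is required. The lemma is purely asymptotic, so no differential inequality at $q>2$ is needed, and no sign condition on first-order corrections arises. Corrections of order $q-2\asymp t$ in $\tfrac{d}{dt}N$ integrate to $O(t^2)=o(t)$ and leave the slope untouched.

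It suffices to compute the right derivative of $\log\|P_tX\|_{q(t),\sigma}$ at $t=0$, where $q=2$. Take positive $X$ with $\|X\|_{2,\sigma}=1$ and set $Y_t=\Gamma_\sigma^{1/q(t)}(P_tX)$. Then $Y_0'=\Gamma_\sigma^{1/2}(\mathcal L X)-\tfrac{\alpha}{4}\bigl((\log\sigma)Y_0+Y_0\log\sigma\bigr)$, and differentiating $\tfrac{1}{q(t)}\log\Tr\bigl[Y_t^{q(t)}\bigr]$ gives exactly $\alpha\Ent_{2,\sigma}(X)-\mathcal E_{\mathrm{KMS}}(X)\le\delta$. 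This is what your own Step~2 formula gives at $t=0$; no $\Ent_q$, $\mathcal E_q$ or $I_{q,2}$ enters.

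The only analytic point is that the remainder must be uniform over the compact positive unit sphere, including singular $X$. So restricting to $X>0$ needs a density remark. The paper obtains this uniformity from the scalar expansion $x^{2+h}=x^2+hx^2\log x+O(h^2)$, uniform on bounded intervals. The result is $\log\|P_tX\|_{q(t),\sigma}\le\delta t+r(t)$ with $r(t)=o(t)$ independent of $X$, and then $M(t)=e^{\delta t+r(t)}$ after the lossless reduction. Your expected global bound $M(t)=e^{\delta t}$ is stronger than the lemma, and it is precisely what would need the unavailable $L_q$ comparison.
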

\begin{proof}
For positive $X$ with
$\|X\|_{2,\sigma}=1$, differentiation gives
\begin{equation}
 \left.\frac d{dt}\log\|P_tX\|_{q(t),\sigma}\right|_{t=0^+}
       =\alpha\Ent_{2,\sigma}(X)-\mathcal E_{\mathrm{KMS}}(X)\le\delta.
 \label{app:norm-derivative}
\end{equation}
Here is a justification that also covers singular inputs and the
supremum over $X$.  The positive unit $L_2(\sigma)$ sphere is compact.
For $Y_t=\Gamma_\sigma^{1/q(t)}(P_tX)$,
\[
 Y'_0=\Gamma_\sigma^{1/2}(\cL(X))
       -\frac\alpha4\{(\log\sigma)Y_0+Y_0\log\sigma\}.
\]
On every bounded scalar interval $x\ge0$, the expansion
$x^{2+h}=x^2+h x^2\log x+O(h^2)$ is uniform for $h\downarrow0$:
$x^{2+h}(\log x)^2$ stays bounded, with its value at zero defined
by continuity.  Together with continuity of $x^2\log x$, this proves
a uniform first-order expansion of $\Tr[Y_t^{q(t)}]$ and hence
\eqref{app:norm-derivative} uniformly on that sphere.

We first take the supremum over positive inputs.  Since the remainder
in the preceding expansion is uniform, there is a function
$r(t)\geq0$ with $r(t)=o(t)$ such that
\[
 \log\|P_tX\|_{q(t),\sigma}\leq\delta t+r(t)
 \qquad\text{for every }X\geq0\text{ with }\|X\|_{2,\sigma}=1.
\]
Thus, if we denote the norm restricted to positive inputs by
\[
 N_+(t):=\sup_{\substack{X\geq0\\\|X\|_{2,\sigma}=1}}
                  \|P_tX\|_{q(t),\sigma},
\]
then $N_+(t)\leq e^{\delta t+r(t)}$.

We next show that the same bound holds for arbitrary inputs.
For fixed $t$, we write $q=q(t)$ and pass to the ordinary Schatten
norms using the map
\[
 \Phi_t:=\Gamma_\sigma^{1/q}\circ P_t\circ\Gamma_\sigma^{-1/2}.
\]
Each factor is completely positive.  Moreover,
$\|\Phi_t\|_{2\to q}=\|P_t\|_{2\to q,\sigma}$, and
\[
 \|\Phi_t(B)\|_q\leq N_+(t)\|B\|_2
 \qquad(B\geq0),
\]
because $X=\Gamma_\sigma^{-1/2}(B)$ is positive and
$\|X\|_{2,\sigma}=\|B\|_2$.
For an arbitrary matrix $Z$, the polar decomposition gives
\[
 \begin{pmatrix}|Z^\dagger|&Z\\ Z^\dagger&|Z|\end{pmatrix}\geq0.
\]
Complete positivity therefore implies
\[
 \begin{pmatrix}
  \Phi_t(|Z^\dagger|)&\Phi_t(Z)\\
  \Phi_t(Z)^\dagger&\Phi_t(|Z|)
 \end{pmatrix}\geq0.
\]
Writing $A=\Phi_t(|Z^\dagger|)$ and $B=\Phi_t(|Z|)$, positivity
of this block matrix gives $\Phi_t(Z)=A^{1/2}KB^{1/2}$ for a
contraction $K$.  Schatten H\"older and the positive-input bound yield
\begin{align*}
 \|\Phi_t(Z)\|_q
 &\leq\|A^{1/2}\|_{2q}\|K\|_\infty\|B^{1/2}\|_{2q}\\
 &\leq\|A\|_q^{1/2}\|B\|_q^{1/2}\\
 &\leq N_+(t)\,
       \bigl\||Z^\dagger|\bigr\|_2^{1/2}
       \bigl\||Z|\bigr\|_2^{1/2}
 =N_+(t)\|Z\|_2.
\end{align*}
The last equality uses that $Z$, $|Z|$, and $|Z^\dagger|$ have the
same singular values.  Taking the supremum over arbitrary $Z$ shows
that the full induced norm equals $N_+(t)$.  Consequently, choosing
$M(t):=e^{\delta t+r(t)}$ gives
\begin{equation}
 \|P_t\|_{2\to q(t),\sigma}
       \le M(t),\qquad \log M(t)=\delta t+o(t),\quad M(t)\ge1.
 \label{app:small-hb}
\end{equation}
All remainders here are uniform within each fixed finite system;
their uniformity in system size is not required.
\end{proof}

\subsection{Removing the norm defect using the gap}

We next remove the factor $M(t)$ from the preceding norm bound.
We separate each observable into its equilibrium mean, which the
semigroup preserves, and a centred part, which decays at the rate
given by the spectral gap. By allowing the output exponent to grow
more slowly, we use this decay to compensate for the defect and
obtain a norm bound with constant one.

\begin{lemma}
\label{tightening:remove-norm-defect}
Suppose $P_t$ has KMS gap $\gamma>0$ and satisfies
\[
 \|P_t\|_{2\to 2+2\alpha t,\sigma}\leq M(t),
 \qquad \log M(t)=\delta t+o(t)\quad(t\downarrow0),
\]
for some $\alpha>0$ and $\delta\geq0$.
For every
\[
 0<c<\frac{\alpha\gamma}{\alpha+\delta+\gamma},
\]
we then have $\|P_t\|_{2\to 2+2ct,\sigma}\leq1$ for all sufficiently
small $t>0$.
\end{lemma}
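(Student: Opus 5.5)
We follow the centering argument of \cite[Lemma~2.3]{norms}. Fix $X$ and write $X=E_\sigma X+X_0$, where $X_0=X-E_\sigma X$ is centred. The semigroup preserves the mean, so $P_tX=E_\sigma X+P_tX_0$. By KMS symmetry and the gap, $\|P_sX_0\|_{2,\sigma}\le e^{-\gamma s}\|X_0\|_{2,\sigma}$. The idea is to split $t=s+u$. We spend the time $s$ contracting the centred part in $L_2$. We then apply the defective hypercontractive bound on the time $u$, choosing the output exponent so that $2+2\alpha u=2+2ct$. This gives $u=ct/\alpha$ and $s=t(1-c/\alpha)$.

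The key estimate is the norm centering inequality recalled in \cite[Section~2]{norms}: for $q\ge2$ close to $2$,
\[
 \|E_\sigma X+Y\|_{q,\sigma}^2\le |\Tr[\sigma X]|^2+(q-1)\|Y\|_{q,\sigma}^2 .
\]
This holds up to the precise form used there; in the primitive weighted setting it is a consequence of uniform $2$-convexity/Ball--Carlen--Lieb type bounds. We apply it with $q=2+2ct$ and $Y=P_uP_sX_0$. Using $\|P_u\|_{2\to q}\le M(u)$ and the gap, we obtain
\[
 \|P_tX\|_{q,\sigma}^2\le|\Tr[\sigma X]|^2+(1+2ct)M(u)^2e^{-2\gamma s}\|X_0\|_{2,\sigma}^2 .
\]
Since $\|X\|_{2,\sigma}^2=|\Tr[\sigma X]|^2+\|X_0\|_{2,\sigma}^2$, it suffices to show $(1+2ct)M(ct/\alpha)^2e^{-2\gamma t(1-c/\alpha)}\le1$ for small $t$.

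Taking logarithms, the exponent equals
\[
 2t\bigl[c+\delta c/\alpha-\gamma(1-c/\alpha)\bigr]+o(t),
\]
using $\log M(u)=\delta u+o(u)$. This is negative for small $t$ exactly when $c(\alpha+\delta+\gamma)<\alpha\gamma$, which is the stated threshold. Complex or non-normal $X$ should be handled either because the centering inequality is stated for arbitrary $X$ in \cite{norms}, or by the same block-matrix reduction to positive inputs used in Lemma~\ref{tightening:defective-norm}.

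The main obstacle is the centering inequality itself. It must hold with constant $q-1$, or at least $1+O(q-2)$, multiplying the centred part, and at the level of the weighted norms $\|\cdot\|_{q,\sigma}$ for noncommuting $\sigma$. The mean and the centred part do not commute, so the scalar two-point inequality does not apply directly. I would first try to import it verbatim from \cite[Lemma~2.3]{norms}. Failing that, I would prove the first-order version: a derivative bound of $\|E_\sigma X+Y\|_{q,\sigma}^2$ in $q$ at $q=2$, using orthogonality $\Tr[\Gamma_\sigma^{1/2}(\one)\Gamma_\sigma^{1/2}(Y)]=0$. Only first order in $t$ matters, and the $o(t)$ slack absorbs the second-order terms.
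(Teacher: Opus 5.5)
Your proof is correct and reaches exactly the paper's threshold, but it combines the gap with the defective bound in a different way.

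The paper works with the single operator $P_t$. For centred $Z$ it uses $\|P_tZ\|_{2,\sigma}\le e^{-\gamma t}\|Z\|_{2,\sigma}$ and $\|P_tZ\|_{q(t),\sigma}\le M(t)\|Z\|_{2,\sigma}$ at the same time $t$. It then interpolates between the two output norms, $\|P_tZ\|_{p,\sigma}\le\|P_tZ\|_{2,\sigma}^{1-\theta}\|P_tZ\|_{q,\sigma}^{\theta}$, with $\theta(t)=c/\alpha+O(t)$.

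You instead split the time, $P_t=P_uP_s$. You spend $s=(1-c/\alpha)t$ on $L_2$ contraction of the centred part and $u=ct/\alpha$ on the defective bound, with $u$ tuned so that $2+2\alpha u$ is exactly the target exponent $2+2ct$. The paper's interpolation parameter $\theta$ plays precisely the role of your time fraction $u/t$. This is why the two coefficients, $(p-1)e^{-2\gamma(1-\theta)t}M(t)^{2\theta}$ and $(1+2ct)e^{-2\gamma s}M(u)^2$, have the same first-order exponent $2t\bigl[c+(c/\alpha)(\gamma+\delta)-\gamma\bigr]$.

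Your route is more elementary. It uses only the semigroup law, preservation of the mean, and the two operator-norm bounds. It avoids the log-convexity of the weighted scale $\|\Gamma_\sigma^{1/p}(\cdot)\|_p$, whose weight changes with $p$ and so relies on the complex-interpolation structure rather than plain H\"older. It also avoids tracking $\theta(t)$. Conversely, the paper's route needs only the two bounds for the single map $P_t$, not the semigroup structure, at this step.

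The step you flag as the main obstacle is handled in the paper exactly as in your first plan. The centering inequality \eqref{app:centering} holds for every $p\ge2$ and every complex, non-normal observable. It is the Ball--Carlen--Lieb $2$-uniform smoothness inequality evaluated at $a\sigma^{1/p}$ in the direction $\Gamma_\sigma^{1/p}(Z)$, whose first-order term vanishes because $\Tr[\sigma Z]=0$; equivalently it is \cite[Lemma 2.3]{norms}. Importing it verbatim therefore suffices, and neither the block-matrix reduction nor the first-order fallback is needed. The fallback would in any case require a remainder of size $o(t)\|Y\|_{q,\sigma}^2$ uniformly in $a$. That is more than a derivative at $q=2$ supplies.
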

\begin{proof}
We write $q(t)=2+2\alpha t$.
KMS symmetry and the gap give
\[
 \|P_tZ\|_{2,\sigma}\le r(t)\|Z\|_{2,\sigma},
 \qquad r(t)=e^{-\gamma t},\qquad E_\sigma Z=0.
\]
We also use the weighted norm centering inequality
\begin{equation}
 \|a\one+Z\|_{p,\sigma}^2
       \le |a|^2+(p-1)\|Z\|_{p,\sigma}^2,
 \qquad E_\sigma Z=0,\quad p\ge2.
 \label{app:centering}
\end{equation}
This follows from the Schatten uniform-smoothness inequality of
\cite{BallCarlenLieb1994}, evaluated at $a\sigma^{1/p}$ in the direction
$\Gamma_\sigma^{1/p}(Z)$; its linear term vanishes because
$\Tr[\sigma Z]=0$.  Equivalently it is
\cite[Lemma 2.3, equation (2.15)]{norms}.

We choose any
\[
 0<c<\frac{\alpha\gamma}{\alpha+\delta+\gamma},
 \qquad p(t)=2+2ct.
\]
Since $c<\alpha$, we have $2<p(t)<q(t)$. For centred $Z$, the
gap controls the output in $L_2(\sigma)$, while the hypothesis
controls it in $L_{q(t)}(\sigma)$. Interpolating these two bounds gives
\begin{align*}
 \|P_tZ\|_{p(t),\sigma}
 &\le\|P_tZ\|_{2,\sigma}^{1-\theta(t)}
       \|P_tZ\|_{q(t),\sigma}^{\theta(t)}\\
 &\le r(t)^{1-\theta(t)}M(t)^{\theta(t)}\|Z\|_{2,\sigma},
\end{align*}
where $1/p(t)=(1-\theta(t))/2+\theta(t)/q(t)$, so
\[
 \theta(t)=\frac{1/2-1/p(t)}{1/2-1/q(t)}
          =\frac c\alpha+O(t).
\]
Using $\log(p(t)-1)=2ct+O(t^2)$, $\log r(t)=-\gamma t$, and
$\log M(t)=\delta t+o(t)$, we find that the coefficient needed in
\eqref{app:centering} is strictly below one for sufficiently small $t>0$:
\[
 \log\!\left[(p(t)-1)r(t)^{2(1-\theta(t))}
                         M(t)^{2\theta(t)}\right]
 =2\left[c-\gamma+\frac c\alpha(\gamma+\delta)\right]t+o(t)<0.
\]
The strict negativity follows from
$c(\alpha+\delta+\gamma)<\alpha\gamma$.

For an arbitrary observable $X$, we now set $a=\Tr[\sigma X]$ and
$Z=X-a\one$. Then $E_\sigma Z=0$, and orthogonality gives
$\|X\|_{2,\sigma}^2=|a|^2+\|Z\|_{2,\sigma}^2$.
Unitality and invariance of $\sigma$ imply
$P_tX=a\one+P_tZ$ and $E_\sigma(P_tZ)=0$. We may therefore apply
\eqref{app:centering} to obtain
\begin{align*}
 \|P_tX\|_{p(t),\sigma}^2
 &\le |a|^2+(p(t)-1)\|P_tZ\|_{p(t),\sigma}^2\\
 &\le |a|^2+(p(t)-1)r(t)^{2(1-\theta(t))}
                   M(t)^{2\theta(t)}\|Z\|_{2,\sigma}^2\\
 &\le |a|^2+\|Z\|_{2,\sigma}^2
 =\|X\|_{2,\sigma}^2.
\end{align*}
Taking the supremum over $\|X\|_{2,\sigma}=1$ proves
\begin{equation}
             \|P_t\|_{2\to p(t),\sigma}\le1
             \qquad\text{for all sufficiently small }t>0.
 \label{app:small-hc}
\end{equation}
\end{proof}

\subsection{From norm contraction to entropy decay}

We now return to the Schr\"odinger picture. The following lemma
translates the norm contraction for observables into a contraction
of relative entropy for states, using duality and interpolation
towards exponent one. Applied to the semigroup at small times,
this gives the entropy-production inequality needed to conclude MLSI.

\begin{lemma}
\label{tightening:norm-to-entropy}
Let $\Phi$ be a unital completely positive map with
$\Phi_*(\sigma)=\sigma$. If $\|\Phi\|_{2\to p,\sigma}\leq1$ for
some $p>2$, then every state $\rho$ satisfies
\[
 D(\Phi_*\rho\Vert\sigma)\leq\frac2pD(\rho\Vert\sigma).
\]
Consequently, if $\|P_t\|_{2\to 2+2ct,\sigma}\leq1$ for all
sufficiently small $t>0$ and some $c>0$, then
\[
 \EP(\rho)\geq cD(\rho\Vert\sigma),\qquad
 D(P_{t*}\rho\Vert\sigma)\leq e^{-ct}D(\rho\Vert\sigma)
 \quad(t\geq0).
\]
\end{lemma}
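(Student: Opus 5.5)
The plan is to prove the first claim by duality, moving from the $2\to p$ norm bound on $\Phi$ to a $p'\to 2$ bound on $\Phi_*$. We then interpolate towards exponent one, where the weighted norm of a state is trivially one, and differentiate there to obtain relative entropy. For a state $\rho$, set $X_\rho:=\Gamma_\sigma^{-1}(\rho)$, the relative density. In the weighted norms $\|\cdot\|_{q,\sigma}=\|\Gamma_\sigma^{1/q}(\cdot)\|_q$, one has $\|X_\rho\|_{1,\sigma}=\Tr\rho=1$. Since $\Phi$ is KMS-compatible in the sense that $\Phi_*\Gamma_\sigma=\Gamma_\sigma\hat\Phi$ for the KMS adjoint $\hat\Phi$ of $\Phi$, the Schrödinger map on relative densities is $\hat\Phi$. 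Weighted $L_q$ duality, recalled in \cite[Section 2]{norms}, pairs $L_q(\sigma)$ with $L_{q'}(\sigma)$ through the KMS pairing. It therefore gives $\|\hat\Phi\|_{p'\to2,\sigma}=\|\Phi\|_{2\to p,\sigma}\le1$ with $p'=p/(p-1)$. In addition, $\hat\Phi$ is unital and trace-preserving with respect to $\sigma$, so $\|\hat\Phi\|_{1\to1,\sigma}\le1$ on positive inputs.

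The key step is a family of Riesz--Thorin-type bounds for positive inputs, obtained from the complex interpolation theory of the weighted norms, as in \cite[Proposition 2.2]{norms}. For $q\in[1,p']$, interpolating between the $1\to1$ and $p'\to2$ contractions gives $\|\hat\Phi X\|_{s(q),\sigma}\le\|X\|_{q,\sigma}$, where $1/s(q)$ is the corresponding convex combination of $1$ and $1/2$. Both sides equal one at $q=1$ when $X=X_\rho$. Differentiating in $q$ at $q=1^+$, and using $\frac{d}{dq}\|X_\rho\|_{q,\sigma}|_{q=1}=D(\rho\Vert\sigma)$, yields
\[
s'(1)\,D(\Phi_*\rho\Vert\sigma)\le D(\rho\Vert\sigma).
\]
This identity is the standard derivative of the sandwiched (or KMS-weighted) norm, which recovers Umegaki's entropy at exponent one. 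From $\theta$ determined by $1/q=1-\theta+\theta/p'$, we get $d\theta/dq|_{1}=p'$ up to the sign conventions. The output satisfies $1/s=1-\theta/2$, giving $s'(1)=p'/2\cdot$ … . I expect the exponents to combine to $s'(1)=p/2$, which is the claimed constant $2/p$; I would verify this bookkeeping carefully.

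For the consequence, apply the first claim to $\Phi=P_t$ with $p=2+2ct$. This gives $D(P_{t*}\rho\Vert\sigma)\le(1+ct)^{-1}D(\rho\Vert\sigma)$ for small $t$. Subtracting $D(\rho\Vert\sigma)$, dividing by $t$ and letting $t\downarrow0$ gives $\EP(\rho)\ge cD(\rho\Vert\sigma)$, by the derivative formula \eqref{eq:ep} for faithful $\rho$, extended to general $\rho$ by approximation. Since $P_{s*}\rho$ is again a state, the same inequality holds along the trajectory, so $\frac{d}{ds}D(P_{s*}\rho\Vert\sigma)\le-cD(P_{s*}\rho\Vert\sigma)$. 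Grönwall then gives the exponential decay.

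The main obstacle is the derivative step at $q=1$. The weighted norm is not differentiable for singular $\rho$, and the interpolation must be run over positive inputs only. I would first prove the bound for faithful $\rho$, where $q\mapsto\|X_\rho\|_{q,\sigma}$ is smooth near $1$ and the inequality of two functions that agree at $q=1$ passes to their right derivatives. I would then extend to singular states by lower semicontinuity of $D$ and continuity of $\Phi_*$.
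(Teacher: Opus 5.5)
Your proposal is correct and follows the paper's proof essentially step for step. Duality gives the $p'\to2$ contraction of the KMS adjoint $\Gamma_\sigma^{-1}\Phi_*\Gamma_\sigma$, and trace-norm contractivity gives the $1\to1$ endpoint. One then interpolates, differentiates at exponent one using $\frac{d}{du}\log\|\Gamma_\sigma^{-1}(\rho)\|_{u,\sigma}|_{u=1}=D(\rho\Vert\sigma)$, and obtains the consequence by differentiating at $t=0$ and applying Gr\"onwall.

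Three small points on the details you left open:
\begin{itemize}
\item From $1/q=1-\theta/p$ one gets $d\theta/dq|_{q=1}=p$ (not $p'$). Together with $ds/d\theta|_{\theta=0}=1/2$, this indeed gives $s'(1)=p/2$.
\item The $1\to1$ bound holds for all inputs, not just positive ones, which is what standard complex interpolation needs.
\item For the output derivative, note that faithful $\rho$ gives $\rho\ge\varepsilon\sigma$, hence $\Phi_*\rho\ge\varepsilon\sigma$.
\end{itemize}
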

\begin{proof}
We set $p'=p/(p-1)$ and write the KMS adjoint as
\[
 \Phi^\#=\Gamma_\sigma^{-1}\Phi_*\Gamma_\sigma.
\]
Weighted H\"older duality gives
\begin{align*}
 \|\Phi^\#(X)\|_{2,\sigma}
 &=\sup_{\|Y\|_{2,\sigma}=1}
       |\langle\Phi^\#(X),Y\rangle_{\mathrm{KMS}}|\\
 &=\sup_{\|Y\|_{2,\sigma}=1}
       |\langle X,\Phi(Y)\rangle_{\mathrm{KMS}}|
 \le\|X\|_{p',\sigma}.
\end{align*}
At the other endpoint, $\Gamma_\sigma\Phi^\#=\Phi_*\Gamma_\sigma$
and trace-norm contraction of $\Phi_*$ imply
\[
 \|\Phi^\#(X)\|_{1,\sigma}
 =\|\Phi_*(\Gamma_\sigma(X))\|_1
 \le\|\Gamma_\sigma(X)\|_1=\|X\|_{1,\sigma}.
\]
Interpolating these bounds, for $0\le s\le1$ we obtain
\[
 \|\Phi^\#(X)\|_{b_s,\sigma}\le\|X\|_{a_s,\sigma},
 \qquad
 \frac1{a_s}=1-s+\frac{s}{p'}=1-\frac{s}{p},\qquad
 \frac1{b_s}=1-s+\frac{s}{2}=1-\frac{s}{2}.
\]
In particular,
\[
 a_0=b_0=1,\qquad a'_0=\frac1p,\qquad b'_0=\frac12.
\]
For $f_\rho=\Gamma_\sigma^{-1}(\rho)$, we have
$\Phi^\#(f_\rho)=f_{\Phi_*\rho}$, so
\[
 \log\|f_{\Phi_*\rho}\|_{b_s,\sigma}
 \le\log\|f_\rho\|_{a_s,\sigma}.
\]
Both sides vanish at $s=0$.

For faithful $\rho$, we write
\[
 A(u)=\Gamma_\sigma^{1/u-1}(\rho),\qquad
 F(u)=\Tr[A(u)^u],\qquad
 \log\|f_\rho\|_{u,\sigma}=\frac1u\log F(u).
\]
At $u=1$ we have $A(1)=\rho$, $F(1)=1$, and
\[
 A'(1)=-\tfrac12\bigl((\log\sigma)\rho+\rho\log\sigma\bigr).
\]
Differentiating under the trace gives
\begin{align*}
 F'(1)
 &=\Tr[\rho\log\rho]+\Tr[A'(1)]\\
 &=\Tr[\rho\log\rho]-\Tr[\rho\log\sigma]
 =D(\rho\Vert\sigma).
\end{align*}
Consequently,
\[
 \left.\frac d{du}\log\|f_\rho\|_{u,\sigma}\right|_{u=1}
 =D(\rho\Vert\sigma).
\]
The same calculation applies to $\Phi_*\rho$: if $\rho\ge\varepsilon\sigma$
for some $\varepsilon>0$, positivity and invariance give
$\Phi_*\rho\ge\varepsilon\sigma$. Differentiating the interpolated
norm inequality at $s=0$ yields
\[
 \frac12D(\Phi_*\rho\Vert\sigma)
 \le\frac1pD(\rho\Vert\sigma).
\]

Approximation by $(1-\varepsilon)\rho+\varepsilon\sigma$ and
continuity of relative entropy extend the channel inequality to all states.

Applying this estimate with $\Phi=P_t$ and $p=2+2ct$ gives, for
all sufficiently small $t>0$,
\[
 D(P_{t*}\rho\Vert\sigma)\le\frac{1}{1+ct}D(\rho\Vert\sigma).
\]
Differentiating at $t=0^+$ and using \eqref{eq:ep} gives
$\EP(\rho)\ge cD(\rho\Vert\sigma)$. The admissible time interval
need not be uniform in the system size.

Applying this inequality along $\rho_t=P_{t*}\rho$ gives
\[
 \frac d{dt}D(\rho_t\Vert\sigma)
 =-\EP(\rho_t)\le-cD(\rho_t\Vert\sigma),\qquad
 D(\rho_t\Vert\sigma)\le e^{-ct}D(\rho\Vert\sigma).
\]
Finally, approximation by $(1-\varepsilon)\rho+\varepsilon\sigma$
extends the entropy contraction to all states. Its right derivative
at zero gives the same entropy-production inequality, with the usual
extended value allowed for $\EP$.
\end{proof}

\subsection{Proof of the entropy defect removal theorem}

\begin{proof}[Proof of Theorem~\ref{main:tightening}]
For positive nonzero $X$, we apply the assumed entropy bound to
\[
 \rho=\frac{(\Gamma_\sigma^{1/2}(X))^2}{\|X\|_{2,\sigma}^2}.
\]
The identity
$\sqrt\rho=\Gamma_\sigma^{1/2}(X)/\|X\|_{2,\sigma}$ gives
$\mathcal Q(\rho)=\mathcal E_{\mathrm{KMS}}(X)/\|X\|_{2,\sigma}^2$,
while the definition of $\Ent_{2,\sigma}$ gives
$D(\rho\Vert\sigma)=2\Ent_{2,\sigma}(X)/\|X\|_{2,\sigma}^2$.
Thus the hypothesis becomes
\[
 2\Ent_{2,\sigma}(X)
 \leq C\mathcal E_{\mathrm{KMS}}(X)+b\|X\|_{2,\sigma}^2.
\]
This is the Heisenberg-picture inequality \eqref{app:ls2-defect} with
\[
 \alpha=\frac2C,\qquad\delta=\frac bC.
\]
Lemma~\ref{tightening:defective-norm} therefore gives
\[
 \|P_t\|_{2\to 2+2\alpha t,\sigma}\leq M(t),
 \qquad \log M(t)=\delta t+o(t).
\]
Together with the KMS gap $\gamma$, this verifies the hypotheses of
Lemma~\ref{tightening:remove-norm-defect}. Hence, for every
$0<c<c_*$, we have $\|P_t\|_{2\to 2+2ct,\sigma}\leq1$ for
sufficiently small $t>0$, where
\[
 c_*:=\frac{\alpha\gamma}{\alpha+\delta+\gamma}
      =\frac{2\gamma}{2+b+C\gamma}.
\]
We now return to the Schr\"odinger picture using
Lemma~\ref{tightening:norm-to-entropy}, which gives
$\EP(\rho)\geq cD(\rho\Vert\sigma)$ for every state $\rho$.
Letting $c\uparrow c_*$ proves
\[
 D(\rho\Vert\sigma)
 \leq\frac1{c_*}\EP(\rho)
 =\left(\frac C2+\frac{2+b}{2\gamma}\right)\EP(\rho),
\]
as asserted.
\end{proof}
 %
\section{Proof of the global MLSI and rapid mixing}\label{sec:conclusion}
Our goal is to prove Theorem~\ref{thm:main}: for the heat-bath and
regularised heat-bath samplers, with one rate-one update per site,
\[
 D(\rho\Vert\sigma)\le C_\beta\log n\,\EP^a(\rho),
 \qquad a\in\{\mathrm{HB},\mathrm{RHB}\},\quad n\ge2,
\]
for every faithful state $\rho$, with $C_\beta$ independent of $n$.
We then deduce the entropy decay and rapid mixing of
Corollary~\ref{cor:mixing}, in particular
\[
 \max\{t_{{\rm mix},D}^a(\epsilon),t_{\rm mix}^a(\epsilon)\}
 \le C_\beta'\log n\,[\log n+\log\epsilon^{-1}],
 \qquad 0<\epsilon<1.
\]
In order to prove this, we first combine weak quasi-factorisation and the buffered
entropy--energy bound to obtain a bounded global entropy defect.
The uniform gap then removes this defect and yields the MLSI.
The transfer between the two samplers uses the quadratic-form order
\eqref{eq:energy-comparison}.

\subsection{A bounded global entropy defect}

We combine the weak quasi-factorisation of Theorem~\ref{stat:weak-at}
with the buffered entropy--energy estimate of Theorem~\ref{ent:buffered},
using the bounded overlap of the buffered intervals to obtain a global
bound with an entropy defect independent of $n$.

\begin{proposition}\label{global:defective}
For both $a\in\{\mathrm{HB},\mathrm{RHB}\}$ there are constants
$C_\beta,B_\beta$ independent of $n$ such that
\begin{equation}\label{eq:global-defective}
 D(\rho\Vert\sigma)
 \le C_\beta\log n\,\mathcal Q_\Lambda^a(\rho)+B_\beta.
\end{equation}
\end{proposition}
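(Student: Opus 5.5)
The plan is to combine Theorem~\ref{stat:weak-at} with Theorem~\ref{ent:buffered}, choosing the scales so that every additive error is $\mathcal O(1/n)$ per block. Let $L=\lceil K\log n\rceil$ with $K\ge\max\{K_\beta,1\}$, where $K_\beta$ is the constant of Theorem~\ref{stat:weak-at}. Then $L\ge L_\beta$, $L\ge\log n$, and the quasi-factorisation error $B_\beta n e^{-a_\beta L}$ is at most one. Let $\mathcal G$ be the cover given by Theorem~\ref{stat:weak-at}.

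\emph{Case $n\ge4L$.} Every $S\in\mathcal G$ is an interval with $|S|\le6L$, and $|\mathcal G|\le n/L+2\le 2n$. Theorem~\ref{stat:weak-at} gives
\[
 D(\rho\Vert\sigma)\le\sum_{S\in\mathcal G}D_S(\rho\Vert\sigma)+1 .
\]
With $b=\lceil k_\beta L\rceil$, Theorem~\ref{ent:buffered} bounds each term by $C_\beta L\,\mathcal Q^a_{S[b]}(\rho)+C_\beta/n$. Summing over $\mathcal G$, the additive errors contribute at most $2C_\beta$.

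\emph{Case $n<4L$.} Here $\mathcal G=\{\Lambda\}$, and the bound is equality. If $n\le6L$, Theorem~\ref{ent:buffered} applies directly with $S=\Lambda$, and $\mathcal Q^a_{\Lambda[b]}=\mathcal Q^a_\Lambda$. This case is automatic since $n<4L$.

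Next I would regroup the energy sum. Each $\mathcal Q^a_i(\rho)\ge0$, because $-\Gamma_\sigma^{1/2}\cL_i^a\Gamma_\sigma^{-1/2}\ge0$ by Proposition~\ref{pre:properties}. Additivity then gives
\[
 \sum_{S\in\mathcal G}\mathcal Q^a_{S[b]}(\rho)=\sum_i \nu_i\,\mathcal Q^a_i(\rho),\qquad \nu_i=|\{S\in\mathcal G:i\in S[b]\}|,
\]
and it remains to bound $\nu:=\sup_i\nu_i$ independently of $n$.

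The overlap count is the only real obstacle. The members of $\mathcal G$ come in two families; within each family the intervals are disjoint and consecutive members start $4L$ apart, since they are made of cells of length $L$. A site $i$ lies in $S[b]$ only if $S$ meets the window $[i-b,i+b]$, which has length $2b+1\le 2k_\beta L+3$. Within one family this window meets at most $(2b+1)/(4L)+2\le k_\beta/2+3$ members, using $L\ge1$. Hence
\[
 \nu\le k_\beta+6 ,
\]
which is independent of $n$ and depends only on $\beta$ and the interaction parameters through $k_\beta$. One bookkeeping point needs checking: the last cell has length in $[L,2L)$, so the final member of a family may be shorter or longer than the others, but it is still an interval of length at least $L$ and the count is unchanged.

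Combining the steps gives
\[
 D(\rho\Vert\sigma)\le C_\beta\nu L\,\mathcal Q^a_\Lambda(\rho)+2C_\beta+1 .
\]
Finally, $L\le K\log n+1\le(K+2)\log n$ for $n\ge2$, which yields \eqref{eq:global-defective} with constant $C_\beta\nu(K+2)$ and $B_\beta=2C_\beta+1$. The argument is identical for $a=\mathrm{HB}$ and $a=\mathrm{RHB}$, since Theorem~\ref{ent:buffered} covers both samplers.
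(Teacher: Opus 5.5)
Your proposal is correct and follows the paper's proof essentially step for step: the same scales $L=\lceil K\log n\rceil$ and $b=\lceil k_\beta L\rceil$, weak quasi-factorisation followed by Theorem~\ref{ent:buffered} on each block, a bounded-overlap count for the enlarged intervals, and the single-block case $n<4L$. The differences are cosmetic. First, you count overlaps per family using the spacing of its members, whereas the paper bounds the cores straddling the window by four and the contained ones by $2(2b+1)/L$. Your count remains valid even though the left-truncated member $B_1$ starts only $3L$ before $B_2$, because truncated members can occur only at the ends of a family. Second, you apply Theorem~\ref{ent:buffered} to each sampler directly, instead of transferring the heat-bath bound through \eqref{eq:energy-comparison}.
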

\begin{proof}
Choose
\[
 L=\lceil K_\beta\log n\rceil,
 \qquad b=\lceil k_\beta L\rceil.
\]
Take $K_\beta\ge1$ large enough for Theorem~\ref{stat:weak-at} to give
an error at most one, and choose $k_\beta$ as in
Theorem~\ref{ent:buffered}. Let
$\mathcal G=\{A_i\}_i\cup\{B_j\}_j$ be the cover from
Theorem~\ref{stat:weak-at}, illustrated in Figure~\ref{fig:summary-cover}.
For $n\ge4L$, its intervals have length in $[L,6L]$ and multiplicity
at most two. For $n<4L$, the single interval $\Lambda$ suffices and
the static error is zero.

The enlarged intervals have bounded multiplicity. For $n\ge4L$, if
$x\in S[b]$, then $S$ meets $[x-b,x+b]\cap\Lambda$.
At most four such cores are not contained in this latter interval:
each contains an endpoint, and the original multiplicity is two.
Every remaining core has length at least $L$, and the sum of their
lengths is at most $2(2b+1)$. Thus, for $n\ge4L$,
\begin{equation}\label{global:overlap}
 \nu:=\sup_x|\{S\in\mathcal G:x\in S[b]\}|
 \le4+\frac{2(2b+1)}L\le C_\beta.
\end{equation}
For $n<4L$, the cover is $\mathcal G=\{\Lambda\}$ and $\nu=1$.
Positivity and additivity of the
site energies yield
\[
 \sum_{S\in\mathcal G}\mathcal Q_{S[b]}^{\mathrm{HB}}(\rho)
 \le\nu\mathcal Q_\Lambda^{\mathrm{HB}}(\rho).
\]
There are at most $2n$ cores. Combining weak quasi-factorisation with
Theorem~\ref{ent:buffered} therefore gives
\begin{align*}
 D(\rho\Vert\sigma)
 &\le\sum_{S\in\mathcal G}D_S(\rho\Vert\sigma)+1\\
 &\le C_\beta L\sum_{S\in\mathcal G}\mathcal Q_{S[b]}^{\mathrm{HB}}(\rho)
                       +\frac{C_\beta|\mathcal G|}{n}+1\\
 &\le C_\beta'\log n\,\mathcal Q_\Lambda^{\mathrm{HB}}(\rho)+B_\beta.
\end{align*}
We used $L\le(K_\beta+1/\log2)\log n$. When $n<4L$, the bound follows from
the same buffered estimate with $S=S[b]=\Lambda$ and $|S|=n<4L$.
Finally, \eqref{eq:energy-comparison} gives
\[
 \mathcal Q_\Lambda^{\mathrm{HB}}(\rho)
 \le \mathcal Q_\Lambda^{\mathrm{RHB}}(\rho),
\]
so the bound transfers from the heat-bath energy to the regularised
heat-bath energy by replacing the former with the latter on the
right-hand side.
\end{proof}

\subsection{Proof of the main result}

\begin{proof}[Proof of Theorem~\ref{thm:main}]
Proposition~\ref{pre:properties} proves primitivity and KMS symmetry
for both generators, with invariant state $\sigma$.
Proposition~\ref{loc:heatbath-quasilocal} proves
quasi-locality of the heat-bath generator by constructing local channel
approximations with exponentially decaying diamond-norm error.
It remains to prove the MLSI for $n\ge2$.

The whole-chain case of Lemma~\ref{lem:average}, namely
\eqref{eq:kms-gap}, gives a heat-bath KMS gap
$\gamma_\beta>0$ independent of $n$. Together with the quadratic-form
comparison \eqref{eq:energy-comparison}, this yields, for every observable $X$,
\[
 \mathcal E_{\mathrm{KMS}}^{\mathrm{RHB}}(X)
 \ge \mathcal E_{\mathrm{KMS}}^{\mathrm{HB}}(X)
 \ge \gamma_\beta\operatorname{Var}_\sigma^{\mathrm{KMS}}(X).
\]
Thus both generators have a KMS gap bounded below by the same
$\gamma_\beta$.

For either $a\in\{\mathrm{HB},\mathrm{RHB}\}$,
Proposition~\ref{global:defective} supplies the entropy bound
\eqref{eq:global-defective}, with $C=C_\beta\log n$ and $b=B_\beta$.
By \eqref{eq:quadratic}, $\mathcal Q_\Lambda^a$ is precisely the
quadratic energy appearing in Theorem~\ref{main:tightening}.
All hypotheses of that theorem therefore hold for $\cL^a$, and it gives
\[
 D(\rho\Vert\sigma)
 \le\left[\frac{C_\beta\log n}2
                +\frac{2+B_\beta}{2\gamma_\beta}\right]\EP^a(\rho).
\]
Since $\log n\ge\log2$, the coefficient is at most
$C_\beta'\log n$, where
\[
 C_\beta':=\frac{C_\beta}{2}
            +\frac{2+B_\beta}{2\gamma_\beta\log2}
\]
is independent of $n$. This proves \eqref{eq:mlsi} for every faithful
state $\rho$, and hence
$\alpha_{\rm MLSI}(\cL_*^a)\ge c_\beta/\log n$ with
$c_\beta=(C_\beta')^{-1}$.
\end{proof}

\begin{remark}[Alternative proof for the regularised heat-bath sampler]
\label{global:rhb-alternative}
Appendix~\ref{app:rhb-direct} gives a direct proof of the regularised
heat-bath MLSI. Combining the energy estimate of
Corollary~\ref{cor:rhb-direct-buffer} with
Theorem~\ref{stat:weak-at} and the overlap bound
\eqref{global:overlap} yields \eqref{eq:global-defective} for this
sampler. The whole-chain case of
Proposition~\ref{lem:rhb-direct-comparisons} gives its uniform KMS
gap, and Theorem~\ref{main:tightening} then proves the MLSI.
This route uses the conditional-expectation structure of the
regularised updates directly, without modular averaging or transfer
of the heat-bath bound through \eqref{eq:energy-comparison}.
\end{remark}

\begin{proof}[Proof of Corollary~\ref{cor:mixing}]
For a faithful initial state, set $\rho_t=e^{t\cL_*^a}\rho$.
Differentiation gives
\[
 \frac d{dt}D(\rho_t\Vert\sigma)=-\EP^a(\rho_t)
 \le-\frac{c_\beta}{\log n}D(\rho_t\Vert\sigma).
\]
Integration proves \eqref{eq:entropy-decay}. Since $\sigma$ is faithful,
continuity in the initial state extends this to all $\rho$ by replacing
$\rho$ with $(1-\delta)\rho+\delta\sigma$ and letting $\delta\downarrow0$.
The interaction bound gives $\|H_\Lambda\|_\infty\le Jn$, and hence
\[
 D(\rho\Vert\sigma)\le\|\log\sigma\|_\infty
                         \le n(\log d+2\beta J).
\]
Together with \eqref{eq:entropy-decay}, this gives
\[
 \sup_\rho D(e^{t\cL_*^a}\rho\Vert\sigma)
 \le n(\log d+2\beta J)e^{-c_\beta t/\log n}.
\]
Solving for a right-hand side at most $\epsilon$ bounds
$t_{{\rm mix},D}^a(\epsilon)$ as in \eqref{eq:mixing}.
By \eqref{eq:mixing-comparison}, replacing $\epsilon$ by $\epsilon^2/2$
also bounds $t_{\rm mix}^a(\epsilon)$, after enlarging the fixed constant.
\end{proof}

\begin{remark}[Normalisation, uniform families, and complete inequalities]
The stated bounds use the normalisation of the Lindbladians in
\eqref{eq:generators}, in which each update term appears with
coefficient one. Replacing $\cL_*^a$ by $\cL_*^a/n$ divides its
MLSI constant by $n$ and multiplies the mixing-time bound by $n$.

For the families on increasing volumes introduced in
Section~\ref{pre:maps}, all constants in the gap, MLSI, and mixing
estimates depend only on $d,r_0,J,\beta$. Since the updates are
channels, each individual generator term has diamond norm at most
two, uniformly in $m$ and the update site.
The heat-bath family also satisfies the local
approximation bound \eqref{loc:heatbath-channel-tail} uniformly in $m$.

The theorem concerns the two specified global-reference samplers.
The complete local statements are given in Proposition~\ref{cmlsi:local}.
Corollary~\ref{cmlsi:global} gives global CMLSI with weaker volume
dependence, including an order-$1/n$ rate for the
regularised heat-bath sampler. The
order-$1/\log n$ theorem remains ordinary MLSI, for the reason
explained in Remark~\ref{cmlsi:scope}.
\end{remark}
 %
\section{Discussion}
\label{sec:discussion}

We have obtained an MLSI constant of order at least $1/\log n$ for
the heat-bath and regularised heat-bath samplers of one-dimensional
finite-range Gibbs states at every fixed positive temperature.
The proof allows additive errors in both the quasi-factorisation of
relative entropy and the comparison with buffered energy. We control
their sum by a constant independent of the chain length and then use
the spectral gap to remove this defect. The abstract implication in
Theorem~\ref{main:tightening} may also be useful for other dynamics
for which entropy estimates with a bounded additive error are available.

The logarithmic loss in our bound comes from choosing cores and
buffers of size $O(\log n)$ to control the accumulated errors. The
argument does not establish that this loss is intrinsic to either
sampler. A natural question is whether a bound independent of $n$
can be obtained, for example through a sharper conditional entropy
estimate or a different treatment of the errors across the cover.
A similar improvement has already been achieved for Davies dynamics
of one-dimensional commuting Hamiltonians: the MLSI bounds of order
$1/\log n$ in
\cite{BardetEtAlRapidThermalization2023,BardetEtAlEntropyDecayDavies}
were subsequently improved to a positive constant independent of $n$
in \cite{KochanowskiAlhambraCapelRouze2024} through a different proof
strategy. We hope that a corresponding change of strategy will also
remove the logarithmic loss in the noncommuting setting considered here.

The two samplers also illustrate a distinction between entropy decay
and locality. The heat-bath updates admit local channel approximations
with exponentially decaying error, whereas the limiting conditional
expectations defining the regularised sampler can act on the entire
chain, as shown in Remark~\ref{loc:rhb-nonlocal}. The direct argument
in Appendix~\ref{app:rhb-direct} simplifies the entropy comparison
for the latter, but this simplification does not imply locality.

Another question is whether the same $1/\log n$ rate can be obtained
for complete MLSI. The local complete inequalities are already
available, but the scalar centering step in the removal of the global
defect does not extend directly in the presence of an untouched
memory; see Remark~\ref{cmlsi:scope}. A corresponding argument relative
to the fixed-point algebra would address this obstruction.

Extending the argument beyond one dimension is a central open direction.
For such an extension, the first step of the proof is already missing:
the quasi-factorisation of relative entropy into overlapping systems.
The one-dimensional construction uses two such systems, each composed
of separated intervals, whereas an analogous construction in higher
dimensions would require at least three overlapping systems. We would
therefore need a corresponding multipartite quasi-factorisation estimate
with a controlled additive error. Establishing this estimate is a first
obstacle, in addition to obtaining suitable higher-dimensional analogues
of the correlation and buffered locality estimates used here.
 
\begin{acknowledgments}
A.C. and D.P.G. would like to thank the Isaac Newton Institute for Mathematical Sciences, Cambridge, for support and hospitality during the programme Mathematics of Many-Body Entanglement, where work on this paper was undertaken. This work was supported by EPSRC grant EP/Z000580/1.
A.C. was financially supported by the DFG under grant TRR 352 -- Project-ID 470903074. M.S. acknowledges support from the Spanish Agencia Estatal de Investigación through the grants “PID2023-150847NA-I00” and “PCI2024-153448”, funded by MICIU/AEI/10.13039/501100011033 and co-funded by the EU. This publication has also been funded within the framework of the R\&D\&I Project CEX2025-001574-S, funded by MICIU/AEI/10.13039/501100011033. The research presented in this publication falls within the research line “Quantum Matter and Fields”. This project was funded within the QuantERA II Programme, which received funding from the European Union's Horizon 2020 research and innovation programme under Grant Agreement No.~101017733. 
\end{acknowledgments}

\section*{AI Statement}

The main conceptual contributions were developed by the authors. We acknowledge the use of LLMs for a proof of MLSI, instead of weak MLSI, and for proofreading, refining proof ideas and creating figures. 
 The authors take full responsibility for the final manuscript.

\bibliographystyle{plainurl}
\bibliography{MLSI_1D_noncommuting}

\appendix
%
\section{Direct regularised heat-bath comparison and local CMLSI}\label{app:rhb-direct}
The regularised heat-bath sampler permits a shorter conditional argument:
its updates are conditional expectations and commute with the modular
group. This removes the average over modular rotations. We give this
alternative proof of the regularised heat-bath MLSI explicitly; the main
argument in Section~\ref{sec:conclusion} instead transfers the heat-bath
estimate through the quadratic-form comparison
\eqref{eq:energy-comparison}.

\begin{proposition}[Direct regularised heat-bath estimates]\label{lem:rhb-direct-comparisons}
For every interval $S$ and buffer $S[b]$,
\begin{align}
 -\Gamma_\sigma^{1/2}\cL_{S[b]}^{\mathrm{RHB}}\Gamma_\sigma^{-1/2}
   &\ge\gamma_\beta(1-\Pi_{S[b],0}),\nonumber\\
 \intertext{and}
 \EP_{S[b]}^{\mathrm{RHB}}(\rho)&\ge4\mathcal Q_{S[b]}^{\mathrm{RHB}}(\rho).
 \label{eq:rhb-direct-comparisons}
\end{align}
Here $\EP_{S[b]}^{\mathrm{RHB}}$ is the entropy production of
$\sum_{i\in S[b]}(\mathbb{E}_i-\id)$, still with reference $\sigma$.
\end{proposition}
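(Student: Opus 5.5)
The proof of both inequalities rests on one structural fact about the regularised updates. Each $\mathbb E_i$ is a $\sigma$-preserving conditional expectation. Its range is therefore invariant under the modular group (as already used in Remark~\ref{loc:rhb-nonlocal}), so $\mathbb E_i$ commutes with $\Delta_\sigma^{z}$ for all $z$. The first step is to record the consequences. $\mathbb E_i$ is simultaneously a GNS and a KMS orthogonal projection; the KMS statement is Proposition~\ref{pre:properties}, and the GNS statement follows from bimodularity and invariance. Moreover, $\Gamma_\sigma^{1/2}\mathbb E_i\Gamma_\sigma^{-1/2}=\Delta_\sigma^{-1/4}(\mathbb E_i\,\cdot\,\sigma^{1/2})\ldots$. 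Concretely, the conjugated maps $X\sigma^{1/2}\mapsto\mathbb E_i(X)\sigma^{1/2}$ and $\Gamma_\sigma^{1/2}(X)\mapsto\Gamma_\sigma^{1/2}(\mathbb E_iX)$ coincide as Hilbert--Schmidt operators, since $\Delta_\sigma^{1/4}$ commutes with $\mathbb E_i$. Call this common operator $P_i$. It is an orthogonal projection on HS space, and its range is contained in $\mathcal V_i(0)$ because $\mathcal N_i\subseteq\mathcal A_{i^c}$.

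For the first inequality, I would write
\[
-\Gamma_\sigma^{1/2}\cL^{\mathrm{RHB}}_{S[b]}\Gamma_\sigma^{-1/2}=\sum_{i\in S[b]}(1-P_i)\ge\sum_{i\in S[b]}(1-\Pi_{i,0})=\mathbf H_{S[b]}.
\]
The inequality holds because $\Ran P_i\subseteq\Ran\Pi_{i,0}$. Then I would apply \eqref{gap:canonical-equation} from the proof of Lemma~\ref{gap:canonical}, which gives $\mathbf H_{S[b]}\ge2\gamma_\beta(1-\Pi_{S[b],0})\ge\gamma_\beta(1-\Pi_{S[b],0})$. This is the only place the gap of the canonical purified Hamiltonian enters. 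No modular averaging is needed, precisely because $P_i$ does not depend on the choice between the GNS and KMS pictures.

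For the second inequality, I would use additivity to reduce to a single site: both $\EP_{S[b]}^{\mathrm{RHB}}=\sum_i\EP_i^{\mathrm{RHB}}$ by \eqref{pre:ep-additivity} and $\mathcal Q^{\mathrm{RHB}}_{S[b]}=\sum_i\mathcal Q_i^{\mathrm{RHB}}$, so it suffices to show $\EP_i^{\mathrm{RHB}}(\rho)\ge4\mathcal Q_i^{\mathrm{RHB}}(\rho)$.

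By \eqref{cmlsi:rhb} with trivial memory, $\EP_i^{\mathrm{RHB}}(\rho)=D(\rho\Vert\tau)+D(\tau\Vert\rho)$ with $\tau=\mathbb E_{i*}\rho$. On the other side, $\mathcal Q_i^{\mathrm{RHB}}(\rho)=\|(1-P_i)\sqrt\rho\|_2^2$. Since $P_i$ is an orthogonal projection, this equals $1-\|P_i\sqrt\rho\|_2^2$. Write $\psi=P_i\sqrt\rho$. The modular invariance of $\mathbb E_i$ identifies $\psi\psi^\dagger$ with an element of the decomposition of $\tau$ over the fixed algebra. I expect $\psi$ to satisfy $\|\psi\|_2^2=\Tr[\sqrt\rho\,\tau^{1/2}\ldots]$-type identities, and more precisely $\|P_i\sqrt\rho\|_2^2\le F(\rho,\tau)$ or $\le\Tr[\rho^{1/2}\tau^{1/2}]$.

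The cleanest route is to use the block structure of $\mathcal N_i$. In that structure $\tau=\bigoplus_j\Tr_{R}[\rho_j]\otimes\theta_{j,R}$, and $P_i$ acts on amplitudes blockwise as the HS projection onto $\mathcal B(\mathcal H_{j,L})\otimes\theta_{j,R}^{1/2}$. This gives
\[
\|P_i\sqrt\rho\|_2^2=\sum_j\Tr\bigl[\Tr_R(\sqrt{\rho}\,(\one\otimes\theta_{j,R}^{1/2}))^\dagger\cdots\bigr]\le\Tr[\rho^{1/2}\tau^{1/2}],
\]
with the inequality coming from a Cauchy--Schwarz or operator-concavity argument. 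Then
\[
4\mathcal Q_i\le4\bigl(1-\Tr[\rho^{1/2}\tau^{1/2}]\bigr)=2\|\sqrt\rho-\sqrt\tau\|_2^2.
\]
I would finish with the standard inequality $D(\rho\Vert\tau)+D(\tau\Vert\rho)=\Tr[(\rho-\tau)(\log\rho-\log\tau)]\ge2\|\sqrt\rho-\sqrt\tau\|_2^2$. This follows from the bound $(a-b)(\log a-\log b)\ge2(\sqrt a-\sqrt b)^2$, which is the scalar inequality $\log x\ge2(x-1)/(x+1)$ with $x=\sqrt{a/b}$, lifted to matrices via the double-sum (Lieb-type) representation in eigenbases of $\rho$ and $\tau$.

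\textbf{Main obstacle.} The hardest step is the identification $\|P_i\sqrt\rho\|_2^2\le\Tr[\rho^{1/2}\tau^{1/2}]$, that is, controlling the projected amplitude by the affinity with $\tau$. The constant $4$ in the statement suggests the authors compare exactly the Hellinger-type quantity $\|\sqrt\rho-\sqrt\tau\|_2^2$ with the symmetrised entropy. If the block computation resists, my fallback is to note that $\sqrt\tau$ lies in $\Ran P_i$: it is of the form $X\sigma^{1/2}$ with $X$ in the fixed-algebra amplitude space, by the block form of $\tau$. Then $\|(1-P_i)\sqrt\rho\|_2^2\le\|\sqrt\rho-\sqrt\tau\|_2^2$ immediately, since the projection minimises distance to its range. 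This in fact yields the factor $4$ with room to spare, given the matrix double-sum inequality above.
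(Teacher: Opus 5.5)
Your argument for the first inequality is correct and slightly more direct than the paper's. You identify $\Gamma_\sigma^{1/2}\mathbb E_i\Gamma_\sigma^{-1/2}$ with the Hilbert--Schmidt projection $P_i$ onto $\mathcal N_i\sigma^{1/2}\subseteq\mathcal V_i(0)$. Hence $P_i\le\Pi_{i,0}$, and you apply \eqref{gap:canonical-equation} directly, even obtaining $2\gamma_\beta$. The paper instead starts from the conditional GNS gap of Theorem~\ref{gap:conditional}. It moves that gap to the KMS picture by commuting $\cL^{\mathrm{RHB}}_{S[b]}$ with $\Delta_\sigma^{-1/4}$ and conjugating by $\Delta_\sigma^{1/4}$. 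Both routes rest on the modular invariance of $\mathcal N_i$.

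The second inequality has a genuine problem in both of your routes.

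In the main route the direction is reversed. Since $\mathcal Q_i^{\mathrm{RHB}}(\rho)=1-\|P_i\sqrt\rho\|_2^2$, an upper bound $\|P_i\sqrt\rho\|_2^2\le\Tr[\rho^{1/2}\tau^{1/2}]$ (which does hold) gives the \emph{lower} bound $\mathcal Q_i^{\mathrm{RHB}}(\rho)\ge1-\Tr[\rho^{1/2}\tau^{1/2}]$. The upper bound you need, $\mathcal Q_i^{\mathrm{RHB}}(\rho)\le\frac12\|\sqrt\rho-\sqrt\tau\|_2^2$, is false. If $\mathcal N_i=\mathbb C\one$, as in Remark~\ref{loc:rhb-nonlocal}, then $\tau=\sigma$ and $P_i=|\sigma^{1/2}\rangle\langle\sigma^{1/2}|$. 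Writing $F=\Tr[\rho^{1/2}\sigma^{1/2}]\in(0,1)$, you get $\mathcal Q_i^{\mathrm{RHB}}(\rho)=1-F^2>1-F$ for every $\rho\ne\sigma$. So no Cauchy--Schwarz argument can close that step.

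Your fallback is exactly the paper's argument: $\sqrt\tau\in\operatorname{Ran}P_i$, so $\mathcal Q_i^{\mathrm{RHB}}(\rho)\le\|\sqrt\rho-\sqrt\tau\|_2^2$. However, with the scalar constant you state, $(a-b)(\log a-\log b)\ge2(\sqrt a-\sqrt b)^2$, it yields only $\EP_i^{\mathrm{RHB}}\ge2\mathcal Q_i^{\mathrm{RHB}}$, not a factor $4$ ``with room to spare''.

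The fix is to redo your own scalar computation. Put $x=\sqrt{a/b}$ and use $(x^2-1)\log x\ge2(x-1)^2$ for $x>0$. Then $(a-b)(\log a-\log b)=2b(x^2-1)\log x\ge4(\sqrt a-\sqrt b)^2$. The double-sum lift in the eigenbases of $\rho$ and $\tau$ then gives $\EP_i^{\mathrm{RHB}}(\rho)\ge4\|\sqrt\rho-\sqrt\tau\|_2^2\ge4\mathcal Q_i^{\mathrm{RHB}}(\rho)$. Summing over $i\in S[b]$ concludes, exactly as in the paper.

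There is no slack in this constant. In the trivial-algebra case take $\rho=\sigma+\epsilon V$ with $V$ traceless, self-adjoint and commuting with $\sigma$. Then both $\EP_i^{\mathrm{RHB}}(\rho)$ and $4\mathcal Q_i^{\mathrm{RHB}}(\rho)$ equal $\epsilon^2\Tr[\sigma^{-1}V^2]+O(\epsilon^3)$.
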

\begin{proof}
We first make the conditional-expectation structure explicit. For
a fixed site $i$, the finite-dimensional algebra decomposition gives
\begin{align}
 \mathcal N_i&=\bigoplus_j\mathcal B(\mathcal H_{j,L})\otimes\one_{j,R},
 &\sigma&=\bigoplus_j q_j\sigma_{j,L}\otimes\theta_{j,R},\nonumber\\
 \mathbb{E}_{i*}\rho&=\bigoplus_j\rho_{j,L}\otimes\theta_{j,R},
 &\rho_{j,L}&=\Tr_{j,R}[P_j\rho P_j].
 \label{rhb:blocks}
\end{align}
The $P_j$ are the central block projections; $q_j>0$ and
$\sigma_{j,L},\theta_{j,R}$ are density matrices. Bimodularity makes
$\mathbb{E}_i$ the identity on each left factor and a state functional on its
right factor. Stationarity then forces exactly the displayed product
form of $\sigma$. This also proves that $\mathbb{E}_i$ commutes with
$\Delta_\sigma^z$ for every $z$; cf.\
\cite[Proposition 1(iv), equation (2.4)]{BardetCapelRouze2021NoncommutingAT}.
It is orthogonal in the GNS inner product, since for
$A\in\mathcal N_i$,
\[
 \Tr\bigl[\sigma A^\dagger(X-\mathbb{E}_iX)\bigr]
 =\Tr\bigl[\sigma \mathbb{E}_i\bigl(A^\dagger(X-\mathbb{E}_iX)\bigr)\bigr]=0.
\]

Theorem~\ref{gap:conditional} gives the conditional GNS gap
$\gamma_\beta$ for the regularised heat-bath sampler. Since
$\Delta_\sigma^{-1/4}\Gamma_\sigma^{1/2}(X)=X\sigma^{1/2}$,
we can express this gap in Hilbert--Schmidt coordinates.
Modular covariance implies that $\cL_{S[b]}^{\mathrm{RHB}}$ commutes
with $\Delta_\sigma^{-1/4}$, as does $\Gamma_\sigma$. The gap becomes
\[
 -\Delta_\sigma^{-1/4}\Gamma_\sigma^{1/2}\cL_{S[b]}^{\mathrm{RHB}}
       \Gamma_\sigma^{-1/2}\Delta_\sigma^{-1/4}
 \ge\gamma_\beta
          \Delta_\sigma^{-1/4}(1-\Pi_{S[b],0})\Delta_\sigma^{-1/4}.
\]
Congruence by $\Delta_\sigma^{1/4}$ proves the first estimate in
\eqref{eq:rhb-direct-comparisons}.

For the entropy comparison, we take $S=\{i\}$ and a one-dimensional
memory in \eqref{cmlsi:rhb} of Proposition~\ref{cmlsi:local}.
The generator is then $\mathbb{E}_i-\id$, whose entropy production
is $\EP_i^{\mathrm{RHB}}$, so the identity gives
\[
 \EP_i^{\mathrm{RHB}}(\rho)
 =D(\rho\Vert\mathbb{E}_{i*}(\rho))+D(\mathbb{E}_{i*}(\rho)\Vert\rho).
\]
To obtain the quadratic lower bound, we write
$\rho=\sum_a p_a|u_a\rangle\langle u_a|$ and
$\mathbb{E}_{i*}(\rho)=\sum_b q_b|v_b\rangle\langle v_b|$.
Expanding the two relative entropies and applying
$(p-q)(\log p-\log q)\ge4(\sqrt p-\sqrt q)^2$ gives
\begin{align*}
 \EP_i^{\mathrm{RHB}}(\rho)
 &=\sum_{a,b}|\langle u_a,v_b\rangle|^2
       (p_a-q_b)(\log p_a-\log q_b)\\
 &\ge4\sum_{a,b}|\langle u_a,v_b\rangle|^2
       (\sqrt{p_a}-\sqrt{q_b})^2\\
 &=4\|\sqrt\rho-\sqrt{\mathbb{E}_{i*}(\rho)}\|_2^2.
\end{align*}
The decomposition \eqref{rhb:blocks} shows that
\[
 \sqrt{\mathbb{E}_{i*}(\rho)}=\Gamma_\sigma^{1/2}\!\left(
 \bigoplus_j\Gamma_{q_j\sigma_{j,L}}^{-1/2}
             (\sqrt{\rho_{j,L}})\otimes\one_{j,R}\right).
\]
It belongs to the range of the orthogonal projection
$\widehat{\mathbb{E}}_i=\Gamma_\sigma^{1/2}\mathbb{E}_i\Gamma_\sigma^{-1/2}$.
Since projection minimises distance,
\[
 \mathcal Q_i^{\mathrm{RHB}}(\rho)
 =\|(1-\widehat{\mathbb{E}}_i)\sqrt\rho\|_2^2
 \le\|\sqrt\rho-\sqrt{\mathbb{E}_{i*}(\rho)}\|_2^2.
\]
We sum over $i\in S[b]$. Faithful approximation gives the extended-value
statement for singular states.
\end{proof}

The following corollary converts the preceding proposition into a
bound on the conditional entropy of a core $S$ in terms of the
regularised heat-bath energy, or entropy production, in its buffer.
For cores and buffers of size $O(L)$ with $L\gtrsim\log n$, the
coefficient is $O(L)$ and the additive error is $O(1/n)$. Summing the
energy estimate over the overlapping cover and using weak
quasi-factorisation gives the bounded global entropy defect needed
for Theorem~\ref{main:tightening}. This supplies the direct alternative
to the quadratic-form transfer used in Section~\ref{sec:conclusion}.

\begin{corollary}[Direct buffered regularised heat-bath inequality]\label{cor:rhb-direct-buffer}
If every unit $v\in\mathcal V_{S[b]}(0)$ satisfies
$D_S(vv^\dagger\Vert\sigma)\le\eta_b$, then, for $0<\varepsilon<1$,
\begin{align}
 D_S(\rho\Vert\sigma)
 &\le\frac{2(M_S+\log\varepsilon^{-1})}{\gamma_\beta}
                 \mathcal Q_{S[b]}^{\mathrm{RHB}}(\rho)
       +2\eta_b+2\log(1-\varepsilon)^{-1},\label{rhb:buffer-energy}\\
 D_S(\rho\Vert\sigma)
 &\le\frac{M_S+\log\varepsilon^{-1}}{2\gamma_\beta}
                 \EP_{S[b]}^{\mathrm{RHB}}(\rho)
       +2\eta_b+2\log(1-\varepsilon)^{-1}.
 \label{eq:rhb-direct-buffer}
\end{align}
In particular, for $L\ge\log n$, $|S|\le6L$, and
$b=\lceil k_\beta L\rceil$ with sufficiently large $k_\beta$, both
right-hand sides have coefficient $C_\beta L$ and error $C_\beta/n$.
\end{corollary}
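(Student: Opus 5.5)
The plan is to apply Lemma~\ref{lem:superposition} with the projection $\Pi=\Pi_{S[b],0}$ onto the untwisted exterior space $\mathcal V_{S[b]}(0)$. The hypothesis says that every unit $v\in\Ran\Pi$ has $D_S(vv^\dagger\Vert\sigma)\le\eta_b$, so the lemma applies with $\eta=\eta_b$. Taking $\psi=\sqrt\rho$, which is a unit Hilbert--Schmidt amplitude, it gives
\[
 D_S(\rho\Vert\sigma)\le2\eta_b+2(M_S+\log\varepsilon^{-1})\,\hs{(1-\Pi_{S[b],0})\sqrt\rho}^2+2\log(1-\varepsilon)^{-1}.
\]
Here the endpoint cases $x\in\{0,1\}$ are already covered by that lemma. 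Because the regularised updates commute with the modular group, no averaging over $t$ is needed, in contrast with the proof of Theorem~\ref{ent:buffered}.

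Next, bound the remaining weight by the first estimate of Proposition~\ref{lem:rhb-direct-comparisons}. Evaluating that operator inequality in the vector $\sqrt\rho$ and using definition \eqref{eq:quadratic} gives
\[
 \gamma_\beta\hs{(1-\Pi_{S[b],0})\sqrt\rho}^2\le\mathcal Q_{S[b]}^{\mathrm{RHB}}(\rho).
\]
Substituting this yields \eqref{rhb:buffer-energy}. The second estimate of the same proposition, $\mathcal Q_{S[b]}^{\mathrm{RHB}}\le\tfrac14\EP_{S[b]}^{\mathrm{RHB}}$, then turns the coefficient $2(M_S+\log\varepsilon^{-1})/\gamma_\beta$ into $(M_S+\log\varepsilon^{-1})/(2\gamma_\beta)$, which is \eqref{eq:rhb-direct-buffer}.

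For the final clause, take $\varepsilon=1/n$. Since $M_S=c_\beta(|S|+1)\le c_\beta(6L+1)$ and $\log n\le L$, the coefficient is at most $C_\beta L$. Also $2\log(1-1/n)^{-1}\le4/n$. For $\eta_b$, invoke Corollary~\ref{main:shielding} at $t=0$: with $b=\lceil k_\beta L\rceil$ it gives $\eta_b\le a_\beta/n$, so the total error is $C_\beta/n$.

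The only delicate point is checking that the first estimate of Proposition~\ref{lem:rhb-direct-comparisons} uses exactly the same projection $\Pi_{S[b],0}$ and the same update set $S[b]$ as the hypothesis and the energy. It does, so no step should be a real obstacle. The one place that needs care is that $\mathcal Q$ is defined through $\sqrt\rho$ without faithfulness, so singular states require no approximation in \eqref{rhb:buffer-energy}. In \eqref{eq:rhb-direct-buffer}, faithful approximation is inherited from Proposition~\ref{lem:rhb-direct-comparisons}.
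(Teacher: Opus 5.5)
Your proposal is correct and follows the paper's proof: you apply Lemma~\ref{lem:superposition} with $\Pi=\Pi_{S[b],0}$, then use the two estimates of Proposition~\ref{lem:rhb-direct-comparisons} to bound $\|(1-\Pi_{S[b],0})\sqrt\rho\|_2^2\le\gamma_\beta^{-1}\mathcal Q^{\mathrm{RHB}}_{S[b]}(\rho)\le(4\gamma_\beta)^{-1}\EP^{\mathrm{RHB}}_{S[b]}(\rho)$. For the final clause you take $\varepsilon=1/n$ together with Corollary~\ref{main:shielding} at $t=0$ and $M_S\le c_\beta(|S|+1)$, and your constant bookkeeping ($\log n\le L$, $2\log(1-1/n)^{-1}\le 4/n$) is sound.
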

\begin{proof}
Use Lemma~\ref{lem:superposition} with $\Pi=\Pi_{S[b],0}$ and
substitute
\[
 \|(1-\Pi_{S[b],0})\sqrt\rho\|_2^2
 \le\frac1{\gamma_\beta}\mathcal Q_{S[b]}^{\mathrm{RHB}}(\rho)
 \le\frac1{4\gamma_\beta}\EP_{S[b]}^{\mathrm{RHB}}(\rho).
\]
For the last assertion take $\varepsilon=n^{-1}$ and use
Corollary~\ref{main:shielding} and $M_S\le c_\beta(|S|+1)$.
\end{proof}

The entropy-production form \eqref{eq:rhb-direct-buffer} directly
gives a weak MLSI after summation. The energy form
\eqref{rhb:buffer-energy} supplies the quadratic entropy inequality
required to remove the additive defect using
Theorem~\ref{main:tightening}.
The local CMLSI in \eqref{cmlsi:rhb} holds for every memory, but no
complete global defect-removal step is used here.
 \end{document}